\newcommand{\kbnote}[1]{{\color{blue}[#1]}}
\setlist[enumerate]{nosep, topsep=1ex}
\setlist[itemize]{nosep, topsep=1ex}
\setlist[description]{nosep,topsep=1ex}
\newtheorem{property}{Property}[]
\newtheorem{problem}{Problem}[]
\newtheorem{theorem}{Theorem}[]
\newtheorem{fact}{Fact}[]
\newtheorem{definition}{Definition}[]
\newtheorem{remark}{Remark}[]
\newtheorem{lemma}{Lemma}[]
\algnewcommand{\IfThenElse}[3]{
  \State \algorithmicif\ #1\ \algorithmicthen\ #2\ \algorithmicelse\ #3}
\title{Faster Algorithm for Second (s,t)-mincut and Breaking Quadratic barrier for Dual Edge Sensitivity for (s,t)-mincut}
\author{
    Surender Baswana \thanks{Indian Institute of Technology Kanpur, India. Email: \texttt{sbaswana@cse.iitk.ac.in}. Partially supported by Tapas Mishra Memorial Chair at IIT Kanpur, India.}
\and 
    Koustav Bhanja\thanks{Weizmann Institute of Science, Israel. Email: \texttt{koustav.bhanja@weizmann.ac.il}. Partially supported by Merav Parter's European Research Council (ERC) grant under the European Union‚Äôs Horizon 2020 research and
    innovation programme, grant agreement No. 949083.}
\and
    Anupam Roy \thanks{Indian Institute of Technology Kanpur, India. Email: \texttt{anupam@cse.iitk.ac.in}.}
}
\date{}
\begin{document}

\begin{titlepage}
\maketitle
\begin{abstract}
Let $G$ be a directed graph on $n$ vertices and $m$ edges. In this paper, we study $(s,t)$-cuts of second minimum capacity and present the following algorithmic and graph-theoretic results.

    \noindent
    \textbf{1. Second (s,t)-mincut:}  Vazirani and Yannakakis [ICALP 1992] designed the first algorithm for computing an $(s,t)$-cut of second minimum capacity using ${\mathcal O}(n^2)$ maximum $(s,t)$-flow computations. We present the following algorithm that improves the running time significantly.
     For directed integer-weighted graphs, there is an algorithm that can compute an $(s,t)$-cut of second minimum capacity using ${\Tilde{\mathcal O}}(\sqrt{n})$ maximum $(s,t)$-flow computations with high probability.\footnote[1]{$\Tilde{\mathcal{O}}(.)$ hides poly-logarithmic factors.} To achieve this result, a close relationship of independent interest is established between $(s,t)$-cuts of second minimum capacity and global mincuts in directed weighted graphs.

\noindent
        \textbf{2.  Minimum+1 (s,t)-cuts:} 
 Minimum+1 $(s,t)$-cuts have been studied quite well recently [Baswana, Bhanja, and Pandey, ICALP 2022 \& TALG 2023], which is a special case of second $(s,t)$-mincut. We present the following structural result and the first nontrivial algorithm for minimum+1 $(s,t)$-cuts.
 
 \noindent
        (a) Algorithm: For directed multi-graphs, we design an algorithm that, given any maximum $(s,t)$-flow, computes a minimum+1 $(s,t)$-cut, if it exists, in ${\mathcal O}(m)$ time. 

\noindent
(b) Structure: The existing structures for storing and characterizing all minimum+1 $(s,t)$-cuts occupy ${\mathcal O}(mn)$ space [Baswana, Bhanja, and Pandey, TALG 2023]. 
    For undirected multi-graphs, we design a directed acyclic graph (DAG) occupying only ${\mathcal O}(m)$ space that stores and characterizes all minimum+1 $(s,t)$-cuts. 
   This matches the space bound of the widely-known DAG structure for all $(s,t)$-mincuts [Picard and Queyranne, Math. Prog. Studies 1980].   

\noindent
        \textbf{3. Dual Edge Sensitivity Oracle:} 
       The study of minimum+1 (s,t)-cuts often turns out to be useful in designing dual edge sensitivity oracles -- a compact data structure for efficiently reporting an $(s,t)$-mincut after insertion/failure of any given pair of query edges.
       It has been shown recently [Bhanja, ICALP 2025] that any dual edge sensitivity oracle for $(s,t)$-mincut in undirected multi-graphs must occupy ${\Omega}(n^2)$ space in the worst-case irrespective of the query time.
       Interestingly, for simple graphs, we break this quadratic barrier while achieving a non-trivial query time as follows.
       There is an ${\mathcal O}( n\sqrt{n})$ space data structure that can report an $(s,t)$-mincut in ${\mathcal O}(\min\{m,n\sqrt{n}\})$ time after the insertion/failure of any given pair of query edges. 

\noindent
To arrive at our results, as one of our key techniques, we establish interesting relationships between $(s,t)$-cuts of capacity $($minimum$+\Delta)$, $\Delta\ge 0$, and maximum $(s,t)$-flow. We believe that these techniques and the graph-theoretic result in 2.(b) are of independent interest. 
\end{abstract}
\end{titlepage}

\tableofcontents{}
\pagebreak
\pagenumbering{arabic}

\section{Introduction}
 The concept of cut is fundamental in graph theory and has numerous real-world applications \cite{DBLP:books/daglib/0069809}. Let $G=(V,E)$ be a directed weighted graph on $n=|V|$ vertices and $m=|E|$ edges with a designated source vertex $s$ and a designated sink vertex $t$. Every edge $e$ of $G$ is assigned with a non-negative real value as the capacity of the edge, denoted by $w(e)$.
 There are mainly two types of well-studied cuts in a graph -- global cuts and $(s,t)$-cuts. A \textit{global cut}, or simply a \textit{cut}, of $G$ is defined as a nonempty proper subset of $V$. A cut $C$ is said to be an $(s,t)$\textit{-cut} if $s\in C$ and $t\in \overline{C}=V\setminus C$. Every cut of $G$ is associated with a capacity defined as follows. The \textit{capacity} of a cut $C$, denoted by $c(C)$, is the sum of the capacities of every edge $(x,y)$ satisfying $x\in C$ and $y\in \overline{C}$. A global cut (likewise $(s,t)$-cut) of the least capacity is called a \textit{global mincut} (likewise \textit{$(s,t)$-mincut}). Henceforth $\lambda$ denotes the capacity of $(s,t)$-mincut. 
 
 The study of cuts from both structural and algorithmic perspectives has been an important field of research for the past six decades \cite{gomory1961multi, ford_fulkerson_1956, dinitz1976structure, DBLP:journals/jacm/KargerS96, van2023deterministic}. In this paper, we provide the following two main results for $(s,t)$-cuts. ~(1) We present efficient algorithms for computing an $(s,t)$-cut of second minimum capacity in directed weighted graphs. 
 After more than 30 years, our algorithms provide the first improvement by a polynomial factor over the existing result of Vazirani and Yannakakis \cite{vazirani1992suboptimal}. To arrive at this result, we establish that computing an $(s,t)$-cut of second minimum capacity has the same time complexity as computing a global mincut. 
 (2) 
 We present a dual edge sensitivity oracle for $(s,t)$-mincut -- a compact data structure that efficiently reports an $(s,t)$-mincut after the insertion or failure of any pair of edges. This is the first dual edge sensitivity oracle for $(s,t)$-mincut 
 that occupies subquadratic space while achieving nontrivial query time in simple graphs\footnote{A simple graph refers to an undirected unweighted graph having no parallel edges.}. This breaks the existing quadratic lower bounds \cite{baswana2023minimum+, DBLP:conf/icalp/Bhanja25} on the space for any dual edge sensitivity oracle for $(s,t)$-mincut in undirected multi-graphs\footnote{A multi-graph refers to an unweighted graph having parallel edges.}. 
We arrive at this result by designing a compact structure for storing and characterizing all \textit{minimum+1 $(s,t)$-cuts} (a special case of second minimum $(s,t)$-cut). This structure improves the space occupied by the existing best-known structure \cite{baswana2023minimum+} by a linear factor. 

We now present the problems addressed in this paper, their state-of-the-art, and our results.

\subsection{Faster Algorithm for Second Minimum (s,t)-cuts}
An algorithmic graph problem aims at computing a structure that optimizes a given function. Examples of such classical problems are Minimum Spanning Tree, Shortest Path, Minimum Cut. Having designed (near) optimal algorithm for such problems, the next immediate objective is the following.
Design an efficient algorithm that, given a structure $S$ achieving optimal function value, computes a structure $S'$ that differs from $S$ and achieves the optimal or \textit{next} optimal function value. There has been an extensive study on computing the second minimum spanning tree \cite{kMST/camerini1974k,kMSTibaraki,kbst/mayr1992spanning}, the second shortest path \cite{kshortestpath/bock1957algorithm, kshortestpath/yen1971finding, kShortestpath/ED1998,kshortestpath/liamrodity}, and second $(s,t)$-mincut \cite{vazirani1992suboptimal}. The algorithms for these problems are so fundamental that they appear in textbooks on algorithms \cite{cormen2022introduction, DBLP:books/daglib/0069809}. In addition, they act as the foundation for generalizing the problem to compute $k^{th}$ optimal structure, such as computing $k^{th}$ minimum spanning tree, $k^{th}$ shortest path, $k^{th}$ $(s,t)$-mincut.

    The problem of designing efficient algorithms for computing $(s,t)$-mincut (or equivalently, maximum $(s,t)$-flow) has been studied for more than six decades. This problem is now almost settled with the recent breakthrough result on almost linear time algorithm for maximum $(s,t)$-flow by Van et al. \cite{van2023deterministic}. Interestingly, given an algorithm for maximum $(s,t)$-flow, we can compute all $(s,t)$-mincuts implicitly with an additional ${\mathcal O}(m)$ time, as shown by Picard and Queyranne \cite{DBLP:journals/mp/PicardQ80}. It is, therefore, natural to redefine the second $(s,t)$-mincut problem as follows. Design an algorithm that computes an $(s,t)$-cut of second minimum capacity. For brevity, henceforth we call $(s,t)$-cut of second minimum capacity by second $(s,t)$-mincut. 

    Vazirani and Yannakakis \cite{vazirani1992suboptimal} addressed the problem of computing an $(s,t)$-cut having $k^{th}$ minimum capacity in 1992. For computing any $k^{th}$ minimum capacity $(s,t)$-cut, they gave an algorithm that uses ${\mathcal O}(n^{2(k-1)})$ maximum $(s,t)$-flow computations. Conversely, they also argued, using NP-hardness of computing a maximum cut, that exponential dependence on $k$ is unavoidable assuming $P\ne NP$. 
   To arrive at their result, the key problem they addressed is the design of an algorithm that computes a second $(s,t)$-mincut using ${\mathcal O}(n^2)$ maximum $(s,t)$-flow computations.
   They further improve the running time by designing a faster algorithm that computes a second $(s,t)$-mincut using only ${\mathcal O}(n)$ maximum $(s,t)$-flow computations. 
   We show that this faster algorithm is incorrect due to a serious error in its analysis (refer to Section \ref{sec : limitation} for details). As a result, the existing best-known algorithm for computing a second $(s,t)$-mincut uses ${\mathcal O}(n^2)$ maximum $(s,t)$-flow computations \cite{vazirani1992suboptimal}. 
    We design an algorithm for the second $(s,t)$-mincut with a significantly improved running time as shown in the following theorem.
    \begin{theorem} [Second $(s,t)$-mincut algorithm] \label{thm : second minimum (s,t)-cut}
        For any directed graph $G$ on $n$ vertices with integer edge capacities, 
        there is an algorithm that computes a second $(s,t)$-mincut in $G$ using $\Tilde{{\mathcal O}}(\sqrt{n})$ maximum $(s,t)$-flow computations with high probability.
    \end{theorem}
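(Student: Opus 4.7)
The plan is to reduce the problem of computing a second $(s,t)$-mincut in $G$ to computing a global mincut in a suitably constructed directed weighted graph, and then invoke the state-of-the-art algorithm for directed global mincut, which is known to run using $\tilde{\mathcal{O}}(\sqrt{n})$ max-flow computations with high probability. The introduction to this paper announces exactly such a reduction (``an $(s,t)$-cut of second minimum capacity has the same time complexity as computing a global mincut''), so this is the natural target.

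First I would compute a maximum $(s,t)$-flow $f$ using a single max-flow call and build the Picard--Queyranne DAG $D_G$ associated with $f$ in $\mathcal{O}(m)$ additional time. Every $(s,t)$-mincut of $G$ corresponds to a source-containing, sink-avoiding closed set (``closure'') in $D_G$. Since edge capacities are integers, any $(s,t)$-cut that is not an $(s,t)$-mincut has capacity at least $\lambda+1$. Hence a second $(s,t)$-mincut is precisely an $(s,t)$-cut of minimum capacity that fails to be a closure of $D_G$; equivalently, at least one residual edge crosses $C$ from $C$ to $\overline{C}$ in the residual graph $G_f$. This structural characterization drives the rest of the argument.

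The crux of the proof is to build an auxiliary directed weighted graph $H$ on $\tilde{\mathcal{O}}(n)$ vertices whose global mincut corresponds, up to a known additive shift in capacity, to a second $(s,t)$-mincut of $G$. Roughly, one would start from $G$ (or $G_f$) and modify the edge weights using $D_G$ so that (i)~cuts that respect all Picard--Queyranne closures (the genuine $(s,t)$-mincuts) are either forbidden or penalised to have large capacity in $H$, while (ii)~every $(s,t)$-cut of $G$ whose vertex set is \emph{not} a closure of $D_G$ retains its original capacity in $H$. A natural candidate is to reassign weights inside each strongly connected component of $D_G$ using the residual edges of $f$, so that non-closure cuts are the only candidates for the global mincut of $H$. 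If such an $H$ can be defined then the global mincut of $H$ directly yields a second $(s,t)$-mincut of $G$.

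Finally, I would invoke the recent algorithm that computes a global mincut in a directed weighted graph using $\tilde{\mathcal{O}}(\sqrt{n})$ max-flow computations with high probability, and translate its output back into a second $(s,t)$-mincut of $G$ by the reverse of the construction. The main technical hurdle, and where I expect to spend most of the effort, is in the previous step: proving that the constructed $H$ has no spurious small global cuts (in particular, no global cut arising from an actual $(s,t)$-mincut of $G$ or from a cut that fails to separate $s$ from $t$), while simultaneously keeping $|V(H)|=\tilde{\mathcal{O}}(n)$ so that the $\tilde{\mathcal{O}}(\sqrt{n})$ bound transfers back to $G$. I expect this to rely on a careful case analysis based on how an arbitrary cut of $H$ interacts with each strongly connected component of $D_G$ and with the residual edges of $f$.
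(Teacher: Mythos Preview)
Your high-level plan---reduce to directed global mincut via the residual graph and $\mathcal{D}_{PQ}$, then invoke Cen et al.---is exactly the paper's strategy, and your characterization (a second $(s,t)$-mincut is precisely an $(s,t)$-cut with positive capacity in $G^f$) is the content of the paper's Theorem~\ref{thm : min+k in residual graph}. Where you diverge is the middle step: you propose a \emph{single} auxiliary graph $H$ and leave its construction as the ``main technical hurdle.'' The paper does not build one monolithic $H$. Instead it (i)~handles separately the case where some second $(s,t)$-mincut does \emph{not} subdivide any node of $\mathcal{D}_{PQ}(G)$---then it is a non-$1$-transversal cut, recoverable in $\mathcal{O}(m)$ time from the least-capacity DAG edge, a case you do not mention; (ii)~shows by a short submodularity argument (Lemma~\ref{lem : second st mincut separates one node}) that otherwise every second $(s,t)$-mincut subdivides \emph{exactly one} node $\mu$; and (iii)~for each $\mu$, proves a direct bijection (Lemmas~\ref{lem : second mincut = global mincut for 2 st mincuts}, \ref{lem : second mincut = global mincut for 1 st mincuts}, \ref{lem : mapping of a second st mincut to a node mu}) between second $(s,t)$-mincuts subdividing $\mu$ and global mincuts of the corresponding SCC of $G^f$ (with a small modification when $\mu\in\{\mathbb{S},\mathbb{T}\}$). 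Since the SCCs are vertex-disjoint, running Cen et al.\ on each of them totals $\mathcal{O}(GM(n,m))$.

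So the paper's key idea is the per-SCC bijection rather than a single $H$; this sidesteps exactly the ``spurious small global cuts'' difficulty you anticipate, because within an SCC every cut already has strictly positive capacity (Lemma~\ref{lem : global mincut in H is non-zero}) and the offending $(s,t)$-mincuts simply do not appear. Your single-$H$ route could plausibly be made to work (for instance by adding infinite-capacity edges to force every global mincut of $H$ to live inside one SCC), but even then you would still need the ``exactly one node'' lemma and the non-$1$-transversal case, which carry most of the actual content of the proof.
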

    The two well-known minimum cuts in a graph are $(s,t)$-mincut and global mincut. Recently, there has been a growing interest in understanding the difference in the complexity of computing an $(s,t)$-mincut and computing a global mincut. 
    For undirected weighted graphs, Li and Panigrahi \cite{DBLP:conf/focs/LiP20} established that the running time of computing a global mincut differs by only poly-logarithmic factors from the running time of computing an $(s,t)$-mincut. In particular, the authors showed that there is an algorithm that can compute a global mincut 
    using $\tilde{{\mathcal O}}(1)$ maximum $(s,t)$-flow computations
   with additional $\Tilde{\mathcal{O}}(m)$ time. However, for directed weighted graphs, there is a \textit{large} gap between the running time of computing an $(s,t)$-mincut and computing a global mincut as follows. Cen et al. \cite{DBLP:conf/focs/Cen0NPSQ21} designed an algorithm that can compute a global mincut using $\tilde{\mathcal O}(\sqrt{n})$ maximum $(s,t)$-flow computations with additional $\tilde{\mathcal O}(m\sqrt{n})$ time. Interestingly, to arrive at our result in Theorem \ref{thm : second minimum (s,t)-cut}, 
    we show that the complexity of computing a global mincut is the same as the complexity of computing a second $(s,t)$-mincut \textit{modulo} a single maximum $(s,t)$-flow computation as follows, which might find many other important applications.

    \begin{theorem} [Equivalence between global mincut and second $(s,t)$-mincut] \label{thm : equivalence between second st mincut and global}
        For any directed weighted graph $G$ on $n$ vertices and $m$ edges, the following two assertions hold.
        \begin{enumerate}
            \item The problem of computing a second $(s,t)$-mincut is reducible to the problem of computing a global mincut in ${\mathcal O}(MF(m,n))$ time, where $MF(m,n)$ denotes the time complexity for computing a maximum $(s,t)$-flow in $G$.
            \item The problem of computing a global mincut is reducible to the problem of computing a second $(s,t)$-mincut in $\mathcal{O}(m)$ time. 
        \end{enumerate}
    \end{theorem}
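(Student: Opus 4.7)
The plan is to establish the two reductions separately.

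For the direction in which global mincut reduces to second $(s,t)$-mincut, the idea is to engineer an auxiliary graph $G'$ whose second $(s,t)$-mincut directly encodes the global mincut of $G$. I would first check strong connectivity of $G$ in $O(m)$ time; if $G$ fails this test, the SCC condensation exposes a proper non-empty subset of $V$ with empty out-boundary, certifying that the global mincut has capacity zero. Otherwise every proper non-empty $S\subsetneq V$ satisfies $c_G(S)\ge 1$. I would then build $G'$ on $V\cup\{s,t\}$ (with fresh vertices $s,t$) by retaining all edges of $G$, adding edges $(s,v)$ and $(v,t)$ of weight $W$ for every $v\in V$ (for any fixed $W\ge 1$), and one edge $(s,t)$ of weight $nW$. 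A direct calculation shows that every $(s,t)$-cut of $G'$ with source side $\{s\}\cup S$, $\emptyset\neq S\subsetneq V$, has capacity exactly $2nW+c_G(S)$, while both trivial cuts $(\{s\},\cdot)$ and $(V\cup\{s\},\{t\})$ have capacity $2nW$. Strong connectivity then forces the $(s,t)$-mincut of $G'$ to equal $2nW$ and the second $(s,t)$-mincut to equal $2nW+\lambda^*$, where $\lambda^*$ denotes the global mincut capacity of $G$; the corresponding source side minus $\{s\}$ is a global mincut of $G$. All construction and post-processing take $O(m)$ time.

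For the direction in which second $(s,t)$-mincut reduces to global mincut, I would first compute a maximum $(s,t)$-flow $f$ in $O(MF(m,n))$ time and form the residual graph $G_f$. A standard identity gives $c(X) = \lambda + c_{G_f}(X)$ for every $(s,t)$-cut $X$ of $G$; therefore the $(s,t)$-mincuts of $G$ are exactly the zero-residual $(s,t)$-cuts in $G_f$, and the second $(s,t)$-mincut of $G$ has capacity $\lambda+\delta$, where $\delta$ is the smallest strictly positive $(s,t)$-cut capacity in $G_f$. It then suffices to reduce, in $O(m)$ additional time, the computation of $\delta$ and a witnessing cut to a single global mincut call on an auxiliary graph $H$ built from $G_f$. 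Using the Picard--Queyranne characterization of the $(s,t)$-mincuts of $G$ as the reachability-closed subsets of the SCC condensation of $G_f$ lying between the SCC of $s$ and the SCC of $t$, every positive $(s,t)$-cut of $G_f$ must either split the set $A$ of vertices reachable from $s$ in $G_f$, split the set $B$ of vertices that reach $t$ in $G_f$, or violate reachability-closure within the middle set $V\setminus(A\cup B)$. The construction of $H$ would add a reverse edge $(t,s)$ of sufficiently large weight to force any finite global mincut of $H$ to be an $(s,t)$-cut in the correct direction, together with a targeted modification that renders every zero-capacity $(s,t)$-cut of $G_f$ infinite in $H$ while preserving the capacity of every positive $(s,t)$-cut, so that the global mincut of $H$ is attained by a minimum positive $(s,t)$-cut of $G_f$.

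The main obstacle is to realize the $H$-construction explicitly and to prove its correctness. I would classify any positive $(s,t)$-cut of $G_f$ by the location of a witnessing forward residual edge---inside $A$, inside $B$, inside the middle set, or crossing between these regions---verify that each type survives the construction with the same capacity, and show that no smaller global cut of $H$ can arise that does not pull back to a positive $(s,t)$-cut of $G_f$. Combining the two reductions with the $\tilde{O}(\sqrt{n})$-max-flow directed global mincut algorithm of Cen et al.\ then yields Theorem~\ref{thm : second minimum (s,t)-cut}.
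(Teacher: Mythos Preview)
Your reduction for Part~(2) is correct and close in spirit to the paper's, though the paper uses a leaner construction: it simply adds two \emph{isolated} dummy vertices $s_1,t_1$, so the $(s_1,t_1)$-mincut of $G_1$ has capacity $0$; when the global mincut of $G$ is positive, $G_1$ has exactly the two trivial $(s_1,t_1)$-mincuts and any second $(s_1,t_1)$-mincut, stripped of $s_1$, is a global mincut of $G$ (Lemma~\ref{lem : global mincut is as hard as second s,t mincut}). Your heavier variant with $2n{+}1$ extra weighted edges works for the same reason and stays within $O(m)$.

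For Part~(1) there is a genuine gap. The residual identity $c(X)=\lambda+c_{G^f}(X)$ and the Picard--Queyranne viewpoint are the right starting points, but the step you flag as ``the main obstacle''---a single modification of $G^f$ that makes every zero-capacity $(s,t)$-cut infinite while leaving every positive $(s,t)$-cut untouched---is left entirely unspecified, and it is not at all clear such a modification exists: any edge added to kill a zero cut will in general also cross, and hence inflate, some positive cut. The paper does \emph{not} attempt a single auxiliary graph $H$. Instead it decomposes via $\mathcal{D}_{PQ}(G)$: it first shows that any second $(s,t)$-mincut subdivides exactly one node $\mu$ of $\mathcal{D}_{PQ}(G)$ (Lemma~\ref{lem : second st mincut separates one node}), and then establishes a bijection between second $(s,t)$-mincuts subdividing $\mu$ and global mincuts of the SCC in $G^f$ corresponding to $\mu$, suitably modified when $\mu\in\{\mathbb{S},\mathbb{T}\}$ (Lemmas~\ref{lem : second mincut = global mincut for 2 st mincuts}, \ref{lem : second mincut = global mincut for 1 st mincuts}, \ref{lem : mapping of a second st mincut to a node mu}). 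Because each such SCC is strongly connected, its global mincut is automatically positive---this is how the zero-cut obstacle is bypassed, not by an edge-addition trick on all of $G^f$. A global-mincut routine is then run on each SCC; since the SCCs are vertex- and edge-disjoint, the total work is bounded by one global-mincut computation on a graph of size $O(n{+}m)$. The remaining case, where the second mincut subdivides no node (a non-$1$-transversal cut of $\mathcal{D}_{PQ}(G)$), is handled separately in $O(m)$ time via the minimum-capacity edge of $\mathcal{D}_{PQ}(G)$; your outline does not isolate this case either.
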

\begin{remark}
    Our algorithm for computing a second $(s,t)$-mincut in Theorem \ref{thm : second minimum (s,t)-cut} is Monte Carlo and works for graphs with integer edge capacities.
    However, the equivalence between second $(s,t)$-mincut and global mincut in Theorem \ref{thm : equivalence between second st mincut and global} is deterministic and holds even for graphs with real edge capacities. 
\end{remark}

\subsection{Minimum+1 (s,t)-cuts: Efficient Algorithm \& Compact Structure}
The cuts of capacity minimum+1, known as minimum+1 cuts, have been studied quite extensively in the past \cite{baswana2023minimum+, DBLP:conf/stoc/DinitzN95, DBLP:conf/icalp/Bhanja25}. For (un)directed multi-graphs, a minimum+1 $(s,t)$-cut is a special case of second $(s,t)$-mincut.  We present the following structural result and the first nontrivial algorithm for minimum+1 $(s,t)$-cuts.
\paragraph*{Algorithm:} For both minimum+1 global cuts \cite{DBLP:conf/stoc/DinitzN95} and minimum+1 $(s,t)$-cuts \cite{baswana2023minimum+}, there exist compact data structures. 
These data structures have important applications in maintaining minimum+2 edge connected components \cite{DBLP:conf/stoc/DinitzN95} and designing dual edge sensitivity oracle for 
$(s,t)$-mincut \cite{baswana2023minimum+}. 
Moreover, there exist efficient algorithms \cite{karger1993, DBLP:conf/focs/Benczur95, smallcutsnagamochi} that can compute a global cut of capacity minimum+1. 
Unfortunately, the existing best-known algorithm for computing an $(s,t)$-cut of capacity minimum+1 is nothing but the algorithm for computing a second $(s,t)$-mincut by \cite{vazirani1992suboptimal}, which uses ${\mathcal O}(n^2)$ maximum $(s,t)$-flow computations.
We present the following result as the first efficient algorithm for computing an $(s,t)$-cut of capacity minimum+1 (proof is in Appendix \ref{sec : minimum+1 algorithm in multigraphs}).  
  \begin{theorem} [Minimum+1 $(s,t)$-cut algorithm] \label{thm : minimum+1 (s,t)-cut}
        For any directed multi-graph $G$ on $n$ vertices and $m$ edges, 
        there is an algorithm that, given any maximum $(s,t)$-flow, computes a minimum+1 $(s,t)$-cut, if exists in $G$, in ${\mathcal O}(m)$ time.
    \end{theorem}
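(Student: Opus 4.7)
The plan is to work in the residual graph $G_f$, which can be built from the given max flow in $O(m)$ time. For a unit-capacity multi-graph, a short counting argument---splitting edges across any cut into saturated forward, unsaturated forward, and flow-carrying backward---shows that an $(s,t)$-cut $C$ of $G$ has capacity exactly $\lambda+1$ if and only if precisely one residual arc of $G_f$ crosses from $C$ to $\overline{C}$. Hence the problem reduces to finding an $(s,t)$-cut of $G_f$ whose outgoing residual boundary has size exactly one, or concluding that none exists.

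The structural heart of the argument is to show that any such cut is ``witnessed'' by a single arc $e=(x,y)$: the cut must be reachability-closed in $G_f \setminus \{e\}$, contain $\{s,x\}$, and avoid $\{y,t\}$; conversely, the reachability-closure of $\{s,x\}$ in $G_f \setminus \{e\}$ is a valid cut whenever it avoids $y$ and $t$. To make the search over witnesses efficient, compute once the sets $R$ of vertices reachable from $s$ in $G_f$, $T$ of vertices that can reach $t$, and $M = V \setminus (R \cup T)$. Three structural facts follow from the maximality of the flow: $R \cap T = \emptyset$; arcs out of $R$ stay in $R$; and arcs into $T$ come from $T$. A short case analysis on the position of $e$ relative to this partition then reduces the valid witnesses to three kinds: (i) $R \to R$ arcs $e$ that are strong bridges from $s$ to $y$; (ii) $T \to T$ arcs $e$ that are strong bridges from $x$ to $t$; and (iii) $M \to M$ arcs $e$ whose removal makes $y$ unreachable from $x$ inside $G_f[M]$ (the induced subgraph on $M$, to which any $x$-to-$y$ path must be confined by the partition facts). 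All other arc types can be excluded directly.

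Each of the three searches can be carried out in $O(m)$ using classical machinery. The partition into $R,T,M$ comes from two graph traversals; strong bridges from $s$ in $G_f$ are extracted from the dominator tree of $G_f$ rooted at $s$, and symmetrically for $t$ from the dominator tree of $G_f^{\mathrm{rev}}$ rooted at $t$ (both trees built in linear time via the algorithms of Buchsbaum et al.\ and of Italiano, Laura, and Santaroni). Strong bridges within $G_f[M]$ can likewise be identified in linear time. Given a witness arc $e$, the min+1 cut is recovered by a single BFS closure of $\{s,x\}$ in $G_f \setminus \{e\}$, again in $O(m)$. If no witness is found in any of the three categories, no min+1 $(s,t)$-cut exists.

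The main obstacle is the $M \to M$ case of the witness classification: neither endpoint of $e$ is anchored to $s$ or $t$ by reachability, so the bridge notion has to be applied locally inside $G_f[M]$, and one must carefully argue that reachability from a vertex in $M$ to another vertex in $M$ never leaves $M$ (otherwise the reduction to $G_f[M]$ would fail). A secondary concern is that parallel residual arcs---inherent to multi-graphs---cannot individually act as bridges, so the dominator-tree-based bridge detection must be combined with edge-multiplicity bookkeeping. Both issues are combinatorial rather than quantitative and do not affect the final $O(m)$ bound.
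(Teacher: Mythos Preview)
Your high-level strategy---reduce to finding an $(s,t)$-cut with exactly one outgoing residual arc, partition $V$ into $R$, $T$, $M$, and classify the unique crossing arc $e=(x,y)$---is sound, and your case analysis showing that $e$ must be of type $R\to R$, $T\to T$, or $M\to M$ is correct. Cases (i) and (ii) are handled correctly via dominator trees rooted at $s$ (resp.\ at $t$ in the reverse graph), and this parallels the paper's approach, which also uses dominator trees (after an edge-subdivision trick) inside pieces of the residual graph.

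The gap is in Case~(iii). You state the right condition---an arc $e=(x,y)$ in $G_f[M]$ whose removal makes $y$ unreachable from $x$---but then assert that such arcs are the ``strong bridges within $G_f[M]$'' and can be found in linear time. This is where the argument breaks. Standard strong bridges are arcs whose removal increases the number of SCCs; they capture only the sub-case where $x$ and $y$ lie in the \emph{same} SCC of $G_f[M]$. When $x$ and $y$ lie in \emph{different} SCCs, removing $e$ never changes the SCC count, yet $e$ may still be the unique $x$-to-$y$ path. A concrete instance: take $G$ on $\{s,a,b,t\}$ with arcs $s\to a$, $s\to b$, $a\to b$, $a\to t$, $b\to t$. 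Then $\lambda=2$, $R=\{s\}$, $T=\{t\}$, $M=\{a,b\}$, and the only arc in $G_f[M]$ is $a\to b$. Here $\{s,a\}$ is a $(\lambda+1)$ cut witnessed by $e=(a,b)$, but $\{a\}$ and $\{b\}$ are singleton SCCs in $G_f[M]$, so no strong-bridge routine flags $e$; your algorithm would wrongly report ``no $(\lambda+1)$ cut.''

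The paper avoids this by first building $\mathcal{D}_{PQ}(G)$ and separating off the $(\lambda+1)$ cuts that do \emph{not} subdivide any SCC---precisely the inter-SCC witnesses your Case~(iii) misses---handling them in $O(m)$ via the Vazirani--Yannakakis argument on the DAG. Only then does it process each SCC $\mu$ in a contracted graph $G_\mu$ with at most two $(s,t)$-mincuts, where (after the covering trick) every vertex is reachable from a single root and one dominator tree suffices. To repair your argument you would need an $O(m)$ procedure for the inter-SCC sub-case of~(iii); the strong-bridge literature you cite does not supply one.
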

\begin{remark}
    The best-known algorithm for computing an $(s,t)$-mincut uses one maximum $(s,t)$-flow computation. 
    It follows from Theorem \ref{thm : minimum+1 (s,t)-cut} that the running time of computing a minimum+1 $(s,t)$-cut differs from the running time of computing an $(s,t)$-mincut only by additional $\mathcal{O}(m)$ time.
\end{remark}
\paragraph*{Structure:} 
There are several algorithmic applications on cuts, namely, fault-tolerance 
\cite{DBLP:journals/mp/PicardQ80, dinitz1976structure}, dynamic algorithms \cite{henzingericalp2023, DBLP:journals/talg/GoranciHT18}, edge-connectivity augmentation \cite{edgeconnectivityaigmentation/naor1997fast,edgeconnaugmentation/cen2022augmenting} that require an efficient way to distinguish a set of cuts, say minimum cuts or minimum+1 cuts, from the rest of the cuts. 
A trivial way to accomplish this objective is to store each cut of the required set explicitly. However, this is totally impractical since the set of these cuts is usually quite huge. For example, the number of $(s,t)$-mincuts are exponential  \cite{DBLP:journals/mp/PicardQ80}, and the number of global mincuts are $\Omega(n^2)$ \cite{dinitz1976structure}. 
%
%
This has led the researchers to invent the following concept. 
A structure $H$ is said to {\em characterize} a set of cuts ${\mathcal C}$ using a property $P$ if the following condition holds. A cut $C\in {\mathcal C}$ if and only if $C$ satisfies property $P$ in $H$.  
It is desirable that $H$ is as compact as possible. Moreover, verifying if a given cut $C$ satisfies $P$ in $H$ has to be as time-efficient as possible.

The design of compact structures for characterizing minimum cuts started with the seminal work of Dinitz, Karzanov, and Lomonosov \cite{dinitz1976structure} in 1976. In this work, the well-known cactus graph occupying ${\mathcal O}(n)$ space was invented for storing and characterizing all global mincuts. Several compact structures have been designed subsequently for storing and characterizing cuts of capacity both minimum and near minimum \cite{DBLP:journals/mp/PicardQ80, DBLP:conf/focs/Benczur95, DBLP:conf/stoc/DinitzN95, DBLP:journals/siamcomp/DinitzV00,baswana2023minimum+}. Quite expectedly, they are playing crucial roles in establishing many important algorithmic results 
\cite{vazirani1992suboptimal, DBLP:conf/focs/Benczur95, DBLP:conf/stoc/DinitzN95, DBLP:journals/talg/GoranciHT18, DBLP:journals/jacm/KawarabayashiT19, baswana2023minimum+}.
In particular, compact structures for storing and characterizing all minimum+1 cuts of a graph have been well-studied.
For all minimum+1 global cuts in undirected multi-graphs, Dinitz and Nutov \cite{DBLP:conf/stoc/DinitzN95} constructed an ${\mathcal O}(n)$ space 2-level cactus model that stores and characterizes them. 
For all minimum+1 $(s,t)$-cuts in directed multi-graphs, the existing structure that provides a characterization occupies ${\mathcal O}(mn)$ space \cite{baswana2023minimum+}.
Unfortunately, the space occupied by this structure of \cite{baswana2023minimum+} is significantly inferior to the widely-known ${\mathcal O}(m)$ space directed acyclic graph (DAG) of Picard and Queyranne \cite{DBLP:journals/mp/PicardQ80}, which stores and characterizes all $(s,t)$-mincuts using $1$-transversal cuts. An $(s,t)$-cut is said to be \textit{$1$-transversal} if its edge-set\footnote{The \textit{edge-set} of a cut $C$ is the set of edges with exactly one endpoint in $C$.} 
intersects any path at most once.  
Interestingly, we are able to achieve the ${\mathcal O}(m)$ bound on space for storing and characterizing all minimum+1 $(s,t)$-cuts. 


An edge $(u,v)$ is said to \textit{contribute} to a cut $C$ if $u \in C$ and $v \in \overline{C}$.
We first establish that
for any maximum $(s,t)$-flow $f$, there exists a set containing at most $n-2$ edges, called the \textit{anchor} edges, such that for any minimum+1 $(s,t)$-cut $C$, exactly one anchor edge contributes to $C$.
By exploiting this result, we design the following structure for storing and characterizing all minimum+1 $(s,t)$-cuts. 

\begin{theorem} [Structure for minimum+1 (s,t)-cuts] \label{thm: structure for min+1}
    Let $G$ be an undirected multi-graph on $n$ vertices and $m$ edges with a maximum $(s,t)$-flow $f$. There is an ${\mathcal O}(\min\{m,n\sqrt{\lambda}\})$ space structure, consisting of a directed acyclic graph ${\mathcal D}$ and the set of $n-2$ anchor edges, that stores and characterizes all $(s,t)$-mincuts and all minimum+1 $(s,t)$-cuts of $G$ as follows. 
    \begin{enumerate}
        \item  An $(s,t)$-cut $C$ is an $(s,t)$-mincut in $G$ if and only if $C$ is a $1$-transversal 
    cut in ${\mathcal D}$ to which no anchor edge corresponding to $f$ contributes.
        \item  An $(s,t)$-cut $C$ is a minimum+1 $(s,t)$-cut in $G$ if and only if $C$ is a $1$-transversal 
    cut in ${\mathcal D}$ to which exactly one anchor edge corresponding to $f$ contributes.
    \end{enumerate}   
\end{theorem}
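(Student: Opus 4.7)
The plan is to build ${\mathcal D}$ as a refinement of the Picard--Queyranne (PQ) DAG of the given max flow $f$ that simultaneously captures all $(s,t)$-mincuts and all minimum+1 $(s,t)$-cuts. I would use as a black box the structural fact stated just before the theorem: for any max flow $f$ there is a set $A$ of at most $n-2$ anchor edges with the property that every minimum+1 $(s,t)$-cut has exactly one anchor edge contributing to it.

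\emph{Construction.} I would first form the residual graph $G_f$ and contract its strongly connected components to obtain the PQ DAG ${\mathcal D}_0$, whose 1-transversal cuts are exactly the $(s,t)$-mincuts of $G$ by the classical theorem of Picard and Queyranne. Next, for each anchor edge $e=(u,v)$ that lies inside an SCC $S$ of $G_f$, observe that $e$ induces a canonical 2-way split of $S$ (the set of vertices in $S$ reachable from $u$ in $G_f$ after removing $e$, versus the rest). Refine ${\mathcal D}_0$ by replacing each such SCC with these two contracted parts arranged as a two-node directed sub-path, redrawing the original edges between and within the parts accordingly. Let ${\mathcal D}$ be the resulting DAG; it has ${\mathcal O}(m)$ edges and at most $n$ vertices, and the anchor set $A$ is stored separately in ${\mathcal O}(n)$ space.

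\emph{Correctness.} For part (1), if $C$ is an $(s,t)$-mincut then no anchor edge contributes to $C$, so each split SCC lies entirely on one side of $C$; hence $C$ is 1-transversal in ${\mathcal D}_0$ and remains so in ${\mathcal D}$. Conversely, a 1-transversal cut in ${\mathcal D}$ to which no anchor edge contributes keeps each split SCC whole, so it projects to a 1-transversal cut in ${\mathcal D}_0$, i.e., to a mincut. For part (2), if $C$ is a minimum+1 cut then the unique contributing anchor edge $e$ forces the split of the SCC containing $e$ to align with $C$, making $C$ 1-transversal in ${\mathcal D}$. For the converse, I would argue that the total capacity across a 1-transversal cut of ${\mathcal D}$ with exactly one contributing anchor edge decomposes as $\lambda$ (from the underlying mincut skeleton) plus $1$ (from the anchor edge), giving a cut of capacity $\lambda+1$.

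\emph{Space and main obstacle.} The ${\mathcal O}(m)$ bound is immediate, and the alternative $n\sqrt{\lambda}$ bound follows by preprocessing $G$ with Nagamochi--Ibaraki sparsification, which preserves all cuts of capacity at most $\lambda+1$ in only ${\mathcal O}(n\sqrt{\lambda})$ edges. The chief obstacle is the converse of part (2): I must argue that 1-transversality in ${\mathcal D}$ together with exactly one contributing anchor edge forces the capacity in $G$ to be precisely $\lambda+1$, not merely at most $\lambda+1$ or at least $\lambda+1$. This requires a careful accounting of how the 2-way split of a broken SCC interacts with the non-anchor edges incident to it, ensuring that the split introduces no stray crossing edges beyond those counted by the mincut skeleton and the single anchor edge.
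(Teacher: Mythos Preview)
Your construction diverges from the paper's in a way that creates the very obstacle you identify. The paper does \emph{not} refine ${\mathcal D}_{PQ}(G)$ by splitting SCCs along anchor edges one at a time. Instead it sets ${\mathcal D}={\mathcal D}_{PQ}(G\setminus{\mathcal A})$: first delete all anchor edges from $G$, then build the PQ DAG of the resulting graph. Since anchor edges carry zero flow, $f$ is still a maximum $(s,t)$-flow in $G\setminus{\mathcal A}$ with the same value $\lambda$, and by the Maxflow (Min$+1$)-cut Theorem every $(\lambda+1)$ cut of $G$ loses exactly one edge and becomes a mincut of $G\setminus{\mathcal A}$. The converse of part~(2) is then immediate: a $1$-transversal cut in ${\mathcal D}$ is, by Picard--Queyranne applied to $G\setminus{\mathcal A}$, an $(s,t)$-mincut there of capacity exactly $\lambda$; if exactly one anchor edge contributes to it in $G$, its capacity in $G$ is exactly $\lambda+1$. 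No ``careful accounting'' is needed.

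Your per-edge $2$-way split has genuine problems beyond this. An SCC may contain several anchor edges, and your description does not say how to compose the resulting splits or why the outcome is independent of order; in general the SCC decomposition of $G^f$ after removing all anchor edges need not arise from iterated $2$-way reachability splits. Moreover, in the residual graph of an undirected graph a zero-flow edge appears as a pair of opposite arcs, so ``removing $e$'' and taking reachability from $u$ does not obviously yield the right partition. Finally, your space argument is off: Nagamochi--Ibaraki sparsification preserving cuts up to $\lambda+1$ uses $O(n\lambda)$ edges, not $O(n\sqrt{\lambda})$. The paper's $O(n\sqrt{\lambda})$ bound comes from a different fact: in an integral acyclic maximum $(s,t)$-flow the number of flow-carrying edges is $O(n\sqrt{\lambda})$, and every edge of ${\mathcal D}_{PQ}$ corresponds to a flow-carrying edge.
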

For undirected graphs, the best-known structure for storing and characterizing all $(s,t)$-mincuts is the DAG of Picard and Queyranne \cite{DBLP:journals/mp/PicardQ80} that occupies ${\mathcal O}(\min\{m,n\sqrt{\lambda}\})$ space, which is tight as well (refer to \cite{shortlengthversionofmengerstheorem} and Lemma 12 in \cite{henzingericalp2023}). Interestingly, not only our structure in Theorem \ref{thm: structure for min+1} matches the bound on space with the DAG for $(s,t)$-mincuts \cite{DBLP:journals/mp/PicardQ80} but also it stores and characterizes both $(s,t)$-mincuts and minimum+1 $(s,t)$-cuts.


\subsection{Dual Edge Sensitivity Oracle: Breaking the Quadratic Barrier}
The study of minimum+1 $(s,t)$-cuts often turns out to be useful in designing elegant \textit{dual edge sensitivity oracles} \cite{baswana2023minimum+, DBLP:conf/stoc/DinitzN95, DBLP:conf/icalp/Bhanja25}, which is defined as follows. 
\begin{definition}[Dual edge sensitivity oracle] \label{def : dual edge sensitivity}
    A dual edge sensitivity oracle for minimum cuts is a compact data structure that can efficiently report a minimum cut in the graph after the insertion or failure of any given pair of query edges.    
\end{definition}  
Designing sensitivity oracles for various minimum cuts of a graph has been an emerging field of research \cite{DBLP:journals/mp/PicardQ80, DBLP:journals/anor/ChengH91, DBLP:conf/stoc/DinitzN95,DBLP:journals/siamcomp/DinitzV00, 
DBLP:conf/esa/BaswanaGK20, DBLP:conf/soda/BaswanaP22, baswana2023minimum+, baswana2024vital} For $(s,t)$-mincut in multi-graphs, the DAG structure of Picard and Queyranne \cite{DBLP:journals/mp/PicardQ80}, as shown in \cite{baswana2023minimum+}, can be used to design an ${\mathcal O}(n)$ space sensitivity oracle that can report an $(s,t)$-mincut in ${\mathcal O}(n)$ time after the failure/insertion of any single edge. 
It is now interesting to 
design a sensitivity oracle that can handle multiple failures/insertions of edges. To solve this generic problem, as argued in \cite{baswana2023minimum+}, the natural approach is to first design a dual edge sensitivity oracle for $(s,t)$-mincut. This approach has also been taken for various other classical graph problems, including distance and connectivity \cite{DBLP:conf/soda/DuanP09a}, graph traversals \cite{DBLP:conf/podc/Parter15},  reachability \cite{choudhary2016optimal, DBLP:conf/icalp/ChakrabortyC20}.
Moreover, it has been observed that handling two edge failures/insertions is significantly more nontrivial than handling a single one.  It also provides important insights that either expose the hardness or help in generalizing the problem for multiple failures. In order to extend the result of \cite{DBLP:journals/mp/PicardQ80} from single to multiple edge failures/insertions, Baswana, Bhanja, and Pandey \cite{baswana2023minimum+} designed the first dual edge sensitivity oracle for $(s,t)$-mincut occupying ${\mathcal O}(n^2)$ space in (un)directed multi-graphs. It can report a resulting $(s,t)$-mincut in ${\mathcal O}(n)$ time.

Note that quadratic space data structures often pose practical challenges as $n$ can be quite \textit{large} for the real world networks/graphs. It thus raises the need to design sensitivity oracles for various fundamental graph problems that occupy subquadratic space \cite{DBLP:journals/theoretics/BiloCCC0KS24, DBLP:journals/jacm/ThorupZ05, bhanja2024optimal}. Unfortunately, it has been shown \cite{baswana2023minimum+, DBLP:conf/icalp/Bhanja25} that any dual edge sensitivity oracle for $(s,t)$-mincut in undirected multi-graphs must occupy $\Omega(n^2)$ bits of space in the worst case, irrespective of the query time. 
However, for simple graphs, we break this quadratic barrier on the space of any dual edge sensitivity oracle while achieving nontrivial query time as follows.
\begin{theorem} [Dual edge sensitivity oracle for $(s,t)$-mincut] \label{thm : sensitivity oracle for simple graphs}
    Let $G$ be a simple graph on $n$ vertices and $m$ edges. 
    There exists a data structure occupying ${\mathcal O}(\min \{m,n^{1.5}\})$ space that can report an $(s,t)$-mincut $C$ (including the contributing edges of $C$) in ${\mathcal O}(\min \{m,n^{1.5}\})$ time after the failure or insertion of any given pair of query edges in $G$. 
\end{theorem}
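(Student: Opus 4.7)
The plan is to build the oracle around the compact structure established in Theorem \ref{thm: structure for min+1}. Since $G$ is a simple graph, $\lambda \le n-1$, so $n\sqrt{\lambda} = O(n^{1.5})$, and the DAG $\mathcal{D}$ together with the $n-2$ anchor edges from Theorem \ref{thm: structure for min+1} already occupies $O(\min\{m, n^{1.5}\})$ space while characterizing all $(s,t)$-mincuts and all minimum+1 $(s,t)$-cuts of $G$. The preprocessing therefore computes a maximum $(s,t)$-flow $f$, builds $(\mathcal{D}, \text{anchors})$, and augments it with $O(1)$-time lookups from each vertex of $G$ to its hosting vertex in $\mathcal{D}$ and from each edge of $G$ to its image in $\mathcal{D}$.

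The next step is a tight case analysis of the query. Let $G'$ denote the graph obtained after inserting or failing the query pair $\{e_1, e_2\}$, and let $k_C$ be the number of query edges lying in the edge-set of a cut $C$. Since $c_{G'}(C) = c_G(C) \pm k_C$ and $k_C \le 2$, the new mincut capacity $\lambda'$ satisfies $|\lambda' - \lambda| \le 2$. A short case-by-case argument shows that an optimal cut of $G'$ is always either an $(s,t)$-mincut or a minimum+1 $(s,t)$-cut of $G$. For failure: $\lambda' = \lambda - 2$ iff some $(s,t)$-mincut of $G$ has $k_C = 2$; $\lambda' = \lambda - 1$ iff some $(s,t)$-mincut has $k_C = 1$ or some minimum+1 $(s,t)$-cut has $k_C = 2$; otherwise $\lambda' = \lambda$ and any mincut with $k_C = 0$ works. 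For insertion: $\lambda' = \lambda$ iff some mincut has $k_C = 0$; $\lambda' = \lambda + 1$ iff some mincut has $k_C = 1$ or some minimum+1 cut has $k_C = 0$; and in the remaining case $\lambda' = \lambda + 2$ and every $(s,t)$-mincut of $G$ attains the new optimum. Notably, minimum+2 cuts of $G$ are never required, so the $O(\min\{m, n^{1.5}\})$-space structure already encodes everything.

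To implement the query within the $O(\min\{m, n^{1.5}\})$ time bound, I would exploit the characterization provided by $\mathcal{D}$: every 1-transversal cut corresponds to an ancestor-closed subset of the vertices of $\mathcal{D}$ that contains (the image of) $s$ and excludes $t$, and an edge $e$ of $G$ contributes to such a cut iff its two endpoints map to distinct vertices $u, v$ of $\mathcal{D}$ with $u$ inside the subset and $v$ outside. For each edge of $G$, and for each anchor edge, we precompute a compact representation of the family of ancestor-closed subsets it crosses. Every feasibility test appearing in the case analysis then reduces to a constant number of structural queries on $\mathcal{D}$, combined with the anchor-edge count that distinguishes mincuts from minimum+1 cuts. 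Once the target 1-transversal cut has been identified, its edge-set in $G'$ is enumerated by traversing the corresponding boundary in $\mathcal{D}$ and applying the insertion/failure correction from $\{e_1, e_2\}$; since the reported cut has capacity $\lambda' \le \lambda + 2 = O(n)$, this enumeration fits within the stated time bound.

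The main obstacle is the insertion case when $(e_1, e_2)$ are non-edges of $G$ and hence are not represented in $\mathcal{D}$: their contribution must be simulated by locating their endpoints in $\mathcal{D}$ and testing whether a hypothetical crossing edge lies on the boundary of a given ancestor-closed subset. A further subtlety is the combined constraint that, in some subcases, the target 1-transversal cut must simultaneously satisfy a separation requirement with respect to both query edges, an anchor-edge count of exactly $0$ or $1$ (depending on whether a mincut or a minimum+1 cut is sought), and a prescribed value of $k_C$. Designing the representation of $\mathcal{D}$ and the anchor-edge index so that all such combined structural queries can be answered, and the answer cut enumerated, within $O(\min\{m, n^{1.5}\})$ total time while keeping the whole data structure within the same space bound is the delicate technical step.
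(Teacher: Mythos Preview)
Your case analysis is correct and the reduction to mincuts and minimum+1 cuts of $G$ is sound, but the approach diverges from the paper's and leaves the central algorithmic step unresolved. The paper does \emph{not} attempt to answer structural feasibility queries of the form ``does some 1-transversal cut of $\mathcal{D}$ have exactly one anchor edge crossing and a prescribed behaviour with respect to $e_1,e_2$?'' Instead it processes the two query edges sequentially: it treats $\mathcal{D}(\mathcal{A})\cup\mathcal{A}$ as a quotient of the residual graph $G^f$, performs an \textsc{Update\_Path} on this quotient to simulate cancelling and re-augmenting flow through $e_1$, proves (via Lemma~\ref{lem : mapping in Df} and Lemma~\ref{lem : property of D2}) that after this update the contracted SCCs are still valid, and then literally rebuilds $\mathcal{D}_{PQ}(G\setminus\{e_1\})$ (respectively $\mathcal{D}_{PQ}(G\cup\{e_1\})$) inside the quotient in $O(\min\{m,n\sqrt\lambda\})$ time. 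The second edge $e_2$ is then handled by the known single-edge sensitivity lemma (Lemma~\ref{lem : single edge failure}). No combined ``mincut/min+1 with constraints on $e_1,e_2$ and on anchor edges'' query is ever posed.

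The gap in your plan is precisely the step you flag as ``the delicate technical step.'' The queries you need are not reachability queries between a constant number of endpoints: the condition ``exactly one of the (up to $n-2$) anchor edges crosses'' couples $\Theta(n)$ constraints, and you give no mechanism for discharging them in $O(\min\{m,n^{1.5}\})$ time from the data stored in Theorem~\ref{thm: structure for min+1}. The phrase ``precompute a compact representation of the family of ancestor-closed subsets it crosses'' does not specify a data structure, and the natural per-anchor-edge iteration would cost $\Theta(n)\cdot O(\min\{m,n^{1.5}\})$. Until you either exhibit such a data structure or switch to the paper's sequential augmenting-path simulation on the quotient graph, the proof is incomplete.
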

For the existing dual edge sensitivity oracles for $(s,t)$-mincut \cite{baswana2023minimum+, DBLP:conf/icalp/Bhanja25}, no nontrivial preprocessing time is known till date. We establish the following almost linear preprocessing time for our dual edge sensitivity oracle in Theorem \ref{thm : sensitivity oracle for simple graphs}. 
%
\begin{theorem} [Preprocessing time]\label{thm : preprocessing of data srtcuture D}
    For simple graphs on $n$ vertices and $m$ edges, there is an algorithm that, given any maximum $(s,t)$-flow, can construct the dual edge sensitivity oracle in Theorem \ref{thm : sensitivity oracle for simple graphs} in ${\mathcal O}(m)$ time.
\end{theorem}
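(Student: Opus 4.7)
The plan is to build the dual edge sensitivity oracle by stacking three linear-time constructions on top of the given maximum $(s,t)$-flow $f$. First, I would run the standard Picard--Queyranne construction on the residual graph of $f$ to obtain, in $\mathcal{O}(m)$ time, the DAG that stores and characterizes all $(s,t)$-mincuts via $1$-transversal cuts, together with the condensation map that sends each vertex of $G$ to its strongly connected component in the residual graph.

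Next, I would construct the structure of Theorem \ref{thm: structure for min+1} in linear time. The algorithm of Theorem \ref{thm : minimum+1 (s,t)-cut} already extracts a minimum+1 $(s,t)$-cut in $\mathcal{O}(m)$ time from a given maximum flow; by extending the bookkeeping done in that algorithm -- in particular, by identifying, for each flow path, the unique edge that can serve as an anchor -- I would produce both the set of at most $n-2$ anchor edges and the DAG $\mathcal{D}$ in a single pass. Since $G$ is simple, $\lambda \le n-1$, so the space bound $\mathcal{O}(\min\{m,n\sqrt{\lambda}\})$ of Theorem \ref{thm: structure for min+1} becomes $\mathcal{O}(\min\{m,n^{1.5}\})$, matching the target bound. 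The construction itself only touches the residual graph, flow decomposition, and the Picard--Queyranne DAG, each of which is available in $\mathcal{O}(m)$ time.

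The final step is to assemble the query-answering machinery in $\mathcal{O}(m)$ time. At query time the oracle must classify the pair $(e_1,e_2)$ (two failures, two insertions, or mixed) and, in each case, identify a constant number of candidate $(s,t)$-cuts in $G$ that are either $(s,t)$-mincuts, minimum+1 cuts, or cuts of capacity $\lambda+2$. To support this in $\mathcal{O}(\min\{m,n^{1.5}\})$ time per query, I would precompute, for each vertex $v$ of $G$, the node of the Picard--Queyranne DAG and of $\mathcal{D}$ that contains $v$, so that the positions of $e_1$ and $e_2$ within these structures are available in $\mathcal{O}(1)$. I would also precompute, for every anchor edge, a pointer to its location in $\mathcal{D}$, and for every edge of $G$, whether it contributes to some $(s,t)$-mincut (readable directly from the Picard--Queyranne DAG). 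All these tables have total size $\mathcal{O}(n+m)$ and are filled by one sweep over the vertices and edges.

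The main obstacle is verifying that the oracle's query procedure requires \emph{only} these linear-size auxiliary objects; once that is established, the $\mathcal{O}(m)$ time bound follows immediately, since each ingredient -- the max-flow is given, the Picard--Queyranne DAG takes $\mathcal{O}(m)$, Theorem \ref{thm: structure for min+1}'s structure takes $\mathcal{O}(m)$ via Theorem \ref{thm : minimum+1 (s,t)-cut}, and the vertex/edge lookup tables take $\mathcal{O}(m)$ -- fits the budget. The delicate point is that we cannot afford to tabulate answers for each of the $\Theta(m^2)$ query pairs, so the correctness of the query algorithm has to reduce every case to a constant number of traversals of $\mathcal{D}$ and its anchor list; ensuring that the mixed insertion/failure case does not force extra per-pair preprocessing is where I expect the main technical difficulty to lie.
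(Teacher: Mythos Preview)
Your high-level plan is right: the oracle of Theorem~\ref{thm : sensitivity oracle for simple graphs} is exactly the structure of Theorem~\ref{thm: structure for min+1} (the DAG $\mathcal{D}=\mathcal{D}_{PQ}(G\setminus\mathcal{A})$ together with the anchor set $\mathcal{A}$), and once $\mathcal{A}$ is in hand, building $\mathcal{D}$ is a standard $\mathcal{O}(m)$ Picard--Queyranne construction on the residual graph of $f$ restricted to $G\setminus\mathcal{A}$. The paper's proof of Theorem~\ref{thm : preprocessing of data srtcuture D} is literally just this observation plus a pointer to the $\mathcal{O}(m)$ algorithm for computing $\mathcal{A}$ (Theorem~\ref{thm : anchor edges}).

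The gap is in how you compute $\mathcal{A}$. You write ``identifying, for each flow path, the unique edge that can serve as an anchor,'' but by Definition~\ref{def : anchor edge} an anchor edge carries \emph{zero} flow in $f$, so no anchor edge lies on any flow path. The flow-based characterization (Theorem~\ref{thm : maxflow and minimum+1 cut}) says that a $(\lambda+1)$ $(s,t)$-cut has exactly one zero-flow edge in its edge-set, and that edge is the anchor. So the real task is to decide, among all zero-flow edges, which ones contribute to some $(\lambda+1)$ cut. The paper does this (Appendix~\ref{sec : anchor edge computation}) by working inside each node $\mu$ of $\mathcal{D}_{PQ}(G)$, subdividing every residual edge with a marked vertex to form the auxiliary graph $\mathcal{H}$, and building a dominator tree rooted at $s$: an edge of $G^f$ is the residual copy of an anchor edge precisely when its marked vertex is an internal node of $T_{\mathcal{H}}$ (Lemma~\ref{lem : anchor edge with internal vertex in dominator tree}). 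This is the same dominator-tree machinery that underlies Theorem~\ref{thm : minimum+1 (s,t)-cut}, so your instinct to ``extend the bookkeeping'' of that algorithm is on target---but the extension is via dominators on the subdivided residual graph, not via inspecting flow paths.

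Your third paragraph over-builds. The oracle stores only $\mathcal{D}(\mathcal{A})\cup\mathcal{A}$; the query algorithm (Appendix~\ref{sec : handling dual edge in subquadratic space}) performs \textsc{Update\_Path} operations and SCC contractions directly on this quotient graph and then invokes Lemma~\ref{lem : single edge failure}. No additional per-edge tables are required beyond the vertex-to-node map that comes for free with the Picard--Queyranne construction, so once $\mathcal{A}$ is computed there is no further ``assembly'' step and the $\mathcal{O}(m)$ bound is immediate.
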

\section{Organization of this Paper}
This paper is organized as follows. Basic preliminaries and notations are in Section \ref{sec : preliminaries}. A detailed overview of our results and the techniques used to arrive at them is provided in Section \ref{sec : overview}. The full version,
containing all the omitted proofs, is provided in the appendix, starting from Appendix \ref{sec: start of appendix}.


\section{Preliminaries}\label{sec : preliminaries}
By integrality of maximum $(s,t)$-flow \cite{ford_fulkerson_1956}, for integer-weighted graphs, we consider any given maximum $(s,t)$-flow to be integral. 
The following notations to be used throughout the paper.
\begin{itemize}
    \item $G \cup A$ (likewise $G\setminus A$): Graph obtained after adding (likewise removing) a set of edges $A$ in $G$.
    \item $(\lambda+\Delta)$ $(s,t)$-cut: An $(s,t)$-cut of capacity $\lambda+\Delta$ where $\Delta\ge 0$.
    \item $(u,v)$-path : A simple directed path from vertex $u$ to vertex $v$.
    \item  $f$ denotes a maximum $(s,t)$-flow in graph $G$.
     \item $c(C,H)$: Capacity of a cut $C$ in a graph $H$. 
    \item A cut $C$ \textit{subdivides} a set of vertices $X$ if $C\cap X\ne \emptyset $ and $\overline{C}\cap X \ne \emptyset$.
    \item A cut $C$ \textit{separates} a pair of vertices $u,v$ if $u \in C$ and $v \in \overline{C}$ or vice-versa.
    \item The \textit{edge-set} of a cut $C$, denoted by $E(C)$, is the set of all edges whose endpoints are separated by $C$.
\end{itemize}
\noindent
Let $f'$ be any $(s,t)$-flow in $G$.
\begin{itemize}
    \item $H^{f'}$ denotes the residual graph for any graph $H$ corresponding to an $(s,t)$-flow $f'$.
    \item $f'(e)$ denotes the value of flow  $f'$ assigned to an edge $e$.
    \item $f'_{out}(C)$ and $f'_{in}(C)$: For any $(s,t)$-cut $C$, $f'_{out}(C)$ (likewise $f'_{in}(C)$) is the sum of flow assigned to all edges $e=(u,v) \in E(C)$ with $u \in C,v\in \overline{C}$ (likewise $v\in C, u \in \overline{C}$).
\end{itemize}

\begin{lemma} [Conservation of flow] \label{lem : flow conservation}
    For any $(s,t)$-cut $C$, $f'_{out}(C)-f'_{in}(C)=value(f')$
\end{lemma}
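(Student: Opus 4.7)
The plan is to prove this via the standard telescoping argument that sums the local flow-conservation condition over all vertices in $C$. Concretely, recall that a valid $(s,t)$-flow $f'$ satisfies $f'_{out}(v) - f'_{in}(v) = 0$ for every internal vertex $v \notin \{s,t\}$, while $f'_{out}(s) - f'_{in}(s) = \mathrm{value}(f')$ by the definition of flow value. Since $s \in C$ and $t \in \overline{C}$, summing this identity over all $v \in C$ yields
$$\sum_{v \in C} \bigl(f'_{out}(v) - f'_{in}(v)\bigr) = \mathrm{value}(f').$$

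Next I would reinterpret the left-hand side as a sum over edges. Every edge $e = (u,v)$ with both endpoints in $C$ contributes $+f'(e)$ to the term at $u$ (through $f'_{out}(u)$) and $-f'(e)$ to the term at $v$ (through $f'_{in}(v)$), so such edges cancel in the sum. Every edge $(u,v)$ with $u \in C$ and $v \in \overline{C}$ contributes only $+f'(e)$ (at $u$), and every edge $(u,v)$ with $u \in \overline{C}$ and $v \in C$ contributes only $-f'(e)$ (at $v$). Therefore
$$\sum_{v \in C} \bigl(f'_{out}(v) - f'_{in}(v)\bigr) = f'_{out}(C) - f'_{in}(C),$$
by the definitions of $f'_{out}(C)$ and $f'_{in}(C)$ given just above the lemma statement.

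Combining the two displays gives $f'_{out}(C) - f'_{in}(C) = \mathrm{value}(f')$, as required. I do not anticipate any real obstacle here: the only subtlety is being careful that the conservation identity at $s$ uses the convention that $\mathrm{value}(f')$ equals the net outflow at $s$, and that no vertex of $C$ plays the role of $t$ (which is guaranteed since $t \in \overline{C}$ by the definition of an $(s,t)$-cut). Everything else is bookkeeping on the edge contributions.
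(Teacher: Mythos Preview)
Your argument is correct and is the standard textbook proof of this fact. The paper itself states this lemma without proof in the preliminaries, treating it as a well-known consequence of flow conservation at vertices, so there is nothing to compare against; your write-up is exactly what one would expect.
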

\begin{lemma}[Sub-modularity of Cuts]\label{lem : submodularity}
    For any $C_1,C_2 \subseteq V$, $c(C_1)+c(C_2) \geq c(C_1 \cup C_2)+c(C_1\cap C_2)$
\end{lemma}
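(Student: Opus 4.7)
The plan is to prove the inequality edge-by-edge: I will show that every edge $(x,y) \in E$ contributes at least as much weight to $c(C_1)+c(C_2)$ as it does to $c(C_1\cup C_2)+c(C_1\cap C_2)$, after which summing over all edges gives the lemma. This is a clean local-to-global argument that exploits the fact that cut capacity is a linear function of per-edge contributions.

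First, I would fix $C_1,C_2\subseteq V$ with $s\in C_1\cap C_2$ (the statement is written for general subsets, but the per-edge argument does not depend on the location of $s,t$) and partition the vertex set into the four disjoint blocks
\[
A = C_1\cap C_2,\quad B = C_1\setminus C_2,\quad P = C_2\setminus C_1,\quad D = V\setminus(C_1\cup C_2).
\]
Every edge $(x,y)\in E$ falls into exactly one of the sixteen ordered pairs of blocks according to where its endpoints lie. For each such pair I would tabulate four Boolean indicators recording whether $(x,y)$ contributes to $c(C_1)$, $c(C_2)$, $c(C_1\cup C_2)$, and $c(C_1\cap C_2)$ respectively (using the fact that $(x,y)$ contributes to $c(C)$ iff $x\in C$ and $y\notin C$).

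Next, I would verify case-by-case that for every edge the sum of the first two indicators is at least the sum of the last two. The interesting cases are easy to identify: edges entirely inside one of $A,B,P,D$ contribute $0$ to both sides; edges crossing between $A$ and $D$ contribute $w(x,y)$ to all four quantities; the remaining cases split into symmetric pairs such as $B\to D$ (contributes to $c(C_1)$ and $c(C_1\cup C_2)$, giving equality) and $A\to B$ (contributes to $c(C_2)$ and $c(C_1\cap C_2)$, again equality). The only cases yielding strict inequality are edges from $B$ to $P$ or $P$ to $B$, where both $c(C_1)$ and $c(C_2)$ pick up $w(x,y)$ but neither $c(C_1\cup C_2)$ nor $c(C_1\cap C_2)$ do.

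Finally, summing the per-edge inequalities over all $(x,y)\in E$ with weights $w(x,y)\ge 0$ yields $c(C_1)+c(C_2)\ge c(C_1\cup C_2)+c(C_1\cap C_2)$. There is no real obstacle here; the main care needed is to be systematic about the sixteen cases, which is why I would present the argument as a compact table of indicators rather than sixteen separate paragraphs. An alternative, equally short route would be to use the characteristic-function identity $\mathbf{1}_{C_1\cup C_2}+\mathbf{1}_{C_1\cap C_2}=\mathbf{1}_{C_1}+\mathbf{1}_{C_2}$ together with the observation that the per-edge contribution to $c(C)$ equals $\mathbf{1}_C(x)\bigl(1-\mathbf{1}_C(y)\bigr)w(x,y)$, expand, and check that the cross term $\bigl(\mathbf{1}_{C_1}(x)-\mathbf{1}_{C_2}(x)\bigr)\bigl(\mathbf{1}_{C_1}(y)-\mathbf{1}_{C_2}(y)\bigr)$ is non-negative precisely at the $B$-$P$ crossings.
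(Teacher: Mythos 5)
Your proof is correct in substance, and it is the standard textbook argument for submodularity of the cut function; the paper states Lemma~\ref{lem : submodularity} as a known preliminary fact without giving a proof, so there is nothing to compare against. One small wording slip worth fixing: for a \emph{directed} edge $(x,y)$ with $x\in B=C_1\setminus C_2$ and $y\in P=C_2\setminus C_1$, only $c(C_1)$ picks up $w(x,y)$ (and for $P\to B$ only $c(C_2)$ does), not both as you wrote --- the left-hand side gains $w$ while the right-hand side gains $0$, so the strict per-edge inequality still holds, but the justification as stated is only accurate for undirected edges. Similarly, in the characteristic-function variant, the cross term $\bigl(\mathbf{1}_{C_1}(x)-\mathbf{1}_{C_2}(x)\bigr)\bigl(\mathbf{1}_{C_1}(y)-\mathbf{1}_{C_2}(y)\bigr)$ is in fact \emph{negative} at $B$--$P$ crossings and non-negative everywhere else; the per-edge surplus is $\tfrac{1}{2}\bigl(|a_1-a_2|\,|b_1-b_2|-(a_1-a_2)(b_1-b_2)\bigr)\ge 0$, so you want the absolute-value inequality $|xy|\ge xy$, not non-negativity of the cross term itself. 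With those two signs straightened out, both of your routes close cleanly.
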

\begin{definition} [Quotient Graph] \label{def : quotient graph}
    A graph $H$ is said to be a quotient graph of $G$ if $H$ is obtained from $G$ by contracting disjoint subsets of vertices into single nodes. 
\end{definition}

\begin{definition} [Quotient Path]
    A path $P_q$ is said to be a quotient path of a path $P$ if $P_q$ is obtained from $P$ by contracting a set of edges in $P$.  
\end{definition}
\noindent
\textbf{Residual graph for undirected multi-graphs:} 
Although the residual graph is defined for directed graphs \cite{ford_fulkerson_1956}, for undirected graphs, we define the residual graph in the following way. Let $e=(x,y)$ be any edge in an undirected multi-graph $H$ with an $(s,t)$-flow $f'$. There exist two edges $(y,x)$ (likewise $(x,y)$) in $H^{f'}$ if $e$ carries flow in the direction $x$ to $y$ (likewise $y$ to $x$); otherwise, there is a pair of edges $(x,y)$ and $(y,x)$ in $H^{f'}$.

\subsection{A DAG structure for storing and characterizing all $(s,t)$-mincuts} \label{sec : construction of Dpq}
In a seminal work, Picard and Queyranne \cite{DBLP:journals/mp/PicardQ80} designed 
a DAG, denoted by $\mathcal{D}_{PQ}(G)$, 
that stores all $(s,t)$-mincuts in $G$ and characterizes them as $1$-transversal cuts.
We now briefly describe the construction of $\mathcal{D}_{PQ}(G)$. 
\paragraph*{Construction of $\mathcal{D}_{PQ}(G):$} Let $G'$ be the graph obtained by contracting each Strongly Connected Component (SCC) of $G^f$ into a single node. Let $\mathbb{T}$ denote the node containing $t$ and $\mathbb{S}$ denote the node containing $s$ in $G'$.
If $G$ is an undirected graph, $\mathcal{D}_{PQ}(G)$ is the graph $G'$.
For directed graphs, $\mathcal{D}_{PQ}(G)$ is obtained by suitably modifying  $G'$ as follows.
For each node $\mu$ reachable from $\mathbb{S}$ in $G'$, $\mu$ is contracted into node $\mathbb{S}$. Likewise, each node $\mu$ that has a path to $\mathbb{T}$, 
is contracted into the node $\mathbb{T}$.
Given any maximum $(s,t)$-flow $f$, $\mathcal{D}_{PQ}(G)$ can be obtained in $\mathcal{O}(m)$ time and has the following property. Without causing ambiguity, we refer $({\mathbb S},{\mathbb T})$-cut in ${\mathcal D}_{PQ}(G)$ by $(s,t)$-cut.
\begin{theorem}[\cite{DBLP:journals/mp/PicardQ80}] \label{thm : dag for st mincut and characterization}
    Let $G$ be any directed weighted graph on $m$ edges with a designated source vertex $s$ and designated sink vertex $t$. There is an ${\mathcal O}(m)$ space DAG ${\mathcal D}_{PQ}(G)$ that stores and characterizes each $(s,t)$-mincut in $G$ as follows. An $(s,t)$-cut $C$ is an $(s,t)$-mincut in $G$ if and only if $C$ is a $1$-transversal cut in ${\mathcal D}_{PQ}(G)$.    
\end{theorem}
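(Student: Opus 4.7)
The plan is to prove both directions of the characterization via the standard residual-graph argument, with the construction of $\mathcal{D}_{PQ}(G)$ guiding the translation between cuts in $G$ and cuts in the DAG.

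First I would establish a preliminary claim that an $(s,t)$-cut $C$ of $G$ is an $(s,t)$-mincut if and only if (a) every edge $(u,v)$ with $u\in C, v\in \overline{C}$ is saturated by $f$, and (b) every edge $(u,v)$ with $u\in\overline{C}, v\in C$ carries zero flow. This follows from Lemma~\ref{lem : flow conservation}, since $c(C) \ge f_{out}(C) - f_{in}(C) = \text{value}(f) = \lambda$ with equality exactly when (a) and (b) hold. Equivalently: $C$ is an $(s,t)$-mincut iff $G^f$ contains no edge crossing from $C$ to $\overline{C}$.

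For the forward direction, let $C$ be an $(s,t)$-mincut of $G$. From the preliminary claim, $G^f$ has no edge from $C$ to $\overline{C}$. Three consequences follow. (i) Every SCC of $G^f$ lies entirely in $C$ or entirely in $\overline{C}$, since otherwise the cycle certifying strong connectivity would contain an edge crossing from $C$ to $\overline{C}$. (ii) Every vertex reachable from $s$ in $G^f$ lies in $C$, since otherwise a residual $s \to u$ path would somewhere traverse an edge from $C$ to $\overline{C}$. (iii) Symmetrically, every vertex with a residual path to $t$ lies in $\overline{C}$. Together (i)--(iii) show that the partition $(C,\overline{C})$ descends through all three contraction steps defining $\mathcal{D}_{PQ}(G)$, yielding a well-defined $(s,t)$-cut in the DAG with $\mathbb{S}$ on the source side and $\mathbb{T}$ on the sink side. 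Every edge of $\mathcal{D}_{PQ}(G)$ is inherited from a residual edge of $G^f$, so no edge of $E(C)$ in $\mathcal{D}_{PQ}(G)$ crosses from $C$ to $\overline{C}$; equivalently all such edges go from $\overline{C}$ to $C$. In a DAG, once a directed path enters $C$ via such an edge, no further edge of $E(C)$ can take it back across, so $C$ is $1$-transversal in $\mathcal{D}_{PQ}(G)$.

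For the reverse direction, let $C$ be a $1$-transversal $(s,t)$-cut of $\mathcal{D}_{PQ}(G)$. Since $\mathcal{D}_{PQ}(G)$ is a DAG in which every vertex lies on some $\mathbb{S}$-to-$\mathbb{T}$ directed path (by construction, nodes not reachable from $\mathbb{S}$ after the initial SCC contraction are absorbed into $\mathbb{T}$, and vice versa), the $1$-transversal condition forces every edge of $E(C)$ to be directed from $\overline{C}$ to $C$, because a single edge from $C$ to $\overline{C}$ could be extended on both sides into a path crossing the cut at least twice. Lifting $C$ back to a vertex partition of $G$ and using that the edges of $\mathcal{D}_{PQ}(G)$ correspond exactly to the residual edges of $G^f$ that survive the contractions, I would conclude that $G^f$ has no residual edge from $C$ to $\overline{C}$. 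The preliminary claim then yields $c(C) = \lambda$. The $\mathcal{O}(m)$ space bound is immediate since $\mathcal{D}_{PQ}(G)$ is obtained from $G^f$ by vertex contractions only.

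The step I expect to require the most care is the reverse direction: specifically, justifying that a $1$-transversal cut in $\mathcal{D}_{PQ}(G)$ really lifts to a cut of $G$ whose $E(C)$-edges all point from $\overline{C}$ to $C$ in $G^f$. The subtlety is that the contractions collapsing source-reachable vertices into $\mathbb{S}$ and sink-reaching vertices into $\mathbb{T}$ erase some residual edges, so I must confirm that no erased edge could have certified a $C$-to-$\overline{C}$ crossing. This is ensured precisely by properties (ii) and (iii) proved in the forward direction, together with the observation that those contractions only merge vertices that must be co-located in every mincut.
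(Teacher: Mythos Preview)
The paper does not prove this theorem; it is quoted as a known result from Picard and Queyranne \cite{DBLP:journals/mp/PicardQ80}, so there is no in-paper argument to compare against. Your preliminary claim and your forward direction are the standard residual-graph argument and are fine.

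The reverse direction has a real gap. Your key step is the assertion that ``every vertex lies on some $\mathbb{S}$-to-$\mathbb{T}$ directed path'' in $\mathcal{D}_{PQ}(G)$, justified by ``nodes not reachable from $\mathbb{S}$ after the initial SCC contraction are absorbed into $\mathbb{T}$, and vice versa.'' First, the edge orientation is reversed: the paper's topological order starts at $\mathbb{T}$ and ends at $\mathbb{S}$, so paths in $\mathcal{D}_{PQ}(G)$ run from $\mathbb{T}$ toward $\mathbb{S}$. More importantly, your parenthetical does not match the construction: the paper contracts into $\mathbb{S}$ the nodes \emph{reachable from} $\mathbb{S}$, and into $\mathbb{T}$ the nodes that \emph{can reach} $\mathbb{T}$. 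This does \emph{not} force every remaining node onto a $\mathbb{T}$--$\mathbb{S}$ path. For instance, if $G$ consists of the edge $s\to t$ together with a disjoint edge $u\to v$ (all capacities $1$), then $\{u\}$ and $\{v\}$ survive as separate nodes lying on no $\mathbb{T}$--$\mathbb{S}$ path; the cut $C=\{\mathbb{S},\{u\}\}$ is $1$-transversal (the two edges $\mathbb{T}\to\mathbb{S}$ and $\{u\}\to\{v\}$ never share a directed path), yet $\{s,u\}$ has capacity $2>\lambda=1$ in $G$. So your path-extension argument cannot succeed as written, and in fact this example shows that the equivalence with $1$-transversality, read literally over \emph{all} directed paths, needs either an additional hypothesis (e.g.\ every vertex lies on some $s$--$t$ path) or a reformulation in terms of closed sets (node sets with no outgoing edge), which is the form in which the Picard--Queyranne characterization is usually stated and which your preliminary claim already gives directly.
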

Let $V(\mu)$ denote the vertices mapped to a node $\mu$ in $\mathcal{D}_{PQ}(G)$.
A cut $C$ is said to subdivide a node $\mu$ in $\mathcal{D}_{PQ}(G)$ if $C$ subdivides $V(\mu)$. It follows from Theorem \ref{thm : dag for st mincut and characterization} that any $(s,t)$-cut in $G$ that subdivides a node in $\mathcal{D}_{PQ}(G)$ has capacity strictly greater than $\lambda$. Hence, the following lemma is immediate.
\begin{lemma}\label{lem :mapping of nodes in Dpq}
For any pair of vertices $u,v \in V$,
    $u$ and $v$ are mapped to the same node in $\mathcal{D}_{PQ}(G)$ if and only if $u$ and $v$ are not separated by any $(s,t)$-mincut in $G$.
\end{lemma}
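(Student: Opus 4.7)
The plan is to prove both directions using Theorem~\ref{thm : dag for st mincut and characterization} together with the observation, stated just before the lemma, that any $(s,t)$-cut subdividing a node of $\mathcal{D}_{PQ}(G)$ has capacity strictly greater than $\lambda$.

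For the forward direction, suppose $u$ and $v$ both lie in $V(\mu)$ for the same node $\mu$ of $\mathcal{D}_{PQ}(G)$. Any $(s,t)$-cut $C$ separating $u$ from $v$ must put one endpoint in $C\cap V(\mu)$ and the other in $\overline{C}\cap V(\mu)$, hence $C$ subdivides the node $\mu$; by the preceding observation, $c(C) > \lambda$, so $C$ is not an $(s,t)$-mincut. This direction is essentially one line.

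For the backward direction I would argue the contrapositive: assuming $u \in V(\mu_u)$ and $v \in V(\mu_v)$ with $\mu_u \neq \mu_v$, I would exhibit an $(s,t)$-mincut of $G$ separating them. The construction of $\mathcal{D}_{PQ}(G)$ contracts every node reachable from $\mathbb{S}$ into $\mathbb{S}$ and every node that reaches $\mathbb{T}$ into $\mathbb{T}$, so $\mathbb{S}$ has no outgoing DAG-edges and $\mathbb{T}$ has no incoming ones; hence some topological ordering $\tau$ of $\mathcal{D}_{PQ}(G)$ places $\mathbb{T}$ first and $\mathbb{S}$ last. For every $1 \le i \le k-1$ (with $k$ the number of DAG-nodes), the suffix set $X_i := \{\tau(i+1), \ldots, \tau(k)\}$ contains $\mathbb{S}$, excludes $\mathbb{T}$, and, being a topological suffix, has no outgoing DAG-edge; so all DAG-edges crossing the cut go from $V\setminus X_i$ into $X_i$, which makes $(X_i, V\setminus X_i)$ a $1$-transversal cut of $\mathcal{D}_{PQ}(G)$. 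Choosing $i = \min(\tau^{-1}(\mu_u), \tau^{-1}(\mu_v))$ makes $X_i$ contain exactly one of $\mu_u, \mu_v$. Theorem~\ref{thm : dag for st mincut and characterization} then lifts this $1$-transversal cut to an $(s,t)$-mincut of $G$ separating $u$ from $v$.

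The only mildly delicate step, and what I expect to be the main obstacle, is noticing that the two contractions in the construction of $\mathcal{D}_{PQ}(G)$ render $\mathbb{S}$ a DAG-sink and $\mathbb{T}$ a DAG-source, so that a topological ordering with $\mathbb{T}$ first and $\mathbb{S}$ last exists; once this is established, the suffix cuts are manifestly $1$-transversal, and the backward direction follows immediately from Theorem~\ref{thm : dag for st mincut and characterization}.
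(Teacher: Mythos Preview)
Your proof is correct and is essentially the same argument the paper has in mind: the paper declares the lemma ``immediate'' from Theorem~\ref{thm : dag for st mincut and characterization} together with the preceding observation, and your write-up simply spells out those two implications. In particular, the topological-suffix construction you use for the backward direction is exactly the device the paper itself invokes later (e.g., in the construction of $G_\mu$ and in the proof of Lemma~\ref{lem : mapping of a second st mincut to a node mu}, where it states that any prefix of a topological ordering of $\mathcal{D}_{PQ}(G)$ defines a $1$-transversal cut).
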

It is immediate from Lemma \ref{lem :mapping of nodes in Dpq} and Definition \ref{def : quotient graph} that ${\mathcal D}_{PQ}(G)$ is a quotient graph of $G$.

\section{An Overview of Our Results and Techniques} \label{sec : overview}
We now present an overview of our results and the new techniques applied to arrive at them. 

\subsection{Faster Algorithm for Second (s,t)-mincut}
We design two efficient algorithms for computing a second $(s,t)$-mincut. Our first algorithm computes a second $(s,t)$-mincut for directed weighted graphs using ${\mathcal O}(n)$ maximum $(s,t)$-flow computations, which achieves the aim of Vazirani and Yannakakis \cite{vazirani1992suboptimal}. This algorithm can be seen as an immediate application of the recently invented covering technique of \cite{baswana2023minimum+}.
Our second algorithm takes a totally different approach. This approach is based on a relationship between the second $(s,t)$-mincuts and the global mincuts in directed weighted graphs. 
So, as our main result, we design an algorithm for computing a second $(s,t)$-mincut that uses $\tilde{{\mathcal O}}(\sqrt{n})$ maximum $(s,t)$-flow computations and works for directed graphs with integer edge capacities. 
We now provide an overview of this result.


Observe that a global mincut is not necessarily an $(s,t)$-cut in $G$. On the other hand, any second $(s,t)$-mincut can never be a global mincut in $G$. 
So apparently there does not seem to be any relationship between the global mincuts and the second $(s,t)$-mincuts of $G$. 

Let us first work with a special case when graph $G$ has exactly two $(s,t)$-mincuts -- $\{s\}$ and $V\setminus \{t\}$. Suppose there exists a second $(s,t)$-mincut $C$ in graph $G$ such that $C$ separates all the neighbors of $s$ from all the neighbors of $t$. It is easy to compute a second $(s,t)$-mincut in this graph as follows. Compute a maximum $(s,t)$-flow after contracting all neighbors of $s$ with $s$ and all neighbors of $t$ with $t$. The challenge arises when every second $(s,t)$-mincut has at least one contributing edge that is incident on $s$ and/or $t$. We now show that the residual graph $G^f$ plays a crucial role in overcoming this hurdle. Moreover, $G^f$ turns out to establish the bridge between second $(s,t)$-mincuts and global mincuts.

We begin by stating the following property, which is immediate from the Maxflow-Mincut Theorem.
\begin{property} \label{prop : 1}
    For any graph $\mathbb{G}$ with maximum $(s,t)$-flow $f'$, every $(s,t)$-mincut in $\mathbb{G}$ is an $(s,t)$-cut of capacity zero in $\mathbb{G}^{f'}$.
\end{property}
It follows from Property \ref{prop : 1} 
that there is a bijective mapping between the set of all $(s,t)$-mincuts in $G$ and the set of global mincuts containing $s$ and not $t$ in $G^f$.
However, Property \ref{prop : 1} 
does not reveal any information on how a second $(s,t)$-mincut in $G$ appears in residual graph $G^f$. To explore the structure of second $(s,t)$-mincuts in $G^f$, using only the conservation of flow (Lemma \ref{lem : flow conservation}) and the construction of the residual graph, we provide a generalization of Property \ref{prop : 1} 
as follows (refer to Theorem \ref{thm : min+k in residual graph} in full version). 

\begin{property} \label{prop : p2}
    For any graph $\mathbb{G}$ with maximum $(s,t)$-flow $f'$, every $(s,t)$-cut of capacity $\lambda+\Delta$ in $\mathbb{G}$ appears as an $(s,t)$-cut of capacity $\Delta$ in $\mathbb{G}^{f'}$, where $\Delta\ge 0$.
\end{property}
It follows from Property \ref{prop : p2} 
that every second $(s,t)$-mincut in $G$ is a second $(s,t)$-mincut in $G^f$. 
Let the capacity of second $(s,t)$-mincut in $G$ be $\lambda+\Delta_2$, where $\Delta_2>0$. 
Recall that our aim is to establish a relationship between second $(s,t)$-mincuts and global mincuts using $G^f$. So,
by Property \ref{prop : p2}, 
we need to focus on global cuts of capacity exactly $\Delta_2$ in $G^f$.
However, observe that a global cut of capacity $\Delta_2$ can never be a global mincut in $G^f$ since $\Delta_2> 0$.
Moreover, there may exist \textit{many} global cuts of capacity $\Delta_2$ that cannot be a second $(s,t)$-mincut in $G$.  

It is observed that a directed graph has global mincut capacity strictly greater than zero if it is an SCC. 
It turns out that there is exactly one nontrivial SCC, say $H$, in the residual graph $G^f$ since $G$ has exactly two trivial $(s,t)$-mincuts. 
By exploiting Property \ref{prop : 1} 
and \ref{prop : p2}, we immediately arrive at the following inequality.
\begin{equation}\label{eq : greater}
\textit{The capacity of second $(s,t)$-mincut in $G$ $\ge$ $\lambda+$ the capacity of global mincut in $H$}    
\end{equation}
 Now, we establish the converse of Inequality \ref{eq : greater}. Let us consider any global mincut $A$ in $H$. Observe that $A$ is not a second $(s,t)$-mincut in $G^f$ since $s,t\in \overline{A}$. In fact, the capacity of any global cut $A$ in $H$ might be strictly less than the capacity of $A$ in $G^f$. This is because of the existence of edges that are incident on $s$ and/or $t$, which may contribute to $A$ in $G^f$. 
Interestingly, exploiting the structure of $G^f$, the properties of SCC $H$, and Inequality \ref{eq : greater}, we 
achieve the following bijective mapping (refer to Lemma \ref{lem : second mincut = global mincut for 2 st mincuts} in full version). 
\begin{property} \label{prop : p3}
    Let $C_1$ be a global mincut in $H$ and $C_2$ be a second $(s,t)$-mincut in $G$. Then,
    \begin{enumerate}
        \item capacity of $C_2$ in $G$ $=\lambda+$ the  capacity of $C_1$ in $H$ and 
        \item $C_1\cup \{s\}$ is a second $(s,t)$-mincut in $G$ and $C_2\setminus \{s\}$ is a global mincut in $H$.
    \end{enumerate} 
\end{property}
It turns out that Property \ref{prop : p3} 
does not immediately hold for graphs with one or more than two $(s,t)$-mincuts. For graphs with exactly one $(s,t)$-mincut, we suitably modify the SCC $H$ in $G^f$. Finally, by crucially exploiting the structural properties of the DAG $\mathcal{D}_{PQ}(G)$ (Theorem \ref{thm : dag for st mincut and characterization}), we extend our results to general graphs having any number of $(s,t)$-mincuts. This leads to Theorem \ref{thm : equivalence between second st mincut and global}(1). The proof of Theorem \ref{thm : equivalence between second st mincut and global}(2) is straightforward using standard techniques.  

Cen et al. \cite{DBLP:conf/focs/Cen0NPSQ21} recently designed an efficient algorithm for computing a global mincut in directed graphs with integer edge capacities. Their algorithm uses $\tilde{{\mathcal O}}(\sqrt{n})$ maximum $(s,t)$-flow computations to compute a global mincut. This result of Cen et al. \cite{DBLP:conf/focs/Cen0NPSQ21}, along with Theorem \ref{thm : equivalence between second st mincut and global}, immediately leads to Theorem \ref{thm : second minimum (s,t)-cut}. 

\subsection{Compact Structure for All Minimum+1 $(s,t)$-cuts} \label{sec : overview 2}
We address the following problem for undirected multi-graphs in this section.
\begin{problem}\label{prob: lambda+1 strcuture}
    Design a compact structure for storing and characterizing all $(\lambda+1)$ $(s,t)$-cuts.
\end{problem}
To solve Problem \ref{prob: lambda+1 strcuture}, we establish the following flow-based characterization of $(\lambda+1)$ $(s,t)$-cuts, which is of independent interest.
\paragraph*{Flow based characterization of $(\lambda+1)$ ($s,t$)-cuts:} 
In a seminal work in 1956, Ford and Fulkerson \cite{ford_fulkerson_1956} established a strong duality between $(s,t)$-mincut and maximum $(s,t)$-flow, which is widely known as the \textit{Maxflow-Mincut Theorem}.  
This theorem provides a characterization of all $(s,t)$-mincuts using a maximum $(s,t)$-flow. 
However, it seems that any extension of this result might not exist for characterizing $(s,t)$-cuts of capacity $\lambda+1$. 
This is because no equivalent $(s,t)$-flow is known corresponding to a $(\lambda+1)$ $(s,t)$-cut, as stated in \cite{baswana2023minimum+}. 
Interestingly, we show that, in fact, there exist close relationships between maximum $(s,t)$-flow and $(\lambda+1)$ $(s,t)$-cuts as follows. 
\begin{property} [refer to Theorem \ref{thm : maxflow and minimum+1 cut} in full version] \label{prop : q2} 
    For any undirected multi-graph $G$ with a maximum $(s,t)$-flow $f$, an $(s,t)$-cut $C$ is a $(\lambda+1)$ $(s,t)$-cut in $G$ if and only if there is exactly one edge $e$ in the edge-set of $C$ such that $f(e)=0$.
\end{property}
\paragraph*{Our Solution to Problem \ref{prob: lambda+1 strcuture}:}
The problem of designing a compact structure for storing and characterizing all $(s,t)$-mincuts immediately reduces to Problem \ref{prob: lambda+1 strcuture} by modifying the given graph as follows. Add a dummy source $s'$ (likewise a dummy sink $t'$) with $\lambda-1$ edges from $s'$ to $s$ (likewise $t$ to $t'$). 
Interestingly, for directed multi-graphs, Baswana, Bhanja, and Pandey \cite{baswana2023minimum+} provide a solution to Problem \ref{prob: lambda+1 strcuture} by essentially reducing it to the problem of designing a compact structure for storing and characterizing all $(s,t)$-mincuts. 
However, their structure occupies $\mathcal{O}(mn)$ space.
For undirected multi-graphs, we present a structure for storing and characterizing all
$(\lambda+1)$ $(s,t)$-cuts that occupies $\mathcal{O}(\min\{m,n\sqrt{\lambda}\})$ space as follows.
We construct a graph $G'$ such that every $(\lambda+1)$ $(s,t)$-cut of $G$ appears as an $(s,t)$-mincut in $G'$. This will help us store and characterize all $(\lambda+1)$ $(s,t)$-cuts of $G$ using a structure that stores and characterizes all $(s,t)$-mincuts in $G'$. To achieve this objective, we pursue the following simple idea -- remove one edge from every $(\lambda+1)$ $(s,t)$-cut. A naive implementation of this idea might not work as follows. 
There may be a pair of edges $e,e'$ contributing to the cut $C_1\cap C_2$ defined by the intersection of two $(\lambda+1)$ $(s,t)$-cuts $C_1,C_2$. 
Even if none of the edges $e,e'$ contribute to both $C_1$ and $C_2$, their removal will reduce the $(s,t)$-mincut capacity if $c(C_1\cap C_2)$ is $\lambda$ or $\lambda+1$ (refer to Figure \ref{fig: compact strcuture nontriviality}). 
%
\begin{figure}
 \centering  \includegraphics[width=0.4\textwidth]{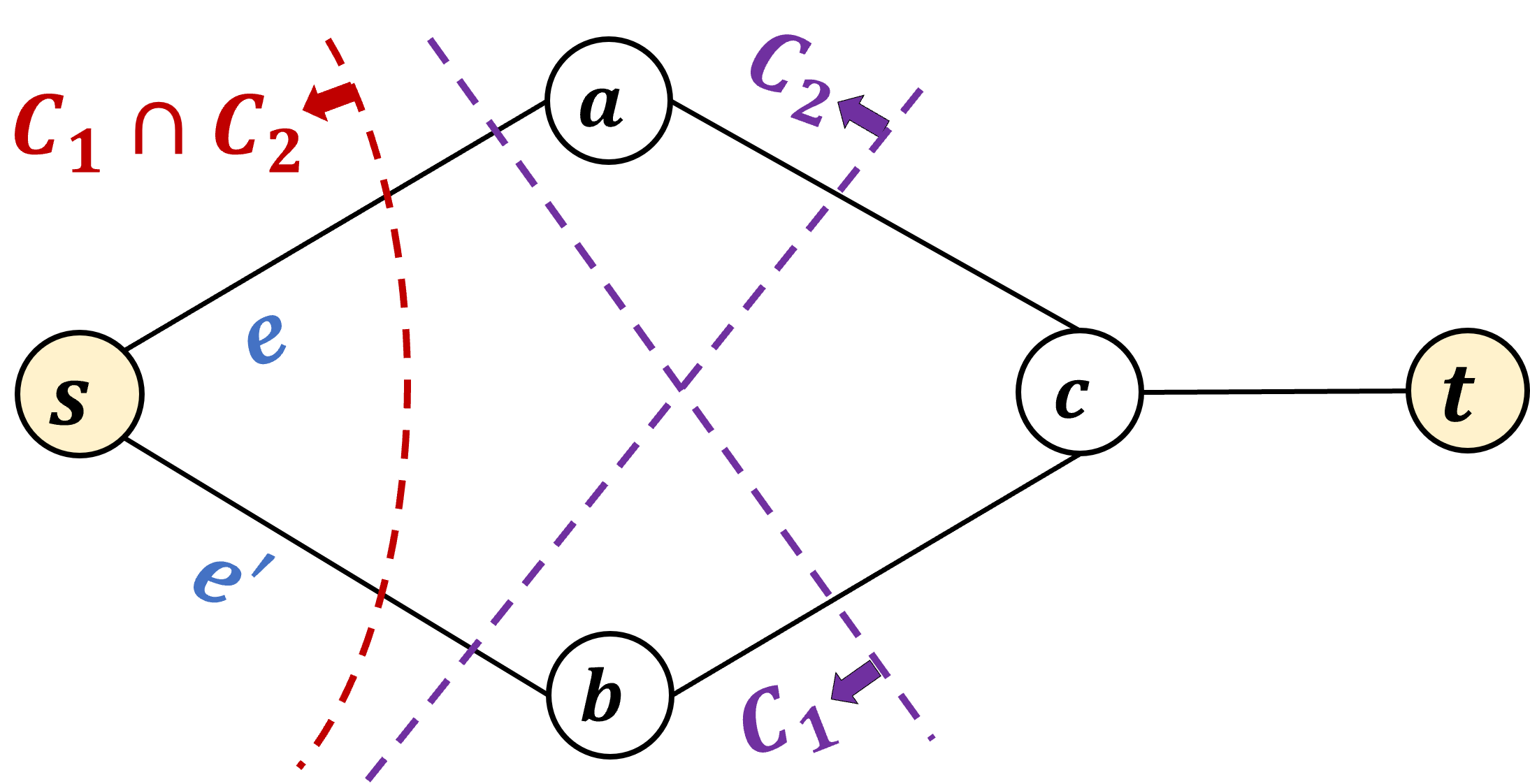} 
  \caption{$C_1\cap C_2$ has capacity less than that of $(s,t)$-mincut after removal of $e,e'$.}
    \label{fig: compact strcuture nontriviality}
\end{figure}
In order to materialize the idea, we exploit \Cref{prop : q2}. This property motivates us to define a set of edges with respect to $(\lambda+1)$ $(s,t)$-cuts in the following way.
\\
\noindent
  \textsc{Anchor Edges:} \textit{An edge is said to be an anchor edge if it does not carry flow in $f$ and contributes to a $(\lambda+1)$ $(s,t)$-cut. }



%
%
By Maxflow-Mincut Theorem, the removal of a set of edges carrying no flow in $f$ does not reduce the capacity of $(s,t)$-mincut. Moreover, by \Cref{prop : q2}, for every $(\lambda+1)$ $(s,t)$-cut $C$ in $G$, exactly one anchor edge contributes to $C$. 
Hence, every $(\lambda+1)$ $(s,t)$-cut, as well as $(s,t)$-mincut in $G$, appears as an $(s,t)$-mincut of capacity $\lambda$ in $G\setminus \mathcal{A}$.
It is also easy to observe that there may exist $(s,t)$-cuts with capacity more than $\lambda+1$ appearing as $(s,t)$-mincuts in $G\setminus {\mathcal A}$. However, using anchor edges ${\mathcal A}$, we can distinguish the $(\lambda+1)$ $(s,t)$-cuts as follows.
\begin{property} \label{prop : q4}
    An $(s,t)$-cut $C$ is a $(\lambda+1)$ $(s,t)$-cut in $G$ if and only if $C$ is an $(s,t)$-mincut in $G\setminus \mathcal{A}$ and exactly one edge from $\mathcal{A}$ contributes to $C$.
\end{property}
By Property \ref{prop : q4}, our compact structure consists of set of edges $\mathcal{A}$ and a structure that stores and characterizes all $(s,t)$-mincuts in $G \setminus \mathcal{A}$. 
It follows that the space occupied by our structure exceeds that of any structure for storing and characterizing all $(s,t)$-mincuts only by the size of ${\mathcal A}$. Therefore, we need to show that the set $\mathcal{A}$ is \textit{small}.
Note that there exist graphs where the number of edges carrying zero flow in a given maximum $(s,t)$-flow can be $\Omega(n^2)$ edges. However, it turns out that the cardinality of set $\mathcal{A}$ is always $\mathcal{O}(n)$ for any given maximum $(s,t)$-flow $f$ in $G$. This is because of the following structural property of anchor edges. 
Any cycle defined by a set of edges carrying zero flow in $f$, cannot contain any anchor edge (refer to Lemma \ref{lem : F has n-2 edges} in full version).
Finally, using the best-known structure $\mathcal{D}_{PQ}$ for storing and characterizing all $(s,t)$-mincuts \cite{DBLP:journals/mp/PicardQ80}, we show that $\mathcal{D}_{PQ}(G \setminus \mathcal{A})$ occupies $\mathcal{O}(\min\{m,n\sqrt{\lambda}\})$ space, which leads to Theorem \ref{thm: structure for min+1}. 

\begin{remark}
    We also show that set $\mathcal{A}$ can be obtained in $\mathcal{O}(m)$ time (refer to Appendix \ref{sec : anchor edge computation}). So, the space bound and preprocessing time of our structure (Theorem \ref{thm: structure for min+1}) match that of the best-known structure for storing and characterizing all $(s,t)$-mincuts \cite{DBLP:journals/mp/PicardQ80}.
\end{remark}
\subsection{Dual Edge Sensitivity Oracle: Breaking the Quadratic Barrier}\label{sec : overview 3}

In this section, for simple graphs, we design a subquadratic space data structure that can efficiently answer the query: report an $(s,t)$-mincut after the failure/insertion of any pair of edges. 
We assume $G$ to be a simple graph in this section. We provide an overview for handling failure of edges in $G$, and handling insertion of edges is along similar lines (refer to Appendix \ref{sec: handling dual edge insertion}). 
Henceforth, let $e_1=(x_1,y_1)$ and $e_2=(x_2,y_2)$ be the two failed edges in $G$. 


There is a folklore result that the residual graph $G^f$ acts as an $\mathcal{O}(m)$ space dual edge sensitivity oracle for $(s,t)$-mincut that achieves $\mathcal{O}(m)$ query time. The query algorithm is derived from the \textit{augmenting path} based algorithm for computing maximum $(s,t)$-flow by Ford and Fulkerson \cite{ford_fulkerson_1956} (briefly explained as a warm-up below). 
However, it is known that the residual graph occupies quadratic space if $m=\Omega(n^2)$. To design a subquadratic space dual edge sensitivity oracle for simple graphs, instead of the residual graph $G^f$, we work with our compact structure, consisting of $\mathcal{D}$ and the set ${\mathcal A}$ of anchor edges, from Theorem \ref{thm: structure for min+1}. Recall that ${\mathcal D}\cup {\mathcal A}$ is just a quotient graph of $G^f$, and hence, it may fail to preserve the complete information of every path in $G^f$.
Even after having this incomplete information, we show that ${\mathcal D}\cup {\mathcal A}$ is still sufficient to answer dual edge failure queries
using the same algorithm used for the folklore result using residual graph $G^f$.  

\paragraph*{Warm-up with residual graph:}  Observe that if none of the failed edges $e_1,e_2$ carry flow in maximum $(s,t)$-flow $f$ then the capacity of $(s,t)$-mincut remains unchanged in $G\setminus \{e_1,e_2\}$. Henceforth, we assume that edge $e_1$ always carries flow in the direction from $x_1$ to $y_1$. It follows from the construction of residual graph that there must exist a $(t,s)$-path $P$ in $G^f$ containing edge $(y_1,x_1)$.
We first reduce the flow in $G$ using $P$ in $G^f$, and then remove the edges $(x_1,y_1)$ and $(y_1,x_1)$ from $G^f$.
Finally, using the concept of \textit{augmenting paths} \cite{ford_fulkerson_1956} in residual graph, we can verify in ${\mathcal O}(m)$ time whether the value of maximum $(s,t)$-flow becomes $\lambda-1$ or remains $\lambda$ in $G\setminus \{e_1\}$.
In the residual graph corresponding to the obtained maximum $(s,t)$-flow in $G\setminus \{e_1\}$, repeat the same procedure for edge $e_2$ to verify whether edge $e_2$ reduces the capacity of $(s,t)$-mincut in $G \setminus \{e_1\}$. 
This helps in reporting an $(s,t)$-mincut in the graph $G\setminus \{e_1,e_1\}$ in ${\cal O}(m)$ time. Complete details are in Appendix \ref{sec: dual edge using residual graph}.


\paragraph*{Our solution using ${\mathcal D}\cup {\mathcal A}$:} 
We now use the structure ${\mathcal D}\cup {\mathcal A}$ from \Cref{thm: structure for min+1} as a subquadratic space dual edge sensitivity oracle.
${\mathcal D}_{PQ}(G)$ (\Cref{thm : dag for st mincut and characterization}) can be used to design a single edge sensitivity oracle that occupies ${\mathcal O}(n)$ space \cite{DBLP:journals/mp/PicardQ80, baswana2023minimum+}.
As observed by Baswana, Bhanja, and Pandey \cite{baswana2023minimum+}, ${\mathcal D}_{PQ}(G)$ can also handle dual edge failures for the special case when both failed edges contribute to only $(s,t)$-mincuts, using its reachability information. However, for handling any dual edge failures,  the main difficulty arises when endpoints of both edges are mapped to the same node in ${\mathcal D}_{PQ}(G)$ \cite{baswana2023minimum+}. 
Our structure ${\mathcal D}\cup {\mathcal A}$ addresses this difficulty seamlessly by first ensuring the following condition. The capacity of $(s,t)$-mincut changes only if the endpoints of at least one failed edge appear in different nodes of ${\mathcal D}$. This is achieved using the following property of $\mathcal{D}$.
\begin{property} [refer to Lemma \ref{lem : mapping in Df} in full version]\label{prop: mapping of nodes in D}
     Any $(s,t)$-cut separating a pair of vertices mapped to the same node in ${\mathcal D}$ must have capacity at least $\lambda+2$. 
\end{property}
Henceforth, we assume without loss of generality that endpoints of edge $e_1$ appear in different nodes in ${\mathcal D}$.
Let us first handle the failure of edge $e_1$. 
It turns out that ${\mathcal D}\cup {\mathcal A}$ can easily report an $(s,t)$-mincut in $G\setminus \{e_1\}$. This exploits the following mapping of paths between $G^f$ and ${\mathcal D}\cup {\mathcal A}$.    
\begin{property} [refer to Lemma \ref{lem : mapping of paths in Df} in full version] \label{prop : r1}
    Let  $u,v$ be any pair of vertices mapped to different nodes $\mu$ and $\nu$ in $\mathcal{D}$. There exists an $(u,v)$-path $P$ in $G^f$ if and only if there exists a $(\mu,\nu)$-path $P_q$ in $\mathcal{D}\cup \mathcal{A}$. Moreover, path $P_q$ is a quotient path of $P$.
\end{property}
We establish Property \ref{prop : r1} 
by using the fact that graph ${\mathcal D}\cup {\mathcal A}$ is a quotient graph of $G^f$ by construction. Let $D_1$ be the graph obtained from ${\mathcal D}\cup {\mathcal A}$ after applying the query algorithm (described above) on ${\mathcal D}\cup {\mathcal A}$ for the failure of edge $e_1$. 
By following the mapping of paths in \Cref{prop : r1}, let $f_1$ be the corresponding maximum $(s,t)$-flow in $G\setminus \{e_1\}$ and 
$G^{f_1}$ denote the corresponding residual graph. On a high level, our technical contribution lies in showing that even after handling failure of $e_1$, $D_1$ still preserves enough information about augmenting paths in $G^{f_1}$ that facilitates the handling of the failure of edge $e_2$ in $G\setminus\{e_1\}$. We now provide the overview.



To handle the failure of edge $e_2$, the obtained graph $D_1$ must satisfy the following property, which actually holds between graphs ${\mathcal D}\cup {\mathcal A}$ and $G^f$ (refer to \Cref{prop: mapping of nodes in D}).
\begin{property} [refer to Lemma \ref{lem : property of D2} in full version] \label{prop : r3}
    $D_1$ is a quotient graph of $G^{f_1}$, and the mapping of paths between them is as follows. For every path $P$ in $D_1$, there is a path $P_1$ in $G^{f_1}$ such that $P$ is a quotient path of $P_1$.
\end{property}
In order to establish \Cref{prop : r3}, the challenging case appears when the $(s,t)$-mincut remains unchanged after the failure of $e_1$. In this case, let us first observe the change in the residual graph after reducing the flow that was passing through edge $e_1$. In the resulting residual graph, the query algorithm finds a path $P$ from $s$ to $t$, and flips the direction of every edge belonging to it. A similar update is executed by the query algorithm using a path $P_1$ in graph ${\mathcal D}\cup {\mathcal A}$, where $P_1$ is a quotient path of $P$. This could lead to the following scenario for the resulting graph $D_1$.
There is a path $P_{uv}$ in $D_1$, but there is no path $P_{uv}^r$ in $G^{f_1}$ such that $P_{uv}$ is a quotient path of $P_{uv}^r$. 
In other words, \Cref{prop : r3} may fail to hold. So, we might report an incorrect $(s,t)$-mincut in $G\setminus \{e_1,e_2\}$. Interestingly, exploiting the SCC structure of $G^{f_1}$, the structure of ${\mathcal D}$, and \Cref{prop: mapping of nodes in D},
we ensure that such a scenario never occurs. This allows us to handle the failure of $e_2$ using $D_1$ exactly in the same way as handling the failure of $e_1$ using ${\mathcal D}\cup \mathcal{A}$ (refer to Lemma \ref{lem : D is same as Dpq} in full version). 
This leads to \Cref{thm : sensitivity oracle for simple graphs}.

\bibliographystyle{alphaurl}
{\small \bibliography{main}}
\appendix
\newpage

\section{Organization of the Full Version} \label{sec: start of appendix}
The full version of this paper is organized as follows. In Appendix \ref{sec : close relationships}, we establish close relationships between maximum $(s,t)$-flow and $(s,t)$-cuts of capacity beyond $(s,t)$-mincut, which are used as tools to arrive at the results in following sections. A limitation of an existing algorithm in \cite{vazirani1992suboptimal} for computing second $(s,t)$-mincut is provided in Section \ref{sec : limitation}. We present two algorithms for computing a second (s,t)-mincut in directed weighted graphs in Appendix \ref{sec : second st mincut in weighted graphs}. For undirected multi-graphs, Appendix \ref{sec : cover and structure} contains the design of our linear space structure for storing and characterizing all $(\lambda+1)$ $(s,t)$-cuts. Finally, in Appendix \ref{section: dual edge oracle}, we design the subquadratic space dual edge sensitivity oracle for $(s,t)$-mincut in simple graphs using the structure constructed in Appendix \ref{sec : cover and structure}.

\section{Minimum+k (s,t)-cuts and Maximum (s,t)-flow} \label{sec : close relationships}
Ford and Fulkerson \cite{ford_fulkerson_1956} established the following strong duality between an $(s,t)$-mincut and a maximum $(s,t)$-flow.
 \begin{theorem} [\text{Maxflow-Mincut Theorem} \cite{ford_fulkerson_1956}] \label{thm : maxflow mincut theorem}
  Let $f$ be any maximum $(s,t)$-flow in $G$. An $(s,t)$-cut $C$ in $G$ is an $(s,t)$-mincut if and only if for every edge $e\in E(C)$, $f(e)=w(e)$ if $e$ is an outgoing edge of $C$ and  $f(e)=0$ if $e$ is an incoming edge of $C$ in $G$.
 \end{theorem}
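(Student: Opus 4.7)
The plan is to derive the theorem from Lemma~\ref{lem : flow conservation} (conservation of flow) together with a standard augmenting-path argument in the residual graph $G^f$, via two essentially independent steps: first establishing weak duality along with its equality condition, and then proving that $value(f)=\lambda$ whenever $f$ is a maximum $(s,t)$-flow.

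For weak duality, I would take any $(s,t)$-flow $f'$ and any $(s,t)$-cut $C$ and apply Lemma~\ref{lem : flow conservation} to write $value(f')=f'_{out}(C)-f'_{in}(C)$. Since $0\le f'(e)\le w(e)$ for every edge, bounding the right-hand side term-by-term gives $f'_{out}(C)\le c(C)$ and $f'_{in}(C)\ge 0$, hence
\[
value(f')\le c(C),
\]
with equality iff $f'(e)=w(e)$ on every outgoing edge of $C$ and $f'(e)=0$ on every incoming edge of $C$. This equality condition is exactly the condition in the statement, so it immediately yields one direction of the biconditional: if the fixed maximum flow $f$ satisfies the saturation/zero condition at $C$, then $value(f)=c(C)$; applied to any other $(s,t)$-cut $C'$, weak duality gives $c(C)=value(f)\le c(C')$, so $C$ is an $(s,t)$-mincut.

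The step where maximality of $f$ is actually used, and the main (though standard) obstacle, is showing $value(f)=\lambda$. Let $C^*=\{v\in V : v\text{ is reachable from }s\text{ in }G^f\}$. Since $f$ is maximum, $G^f$ contains no augmenting $(s,t)$-path, so $t\notin C^*$ and $C^*$ is a legitimate $(s,t)$-cut. By the construction of $G^f$, every outgoing edge of $C^*$ in $G$ must be saturated by $f$ (otherwise its forward residual copy would exit $C^*$) and every incoming edge of $C^*$ must carry zero flow (otherwise its backward residual copy would exit $C^*$). Applying the equality case from the previous paragraph then yields $value(f)=c(C^*)\ge \lambda$, while weak duality gives $value(f)\le \lambda$, so $value(f)=\lambda$.

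Finally, the remaining direction $(\Rightarrow)$ is now immediate: if $C$ is an $(s,t)$-mincut then $c(C)=\lambda=value(f)$, so the equality case of weak duality applied to $(f,C)$ forces $f(e)=w(e)$ on every outgoing edge and $f(e)=0$ on every incoming edge of $E(C)$. The only subtlety worth flagging is the undirected case: one must verify that the residual-graph argument still goes through under the paper's convention for $G^f$ on undirected multi-graphs, but the edge-by-edge correspondence between residual directions and saturation/zero status is the same, so the reachability argument defining $C^*$ is unaffected.
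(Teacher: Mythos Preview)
Your proof is correct and is the standard textbook argument for the Maxflow--Mincut Theorem. However, there is nothing to compare against: the paper does not prove this statement at all. Theorem~\ref{thm : maxflow mincut theorem} is stated with a citation to Ford and Fulkerson~\cite{ford_fulkerson_1956} as a classical result and is used as a black box throughout; no proof or proof sketch appears in the paper.
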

In this section, as an extension to Theorem \ref{thm : maxflow mincut theorem}, we establish the following two results. 
We first establish a characterization of all $(\lambda+1)$ $(s,t)$-cuts based on a maximum $(s,t)$-flow, which holds only in undirected multi-graphs. Secondly, for directed weighted graphs, we show that for a maximum $(s,t)$-flow $f$, there exists an equivalence between the capacity of an $(s,t)$-cut in $G$ and in $G^{f}$.


\subsection{Undirected Multi-Graphs}

For undirected multi-graphs, we establish the following property for $(\lambda+k)$ $(s,t)$-cuts, where $k\ge 0$ is an integer, based on a maximum $(s,t)$-flow.

\begin{lemma} \label{lem : maxflow and minimum+k cut}
    Let $C$ be a $(\lambda+k)$ $(s,t)$-cut in $G$, where  $k\ge 0$ is an integer. Let $f$ be any maximum $(s,t)$-flow in $G$ and let ${\mathcal E}\subseteq E(C)$ be the set of edges such that $f(e)=0$ for every edge $e\in {\mathcal E}$. Then,
    \begin{enumerate}
        \item $|{\mathcal E}|\le k$,
        \item $|{\mathcal E}|$ is odd if and only if $k$ is odd.  
    \end{enumerate} 
\end{lemma}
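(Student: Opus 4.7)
The plan is to classify the edges of $E(C)$ by the direction in which they carry flow under $f$, and then combine flow conservation (Lemma~\ref{lem : flow conservation}) with the capacity bound $c(C)=\lambda+k$ to read off the two claims.

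Since $G$ is an undirected multi-graph, every edge has capacity $1$, and by integrality of maximum flow we may take $f$ to be integral, so each edge of $E(C)$ carries either $0$ or $1$ unit of flow. For an edge $e\in E(C)$, either (i) $f$ routes $e$ from the $C$-side to the $\overline{C}$-side, contributing $1$ to $f_{out}(C)$, or (ii) $f$ routes $e$ from the $\overline{C}$-side to the $C$-side, contributing $1$ to $f_{in}(C)$, or (iii) $f(e)=0$, i.e.\ $e\in\mathcal{E}$. Let $a$ and $b$ denote the cardinalities of the sets in (i) and (ii) respectively. Then counting edges of $E(C)$ gives
\[
a+b+|\mathcal{E}|=c(C)=\lambda+k,
\]
while flow conservation across the $(s,t)$-cut $C$ (Lemma~\ref{lem : flow conservation}) together with $\text{value}(f)=\lambda$ yields
\[
a-b=f_{out}(C)-f_{in}(C)=\lambda.
\]

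Subtracting these two identities gives $2b=k-|\mathcal{E}|$, hence $b=(k-|\mathcal{E}|)/2$. Since $b$ is a nonnegative integer, the nonnegativity $b\ge 0$ immediately yields $|\mathcal{E}|\le k$, which is part~1; and the integrality of $b$ forces $k-|\mathcal{E}|$ to be even, i.e.\ $|\mathcal{E}|$ and $k$ share parity, which is part~2.

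There is no real obstacle here once the three-way partition of $E(C)$ is set up; the only subtle point is justifying that the argument is genuinely specific to undirected multi-graphs. This is where the capacities enter: for an undirected edge the upper bound on flow is the same in both orientations (namely $1$), so there is no further cap on how many edges of $E(C)$ may fall into class (ii) beyond what conservation gives. The analogous bookkeeping in a directed graph would have to track separately edges oriented from $C$ to $\overline{C}$ versus from $\overline{C}$ to $C$ together with their (possibly zero) back-capacities, and the clean parity conclusion would fail. Hence it suffices to carry out the elementary linear algebra above, and the lemma follows.
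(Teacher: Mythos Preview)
Your proof is correct and follows essentially the same approach as the paper: partition $E(C)$ by flow direction, use $c(C)=\lambda+k$ together with flow conservation $f_{out}(C)-f_{in}(C)=\lambda$, and derive $|\mathcal{E}|=k-2b$ (the paper writes this as $k-2j$ with $j=f_{in}(C)$). The only difference is cosmetic notation.
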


\begin{proof}
     Suppose $C$ is a $(\lambda+k)$ $(s,t)$-cut in $G$. It follows from conservation of flow (Lemma \ref{lem : flow conservation}), $f_{out}(C)\ge \lambda$, since $f$ is a maximum $(s,t)$-flow and 
     $f_{in}(C)\ge 0$. 
     Let $f_{out}(C)= \lambda+j$, for any integer $j \ge 0$. Again by Lemma \ref{lem : flow conservation}, $f_{in}(C)=j$. Therefore, we arrive at the following equation.
     \begin{equation} \label{eq : 2}
         f_{out}(C)+f_{in}(C)=\lambda+2j
     \end{equation}
     It follows from Equation \ref{eq : 2} that there are exactly $\lambda+2j$ edges belonging to $E(C)$ that are carrying flow. Let ${\mathcal E}\subseteq E(C)$ be the set of remaining edges such that, for every edge $e\in {\mathcal E}$, $f(e)=0$. 
     In undirected graphs, every edge belonging to the edge-set of a cut is a contributing edge of the cut. Therefore,  using Equation \ref{eq : 2}, we arrive at the following equality.
     \begin{align} \label{equation : undirected graphs}
     \begin{split}
         |{\mathcal E}|&=c(C)-(f_{out}(C)+f_{in}(C))\\
         &=\lambda+k-(\lambda+2j)\\
         &=k-2j
    \end{split}
     \end{align}
     It follows from Equation \ref{equation : undirected graphs} that $|{\mathcal E}|\le k$. In addition, since $2j$ is always an even number, this implies that $|{\mathcal E}|$ is odd if $k$ is odd; otherwise $|\mathcal{E}|$ is even.
\end{proof}
By crucially exploiting Lemma \ref{lem : maxflow and minimum+k cut}, we now establish an interesting \textit{flow-based characterization} of all the $(\lambda+1)$ $(s,t)$-cuts. 
\begin{theorem} [Maxflow (Min+1)-cut Theorem] \label{thm : maxflow and minimum+1 cut}
    Let $G$ be an undirected multi-graph with a designated source $s$ and a designated sink $t$. Let $f$ be any maximum $(s,t)$-flow in $G$. Then, an $(s,t)$-cut $C$ in $G$ is a $(\lambda+1)$ $(s,t)$-cut if and only if there exists exactly one edge $e\in E(C)$ such that
    \begin{enumerate} 
        \item $f(e)=0$ and
        \item for every edge $e'\in E(C)\setminus \{e\}$, $f(e')=1$ and $e'$ carries flow in the direction $C$ to $\overline{C}$.   
    \end{enumerate} 
\end{theorem}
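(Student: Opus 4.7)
The theorem is a direct strengthening of Lemma \ref{lem : maxflow and minimum+k cut} specialised to $k=1$, so my plan is to use that lemma together with the conservation-of-flow equations that appeared in its proof. Since $G$ is an undirected multi-graph, every edge has unit capacity, integrality of $f$ is available, and $c(C)=|E(C)|$ for any $(s,t)$-cut $C$.

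\emph{Forward direction.} Assume $C$ is a $(\lambda+1)$ $(s,t)$-cut. Apply Lemma \ref{lem : maxflow and minimum+k cut} with $k=1$: the set ${\mathcal E}=\{e\in E(C):f(e)=0\}$ satisfies $|{\mathcal E}|\le 1$ and $|{\mathcal E}|$ odd, forcing $|{\mathcal E}|=1$. Call this unique edge $e$; it already satisfies property (1). For property (2), I would re-inspect Equation \ref{equation : undirected graphs} from the proof of the preceding lemma: we have $|{\mathcal E}|=k-2j=1-2j$, and since $|{\mathcal E}|=1$ this forces $j=0$, i.e.\ $f_{in}(C)=0$ and $f_{out}(C)=\lambda$. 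Thus no edge of $E(C)$ carries any flow in the direction $\overline{C}\to C$. Integrality of $f$ together with unit capacities then implies that each of the remaining $\lambda$ edges of $E(C)\setminus\{e\}$ carries exactly one unit of flow, all directed from $C$ to $\overline{C}$.

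\emph{Backward direction.} Conversely, suppose there is exactly one $e\in E(C)$ with $f(e)=0$ and every other edge $e'\in E(C)\setminus\{e\}$ carries $f(e')=1$ in direction $C\to\overline{C}$. Then $f_{out}(C)=|E(C)|-1$ and $f_{in}(C)=0$. By Lemma \ref{lem : flow conservation}, $f_{out}(C)-f_{in}(C)=\mathrm{value}(f)=\lambda$, hence $|E(C)|=\lambda+1$. Since $G$ is an undirected multi-graph with unit edge capacities, $c(C)=|E(C)|=\lambda+1$, so $C$ is a $(\lambda+1)$ $(s,t)$-cut.

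\emph{Where the work hides.} The only non-cosmetic step is the forward direction, and within it the only subtlety is extracting ``$f_{in}(C)=0$ and every contributing non-zero edge carries exactly unit flow in the $C\to\overline{C}$ direction'' from the bare cardinality information $|{\mathcal E}|=1$. Lemma \ref{lem : maxflow and minimum+k cut}'s statement alone is not quite enough, but its proof already produced the identity $|{\mathcal E}|=k-2j$, which pins down $j=0$. Together with integrality and unit capacities this closes the argument cleanly. The undirected/multigraph hypothesis is essential: without it an edge in $E(C)$ could carry a fractional amount, or $c(C)$ could exceed $|E(C)|$, breaking the count that ties $|E(C)|=\lambda+1$ to the existence of a single flowless edge.
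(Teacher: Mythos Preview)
Your proof is correct and follows essentially the same approach as the paper. The only cosmetic difference is in the forward direction: where you re-open the proof of Lemma~\ref{lem : maxflow and minimum+k cut} to extract $j=0$, the paper instead argues directly that $f_{out}(C)\ge\lambda$ (from Lemma~\ref{lem : flow conservation}) together with $|E(C)|=\lambda+1$ and $|{\mathcal E}|=1$ forces each of the remaining $\lambda$ edges to carry unit flow in the direction $C\to\overline C$---so the statement of the lemma suffices without peeking at its internals.
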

\begin{proof}
    Suppose $C$ is a $(\lambda+1)$ $(s,t)$-cut. Let ${\mathcal E}\subseteq E(C)$ be the set of edges such that $f(e)=0$ for each edge $e\in {\mathcal E}$. By assigning the value of $k=1$ in Lemma \ref{lem : maxflow and minimum+k cut}, we have $|{\mathcal E}|\le 1$ and $|{\mathcal E}|$ is odd. 
    Since $|{\mathcal E}|$ is odd, therefore, $|{\mathcal E}|=1$.
    Since $f$ is a maximum $(s,t)$-flow, by Lemma \ref{lem : flow conservation}, $f_{out}(C)\ge \lambda$. Moreover, we have $|E(C)|=\lambda+1$. Therefore, for every edge $e'\in E(C)\setminus \{e\}$, $f(e')=1$ and $e'$ carries flow in the direction $C$ to $\overline{C}$.        

    We now prove the converse part. Suppose $C$ is an $(s,t)$-cut satisfying properties (1) and (2). It follows that there is no edge $e' \in E(C)$ such that $f(e')=1$ and carries flow in the direction $\overline{C}$ to $C$. So, $f_{in}(C)=0$. By using Lemma \ref{lem : flow conservation}, we have $f_{out}(C)=\lambda$. So, in $E(C)$, there are exactly $\lambda$ edges that are carrying flow and exactly one edge that is carrying no flow. Therefore, $|E(C)|=\lambda+1$.  
\end{proof}
\begin{figure}
 \centering  \includegraphics[width=0.7\textwidth]{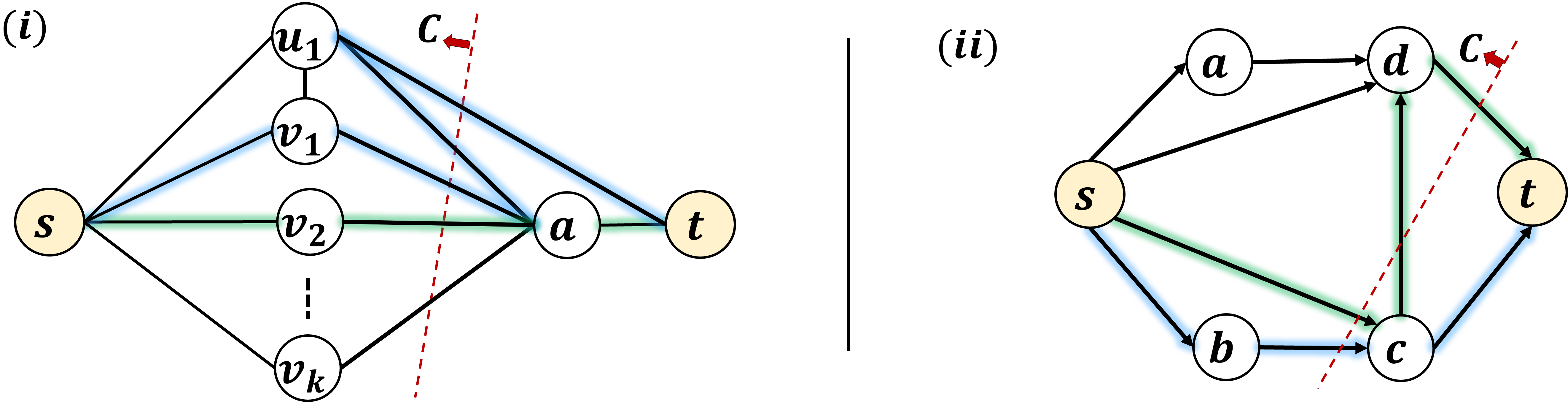} 
  \caption{A colored edge represents that the edge carries flow and $\lambda=2$.
  $(i)$ Example showing that Theorem \ref{thm : maxflow and minimum+1 cut} cannot be generalized for $k \ge 2$
  ($ii$) Cut $C$ is a $(\lambda+1)$ $(s,t)$-cut with $k=1$ where each edge belonging to $E(C)$ carries flow. So, $|\cal E|$ is even although $k$ is odd. }
    \label{fig: counter_example_theorem8_lemma5}
\end{figure}
\begin{remark}
    An immediate generalization of Theorem \ref{thm : maxflow and minimum+1 cut} would be the following. An $(s,t)$-cut $C$ is a $(\lambda+k)$ $(s,t)$-cut if and only if there are exactly $k$ edges in $E(C)$ that carry zero flow and every other edge carries flow in the direction $C$ to $\overline{C}$. 
    However, Figure \ref{fig: counter_example_theorem8_lemma5}$(i)$ shows that 
    for $\lambda=2$, $(s,t)$-cut $C$ has capacity $k+2$ but there are $k-2$ edges in $E(C)$ carrying zero flow.
   Hence, this generalization of Theorem \ref{thm : maxflow and minimum+1 cut} is not possible for any $k\ge 2$.
\end{remark}

\subsection{Directed Weighted Graphs} \label{sec : directed weighted graph tool}
    For directed graphs, Lemma \ref{lem : maxflow and minimum+k cut}$(2)$, as well as Theorem \ref{thm : maxflow and minimum+1 cut}, does not necessarily hold. This is because there exist edges incoming to a cut in directed graphs, and hence, Equation \ref{equation : undirected graphs} fails to satisfy. 
    (refer to Figure \ref{fig: counter_example_theorem8_lemma5}$(ii)$). 
    Observe that the following lemma immediately follows from the Theorem \ref{thm : maxflow mincut theorem} (Maxflow-Mincut Theorem).
    \begin{lemma} \label{lem : alternative maxflow mincut}
        For any maximum $(s,t)$-flow $f$, 
    an $(s,t)$-cut $C$ is an $(s,t)$-mincut if and only if the capacity of $C$ is $0$ in the corresponding residual graph $G^f$.
    \end{lemma}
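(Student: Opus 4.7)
The plan is to prove the equivalence by directly computing $c(C, G^f)$ and combining flow conservation with the Maxflow-Mincut Theorem. Recall the construction of $G^f$: each edge $e=(u,v)\in E(G)$ contributes a forward residual edge $(u,v)$ of capacity $w(e)-f(e)$ (when $f(e)<w(e)$) and a backward residual edge $(v,u)$ of capacity $f(e)$ (when $f(e)>0$). Fix any $(s,t)$-cut $C$. The edges of $G^f$ crossing from $C$ to $\overline{C}$ are exactly the forward copies of the outgoing edges of $C$ and the backward copies of the incoming edges of $C$, so
$$c(C,G^f) \;=\; \sum_{\substack{e=(u,v)\in E(C)\\ u\in C,\, v\in\overline{C}}}\bigl(w(e)-f(e)\bigr) \;+\; \sum_{\substack{e=(u,v)\in E(C)\\ u\in\overline{C},\, v\in C}} f(e) \;=\; c(C,G) - f_{out}(C) + f_{in}(C).$$

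Since $f$ is a maximum $(s,t)$-flow, Theorem \ref{thm : maxflow mincut theorem} yields $value(f)=\lambda$, and by Lemma \ref{lem : flow conservation} we have $f_{out}(C)-f_{in}(C)=value(f)=\lambda$. Substituting into the identity above gives
$$c(C,G^f) \;=\; c(C,G) - \lambda.$$
Because $c(C,G)\ge \lambda$ for every $(s,t)$-cut, the right-hand side is non-negative and equals $0$ if and only if $c(C,G)=\lambda$, i.e., if and only if $C$ is an $(s,t)$-mincut. This establishes the lemma in both directions.

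There is essentially no obstacle here; the entire content of the lemma is a one-line bookkeeping identity relating residual cut capacity to original cut capacity modulo the flow value. The only step meriting an explicit justification is the expression for $c(C,G^f)$, which is an immediate consequence of how residual edges are defined and of the fact that each edge in $E(C)$ has exactly one endpoint in $C$, so its residual copies contribute to at most one direction across the cut.
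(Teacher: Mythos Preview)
Your proof is correct. The paper does not give a detailed argument for this lemma; it simply remarks that it is immediate from the Maxflow--Mincut Theorem (Theorem~\ref{thm : maxflow mincut theorem}), and the intended one-line deduction is: $C$ is an $(s,t)$-mincut iff every outgoing edge is saturated and every incoming edge carries zero flow, which is precisely the condition that no residual edge leaves $C$, i.e.\ $c(C,G^f)=0$. Your approach instead derives the identity $c(C,G^f)=c(C,G)-\lambda$ via flow conservation; this is exactly the computation the paper carries out later for the generalization in Theorem~\ref{thm : min+k in residual graph} (Equation~\ref{eq : 3}), so you have essentially reproduced the paper's proof of the more general statement and then specialized to $\Delta=0$.
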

     Interestingly, for directed weighted graphs, the result in Lemma \ref{lem : alternative maxflow mincut} can be extended to $(\lambda+\Delta)$ $(s,t)$-cuts, where $\Delta \ge 0$ as shown in the following theorem. 
 
\begin{theorem}[\text{Maxflow (Min+$\Delta$)-cut Theorem}] \label{thm : min+k in residual graph} 
    Let $G=(V,E)$ be a directed weighted graph 
    with a designated source vertex $s$ and a designated sink vertex $t$.
    Let $f$ be any maximum $(s,t)$-flow in $G$ and $G^f$ be the corresponding residual graph. Let $C$ be an $(s,t)$-cut in $G$. The capacity of $C$ in $G$ is $(\lambda+\Delta)$ if and only if the capacity of $C$ in $G^f$ is $\Delta$, where $\Delta\ge 0$.    
\end{theorem}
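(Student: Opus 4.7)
\medskip

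\noindent\textbf{Proof proposal for Theorem \ref{thm : min+k in residual graph}.}
The plan is a direct edge-by-edge accounting argument that expresses $c(C,G^f)$ as a linear combination of $c(C)$, $f_{out}(C)$, and $f_{in}(C)$, followed by an application of conservation of flow (Lemma \ref{lem : flow conservation}) to collapse the flow terms to $\lambda$. Since the statement is an ``if and only if'' involving a single equality between two numerical quantities, it suffices to show the identity $c(C,G^f)=c(C)-\lambda$ for every $(s,t)$-cut $C$.

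First I would recall the construction of the residual graph: every edge $e=(u,v)\in E$ contributes two arcs to $G^f$, namely a forward arc $(u,v)$ of residual capacity $w(e)-f(e)$ and a backward arc $(v,u)$ of residual capacity $f(e)$. To compute $c(C,G^f)$, I would partition the edges of $G$ according to where their endpoints lie relative to $C$ and examine which of the two residual arcs (if any) crosses from $C$ to $\overline{C}$. The two relevant cases are:
\begin{itemize}
    \item If $u\in C$ and $v\in\overline{C}$ (so $e$ is outgoing from $C$ in $G$), the forward arc $(u,v)$ crosses from $C$ to $\overline{C}$ and contributes $w(e)-f(e)$, while the backward arc $(v,u)$ points into $C$ and contributes nothing.
    \item If $v\in C$ and $u\in\overline{C}$ (so $e$ is incoming to $C$ in $G$), the forward arc $(u,v)$ points into $C$ and contributes nothing, while the backward arc $(v,u)$ crosses from $C$ to $\overline{C}$ and contributes $f(e)$.
\end{itemize}
Edges with both endpoints on the same side of $C$ contribute nothing on either side.

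Summing over all edges of $G$ gives
\[
c(C,G^f) \;=\; \sum_{e\text{ outgoing from }C}\bigl(w(e)-f(e)\bigr) \;+\; \sum_{e\text{ incoming to }C} f(e) \;=\; c(C) \;-\; \bigl(f_{out}(C)-f_{in}(C)\bigr).
\]
By Lemma \ref{lem : flow conservation}, $f_{out}(C)-f_{in}(C)=\operatorname{value}(f)$, and since $f$ is a maximum $(s,t)$-flow, $\operatorname{value}(f)=\lambda$ by the Maxflow-Mincut Theorem. Hence $c(C,G^f)=c(C)-\lambda$, and the equivalence $c(C)=\lambda+\Delta \iff c(C,G^f)=\Delta$ follows at once.

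There is no real technical obstacle here; the only thing to be careful about is bookkeeping in the case analysis, specifically, making sure that for each edge $e$ of $G$ exactly one of its two residual arcs can possibly cross $C$ in the $C\to\overline{C}$ direction, and that the contribution is attributed with the correct sign (capacity-minus-flow for outgoing edges, raw flow for incoming edges). Everything else reduces to applying the already-stated conservation-of-flow lemma.
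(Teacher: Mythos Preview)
Your proof is correct and follows essentially the same route as the paper: an edge-by-edge accounting of which residual arcs cross $C$ in the outgoing direction, yielding $c(C,G^f)=c(C)-(f_{out}(C)-f_{in}(C))$, followed by Lemma~\ref{lem : flow conservation} to replace the flow difference by $\lambda$. The paper's presentation is nearly identical, differing only in cosmetic details (it explicitly excludes zero-capacity residual arcs rather than letting them contribute zero to the sum).
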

\begin{proof}
    Let $E_{out}(A,H)$ (likewise $E_{in}(A,H)$) denote the set of outgoing edges (likewise the set of incoming edges) of a cut $A$ in a graph $H$.
    For any edge $e$ in $G$, let $r(e)$ be the residual capacity, that is, $r(e)=w(e)-f(e)$.
    For any edge $e$ in graph $G^f$, let $w'(e)$ denote the capacity of edge $e$.
    By construction of $G^f$, for each edge $e=(u,v) \in E_{out}(C,G)$ with $r(e)\neq 0$, there exists a forward edge $(u,v) \in E_{out}(C,G^f)$ with $w'(u,v)=r(e)$. Similarly, for each edge $e=(u,v) \in E_{in}(C,G)$ with $f(e)\ne 0$, there exists a backward edge $(v,u) \in E_{out}(C,G^f)$ with $w'(v,u)=f(e)$.
    This provides us with the following equality. 
   \begin{align}\label{eq : 3}
        \begin{split}
         \sum_{e'\in E_{out}(C,G^f)} w'(e')&=\sum_{e\in E_{out}(C,G)}r(e)+ \sum_{e\in E_{in}(C,G)}f(e)\\ 
         &=\sum_{e\in E_{out}(C,G)}(w(e)-f(e))+ \sum_{e\in E_{in}(C,G)}f(e)\\
         &=\sum_{e\in E_{out}(C,G)}w(e)-\sum_{e\in E_{out}(C,G)}f(e) + \sum_{e\in E_{in}(C,G)}f(e)\\
         &=c(C)-(f_{out}(C)-f_{in}(C))\\
         &=c(C)-\lambda \quad\quad \text{ Using Lemma \ref{lem : flow conservation}}
       \end{split}
    \end{align}
Equation \ref{eq : 3} completes the proof.
\end{proof}

\section{Limitation of the Existing Algorithm for Second (s,t)-mincut} \label{sec : limitation}
 We state here a limitation of the existing  
algorithm given by Vazirani and Yannakakis \cite{vazirani1992suboptimal} to compute a second $(s,t)$-mincut. 
The algorithm for computing an $(s,t)$-cut of $k^{th}$ minimum capacity by \cite{vazirani1992suboptimal} uses ${\mathcal O}(n^{2(k-1)})$ maximum $(s,t)$-flows. So, for $k=2$, the algorithm uses ${\mathcal O}(n^2)$ maximum $(s,t)$-flow computations. In the same article \cite{vazirani1992suboptimal}, they stated the following property. 
 \begin{lemma}[Lemma 3.2(1) in \cite{vazirani1992suboptimal}] 
   Let $G^f$ be the residual graph corresponding to a maximum $(s,t)$-flow $f$ in $G$. Let $H$ be an SCC in $G^f$ and $u,v$ be a pair of vertices in $H$. If the capacity of the least capacity cut separating $\{u\}$ and $\{v\}$ is $k$ in $H$, then the least capacity cut separating $\{s,u\}$ and $\{v,t\}$ has capacity $\lambda+k$ in $G$.
\label{lem: vazirani wrong lemma}
\end{lemma}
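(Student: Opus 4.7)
My plan is to reduce the claim to a statement about the residual graph $G^f$ and then try to match min-cuts there against min-cuts inside the SCC $H$. By the Maxflow (Min+$\Delta$)-cut Theorem (\Cref{thm : min+k in residual graph}), every $(s,t)$-cut $C$ in $G$ satisfies $c(C,G) = \lambda + c(C, G^f)$. So the claim is equivalent to showing that the minimum capacity of a cut in $G^f$ which places $\{s,u\}$ on one side and $\{v,t\}$ on the other equals the minimum capacity of a $(u,v)$-cut in the induced subgraph on $V(H)$, namely $k$. Once this equivalence is established in $G^f$, adding $\lambda$ immediately delivers the stated value in $G$.

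For the easy direction, $\geq k$, I would argue as follows. Let $C$ be any cut in $G^f$ with $\{s,u\} \subseteq C$ and $\{v,t\} \subseteq \overline{C}$. Restricting $C$ to $V(H)$ gives a $(u,v)$-partition of $H$. Every edge of $H$ cut by the restriction is an edge of $G^f$ cut by $C$ as well, because $H$ is an induced subgraph of $G^f$. Hence $c(C, G^f) \geq c(C \cap V(H), H) \geq k$, which already yields $c(C,G) \geq \lambda + k$.

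The hard part is the reverse inequality, $\leq k$, and this is where I expect the main obstacle. The natural approach is to start from a minimum $(u,v)$-cut $(U, V(H)\setminus U)$ in $H$ of capacity $k$ and extend it to a cut in $G^f$ by assigning every vertex outside $V(H)$ to one of the two sides. One would put $s$ with $U$, put $t$ with $V(H)\setminus U$, and then try to place each remaining SCC on a single side so as not to cut any cross-SCC edges. The trouble is that the condensation of $G^f$ is only a DAG, not a bipartition: the SCC of $s$ can have edges reaching several non-$H$ SCCs that also have edges into $V(H)\setminus U$, and symmetrically on the $t$-side. Forcing a global assignment consistent with $s\in U$, $t \in V(H)\setminus U$ can create new cross-SCC edges in the cut that are simply invisible to the local $(u,v)$-cut in $H$. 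These extra edges carry strictly positive capacity in $G^f$ (every residual edge does), so the extension can have capacity strictly greater than $k$, and the matching inequality $\leq k$ need not hold.

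Since this obstruction looks structural rather than cosmetic, my plan would be to first try to salvage the upper bound under an extra hypothesis (for instance, when no cross-SCC edge of $G^f$ touches $V(H)$), and only then, if the general bound refuses to go through, switch modes and hunt for a small counterexample. The construction I would attempt has a residual graph whose condensation has $H$ strictly in the middle, with an auxiliary SCC $S$ such that $S$ receives an edge from some vertex of $U$ and sends an edge into some vertex of $V(H)\setminus U$ (or an analogous configuration involving $s$ or $t$); then no placement of $S$ can avoid an extra cut edge, so the $G^f$-minimum strictly exceeds $k$. If such an example exists, only the $\geq$ direction of the lemma is true, and this is presumably the flaw the paper is about to expose.
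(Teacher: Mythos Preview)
Your analysis lands in the right place: the paper does not prove this lemma, it \emph{refutes} it. The paper simply exhibits an explicit counterexample (the figure accompanying the lemma) in which $\lambda=1$, the minimum $(u,v)$-cut in $H$ has capacity $k=2$, yet the minimum cut separating $\{s,u\}$ from $\{v,t\}$ in $G$ has capacity $22$, not $\lambda+k=3$. So only the $\geq$ direction survives, exactly as you anticipated, and your proof of that direction via \Cref{thm : min+k in residual graph} and restriction to $V(H)$ is fine.

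One correction to your counterexample sketch: your first proposed configuration---an auxiliary SCC $S$ that receives an edge from $U\subseteq V(H)$ and sends an edge into $V(H)\setminus U$---is impossible. Those two edges, together with the strong connectivity of $H$, would place $S$ and $H$ in the same SCC of $G^f$, contradicting $S\neq H$. Your parenthetical alternative (``an analogous configuration involving $s$ or $t$'') is the one that actually works, and it is precisely what the paper's counterexample exploits: $s$ lies outside $H$, and the edges one is forced to cut when placing $s$ on the $U$-side blow up the capacity far beyond $k$. So your diagnosis of the obstruction is right, but the concrete construction must route the extra crossing edges through $s$ (or $t$), not through a third SCC sandwiched between two pieces of $H$.
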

\begin{figure}[ht]
 \centering  \includegraphics[width=0.8\textwidth]{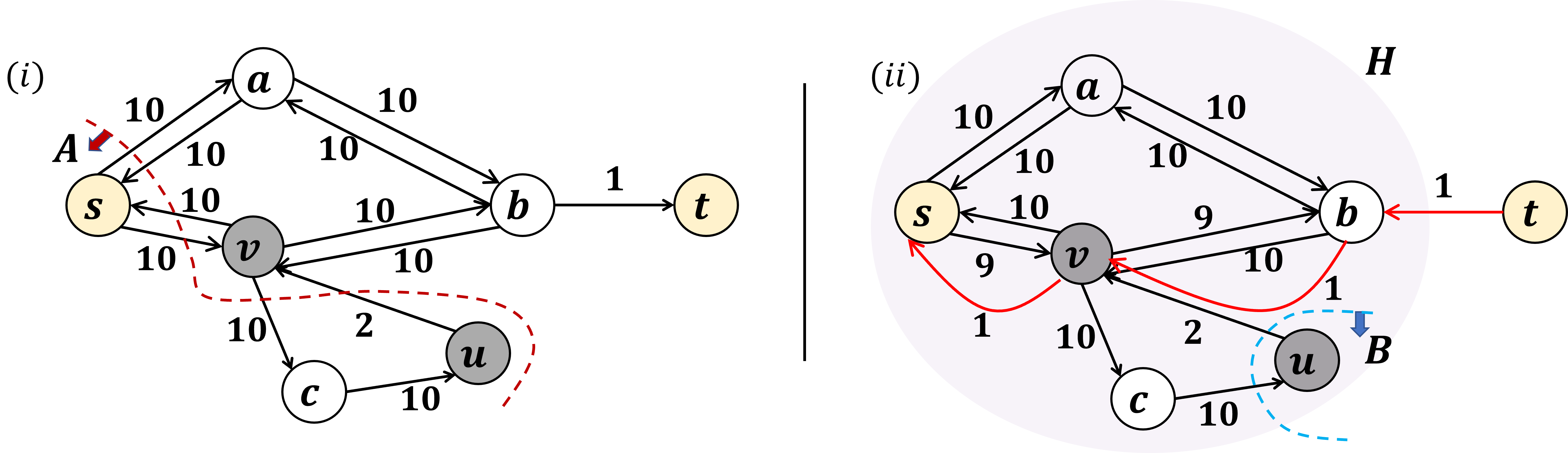} 
  \caption{A counter example for Lemma 3.2(1) in \cite{vazirani1992suboptimal}
  $(i)$ A graph $G$. $V\setminus \{t\}$ is the $(s,t)$-mincut in $G$ with capacity $1$ and $c(A)$=22. $(ii)$ $H$ is an  SCC in $G^f$ and $c(B,H)=2$ 
 }
    \label{fig: counter_examples_VY92}
\end{figure}
 Vazirani and Yannakakis \cite{vazirani1992suboptimal} used Lemma \ref{lem: vazirani wrong lemma} to design an algorithm that computes second $(s,t)$-mincut using ${\mathcal O}(n)$ maximum $(s,t)$-flow computations (the algorithm preceding Theorem 3.3 in \cite{vazirani1992suboptimal}). 
Unfortunately,  Lemma \ref{lem: vazirani wrong lemma} (or Lemma 3.2(1) in \cite{vazirani1992suboptimal}) is not correct as shown in Figure \ref{fig: counter_examples_VY92}$(ii)$. $H$ is the SCC in $G^f$ containing vertices $u$ and $v$. The least capacity cut $B$ separating $\{u\}$ and $\{v\}$ has capacity $2$ in $H$. 
By Lemma \ref{lem: vazirani wrong lemma}, the least capacity cut separating $\{s,u\}$ and $\{v,t\}$ must have capacity $2+1=3$ in $G$. 
However, as it can be verified easily using Figure \ref{fig: counter_examples_VY92}$(i)$, $A$ is a least capacity cut separating $\{s,u\}$ and $\{v,t\}$ in $G$ and its capacity is $22$.
Therefore, the algorithm stated above Theorem 3.3 in \cite{vazirani1992suboptimal}, fails to correctly output a second $(s,t)$-mincut. 

\section{Efficient Algorithms for Computing Second (s,t)-mincut} \label{sec : second st mincut in weighted graphs}
We design two algorithms for computing a second $(s,t)$-mincut. Our first algorithm uses ${\mathcal O}(n)$ maximum $(s,t)$-flow computations. It is based on the covering technique given in \cite{baswana2023minimum+}. As our main result of this section, we design our second algorithm that uses $\Tilde{\mathcal{O}}(\sqrt{n})$ maximum $(s,t)$-flow computations. It is based on a relationship between $(s,t)$-mincuts and global mincuts as follows.

We begin by exploring the relationship between $(s,t)$-mincuts in $G$ and global mincuts in $G^f$.
Let $C$ be any $(s,t)$-mincut in $G$. 
It follows from Lemma \ref{lem : alternative maxflow mincut} 
that the capacity of $C$ is zero in $G^f$. Hence, every $(s,t)$-mincut in $G$ is a global mincut in $G^f$. Conversely, again by Lemma \ref{lem : alternative maxflow mincut}, every global mincut $C'$ in $G^f$ with $s\in C'$ and $t \in \overline{C'}$ is also an $(s,t)$-mincut in $G$. Therefore, there exists a bijective mapping from the set of all $(s,t)$-mincuts in $G$ to a set of global mincuts in $G^f$. Let us now focus on the set of second $(s,t)$-mincuts in $G$. A second $(s,t)$-mincut in $G$, by definition, has capacity strictly greater than $\lambda$. So we need to explore the relationship between 
the set of $(s,t)$-cuts in $G$ having capacity greater than $\lambda$ and the set of global cuts in $G^f$ of capacity greater than 0. 
Any $(s,t)$-cut in $G$ of capacity $\lambda+\Delta$, $\Delta>0$, appears as an $(s,t)$-cut in $G^f$ of capacity $\Delta$ (refer to Theorem \ref{thm : min+k in residual graph}). Therefore, a second $(s,t)$-mincut in $G$ appears as a second $(s,t)$-mincut in $G^f$.
Now, for global cuts of capacity greater than 0 in $G_f$, we crucially use the insight from the following lemma.

\begin{lemma}\label{lem : global mincut in H is non-zero}
    Let $H$ be a directed weighted graph. $H$ is a strongly connected component (SCC) if and only if the capacity of global mincut in $H$ is strictly greater than zero.
\end{lemma}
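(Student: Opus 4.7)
The plan is to prove both directions contrapositively, showing that a zero-capacity global cut witnesses exactly the failure of strong connectivity. Throughout, I will rely on the convention (standard when $H$ is a residual graph $G^f$) that all edges of $H$ carry strictly positive capacity; edges with zero residual capacity are not present. Under this convention, there is a clean correspondence: a cut $(C,\overline{C})$ has capacity $0$ if and only if there is no edge at all from $C$ to $\overline{C}$.

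For the forward direction, I would assume $H$ is strongly connected and show every nonempty proper subset $C \subsetneq V(H)$ satisfies $c(C) > 0$. Pick arbitrary $u \in C$ and $v \in \overline{C}$; by strong connectivity there is a directed $(u,v)$-path $P$. Walking along $P$ from $u$, consider the first vertex of $\overline{C}$ encountered, and let $e=(x,y)$ be the edge by which $P$ enters $\overline{C}$. Then $x \in C$, $y \in \overline{C}$, so $e$ is a contributing edge of the cut $C$. Since every edge of $H$ has strictly positive capacity, $c(C) \geq w(e) > 0$. As $C$ was arbitrary, the global mincut of $H$ has positive capacity.

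For the reverse direction, I would take the contrapositive: assume $H$ is not strongly connected and exhibit a cut of capacity $0$. Choose vertices $u,v \in V(H)$ such that $v$ is not reachable from $u$ in $H$, and let $C$ be the set of vertices reachable from $u$ (including $u$ itself). Since $u \in C$ and $v \notin C$, the set $C$ is a nonempty proper subset of $V(H)$, so it is a valid global cut. Suppose for contradiction there is an edge $(x,y)$ with $x \in C$ and $y \in \overline{C}$. Then $y$ would be reachable from $u$ via the $(u,x)$-path concatenated with $(x,y)$, contradicting $y \notin C$. Hence no edge leaves $C$, so $c(C) = 0$, meaning the global mincut of $H$ has capacity $0$.

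I do not expect any serious obstacle here; the only point that warrants care is the treatment of edge capacities. If one wanted the lemma to hold with $0$-weight edges explicitly allowed in $H$, then the notion of ``SCC'' would need to be interpreted with respect to the subgraph of positive-weight edges, since zero-weight edges do not contribute to any cut capacity. In the paper's usage the lemma is invoked on residual graphs and on contracted SCCs of such graphs, so assuming positive capacities is natural and makes both directions immediate.
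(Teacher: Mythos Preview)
Your proof is correct and follows essentially the same approach as the paper: the forward direction uses a path between a vertex of $C$ and a vertex of $\overline{C}$ to locate a contributing edge, and the reverse direction takes the reachability set of some vertex $u$ as a zero-capacity cut. Your explicit remark about the positive-capacity convention is a nice addition that the paper leaves implicit.
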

\begin{proof}
    Suppose $H$ is an SCC. Let $C$ be a global mincut in $H$. Since $H$ is an SCC, for any $u \in C$ and $v \in \overline{C}$, $v$ is reachable from $u$. Therefore, there exists an edge $e$ on the path from $u$ to $v$, that contributes to cut $C$. Hence, $c(C,H) > 0$. 
    
    Suppose the capacity of global mincut is strictly greater than zero in $H$. 
     Assume to the contrary, $H$ is not an SCC. So, there exists a pair of vertices $u,v$ such that $u$ is not reachable from $v$ or vice-versa. Without loss of generality, assume $v$ is not reachable from $u$. Let $U$ be the set of vertices reachable from $u$ in $H$. 
     Since $v \notin U$, $U$ defines a cut in $H$. It follows from the selection of vertices in $U$ that $c(U,H)=0$, a contradiction.
\end{proof}
It follows from Lemma \ref{lem : global mincut in H is non-zero} that each cut in $G_f$ that subdivides an SCC of $G_f$ has capacity greater than 0. There may exist multiple SCCs in the residual graph $G^f$. 
Therefore, observe that there may also exist global cuts in $G^f$ having capacity strictly greater than zero that do not subdivide any SCC. 
However, if graph $G$ has only two $(s,t)$-mincuts $\{s\}$ and $V\setminus \{t\}$, there is only one SCC $H=V\backslash \{s,t\}$ in $G^f$ containing at least two vertices and not containing $s$ or $t$. Observe that, by Lemma \ref{lem : global mincut in H is non-zero}, any global cut $C$ with $s\in C$ and $t\in \overline{C}$ in $G^f$ has capacity strictly greater than zero if and only if $C$ separates at least one pair of vertices in $H$. Therefore, we first work with a graph that has at most two $(s,t)$-mincuts. Finally, exploiting the results for this special case and the structure of ${\mathcal D}_{PQ}(G)$, we extend our results to any general graphs.

\subsection{Graphs with exactly two (s,t)-mincuts} \label{sec : second mincut in graph with two st mincut}

Suppose graph $G$ has exactly two $(s,t)$-mincuts -- $\{s\}$ and $V\setminus \{t\}$. 
We now present two algorithms for computing a second $(s,t)$-mincut in $G$.

\subsubsection*{Algorithm Using ${\mathcal O}(n)$ Maximum $(s,t)$-flow Computations} 



The following algorithm is immediate for computing a second $(s,t)$-mincut if graph $G$ has exactly one $(s,t)$-mincut instead of two. Suppose the $(s,t)$-mincut in $G$ is $V\setminus \{t\}$; otherwise, for the case with only $(s,t)$-mincut $\{s\}$, consider the transpose graph of $G$ after swapping the roles of $s$ and $t$. For every vertex $x\in V\setminus \{s,t\}$, compute an $(s,t)$-mincut using one maximum $(s,t)$-flow in the graph obtained from $G$ by adding an edge $(x,t)$ of infinite capacity. Then, report the $(s,t)$-mincut of the least capacity among all of the computed $(s,t)$-mincuts.
To use this algorithm in graphs with exactly two $(s,t)$-mincut, observe that trivially we need ${\mathcal O}(n^2)$ maximum $(s,t)$-flow computations.

We now extend this algorithm to compute a second $(s,t)$-mincut using ${\mathcal O}(n)$ maximum $(s,t)$-flow computations for graph $G$ with the two $(s,t)$-mincuts. Let $u$ be any vertex other than $s$ and $t$ in $G$. Observe that, for any second $(s,t)$-mincut $C$, either $u\in C$ or $u\in \overline{C}$.
Exploiting this observation, we construct two graphs, $G^I$ and $G^U$, from $G$ as follows. Graph $G^I$ (likewise $G^U$) is obtained by adding an edge $(s,u)$ (likewise $(u,t)$) of infinite capacity. 
Observe that $G^I$ (likewise $G^U$) has only $(s,t)$-mincut $V\setminus \{t\}$ (likewise $\{s\}$).
It follows from the construction of these two graphs that second $(s,t)$-mincut $C$ appears either in $G^I$ or $G^U$. Hence, we can separately compute a second $(s,t)$-mincut using the algorithm mentioned above in the two graphs and report the minimum between them. This leads to the following result.


\begin{lemma} \label{lem : n maxflow}
    Suppose graph $G$ 
    has exactly two $(s,t)$-mincuts -- $\{s\}$ and $V\setminus \{t\}$. There is an algorithm that can compute a second $(s,t)$-mincut in $G$ using ${\mathcal O}(n)$ maximum $(s,t)$-flow computations. 
\end{lemma}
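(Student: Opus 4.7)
The plan is to reduce the two-mincut case to a simpler one-mincut subroutine, which admits a clean maximum $(s,t)$-flow based algorithm using roughly $n$ invocations. The authors have sketched the high-level structure already; my job is to verify the two ingredients.

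\textbf{Step 1 (the one-mincut subroutine).} First handle the case when $G$ has a unique $(s,t)$-mincut, say $V \setminus \{t\}$ (the case of unique mincut $\{s\}$ follows by taking the transpose graph and swapping the roles of $s$ and $t$). For each vertex $x \in V \setminus \{s,t\}$, form the graph $G_x = G \cup \{(x,t)\}$ where the new edge is given infinite capacity, and compute an $(s,t)$-mincut of $G_x$ with a single maximum $(s,t)$-flow computation. Finally, report the cut of minimum capacity among these $n-2$ computed cuts. Correctness rests on two simple observations: (i) adding an edge can only raise the capacity of any $(s,t)$-cut, and (ii) if $C$ is a fixed second $(s,t)$-mincut of $G$ with capacity $\lambda + \Delta_2$, then for any vertex $x \in \overline{C}$ the infinite edge $(x,t)$ has both endpoints in $\overline{C}$ and hence does not contribute to $C$, so $c(C, G_x) = \lambda + \Delta_2$. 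Meanwhile, the original unique $(s,t)$-mincut $V \setminus \{t\}$ has its capacity blown up to infinity in $G_x$. Since every $(s,t)$-cut of $G$ other than $V \setminus \{t\}$ has capacity at least $\lambda + \Delta_2$, the cut $C$ becomes an $(s,t)$-mincut of $G_x$, and is therefore recovered by the maxflow computation for this particular choice of $x$.

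\textbf{Step 2 (reducing two mincuts to one).} Now suppose $G$ has exactly two $(s,t)$-mincuts, $\{s\}$ and $V \setminus \{t\}$. Pick any vertex $u \in V \setminus \{s,t\}$ and form two auxiliary graphs $G^I = G \cup \{(s,u)\}$ and $G^U = G \cup \{(u,t)\}$, each new edge having infinite capacity. A direct check shows that $G^I$ has a unique $(s,t)$-mincut, namely $V\setminus\{t\}$: the added edge still does not cross $V\setminus\{t\}$ (so its capacity stays $\lambda$), while any finite cut must keep $u$ on the same side as $s$, ruling out $\{s\}$ and any other cut with $u \in \overline{C'}$. Symmetrically, $G^U$ has the unique $(s,t)$-mincut $\{s\}$. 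For any second $(s,t)$-mincut $C$ of $G$, either $u \in C$ or $u \in \overline{C}$. In the first case the infinite edge $(s,u)$ does not contribute to $C$ in $G^I$, so $c(C, G^I) = \lambda + \Delta_2$ and $C$ is a second $(s,t)$-mincut of $G^I$; in the second case the analogous statement holds for $G^U$. Hence, running the Step 1 subroutine on both $G^I$ and $G^U$ and returning the smaller of the two cuts yields a second $(s,t)$-mincut of $G$. The total cost is $2(n-2) = O(n)$ maximum $(s,t)$-flow computations.

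\textbf{Main obstacle.} There is no deep flow-theoretic difficulty; the proof is essentially a case analysis on whether the infinite-capacity auxiliary edge contributes to a given cut. The only point that requires slight care is (a) verifying that $G^I$ and $G^U$ indeed have the claimed unique $(s,t)$-mincuts, which is needed so that Step 1 is applicable, and (b) arguing that on the correct auxiliary graph the capacity of $C$ is preserved exactly while the original unique $(s,t)$-mincut of that auxiliary graph has strictly smaller capacity than $\lambda+\Delta_2$, so that the maxflow call on $G_x$ returns a cut of capacity at most $\lambda+\Delta_2$. Both reduce to the Maxflow-Mincut Theorem together with the observation that infinite-capacity edges force their two endpoints onto the same side of any finite $(s,t)$-cut.
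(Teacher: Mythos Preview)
Your proposal is correct and follows essentially the same approach as the paper: a one-mincut subroutine that adds an infinite-capacity edge $(x,t)$ for each $x\in V\setminus\{s,t\}$, combined with the covering trick that builds $G^I$ and $G^U$ by adding $(s,u)$ and $(u,t)$ respectively to reduce the two-mincut case to two instances of the one-mincut case. You supply somewhat more detailed justification (e.g., explicitly verifying that $G^I$ and $G^U$ each have a unique $(s,t)$-mincut and that the returned cut is indeed a second $(s,t)$-mincut of $G$), but the algorithm and its analysis are identical to the paper's.
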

\begin{remark}
The design of the algorithm stated in Lemma \ref{lem : n maxflow} is based on the covering technique of \cite{baswana2023minimum+} (refer to Theorem 3.2 in Section 3 of \cite{baswana2023minimum+}).  
\end{remark}
\subsubsection*{Algorithm Using One Global Mincut Computation}
To further improve the running time, we take an approach that (1) exploits the residual graph $G^f$ instead of the actual graph $G$ and (2) explores the relation between global mincut and second $(s,t)$-mincut in $G^f$.
As discussed before Appendix \ref{sec : second mincut in graph with two st mincut}, there is exactly one SCC $H$ in the residual graph $G^f$ containing 
at least two vertices, that is, $V\setminus \{s,t\}$. 
Note that any global cut $C$ in $H$ may have a capacity strictly less than the capacity of $C$ in $G^f$. This is because of the existence of edges that are incident on $s$ or $t$ (refer to Figure \ref{fig: illustration of the main proof}$(i)$ and $(ii)$). Interestingly, we establish the following bijective mapping between the set of all second $(s,t)$-mincuts in $G^f$ and the set of all global mincuts in $H$. 

\begin{figure}[ht]
 \centering  \includegraphics[width=\textwidth]{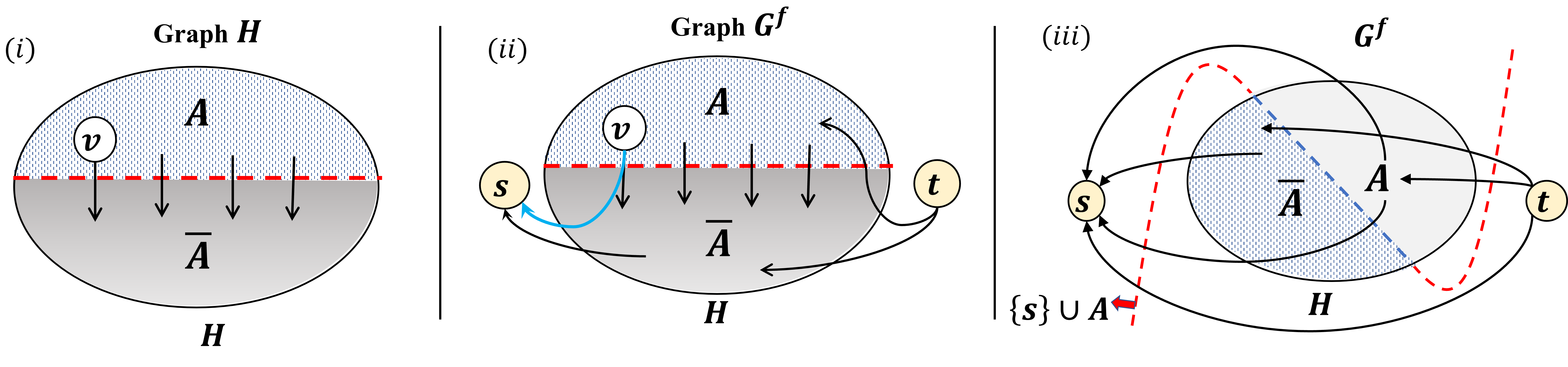} 
  \caption{$(i)$ Global cut $A$ in $H$ $(ii)$ Edge $(v,s)$ contributes to $A$ in $G^f$ but not in $H$ $(iii)$ $A$ is a global mincut in $H$ and $\{s\}\cup A$ is a second $(s,t)$-mincut in $G^f$}
    \label{fig: illustration of the main proof}
\end{figure}


\begin{lemma} \label{lem : second mincut = global mincut for 2 st mincuts}
    Let $C_1$ be a global mincut in $H$ and $C_2$ be a second $(s,t)$-mincut in $G^f$. Then,
    \begin{enumerate}
        \item $c(C_2,G^f)=c(C_1,H)$, 
        \item $C_1\cup \{s\}$ is a second $(s,t)$-mincut in $G^f$ and $C_2\setminus \{s\}$ is a global mincut in $H$.
    \end{enumerate}    
\end{lemma}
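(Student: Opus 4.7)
The plan rests on two structural facts about the residual graph $G^f$ that follow from the assumption that $\{s\}$ and $V\setminus\{t\}$ are the only $(s,t)$-mincuts of $G$: every outgoing edge of $s$ in $G$ is saturated by $f$, and every incoming edge of $t$ in $G$ is saturated by $f$ (this uses that $\lambda = c(\{s\}) = c(V\setminus\{t\})$ together with conservation of flow at $s$ and $t$). From these it follows that in $G^f$ the vertex $s$ has no outgoing edges and the vertex $t$ has no incoming edges. Let $\Delta_2 > 0$ denote the capacity of any second $(s,t)$-mincut in $G^f$; by Theorem~\ref{thm : min+k in residual graph}, $\Delta_2$ equals the ``gap'' above $\lambda$ of the second $(s,t)$-mincut of $G$.

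My first technical step is to show the exact identity
\begin{equation*}
c(C_1 \cup \{s\},\, G^f) \;=\; c(C_1,\, H) \qquad \text{for every nonempty proper subset } C_1 \text{ of } V\setminus\{s,t\}.
\end{equation*}
The edges of $G^f$ crossing $C_1\cup\{s\}$ split into four categories based on whether each endpoint is in $\{s\}$, $C_1$, $\overline{C_1}^H := V\setminus(\{s,t\}\cup C_1)$, or $\{t\}$. Using the structural facts above, the categories ``$s \to \overline{C_1}^H$'', ``$s \to t$'', and ``$C_1 \to t$'' are empty, leaving only edges from $C_1$ to $\overline{C_1}^H$, which are precisely the edges crossing $C_1$ in $H$. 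The same computation (with $s\in C_2$) gives $c(C_2,G^f) = c(C_2\setminus\{s\},H)$ for any $(s,t)$-cut $C_2$ of $G^f$ distinct from $\{s\}$ and $V\setminus\{t\}$.

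My second step is to use these two identities to pin down both the global-mincut capacity of $H$ and the second $(s,t)$-mincut capacity of $G^f$. On the one hand, for a global mincut $C_1$ of $H$, the cut $C_1 \cup \{s\}$ is an $(s,t)$-cut of $G^f$ whose capacity equals $c(C_1,H)$, and this is strictly positive by Lemma~\ref{lem : global mincut in H is non-zero} (applied to the SCC $H$). Hence $C_1 \cup \{s\}$ is not an $(s,t)$-mincut of $G^f$, so $c(C_1,H) \geq \Delta_2$. On the other hand, for a second $(s,t)$-mincut $C_2$ of $G^f$, since $C_2$ has positive capacity it cannot equal $\{s\}$ or $V\setminus\{t\}$, so $C_2 \setminus \{s\}$ is a nonempty proper subset of $V\setminus\{s,t\}$, i.e.\ a global cut of $H$, with capacity $\Delta_2$; thus $c(C_1,H) \leq \Delta_2$. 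Combining gives $c(C_1,H) = \Delta_2 = c(C_2,G^f)$, proving item (1).

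Item (2) then follows at once from the two identities of the previous paragraph combined with the equality $c(C_1,H) = \Delta_2$: the cut $C_1 \cup \{s\}$ attains the second-mincut capacity $\Delta_2$ in $G^f$ and is not an $(s,t)$-mincut (positive capacity), so it is a second $(s,t)$-mincut; symmetrically $C_2 \setminus \{s\}$ attains the global mincut capacity of $H$. The main obstacle I foresee is the bookkeeping for the boundary pieces of the cut involving $s$ and $t$; this is why it is crucial to first extract the ``no edges out of $s$, no edges into $t$'' property from the hypothesis that $\{s\}$ and $V\setminus\{t\}$ are $(s,t)$-mincuts, as it is exactly this that makes the edges of $C_1\cup\{s\}$ in $G^f$ coincide with those of $C_1$ in $H$.
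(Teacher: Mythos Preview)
Your proposal is correct and follows essentially the same route as the paper's proof: both arguments hinge on the fact that in $G^f$ the source $s$ has outdegree zero and the sink $t$ has indegree zero (since $\{s\}$ and $V\setminus\{t\}$ are $(s,t)$-mincuts), which yields the key identity $c(C_1\cup\{s\},G^f)=c(C_1,H)$ and its companion $c(C_2,G^f)=c(C_2\setminus\{s\},H)$, after which the two inequalities $\lambda_1\le\lambda_2$ and $\lambda_2\le\lambda_1$ are derived exactly as you do. One minor remark: when you justify that $s$ has no outgoing edges in $G^f$, be sure your argument also rules out backward residual edges arising from flow on edges into $s$ (this follows from $f_{in}(\{s\})=0$, which you get from $f_{out}(\{s\})-f_{in}(\{s\})=\lambda$ and $f_{out}(\{s\})\le c(\{s\})=\lambda$).
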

\begin{proof}
    Let us consider global mincut $C_1$ in graph $H$, and let $c(C_1,H)=\lambda_1$. 
     By Lemma \ref{lem : global mincut in H is non-zero}, $\lambda_1 >0$.
    It follows from the construction of $G^f$ that
    $C_1 \cup \{s\}$ is an $(s,t)$-cut in $G^f$.
    Observe that the edge-set of $C_1\cup \{s\}$ in $G^f$ and the edge-set of $C_1$ in $H$ differ only on the set of edges, say $E'$, that are incident on $s$ and $t$ in $G^f$. 
    Since $f$ is maximum $(s,t)$-flow, by Lemma \ref{lem : alternative maxflow mincut},
    $s$  has outdegree zero; likewise, $t$ has indegree zero in $G^f$ (refer to Figure \ref{fig: illustration of the main proof}($iii$)). So, every edge in $E'$ is an incoming edge to $(s,t)$-cut $C_1\cup \{s\}$ in $G^f$.
    This implies that the contributing edges of $(s,t)$-cut $C_1 \cup \{s\}$ in $G^f$ are the same as the contributing edges of cut $C_1$ in $H$. Hence, we arrive at the following equation.
    \begin{equation}\label{eq : 4}
        c(C_1 \cup \{s\},G^f)=c(C_1,H)=\lambda_1
    \end{equation} 
     It is given that $C_2$ is a second $(s,t)$-mincut in $G^f$. Let $c(C_2,G^f)=\lambda_2$, where $\lambda_2>0$.
     It follows from  Lemma \ref{lem : alternative maxflow mincut} that $\{s\}$ and $V\setminus \{t\}$ are the only $(s,t)$-mincuts in $G^f$. So, any $(s,t)$-cut in $G^f$ except $\{s\}$ and $V\setminus \{t\}$ has capacity at least $\lambda_2$. 
     Since $C_1$ is a cut in $H$, $ C_1 \cup \{s\}$ is neither $\{s\}$ nor $ V\setminus \{t\}$. Moreover, since
     $C_1 \cup \{s\}$ is an $(s,t)$-cut in $G^f$, we get $c(C_2,G^f) \leq c(C_1\cup \{s\},G^f)$. 
     So, it follows from Equation \ref{eq : 4} that $c(C_2,G^f) \leq \lambda_1$. 
     Therefore, we arrive at the following inequality.
     \begin{equation}\label{eq : 5}
         \lambda_2\le \lambda_1 
     \end{equation}
    Observe that second $(s,t)$-mincut $C_2$ must separate a pair of vertices in $H$ since $\{s\}$ and $V\setminus \{t\}$ are $(s,t)$-mincuts in $G^f$.
    Thus, the cut $C_2\setminus \{s\}$ appears as a global cut in graph $H$. 
    Thus, we can arrive at the following equation using similar arguments as in the case of Equation \ref{eq : 4}. 
    \begin{equation}\label{eq : 6}
        c(C_2\setminus\{s\},H)=c(C_2,G^f)=\lambda_2
    \end{equation}
    Since $C_1$ is a global mincut in $H$ and 
    $C_2\setminus \{s\}$ is a global cut in $H$, $c(C_1,H)\le c(C_2\setminus\{s\},H)$. Therefore, by Equation \ref{eq : 6}, we get the following inequality.
    \begin{equation}\label{eq : 7}
        \lambda_1 \le \lambda_2
    \end{equation}
    
It follows from Equations \ref{eq : 5} and \ref{eq : 7} that $\lambda_2=\lambda_1$. Evidently, $c(C_2,G^f)=c(C_1,H)$ and it completes the proof of (1). Now, by Equation \ref{eq : 4}, $c(C_1\cup \{s\},G^f)=\lambda_2$ and by Equation \ref{eq : 6}, $c(C_2\setminus \{s\},H)=\lambda_1$. Therefore,  $(s,t)$-cut $C_1\cup \{s\}$ is a second $(s,t)$-mincut in $G^f$ and $C_2\setminus \{s\}$ is a global mincut in $H$. This completes the proof of (2).    
\end{proof}
The following lemma is immediate from Lemma \ref{lem : second mincut = global mincut for 2 st mincuts} and Theorem \ref{thm : min+k in residual graph}.
\begin{lemma}
     Suppose $G$ has exactly two $(s,t)$-mincuts -- $\{s\}$ and $V\setminus \{t\}$. There is an algorithm that, given any maximum $(s,t)$-flow in $G$, can compute a second $(s,t)$-mincut in $G$ using one global mincut computation.
     \label{lem : reporting second st mincut in graph with 2 st mincuts}
\end{lemma}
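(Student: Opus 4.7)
The plan is to combine Lemma~\ref{lem : second mincut = global mincut for 2 st mincuts} and Theorem~\ref{thm : min+k in residual graph} directly; after that, the algorithm falls out almost for free. Given the maximum $(s,t)$-flow $f$, I would first construct the residual graph $G^f$ in ${\mathcal O}(m)$ time, and then compute its SCC decomposition in an additional ${\mathcal O}(m+n)$ time. As already noted in the paragraph preceding this subsection, the hypothesis that $G$ has only the two trivial $(s,t)$-mincuts $\{s\}$ and $V\setminus\{t\}$, together with Lemma~\ref{lem : alternative maxflow mincut}, forces $V\setminus\{s,t\}$ to form a single nontrivial SCC $H$ of $G^f$; otherwise, splitting $V\setminus\{s,t\}$ along any sink-SCC in the condensation of $G^f$ would yield a third $(s,t)$-cut of capacity zero in $G^f$, contradicting Lemma~\ref{lem : alternative maxflow mincut}.

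The algorithm then invokes one global mincut computation on $H$ to obtain a global mincut $C_1$ and returns $C_1 \cup \{s\}$. For correctness, Lemma~\ref{lem : second mincut = global mincut for 2 st mincuts}(2) yields that $C_1 \cup \{s\}$ is a second $(s,t)$-mincut in $G^f$ of some capacity $\Delta^\star > 0$. Next, Theorem~\ref{thm : min+k in residual graph} gives that every $(s,t)$-cut $C'$ of $G$ satisfies $c(C',G)=\lambda+c(C',G^f)$. Because this identity adds the same constant $\lambda$ to every $(s,t)$-cut capacity, it is an order-preserving bijection on the collection of $(s,t)$-cuts; hence the second smallest $(s,t)$-cut capacity in $G^f$ (which is $\Delta^\star$) lifts to the second smallest $(s,t)$-cut capacity $\lambda+\Delta^\star$ in $G$. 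Therefore $C_1\cup\{s\}$ is a second $(s,t)$-mincut in $G$, and the only cost beyond the single global mincut invocation is ${\mathcal O}(m+n)$ preprocessing for building $G^f$ and its SCC decomposition.

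The only step that requires any nonroutine verification is the identification of the unique nontrivial SCC $H$; everything else is a direct application of previously established results. Since this identification uses only the standard fact that the zero-capacity $(s,t)$-cuts of a directed graph are exactly those cuts compatible with its SCC decomposition (combined with the hypothesis that there are only two such cuts here), I do not anticipate any technical difficulty.
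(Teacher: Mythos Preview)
Your proposal is correct and follows essentially the same route as the paper: the paper states that the lemma is immediate from Lemma~\ref{lem : second mincut = global mincut for 2 st mincuts} and Theorem~\ref{thm : min+k in residual graph}, and you invoke precisely these two results in the same way, merely spelling out the algorithm (build $G^f$, take $H=V\setminus\{s,t\}$, compute one global mincut $C_1$ in $H$, return $C_1\cup\{s\}$) and the order-preserving translation of capacities between $G$ and $G^f$. The only extra detail you add is the justification that $H=V\setminus\{s,t\}$ is a single SCC, which the paper also asserts in the discussion preceding Section~\ref{sec : second mincut in graph with two st mincut}; your sink-SCC argument is a bit informal but captures the right idea (any additional SCC among $V\setminus\{s,t\}$ would yield a third $1$-transversal cut in $\mathcal{D}_{PQ}(G)$, hence a third $(s,t)$-mincut).
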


\subsection{Graphs with exactly one (s,t)-mincut}\label{sec : second mincut in graph with one st mincut}
Suppose $G$ has exactly one $(s,t)$-mincut $V\setminus \{t\}$.
Let the capacity of $(s,t)$-mincut in $G$ be $\lambda$.
We now design an algorithm that computes a second $(s,t)$-mincut in graph $G$.
The results can be extended similarly for the case when the only $(s,t)$-mincut in $G$ is $\{s\}$.
Let $H$ denote the graph obtained from $G^f$ by removing $t$ and the edges incident on $t$.

Observe that the algorithm stated in Lemma \ref{lem : n maxflow} actually reduces the problem of computing a second $(s,t)$-mincut in a graph with two $(s,t)$-mincuts to a graph that has exactly one $(s,t)$-mincut. Hence, this algorithm works even for graph $G$ that has exactly one $(s,t)$-mincut. However, the approach taken for the algorithm stated in Lemma \ref{lem : reporting second st mincut in graph with 2 st mincuts} for computing a second $(s,t)$-mincut in graphs with exactly two $(s,t)$-mincuts 
cannot be applied to graph $G$, which has exactly one $(s,t)$-mincut. This is because of the following reasons. Lemma \ref{lem : second mincut = global mincut for 2 st mincuts} crucially exploits the fact that the outdegree of $s$, as well as the indegree of $t$, is zero in the corresponding residual graph. However, for graph $G$, it follows from Theorem \ref{thm : min+k in residual graph} that $s$ has outdegree at least $1$ in $G^f$. In addition, it is not always necessary that $H$ is an SCC. It shows that there may exist global mincuts $C_1$ in $H$ satisfying $s\in \overline{C_1}$ and $c(C_1\cup\{s\}, G^f)\ne c(C_1,H)$. As a result, Equation \ref{eq : 4} fails to hold for graph $G$. Therefore, unlike Lemma \ref{lem : second mincut = global mincut for 2 st mincuts}, it does not seem possible to establish any bijective mapping between the set of all second $(s,t)$-mincuts in $G^f$ and the set of all global mincuts in $H$. To overcome these challenges, we construct a new graph $H_s$ just by providing a simple modification to graph $H$ as follows.
\paragraph*{Construction of $H_s$:}Graph $H_s$ is obtained by adding an edge of infinite capacity from each vertex $v$ to source $s$ in graph $H$. \\

The following fact follows immediately from the construction of $H_s$.
 
\begin{fact} \label{fact : every cut with s on sink side has greater capacity}
    For any pair of cuts $C,C'$ in $H_s$ such that $s\in C$ and $s \in \overline{C'}$, $c(C,H_s) <c(C',H_s)$.
\end{fact}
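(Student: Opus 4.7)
The plan is to exploit the fact that the added edges point \emph{into} $s$, so they contribute to a cut only when $s$ lies on the sink side. I would first consider an arbitrary cut $C$ in $H_s$ with $s \in C$. For every added edge $(v,s)$, either $v \in C$, in which case both endpoints of the edge lie in $C$ and the edge does not contribute to $c(C, H_s)$, or $v \in \overline{C}$, in which case $(v,s)$ goes from $\overline{C}$ into $C$ and is an incoming edge of $C$, so again it does not contribute. Hence none of the newly added infinite-capacity edges contribute to $c(C, H_s)$, and this quantity equals the contribution of the edges inherited from $H$, which is finite.

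Next I would consider an arbitrary cut $C'$ in $H_s$ with $s \in \overline{C'}$. Since $C'$ is a (nonempty) cut, there exists some vertex $v \in C'$. By the construction of $H_s$, the edge $(v, s)$ is present and has infinite capacity; moreover, it goes from $v \in C'$ to $s \in \overline{C'}$, so it is an outgoing edge of $C'$ and contributes infinite capacity to $c(C', H_s)$. Thus $c(C', H_s) = \infty$.

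Combining the two cases gives $c(C, H_s) < \infty = c(C', H_s)$, which is exactly the desired inequality. There is no real technical obstacle here; the statement is essentially a direct consequence of the construction of $H_s$, which is why the paper labels it a Fact rather than a Lemma. The only thing to be careful about is the convention that the added edges are directed from $v$ to $s$ (not from $s$ to $v$), so that $s$ has no outgoing added edges and hence no added edge contributes when $s$ lies on the source side of the cut.
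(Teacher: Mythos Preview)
Your argument is correct and is precisely the reasoning the paper has in mind: the Fact is stated as following ``immediately from the construction of $H_s$,'' and your case analysis on the direction of the added edges $(v,s)$ is exactly that immediate justification. The only cosmetic point is that when you pick $v\in C'$ you are implicitly using $v\neq s$ (which holds since $s\in\overline{C'}$) to guarantee that the added edge $(v,s)$ exists; this is clear but worth saying explicitly.
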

We now use Fact \ref{fact : every cut with s on sink side has greater capacity} to establish the following lemma.
\begin{lemma} \label{lem : equivalence of global mincut in H and Hs}
    For every global mincut $C$ in $H_s$, $s \in C$ and $c(C,H_s)=c(C,H)$.
\end{lemma}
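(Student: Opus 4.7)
The plan is to analyze, for an arbitrary nonempty proper subset $C \subsetneq V(H_s)$, how the newly added edges in $H_s$ alter the capacity $c(C, H_s)$ relative to $c(C, H)$. Since $V(H_s) = V(H)$ and the only modification is the addition of the infinite-capacity edge set $A = \{(v,s) : v \in V(H)\}$, a given edge $(v,s) \in A$ contributes to $c(C, H_s)$ only if its tail lies in $C$ and its head lies in $\overline{C}$, i.e.\ when $v \in C$ and $s \in \overline{C}$.

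I would then split into two cases based on which side contains $s$. If $s \in C$, then every added edge $(v,s)$ has its head inside $C$, so the edge is either internal to $C$ (when $v \in C$) or directed from $\overline{C}$ into $C$ (when $v \in \overline{C}$); in neither situation does it contribute to the outgoing cut capacity. Hence $c(C, H_s) = c(C, H)$, a finite number because $H$ is a subgraph of the residual graph $G^f$ whose edges carry finite capacities. If instead $s \in \overline{C}$, then because $C$ is nonempty it contains some vertex $v$, and the added edge $(v,s)$ has tail in $C$ and head in $\overline{C}$, contributing infinite capacity; thus $c(C, H_s) = \infty$. This already justifies Fact \ref{fact : every cut with s on sink side has greater capacity} as a by-product.

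To conclude, observe that finite-capacity cuts in $H_s$ do exist (any cut with $s \in C$ supplies one, provided $H$ has at least two vertices, which we may assume since otherwise no global cut exists). Therefore a global mincut $C$ of $H_s$ cannot fall into the second case, forcing $s \in C$; and then the first case gives $c(C, H_s) = c(C, H)$, completing both parts of the lemma.

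There is essentially no technical obstacle here: the proof is a direct case analysis exploiting the direction of the added edges. The only point worth keeping explicit is that the added edges are all directed \emph{towards} $s$, which is precisely what makes a cut with $s \in C$ unaffected by them while driving the capacity of any cut with $s \in \overline{C}$ to infinity.
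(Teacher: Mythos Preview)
Your proof is correct and follows essentially the same approach as the paper: the paper first invokes Fact~\ref{fact : every cut with s on sink side has greater capacity} to conclude $s\in C$ and then observes that the added edges are all incoming to $C$, while you unfold that fact inline via the same two-case analysis on the position of $s$. The arguments are interchangeable.
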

\begin{proof}
Let $C$ be any global mincut in $H_s$. Fact \ref{fact : every cut with s on sink side has greater capacity} implies that $s \in C$.
Observe that all the edges added during the construction of $H_s$ are either incoming to $C$ or do not belong to the edge-set of $C$ in $H_s$, since $s \in C$.
So, any global cut $C$ in $H_s$ with $s \in C$ has the same set of contributing edges as cut $C$ in $H$. Therefore, $c(C,H_s)=c(C,H)$.
\end{proof}
Since $H$ has only one $(s,t)$-mincut $V\setminus \{t\}$, by construction, $H_s$ is an SCC. So, it follows from Lemma \ref{lem : global mincut in H is non-zero} that the capacity of global mincut in $H_s$ is strictly greater than zero. Therefore, by exploiting Lemma \ref{lem : equivalence of global mincut in H and Hs} and the fact that $t$ has indegree zero in $G^f$, the following lemma can be established along similar lines to the proof of Lemma \ref{lem : second mincut = global mincut for 2 st mincuts}. 

\begin{lemma} \label{lem : second mincut = global mincut for 1 st mincuts}
    Let $C_1$ be a global mincut in $H_s$ and $C_2$ be a second $(s,t)$-mincut in $G^f$. Then,
    \begin{enumerate}
        \item $c(C_2,G^f)=c(C_1,H_s)$, 
        \item $C_1$ is a second $(s,t)$-mincut in $G^f$ and $C_2$ is a global mincut in $H_s$.
    \end{enumerate}    
\end{lemma}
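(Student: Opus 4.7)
The plan is to mimic the proof of Lemma \ref{lem : second mincut = global mincut for 2 st mincuts}, adapting it to the asymmetry caused by $s$ having positive outdegree in $G^f$ (we no longer enjoy the clean symmetry of the two-mincut case). The scaffolding has three ingredients: (a) $t$ has indegree $0$ in $G^f$, since by Lemma \ref{lem : alternative maxflow mincut} the unique $(s,t)$-mincut $V\setminus\{t\}$ forces every edge entering $t$ to be saturated; (b) $H_s$ is strongly connected; and (c) Fact \ref{fact : every cut with s on sink side has greater capacity} together with Lemma \ref{lem : equivalence of global mincut in H and Hs} pins down the structure of a global mincut of $H_s$.

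For (b), let $R\subseteq V$ be the set of vertices reachable from $s$ in $G^f$. Since $f$ is maximum, there is no $s$-to-$t$ path in $G^f$, so $t\notin R$; by construction $R$ has no outgoing edges in $G^f$, hence $c(R,G^f)=0$, and Lemma \ref{lem : alternative maxflow mincut} makes $R$ an $(s,t)$-mincut in $G$. As $V\setminus\{t\}$ is the only such mincut, $R=V\setminus\{t\}$, so $s$ reaches every vertex of $H$ using only edges of $H$. The added edges $(v,s)$ in $H_s$ then close every cycle, giving strong connectivity, and Lemma \ref{lem : global mincut in H is non-zero} guarantees that the global mincut of $H_s$ has strictly positive capacity.

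Now take a global mincut $C_1$ of $H_s$ and a second $(s,t)$-mincut $C_2$ of $G^f$. By Fact \ref{fact : every cut with s on sink side has greater capacity}, $s\in C_1$, and since $t\notin V(H_s)$, $C_1$ is an $(s,t)$-cut in $G^f$. Lemma \ref{lem : equivalence of global mincut in H and Hs} gives $c(C_1,H_s)=c(C_1,H)$; since $t$ has indegree $0$ in $G^f$, no edge leaves $C_1$ towards $t$, so the edge-set of $C_1$ is identical in $G^f$ and in $H$, yielding $c(C_1,G^f)=c(C_1,H_s)$. This value is positive, so $C_1$ is not an $(s,t)$-mincut in $G^f$; therefore $c(C_2,G^f)\le c(C_1,G^f)=c(C_1,H_s)$. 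For the reverse direction I would view $C_2$ as a cut of $H_s$: we have $s\in C_2$, $t\notin C_2$, and $C_2\neq V\setminus\{t\}$ (otherwise $C_2$ would be the unique $(s,t)$-mincut of $G$, contradicting $c(C_2,G^f)>0$), so $V(H_s)\setminus C_2$ is nonempty. The same edge-accounting gives $c(C_2,H_s)=c(C_2,G^f)$, and minimality of $C_1$ in $H_s$ yields $c(C_1,H_s)\le c(C_2,G^f)$. Combining both inequalities gives part (1), and part (2) is immediate: $C_1$ attains the second $(s,t)$-mincut value in $G^f$ while $C_2$ attains the global mincut value in $H_s$.

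The main obstacle is ensuring that the correspondence never degenerates, namely the strong connectivity of $H_s$ and the guarantee that $C_2$ restricts to a proper, nonempty cut of $H_s$. Both hinge on the uniqueness of the $(s,t)$-mincut in $G$, which is precisely what rules out the pathological case in which a purported global mincut of $H_s$ would correspond to a trivial $(s,t)$-cut of capacity $0$ in $G^f$. Once these two structural facts are in place, the remainder of the argument is straightforward edge bookkeeping, driven entirely by the zero indegree of $t$ in $G^f$ and the forced placement $s\in C_1$ from Fact \ref{fact : every cut with s on sink side has greater capacity}.
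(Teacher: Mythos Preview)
Your argument is correct and follows exactly the route the paper indicates: establish that $H_s$ is strongly connected (so its global mincut has positive capacity), then replay the two-inequality proof of Lemma~\ref{lem : second mincut = global mincut for 2 st mincuts} using the facts that $t$ has indegree~$0$ in $G^f$ and that $s\in C_1$ by Fact~\ref{fact : every cut with s on sink side has greater capacity}. One small wording slip: when you say ``the edge-set of $C_1$ is identical in $G^f$ and in $H$'', note that $t$ may have outgoing edges in $G^f$ (so $E(C_1)$ can differ), but only the \emph{contributing} edges matter for capacity, and those indeed coincide---your conclusion $c(C_1,G^f)=c(C_1,H_s)$ is unaffected.
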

Lemma \ref{lem : second mincut = global mincut for 1 st mincuts} provides a bijective mapping between global mincuts of $H_s$ and second $(s,t)$-mincuts of $G^f$.
Observe that $H_s$ can be obtained from $H$ in $\mathcal{O}(n)$ time. So, by exploiting Lemma \ref{lem : second mincut = global mincut for 1 st mincuts} and Theorem \ref{thm : min+k in residual graph}, we arrive at the following result.
\begin{lemma}
     Suppose graph $G$ has exactly one $(s,t)$-mincut -- either $\{s\}$ or $V\setminus \{t\}$. There is an algorithm that, given any maximum $(s,t)$-flow in $G$, can compute a second $(s,t)$-mincut in $G$ using one global mincut computation.
     \label{lem : reporting second st mincut in graph with 1 st mincuts}
\end{lemma}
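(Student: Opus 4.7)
The plan is to combine the bijection in Lemma \ref{lem : second mincut = global mincut for 1 st mincuts} with Theorem \ref{thm : min+k in residual graph} to produce a direct reduction from second $(s,t)$-mincut to a single global mincut. First, without loss of generality assume the unique $(s,t)$-mincut of $G$ is $V \setminus \{t\}$; if instead it is $\{s\}$, I pass to the transpose graph of $G$ with $s$ and $t$ swapped, which preserves edge capacities and induces a natural bijection on $(s,t)$-cuts and their capacities.

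The algorithm I propose is as follows. Given the maximum $(s,t)$-flow $f$ as input, construct the residual graph $G^f$ in $\mathcal{O}(m)$ time by the standard definition. Delete the sink $t$ together with all edges incident to $t$ from $G^f$ to obtain $H$; then, for every $v \in V(H) \setminus \{s\}$, add an edge $(v, s)$ of infinite capacity to obtain $H_s$. This preprocessing takes $\mathcal{O}(n + m)$ time. Now invoke a single global mincut algorithm on $H_s$ to obtain a global mincut $C_1$, and return $C_1$ as a second $(s,t)$-mincut of $G$.

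For correctness, Lemma \ref{lem : second mincut = global mincut for 1 st mincuts}(2) immediately implies that the computed $C_1$ is a second $(s,t)$-mincut in $G^f$; in particular $s \in C_1$ and $t \in \overline{C_1}$, so $C_1$ is an $(s,t)$-cut of $G$ as well. Setting $\Delta := c(C_1, G^f)$, Lemma \ref{lem : global mincut in H is non-zero} applied to the SCC $H_s$ gives $\Delta > 0$, and Theorem \ref{thm : min+k in residual graph} then yields $c(C_1, G) = \lambda + \Delta$. Minimality of this value across all $(s,t)$-cuts of $G$ of capacity strictly greater than $\lambda$ follows in the reverse direction: any second $(s,t)$-mincut $C_2$ of $G$ is (again by Theorem \ref{thm : min+k in residual graph}) a second $(s,t)$-mincut in $G^f$, and by Lemma \ref{lem : second mincut = global mincut for 1 st mincuts}(1) it has the same capacity in $G^f$ as $C_1$ does in $H_s$.

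The main obstacle has already been dispatched by Lemma \ref{lem : second mincut = global mincut for 1 st mincuts}, so the remaining content of the proof amounts to bookkeeping around the $\mathcal{O}(n+m)$ construction and the transfer between $G$ and $G^f$. The one subtlety worth foregrounding is that the returned global mincut must always place $s$ on the source side, so that it is a bona fide $(s,t)$-cut of $G$; this is precisely what the infinite-capacity back edges added in the construction of $H_s$ guarantee, as captured by Fact \ref{fact : every cut with s on sink side has greater capacity}.
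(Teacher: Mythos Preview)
Your proposal is correct and follows essentially the same approach as the paper: construct $H_s$ from $G^f$, compute a single global mincut there, and invoke Lemma~\ref{lem : second mincut = global mincut for 1 st mincuts} together with Theorem~\ref{thm : min+k in residual graph} to translate the result back to a second $(s,t)$-mincut in $G$. Your write-up is in fact slightly more explicit than the paper's, which merely cites these two results and notes the $\mathcal{O}(n)$ construction time for $H_s$; the only point you assert without justification is that $H_s$ is an SCC, but the paper handles this equally tersely (``by construction'').
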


\subsection{Extension to General Graphs} \label{sec : extension to egenral graphs}

The algorithms for computing a second $(s,t)$-mincut stated in 
Lemma \ref{lem : reporting second st mincut in graph with 2 st mincuts} and Lemma \ref{lem : reporting second st mincut in graph with 1 st mincuts} work only if the graph has at most two $(s,t)$-mincuts. However, recall that the number of $(s,t)$-mincuts in general graphs can be exponential in $n$. By exploring interesting relations between second $(s,t)$-mincuts and DAG $\mathcal{D}_{PQ}(G)$, we now extend our results for general directed weighted graphs. 
By Theorem \ref{thm : dag for st mincut and characterization}, every $1$-transversal cut of ${\mathcal D}_{PQ}(G)$ is an $(s,t)$-mincut in $G$. It follows that any second $(s,t)$-mincut in $G$ either appears as a non $1$-transversal cut or it subdivides a node in ${\mathcal D}_{PQ}(G)$. In the former case, Vazirani and Yannakakis \cite{vazirani1992suboptimal} showed that the least capacity edge in ${\mathcal D}_{PQ}(G)$ can be used to compute 
a second $(s,t)$-mincut in ${\mathcal O}(m)$ time (refer to Lemma 3.3 and Step 2b in the algorithm preceding Theorem 3.3 in \cite{vazirani1992suboptimal}). However, the problem 
arises in handling the latter case.
Henceforth, we assume that every second $(s,t)$-mincut in $G$ subdivides at least one node of ${\mathcal D}_{PQ}(G)$. The following lemma provides a bound on the number of nodes of ${\mathcal D}_{PQ}(G)$ that any second $(s,t)$-mincut can subdivide. 
\begin{lemma} \label{lem : second st mincut separates one node}
    If $C$ is a second $(s,t)$-mincut in $G$, $C$ subdivides exactly one node in $\mathcal{D}_{PQ}(G)$.
\end{lemma}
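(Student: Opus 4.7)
The plan is to prove this by contradiction, assuming $C$ subdivides two distinct nodes $\mu_1, \mu_2$ of $\mathcal{D}_{PQ}(G)$, and deriving an inequality via submodularity that is incompatible with $C$ being a \emph{second} $(s,t)$-mincut. Write $X_i = V(\mu_i) \cap C$ and $Y_i = V(\mu_i) \cap \overline{C}$ for $i \in \{1,2\}$; by the subdivision assumption, all four sets are nonempty. Let $c(C) = \lambda + \Delta_2$ with $\Delta_2 > 0$, and recall that any $(s,t)$-cut that strictly exceeds $\lambda$ must, by the choice of $C$, have capacity at least $\lambda + \Delta_2$.

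Next I would use Lemma~\ref{lem :mapping of nodes in Dpq} to produce a helpful $(s,t)$-mincut. Since $\mu_1 \ne \mu_2$, there exist $u \in V(\mu_1)$ and $v \in V(\mu_2)$ that lie in different nodes of $\mathcal{D}_{PQ}(G)$, so there is an $(s,t)$-mincut $K$ separating them. Because $K$ has capacity $\lambda$, Theorem~\ref{thm : dag for st mincut and characterization} (equivalently, Lemma~\ref{lem :mapping of nodes in Dpq}) implies $K$ does not subdivide $\mu_1$ or $\mu_2$; so, up to swapping, $V(\mu_1) \subseteq K$ and $V(\mu_2) \subseteq \overline{K}$.

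The heart of the argument is to inspect the two $(s,t)$-cuts $C \cap K$ and $C \cup K$. For $\mu_1$, observe $V(\mu_1) \cap (C \cap K) = V(\mu_1) \cap C = X_1$, which is neither $\emptyset$ nor all of $V(\mu_1)$, so $C \cap K$ subdivides $\mu_1$ and therefore $c(C \cap K) > \lambda$. For $\mu_2$, observe $V(\mu_2) \cap (C \cup K) = (V(\mu_2) \cap C) \cup (V(\mu_2) \cap K) = X_2 \cup \emptyset = X_2$, which again is nonempty and proper in $V(\mu_2)$, so $C \cup K$ subdivides $\mu_2$ and $c(C \cup K) > \lambda$. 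By the minimality of $\Delta_2$ among capacities strictly above $\lambda$, both cuts satisfy $c(\cdot) \ge \lambda + \Delta_2$, giving $c(C \cap K) + c(C \cup K) \ge 2\lambda + 2\Delta_2$.

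Finally I would invoke submodularity (Lemma~\ref{lem : submodularity}):
\[
c(C \cap K) + c(C \cup K) \;\le\; c(C) + c(K) \;=\; (\lambda + \Delta_2) + \lambda \;=\; 2\lambda + \Delta_2.
\]
Combining with the lower bound above yields $2\Delta_2 \le \Delta_2$, i.e., $\Delta_2 \le 0$, contradicting $\Delta_2 > 0$. Hence $C$ can subdivide at most one node of $\mathcal{D}_{PQ}(G)$, and combined with the standing assumption that it subdivides at least one, it subdivides exactly one. The main thing to be careful about is the edge cases $\mu_i \in \{\mathbb{S}, \mathbb{T}\}$, but these cause no trouble: $s \in C \cap K$ and $t \notin C \cup K$ hold automatically because $C$ and $K$ are both $(s,t)$-cuts, so $C \cap K$ and $C \cup K$ remain valid $(s,t)$-cuts, and the subdivision calculations above go through verbatim.
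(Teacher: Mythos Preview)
Your proof is correct and follows essentially the same approach as the paper: assume $C$ subdivides two nodes, pick an $(s,t)$-mincut $K$ separating them, observe that $C\cap K$ and $C\cup K$ each subdivide one of the nodes and hence have capacity at least that of a second $(s,t)$-mincut, and derive a contradiction from submodularity. Your write-up is slightly more detailed in verifying the subdivision claims and in handling the $\mathbb{S},\mathbb{T}$ edge cases, but the argument is the same.
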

\begin{proof}
    Let $\lambda'$ be the capacity of second $(s,t)$-mincut. Assume to the contrary that $C$ is a second $(s,t)$-mincut that subdivides two distinct nodes $\mu_1$ and $\mu_2$ in $\mathcal{D}_{PQ}(G)$. It follows from Lemma \ref{lem :mapping of nodes in Dpq} that there exists an $(s,t)$-mincut $C'$ in $G$ which separates $\mu_1$ and $\mu_2$. Without loss of generality, assume $\mu_1 \in C'$ and $\mu_2 \in \overline{C'}$. By sub-modularity of cuts, $c(C\cap C') +c(C\cup C')\leq \lambda+\lambda'$. Observe that 
    $C\cap C'$ subdivides $\mu_1$ and $C \cup C'$ subdivides $\mu_2$. 
    So, $c(C \cap C'), c(C\cup C') \geq \lambda'$. It follows that $c(C \cap C')+c(C \cup C')\geq 2 \lambda'$. However, $2\lambda' > \lambda+\lambda'$ since $\lambda'>\lambda$, a contradiction. 
\end{proof}
It follows from Lemma \ref{lem : second st mincut separates one node} that the set of all second $(s,t)$-mincuts of $G$ can be partitioned into disjoint subsets as follows. A pair of cuts $C,C'$ belong to different subsets if and only if $C$ and $C'$ subdivide different nodes in ${\mathcal D}_{PQ}(G)$. This partitioning allows us to work separately with each node $\mu$ in $\mathcal{D}_{PQ}(G)$. We now construct a \textit{small} graph $G_\mu$ for node $\mu$ from graph $G^f$ by crucially exploiting the topological ordering of DAG $\mathcal{D}_{PQ}(G)$. 

\paragraph*{Construction of $G_\mu$:} Let $\tau$ be a topological ordering of nodes in $\mathcal{D}_{PQ}(G)$ that begins with ${\mathbb T}$ and ends with ${\mathbb S}$. Graph $G_{\mu}$ is obtained by modifying graph $G^f$ as follows. The set of vertices mapped to the nodes that precede $\mu$ in $\tau$ is contracted into a sink vertex $t'$. Similarly, the set of vertices mapped to the nodes that succeed $\mu$ in $\tau$ is contracted into source vertex $s'$. If $\mu=\mathbb{S}$ (likewise $\mathbb{T}$),
then we map exactly $s$ to $s'$ (likewise $t$ to $t'$).\\ 

It follows from the construction of $G_{\mu}$ that 
$s$ is mapped to $s'$ and $t$ is mapped to $t'$ in $G_\mu$. 
Henceforth, without causing ambiguity, we denote an $(s',t')$-cut in $G_\mu$ by an $(s,t)$-cut. Let $S$ and $T$ be the set of vertices in $V$ mapped to $s'$ and $t'$ respectively.
The following lemma is immediate from the construction of graph $G_\mu$ and Theorem \ref{thm : dag for st mincut and characterization}.
\begin{lemma} \label{lem : at most two st mincuts}
    In graph $G_{\mu}$, the capacity of $(s,t)$-mincut is zero and there are at most two $(s,t)$-mincuts, namely, $\{s'\}$ and $\{s'\} \cup V(\mu)$.
\end{lemma}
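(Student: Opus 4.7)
The plan is to prove both claims by exhibiting $\{s'\}$ and $\{s'\}\cup V(\mu)$ as $(s,t)$-cuts of capacity zero in $G_\mu$, and then to rule out every other candidate mincut by appealing to Lemma \ref{lem :mapping of nodes in Dpq}.

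For the first claim, I will let $N_1$ be the set of nodes of $\mathcal{D}_{PQ}(G)$ that strictly succeed $\mu$ in $\tau$ and set $N_2 = N_1 \cup \{\mu\}$. Because $\tau$ is a topological ordering beginning with $\mathbb{T}$ and ending with $\mathbb{S}$, every descendant of a node in $N_i$ appears later in $\tau$ and therefore lies in $N_i$, so each $N_i$ is descendant-closed in $\mathcal{D}_{PQ}(G)$; moreover each $N_i$ contains $\mathbb{S}$ and excludes $\mathbb{T}$, so each is a genuine $(s,t)$-cut in the DAG. A descendant-closed subset of $\mathcal{D}_{PQ}(G)$ containing $\mathbb{S}$ but not $\mathbb{T}$ is precisely a $1$-transversal $(s,t)$-cut, so by Theorem \ref{thm : dag for st mincut and characterization} the lifts of $N_1$ and $N_2$ to vertex sets of $G$ are $(s,t)$-mincuts of capacity $\lambda$ in $G$, and Theorem \ref{thm : min+k in residual graph} then yields capacity $0$ in $G^f$. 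Since $G_\mu$ is by definition the quotient of $G^f$ obtained by contracting $S = N_1$ into $s'$ and $T$ into $t'$, the cuts $N_1$ and $N_2$ in $G^f$ become $\{s'\}$ and $\{s'\}\cup V(\mu)$ in $G_\mu$ respectively. Capacities are preserved under this contraction because $S$ and $T$ each lie entirely on one side of either cut, so no crossing edge is absorbed into a contracted side and no internal edge is promoted to a crossing edge. This gives $c(\{s'\},G_\mu) = c(\{s'\}\cup V(\mu), G_\mu) = 0$, simultaneously establishing that the $(s,t)$-mincut capacity in $G_\mu$ is $0$.

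For the second claim I will argue that no other $(s,t)$-cut in $G_\mu$ can attain capacity $0$. Since $V(G_\mu) = \{s',t'\}\cup V(\mu)$, any $(s,t)$-cut distinct from the two above has the form $\{s'\}\cup X$ with $\emptyset \subsetneq X \subsetneq V(\mu)$. Lifting through the quotient gives the $(s,t)$-cut $C' = S\cup X$ in $G^f$, with $c(C',G^f) = c(\{s'\}\cup X, G_\mu)$ by the same preservation argument as above. If this common value were $0$, Theorem \ref{thm : min+k in residual graph} would promote $C'$ to an $(s,t)$-mincut of capacity $\lambda$ in $G$; but $C'$ subdivides $V(\mu)$ because $X$ is a proper nonempty subset, contradicting Lemma \ref{lem :mapping of nodes in Dpq}, which forbids any $(s,t)$-mincut of $G$ from separating vertices mapped to the same node of $\mathcal{D}_{PQ}(G)$. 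This will complete the proof.

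The only delicate point is the capacity-preservation claim under the quotient; this is straightforward in every instance here precisely because the cuts in question always place all of $S$ on the source side and all of $T$ on the sink side. The boundary cases $\mu=\mathbb{S}$ and $\mu=\mathbb{T}$ are absorbed by the explicit convention that only $s$ (resp.\ $t$) is mapped to $s'$ (resp.\ $t'$), and the descendant-closure argument still supplies the same two candidate mincuts (which may coincide, consistent with the ``at most two'' wording in the lemma).
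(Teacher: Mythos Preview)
Your proof is correct and follows precisely the route the paper has in mind; the paper does not give a proof, merely stating that the lemma ``is immediate from the construction of graph $G_\mu$ and Theorem~\ref{thm : dag for st mincut and characterization},'' and your argument is a faithful unpacking of that claim (the paper itself later invokes the same ingredients---suffixes of $\tau$ being $1$-transversal, hence $(s,t)$-mincuts in $G^f$ via Lemma~\ref{lem : alternative maxflow mincut}, and Lemma~\ref{lem :mapping of nodes in Dpq} to rule out cuts that subdivide $V(\mu)$---inside the proof of Lemma~\ref{lem : mapping of a second st mincut to a node mu}). Two minor remarks: Lemma~\ref{lem : alternative maxflow mincut} already suffices where you invoke Theorem~\ref{thm : min+k in residual graph}, and in the boundary case $\mu=\mathbb{S}$ with $|V(\mathbb{S})|>1$ the cut $\{s'\}$ need not have capacity~$0$, but this is harmless since the lemma only asserts ``at most two'' and your second part still excludes every other candidate.
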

The following lemma immediately follows from Lemma \ref{lem : at most two st mincuts}, Lemma \ref{lem : reporting second st mincut in graph with 2 st mincuts}, and Lemma \ref{lem : reporting second st mincut in graph with 1 st mincuts}.
\begin{lemma} \label{lem : second st mincut in G mu}
    There is an algorithm that can compute a second $(s,t)$-mincut in $G_\mu$ using one global mincut computation in $G_\mu$.
\end{lemma}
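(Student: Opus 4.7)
My plan is to reduce the claim to Lemmas \ref{lem : reporting second st mincut in graph with 2 st mincuts} and \ref{lem : reporting second st mincut in graph with 1 st mincuts} by showing that their hypotheses apply to $G_\mu$ with the \emph{zero} $(s',t')$-flow playing the role of the maximum $(s,t)$-flow required there. Concretely, the first step is to invoke Lemma \ref{lem : at most two st mincuts} to conclude two facts: (i) the maximum $(s',t')$-flow in $G_\mu$ has value zero, so the zero flow is trivially available without any maximum-flow computation; and (ii) the only candidates for $(s',t')$-mincuts in $G_\mu$ are $\{s'\}$ and $\{s'\}\cup V(\mu)$.

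Next, since by construction the vertex set of $G_\mu$ is $\{s',t'\}\cup V(\mu)$, I would observe the key identification $\{s'\}\cup V(\mu)=V(G_\mu)\setminus\{t'\}$. Hence the set of $(s',t')$-mincuts of $G_\mu$ is a subset of $\{\{s'\},\,V(G_\mu)\setminus\{t'\}\}$, which matches precisely the cut structure assumed in Lemma \ref{lem : reporting second st mincut in graph with 2 st mincuts} (both present) and in Lemma \ref{lem : reporting second st mincut in graph with 1 st mincuts} (exactly one present). The algorithm then branches: it computes $c(\{s'\},G_\mu)$ and $c(\{s'\}\cup V(\mu),G_\mu)$ in ${\mathcal O}(|E(G_\mu)|)$ time to detect which of the two trivial cuts actually achieve capacity $0$, and invokes the corresponding earlier lemma on input $(G_\mu,\text{zero flow})$. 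Either invocation uses one global mincut computation, giving the stated bound.

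The main delicate point I would want to be careful about is that Lemmas \ref{lem : reporting second st mincut in graph with 2 st mincuts} and \ref{lem : reporting second st mincut in graph with 1 st mincuts} are phrased in terms of a maximum $(s,t)$-flow $f$ together with the residual graph $G^f$ used inside their proofs (via Lemma \ref{lem : second mincut = global mincut for 2 st mincuts} and Lemma \ref{lem : second mincut = global mincut for 1 st mincuts}, which manipulate an SCC of $G^f$ and apply Theorem \ref{thm : min+k in residual graph}). With the zero flow on $G_\mu$, however, the residual graph coincides with $G_\mu$ itself, so the reductions through the residual graph become vacuous and every step of those proofs transfers verbatim. Accordingly, I do not expect a genuine obstacle — the work is just in checking that the hypothesis alignment is clean and that the zero-flow choice is legitimate — and the lemma falls out.
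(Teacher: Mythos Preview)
Your proposal is correct and follows essentially the same route as the paper: the paper simply states that the lemma is immediate from Lemma~\ref{lem : at most two st mincuts}, Lemma~\ref{lem : reporting second st mincut in graph with 2 st mincuts}, and Lemma~\ref{lem : reporting second st mincut in graph with 1 st mincuts}, which is exactly the reduction you describe. Your additional observation that the zero flow is a maximum $(s',t')$-flow in $G_\mu$ (so that $(G_\mu)^{f}=G_\mu$ and no separate maximum-flow computation is needed) is the right way to make the hypothesis alignment explicit.
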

Now, given a second $(s,t)$-mincut in $G_{\mu}$, we aim to efficiently report a second $(s,t)$-mincut in graph $G$.
To achieve this goal, in the following lemma, we establish an equivalence between second $(s,t)$-mincuts in $G_{\mu}$ and $(s,t)$-cuts of the least capacity in $G^f$ that subdivides $\mu$. 

\begin{lemma} \label{lem : mapping of a second st mincut to a node mu}
    Let $\mu$ be a node in $\mathcal{D}_{PQ}(G)$. Let $C_1$ be an $(s,t)$-cut of the least capacity in $G^f$ that subdivides $\mu$ into $A_1$ and $V(\mu)\setminus A_1$ and let $C_2=\{s'\} \cup A_2$ be a second $(s,t)$-mincut in $G_\mu$, where $A_1, A_2\subset V(\mu)$. Then,
    \begin{enumerate}
         \item $c(C_1,G^f)=c(C_2,G_\mu)$ and
         \item $\{s'\}\cup A_1$ is a second $(s,t)$-mincut in $G_\mu$ and $S\cup A_2$ is an $(s,t)$-cut of the least capacity in $G^f$ that subdivides $\mu$.
    \end{enumerate}
\end{lemma}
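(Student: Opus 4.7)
The plan is to establish a capacity-preserving bijection between $(s,t)$-cuts of the form $\{s'\}\cup A$ in $G_\mu$ (with $\emptyset\subsetneq A\subsetneq V(\mu)$) and $(s,t)$-cuts of the form $S\cup A$ in $G^f$. Once the identity $c(\{s'\}\cup A,G_\mu)=c(S\cup A,G^f)$ is in hand, both parts of the lemma reduce to a short capacity-comparison argument that invokes \Cref{lem : at most two st mincuts}.

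The core step is verifying the identity. I would exploit the topological ordering $\tau$ of $\mathcal{D}_{PQ}(G)$: since $\mathcal{D}_{PQ}(G)$ is a DAG with $\mathbb{T}$ first and $\mathbb{S}$ last, every edge of $G^f$ between distinct nodes of $\mathcal{D}_{PQ}(G)$ must go forward in $\tau$. Consequently $G^f$ has no edges from $S$ to $T$, from $S$ to $V(\mu)$, or from $V(\mu)$ to $T$; only edges $T\to V(\mu)$, $V(\mu)\to S$, $T\to S$, and intra-$V(\mu)$ edges exist. Under the contraction of $S$ to $s'$ and $T$ to $t'$ defining $G_\mu$, a direct case analysis shows that the contributing edges of $S\cup A$ in $G^f$ and of $\{s'\}\cup A$ in $G_\mu$ are both exactly the edges of $G^f$ going from $A$ to $V(\mu)\setminus A$, yielding the identity. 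Moreover, for any $(s,t)$-cut $C = P\cup A\cup Q$ in $G^f$ (with $P\subseteq S$, $A=C\cap V(\mu)$, $Q\subseteq T$), the same edge-counting gives $c(C,G^f)\geq c(S\cup A,G^f)$, since any extra contributing edges originate in $P$, $Q$, or go from $A$ into $S\setminus P$ and can only add to the capacity.

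From these two facts, part (1) follows as follows. Let $\mu^*=c(C_1,G^f)$ and $\lambda^*=c(C_2,G_\mu)$. Applying the inequality to $C_1$ gives $c(S\cup A_1,G^f)\leq \mu^*$; since $A_1$ is nontrivial, $S\cup A_1$ itself subdivides $\mu$ and thus has capacity at least $\mu^*$ by minimality of $C_1$, forcing equality. Transferring through the identity, $\{s'\}\cup A_1$ has capacity $\mu^*$ in $G_\mu$; because $A_1$ is nontrivial, \Cref{lem : at most two st mincuts} says this cut is not an $(s,t)$-mincut of $G_\mu$, so $\mu^*\geq \lambda^*$. Conversely, $A_2$ must be nontrivial (the two $(s,t)$-mincuts of $G_\mu$ listed in \Cref{lem : at most two st mincuts} have capacity $0$, while $C_2$ has capacity $\lambda^*>0$), so $S\cup A_2$ subdivides $\mu$ in $G^f$ and has capacity $\lambda^*$ by the identity, giving $\mu^*\leq\lambda^*$. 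Hence $\mu^*=\lambda^*$, which is part (1); the cuts $\{s'\}\cup A_1$ and $S\cup A_2$ produced above immediately yield part (2).

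The main obstacle is the edge-counting in the second paragraph. One must carefully use the fact that $V(\mu)$ is a single node of $\mathcal{D}_{PQ}(G)$ (coming from an SCC of $G^f$, possibly after further contractions in the directed case), so that intra-$V(\mu)$ edges may go both ways while all cross-node edges of $G^f$ obey the DAG direction. This is exactly what makes the contractions defining $G_\mu$ lossless for cut capacities, and I expect this to be the step carrying all the technical weight.
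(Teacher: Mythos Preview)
Your proof is correct and follows the same high-level strategy as the paper: establish that cuts of the form $\{s'\}\cup A$ in $G_\mu$ and $S\cup A$ in $G^f$ have equal capacity, then run a two-sided capacity comparison invoking \Cref{lem : at most two st mincuts}. The difference lies in how you handle a general $C_1$ that may subdivide $S$ or $T$. The paper performs a case analysis (whether $C_1$ subdivides $S$ only, $T$ only, or both) and applies submodularity of cuts (\Cref{lem : submodularity}) against the mincuts $S$ and $S\cup V(\mu)$ to replace $C_1$ by $S\cup A_1$ without raising the capacity. You instead argue directly from the DAG structure of $\mathcal{D}_{PQ}(G)$: since all cross-node edges of $G^f$ go forward in $\tau$, a single edge enumeration simultaneously yields the identity $c(\{s'\}\cup A,G_\mu)=c(S\cup A,G^f)$ and the inequality $c(C,G^f)\ge c(S\cup A,G^f)$. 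Your route is a bit more elementary and avoids the case split; the paper's route is shorter once submodularity is available as a black box. One small omission in your edge list: intra-$S$ and intra-$T$ edges also exist in $G^f$, though as you implicitly use, they can only add to $c(C,G^f)$ and never contribute to $S\cup A$, so the argument is unaffected.
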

\begin{proof}
     Suppose $\mu \neq \mathbb{S}$ or $\mathbb{T}$. 
     Let $c(C_1,G^f)=\lambda_1$ and $c(C_2,G_\mu)=\lambda_2$.
     Observe that $C_1$ can subdivide set $S$, $T$, or both. 
     Let us consider the case when $C_1$ subdivides only $S$. 
    Any prefix of a topological ordering of the nodes in $\mathcal{D}_{PQ}(G)$ defines a $1$-transversal cut in $\mathcal{D}_{PQ}(G)$. By Theorem \ref{thm : dag for st mincut and characterization} and Lemma \ref{lem : alternative maxflow mincut}, $S$ and $S \cup V(\mu)$ are $(s,t)$-mincuts in $G^f$. 
    Since $C_1\cap S$ is an $(s,t)$-cut in $G^f$, $c(C_1\cap S, G^f) \geq c(S,G^f)$. 
    Again $C_1 \cup S$ is an $(s,t)$-cut that subdivides $V(\mu)$ in $G^f$. So, $c(C_1 \cup S,G^f) \geq \lambda_1$. 
    By using sub-modularity of cuts (Lemma \ref{lem : submodularity}) on $(s,t)$-cuts $C_1$ and $S$ in $G^f$, $c(C_1\cup S,G^f)=c(C_1,G^f)$. Since $C_1$ does not divide $T$, $C_1\cup S=S\cup A_1$.
    Now, by construction of $G_\mu$, $S \cup A_1$ appears as the $(s,t)$-cut $\{s'\}\cup A_1$ in $G_\mu$ such that $ c(\{s'\}\cup A_1,G_{\mu})=c(S\cup A_1,G^f)=\lambda_1$.  
    By Lemma \ref{lem : at most two st mincuts}, $\{s'\}\cup A_1$ is not an $(s,t)$-mincut in $G_\mu$ since $\emptyset \ne A_1\subset V(\mu)$. 
    Hence, we arrive at the following inequality.
    \begin{equation}\label{eq: 9}
        \lambda_1\ge \lambda_2
    \end{equation}
    Suppose $C_1$ subdivides only $T$. In this case, using $(s,t)$-mincut $V\setminus T$, we can establish $\lambda_1\ge \lambda_2$ along a similar line to the proof of the case when $C_1$ subdivides only $S$. 
    Suppose $C_1$ subdivides both $S$ and $T$. It follows from the proof of Equation \ref{eq: 9} (for the case when $C_1$ subdivides only $S$) that $C_1\cup S$ subdivides only $T$. Hence, this case reduces to the case when $C_1$ subdivides only $T$.  
  
    Now, consider the $(s,t)$-cut $C_2=\{s'\} \cup A_2$ in $G_\mu$. By construction of $G_\mu$, $C_2$ appears as the $(s,t)$-cut $S\cup A_2$ in $G^f$ such that $c(S\cup A_2,G_{f})=c(C_2,G_\mu)=\lambda_2$.
    Since, $\emptyset\ne A_2\subset V(\mu)$, $S\cup A_2$ is an $(s,t)$-cut that subdivides $\mu$ in $G^f$. Hence, $c(S\cup A_2,G^f)\ge \lambda_1$ and we arrive at the following inequality.
    \begin{equation}\label{eq: 11}
        \lambda_2\ge \lambda_1
    \end{equation}
    It follows from Equations \ref{eq: 9} and \ref{eq: 11} that $\lambda_1=\lambda_2$. 
    This completes the proof. 
\end{proof}

\begin{remark}
    Baswana, Bhanja, and Pandey \cite{baswana2023minimum+} established the result stated in Lemma \ref{lem : mapping of a second st mincut to a node mu} only for the special case when the least capacity $(s,t)$-cut $C$ that subdivides node $\mu$ is a $(\lambda+1)$ $(s,t)$-cut (refer to Lemma 5.10 in \cite{baswana2023minimum+}). Hence, Lemma \ref{lem : mapping of a second st mincut to a node mu} can be seen as an extension of Lemma 5.10 in \cite{baswana2023minimum+}.   
\end{remark}
By crucially exploiting Lemma \ref{lem : mapping of a second st mincut to a node mu} and the algorithm designed for computing a second $(s,t)$-mincut in graph $G_{\mu}$ (Lemma \ref{lem : second st mincut in G mu}), we now state our algorithm for computing a second $(s,t)$-mincut in graph $G$. The pseudocode of the algorithm is given in Algorithm \ref{alg : second mincut using global mincut}.
\paragraph*{Algorithm:} The algorithm begins by computing a topological ordering $\tau$ of the nodes of DAG ${\mathcal D}_{PQ}(G)$ using one maximum $(s,t)$-flow computation in $G$. For each node $\mu\ne \mathbb S, \mathbb T$ in ${\mathcal D}_{PQ}(G)$, compute a global mincut in the SCC $H$ corresponding to node $\mu$; otherwise compute a global mincut in the graph $H_s$ as stated in Lemma \ref{lem : second mincut = global mincut for 1 st mincuts}. Let $\mu$ be a node in ${\mathcal D}_{PQ}(G)$ such that the global mincut $C$ computed for $\mu$ has the least capacity among all global mincuts computed for other nodes in ${\mathcal D}_{PQ}(G)$. 
Let $S$ be the set of vertices mapped to nodes in ${\mathcal D}_{PQ}(G)$ that precedes node $\mu$ in $\tau$. It follows from Lemma \ref{lem : second st mincut in G mu} and the construction of $G_{\mu}$ that $S\cup C$ is a second $(s,t)$-mincut in $G_{\mu}$. Therefore, using Lemma \ref{lem : mapping of a second st mincut to a node mu} and Theorem \ref{thm : min+k in residual graph}, the algorithm reports $S\cup C$ as a second $(s,t)$-mincut in $G$ that has capacity $\lambda+c(C,G_{\mu})$.\\

\begin{algorithm}[ht] 
\caption{Computing Second $(s,t)$-mincut in $G$} \label{alg : second mincut using global mincut}
\begin{algorithmic}[1]
\Procedure{\textsc{Second Mincut}($G,f$)}{}
         \State $d \gets \sum_{e \in E}w(e)$, $\lambda_{min} \gets d$, $C \gets \emptyset, p\gets 0$;
         \State Let $\tau$ be the topological ordering of nodes  in $\mathcal{D}_{PQ}(G)$;
        \For{each node $\mu$ in $\mathcal{D}_{PQ}(G)$}
        \State Let $H$ be the SCC corresponding to node $\mu$;
        \If{$s \in V(\mu)$}
        \State $G'\gets$ Add edge $(v,s)$ to $H$ with $w'(v,s)=d$, for each vertex  $v \in V(\mu)\setminus \{s\}$;
        \ElsIf{$t \in V(\mu)$}
        \State $G' \gets$ Add edge $(t,v)$ to $H$ with $w'(t,v)=d$, for each vertex $v \in V(\mu)\setminus \{t\}$;
        \Else
        \State $G'\gets H$;
        \EndIf
        \State $C'\gets$ Compute a global mincut in $G'$;
        \If{$\lambda_{min} > c(C',G')$} 
        \State Assign $\lambda_{min} \gets c(C',G')$ and $ C \gets C'$;
        \State $p\gets \tau(\mu)$;
        \EndIf
        \EndFor
        \State Let $S$ be the set of vertices mapped to the suffix of node $\mu=\tau(p)$ in topological ordering $\tau$ \label{line : computation of $S$}
        \IfThenElse {$s \in V(\mu)$}
      { \Return $(C,\lambda+\lambda_{min})$;}
      { \Return $(C \cup S,\lambda+\lambda_{min})$;}%
    \EndProcedure
\end{algorithmic}
\end{algorithm}
We now analyze the running time of our algorithm stated in Algorithm \ref{alg : second mincut using global mincut}.
\paragraph*{Running Time:} Given a maximum $(s,t)$-flow in $G$, we can compute $\mathcal{D}_{PQ}(G)$ and its topological ordering $\tau$ in $\mathcal{O}(m)$ time.
Let $GM(n',m')$ denote the time taken to compute a global mincut in a directed weighted graph with $n'$ vertices and $m'$ edges. Observe that $GM(n',m')=\Omega(m')$ and the SCCs corresponding to the nodes of $\mathcal{D}_{PQ}(G)$ are disjoint from each other. Therefore, it is easy to establish that the overall time taken to compute one global mincut in the SCC corresponding to each node in ${\mathcal D}_{PQ}(G)$ is $\mathcal{O}(GM(n,m))$. 
This completes the proof of Theorem \ref{thm : equivalence between second st mincut and global}(1).

Along similar lines to Algorithm \ref{alg : second mincut using global mincut}, using the algorithm stated in Lemma \ref{lem : n maxflow}, it is possible to establish the following result. There is an algorithm that can compute a second $(s,t)$-mincut in general directed weighted graphs using ${\mathcal O}(n)$ maximum $(s,t)$-flow computations. 

We now establish the following result that completes the proof of Theorem \ref{thm : equivalence between second st mincut and global}(2). 
\begin{lemma} \label{lem : global mincut is as hard as second s,t mincut} 
For any directed weighted graph $G$, there is an algorithm that can compute a global mincut in $G$ using one second $(s,t)$-mincut computation. 
\end{lemma}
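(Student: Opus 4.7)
The plan is to embed $G$ inside a larger graph $G'$ in which $s$ and $t$ are fresh super-source and super-sink connected to every vertex of $V$ by high-capacity edges, arranged so that every $(s,t)$-cut of $G'$ has capacity equal to a fixed ``offset'' plus the capacity in $G$ of the corresponding vertex partition. The $(s,t)$-mincuts of $G'$ will then be exactly the two trivial cuts $\{s\}$ and $V' \setminus \{t\}$, and any second $(s,t)$-mincut of $G'$ will correspond to a nontrivial global mincut of $G$. Hence a single call to the second $(s,t)$-mincut oracle on $G'$ suffices to recover a global mincut of $G$.

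Concretely, let $\lambda_g$ denote the capacity of a global mincut in $G$ and set $M = 1 + \sum_{e \in E} w(e)$. Construct $G'$ from $G$ by adding new vertices $s$ and $t$ together with, for every $v \in V$, the edges $(s,v)$ and $(v,t)$, each of capacity $M$. For any $(s,t)$-cut $C$ in $G'$, letting $A := C \cap V$, a direct count of contributing edges in $G'$ gives
\[
c(C,G') \;=\; M\,|V \setminus A| \;+\; M\,|A| \;+\; c(A,G) \;=\; nM + c(A,G),
\]
with the convention $c(\emptyset,G) = c(V,G) = 0$. Assuming $\lambda_g > 0$, every nontrivial $A \subsetneq V$ satisfies $c(A,G) \geq \lambda_g > 0$; therefore the only $(s,t)$-mincuts of $G'$ are $\{s\}$ (case $A = \emptyset$) and $V' \setminus \{t\}$ (case $A = V$), each of capacity $nM$. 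Consequently any second $(s,t)$-mincut $C$ of $G'$ has the form $\{s\} \cup A^\star$ with $A^\star := C \cap V$ being a global mincut of $G$, attaining capacity $nM + \lambda_g$. Returning $A^\star$ extracted from the oracle's returned cut therefore yields a global mincut of $G$.

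The only edge case is $\lambda_g = 0$, which we handle by an $\mathcal{O}(m+n)$ preprocessing step: compute the strongly connected components of $G$ via Tarjan's algorithm; if $G$ is not strongly connected then the vertex set of any sink SCC is itself a zero-capacity global mincut, and we return it and halt. Otherwise $\lambda_g > 0$ and we invoke the oracle on $G'$ as above. The reduction itself---building $G'$, invoking the oracle once, and extracting $A^\star$---performs $\mathcal{O}(m+n)$ work beyond the single oracle call, matching the $\mathcal{O}(m)$ bound claimed in the lemma under the standard convention that the input size is $\Theta(m+n)$. I do not foresee any serious obstacle; the only subtlety is choosing $M$ strictly larger than the total edge weight of $G$, which guarantees that no $(s,t)$-cut of $G'$ that crosses vertices of $V$ can compete in capacity with the two trivial mincuts $\{s\}$ and $V' \setminus \{t\}$.
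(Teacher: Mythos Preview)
Your proof is correct and follows essentially the same approach as the paper: add dummy source and sink vertices so that the auxiliary graph has exactly the two trivial $(s,t)$-mincuts when $\lambda_g>0$, making any second $(s,t)$-mincut correspond to a global mincut of $G$, and handle the $\lambda_g=0$ case separately via an SCC computation. The paper's gadget is slightly simpler---it adds $s_1,t_1$ as \emph{isolated} vertices (so the offset is $0$ rather than $nM$) and then invokes Lemma~\ref{lem : second mincut = global mincut for 2 st mincuts}; note in particular that your insistence on a large $M$ is unnecessary, since your own formula $c(C,G')=nM+c(A,G)$ shows the trivial cuts are the unique minimizers for \emph{any} $M\ge 0$ once $\lambda_g>0$.
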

\begin{proof}
    Given graph $G$, 
    we add two dummy vertices $s_1$ and $t_1$ to obtain graph $G_1$. 
    Observe that the capacity of $(s_1,t_1)$-mincut in $G_1$ is zero. 
    Based on the capacity of global mincut in $G$,
    the proof can be divided into two cases: $(1)$ the capacity of global mincut in $G$ is zero, and $(2)$ it is strictly greater than zero. 
    For Case $1$, we can compute a global mincut in $G$ by taking the following approach. 
    Any global mincut $C$ in $G$ appears as an $(s_1,t_1)$-mincut $s_1 \cup C$ in $G_1$. So, $G_1$ has at least three $(s_1,t_1)$-mincuts. 
    It follows that there are at least four nodes in $\mathcal{D}_{PQ}(G_1)$. 
    Let $\tau$ be a topological ordering of nodes in $\mathcal{D}_{PQ}(G_1)$ and 
    let $C$ be the set of vertices mapped to the suffix of $\tau$ 
    containing two nodes. Since $c(\{s\},G_1)=0$, report $C\setminus \{s\}$ as the global mincut in $G$.
    For Case $2$, 
    $G_1$ has exactly two $(s_1,t_1)$-mincuts: $\{s_1\}$ and $\{s_1\} \cup V$. So, any second $(s_1,t_1)$-mincut in $G_1$ must subdivide $V$. Suppose the algorithm for computing second $(s_1,t_1)$-mincut in $G_1$ returns $(s_1,t_1)$-cut $C$.
    By Lemma \ref{lem : second mincut = global mincut for 2 st mincuts}$(2)$, we can report $C\setminus \{s_1\}$ as a global mincut in $G$.
\end{proof}
The best-known algorithm for computing a global mincut in directed graphs is given by Cen et al. \cite{DBLP:conf/focs/Cen0NPSQ21} as follows.

\begin{theorem} [Theorem I.1 in \cite{DBLP:conf/focs/Cen0NPSQ21}]\label{thm : global mincut in directed integer graphs}
    Let $G$ be a directed graph on $n$ vertices with integer edge capacities. There exists an algorithm that computes a global mincut in $G$ using $\Tilde{\mathcal{O}}(\sqrt{n})$ maximum $(s,t)$-flow computations with high probability.
\end{theorem}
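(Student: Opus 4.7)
The plan is to reduce global mincut computation in directed graphs to a small number of maximum $(s,t)$-flow computations via two main techniques: (i) a directed variant of the isolating-cuts framework, and (ii) a case split on the size of the smaller side of the sought global mincut, with random sampling used to concentrate the search. Throughout, ``with high probability'' tolerance lets us afford Monte Carlo guesses on the cut's size.

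First I would develop a directed isolating-cuts primitive. Given a set $U \subseteq V$ of ``terminals'' with $|U|=k$, the goal is to compute, for every $u \in U$, the minimum capacity directed cut separating $u$ from $U\setminus\{u\}$ in both directions (both the out-isolating and in-isolating cut). The standard binary-labeling trick works in the directed setting if applied twice (once for each orientation): assign each $u \in U$ a binary string of length $\lceil \log k\rceil$, and for each bit position contract the $0$-labeled and $1$-labeled terminals into a single source and sink respectively, then compute a single maximum $(s,t)$-flow. Submodularity of cut capacity ensures that for every $u$, at least one of these $O(\log k)$ cuts isolates $u$ from $U\setminus\{u\}$. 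Taking the minimum over all computed cuts for each terminal yields the isolating cut for $u$ using only $O(\log k)$ maximum $(s,t)$-flow computations.

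Next I would handle the global mincut problem by case-splitting on the size $|C^\ast|$ of the smaller side of a fixed optimum global mincut $C^\ast$. Let $t^\ast = \min\{|C^\ast|,|\overline{C^\ast}|\}$. For every dyadic guess $t \in \{1,2,4,\dots,n/2\}$ I would independently sample a random set $U$ of $\Theta((n/t)\log n)$ vertices. A Chernoff/standard sampling argument shows that with high probability exactly one element of $U$ lies on the smaller side, so the isolating cut for that terminal in $U$ coincides with (or lower-bounds) $C^\ast$. Balancing the two regimes $t \le \sqrt{n}$ and $t > \sqrt{n}$ then yields the bound: for the small-$t$ regime, $|U| = \tilde{O}(n/t) = \tilde{O}(\sqrt{n})$ triggers $\tilde{O}(\log |U|) = \tilde{O}(1)$ max-flow calls per guess, and there are $O(\log n)$ guesses; for the large-$t$ regime, $|U| = \tilde{O}(\sqrt{n})$ suffices directly. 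In total $\tilde{O}(\sqrt{n})$ maximum $(s,t)$-flow computations are used, and the best cut among all computed isolating cuts is returned.

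The main obstacle I anticipate is the directed aspect of the isolating machinery. In the undirected case submodularity directly gives both ``out'' and ``in'' guarantees, but for directed graphs one must argue separately about cuts with $u$ on the source side and on the sink side of $C^\ast$, roughly doubling the bookkeeping and forcing two parallel invocations of the isolating-cuts routine per guess. A secondary subtlety is verifying that the sampling argument succeeds simultaneously for all dyadic guesses with high probability via a union bound, which is fine because there are only $O(\log n)$ guesses. Finally, since the result is invoked in a black-box manner in the proof of Theorem~\ref{thm : second minimum (s,t)-cut} (combined with Theorem~\ref{thm : equivalence between second st mincut and global}), I would not optimize the additive $\tilde{O}(m\sqrt{n})$ overhead beyond what is stated; the flow-call count is the only bound that matters downstream.
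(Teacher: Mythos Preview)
This theorem is not proved in the paper; it is quoted verbatim as Theorem~I.1 of Cen et al.\ \cite{DBLP:conf/focs/Cen0NPSQ21} and used only as a black box (together with Theorem~\ref{thm : equivalence between second st mincut and global}(1)) to derive Theorem~\ref{thm : second minimum (s,t)-cut}. There is therefore no ``paper's own proof'' to compare against, and for the purposes of this manuscript no proof is needed.

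That said, your sketch has internal problems that would prevent it from standing on its own. First, the arithmetic in your balancing argument is inverted: when $t\le\sqrt{n}$ you have $|U|=\tilde{\Theta}(n/t)\ge\tilde{\Theta}(\sqrt{n})$, not $\tilde{O}(\sqrt{n})$ as you write. Second, and more seriously, the isolating-cuts framework does not cost only $O(\log|U|)$ max-flows: the $O(\log|U|)$ bipartition flows give each terminal $u$ a \emph{candidate} region $S_u$, but extracting the true minimum isolating cut requires an additional max-flow per terminal on the contracted graph $G/S_u$. In the undirected setting these contracted instances have total size $O(m)$ by an uncrossing argument, so the refinement is essentially free; in the directed setting that uncrossing is exactly the delicate point, and Cen et al.\ need substantially more machinery to control it. If your accounting were correct you would be claiming a $\tilde{O}(1)$-max-flow algorithm for directed global mincut, which would supersede \cite{DBLP:conf/focs/Cen0NPSQ21}. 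Finally, sampling $|U|=\Theta((n/t)\log n)$ vertices puts $\Theta(\log n)$ of them on the small side in expectation, not exactly one; you want $|U|=\Theta(n/t)$ with $O(\log n)$ independent repetitions instead.
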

Algorithm \ref{alg : second mincut using global mincut} uses $\mathcal{O}(n)$ invocations of the algorithm in Theorem \ref{thm : global mincut in directed integer graphs}. By using union bound, it is easy to show that Algorithm \ref{alg : second mincut using global mincut} computes a second $(s,t)$-mincut in $G$ with high probability.
Theorem \ref{thm : global mincut in directed integer graphs} and 
Theorem \ref{thm : equivalence between second st mincut and global}(1) lead to Theorem \ref{thm : second minimum (s,t)-cut}.


\section{Compact Structure for All $(\lambda+1)$ (s,t)-cuts} \label{sec : cover and structure}

In this section, we present a compact structure for storing and characterizing all $(\lambda+1)$ $(s,t)$-cuts for undirected multi-graphs. So, let us consider $G$ to be an undirected multi-graph for this section. The construction of our structure involves the following two steps. In the first step, we design an ${\mathcal O}(m)$ space structure consisting of one DAG and a \textit{special} set of edges. In the final step, we improve the space occupied by this structure to ${\mathcal O}(\min\{m,n\sqrt{\lambda}\})$. Unlike the existing approaches \cite{baswana2023minimum+}, to arrive at our structure, we take a \textit{flow-based} approach that crucially exploits the characterization of $(\lambda+1)$ $(s,t)$-cuts using a maximum $(s,t)$-flow (stated in Theorem \ref{thm : maxflow and minimum+1 cut}). 

\subsection{An ${\mathcal O}(m)$ space Structure}\label{sec : structure for min+1 section}
To design a compact structure for storing and characterizing all $(\lambda+1)$ $(s,t)$-cuts of $G$, our aim is to transform all the $(\lambda+1)$ $(s,t)$-cuts into $(s,t)$-mincuts.
In particular, we want to remove a set of edges $E'$ from $G$ such that every $(\lambda+1)$ $(s,t)$-cut of $G$ becomes an $(s,t)$-mincut in the resulting graph. 
Let $f$ be any given maximum $(s,t)$-flow in $G$.
Observe that by Theorem \ref{thm : maxflow mincut theorem} (Maxflow-Mincut Theorem), the removal of a set of edges carrying zero flow in $f$ does not reduce the capacity of $(s,t)$-mincut. 
Let $\textsc{NoFlow}$ denote the set of all edges that carry zero flow in $f$. It is evident that every $(s,t)$-cut has capacity at least $\lambda$ in $G\setminus \textsc{NoFlow}$.
However, is it guaranteed that every $(\lambda+1)$ $(s,t)$-cut is an $(s,t)$-mincut in $G\setminus \textsc{NoFlow}$? 
To address this question, we now introduce the concept of anchor edges. 
\begin{definition}[Anchor edge] \label{def : anchor edge}
    For any given maximum $(s,t)$-flow $f$, an edge $e$ is said to be an anchor edge if $e$ contributes to a $(\lambda+1)$ $(s,t)$-cut and $f(e)=0$. 
\end{definition}
The following property immediately follows from Definition \ref{def : anchor edge} and Theorem \ref{thm : maxflow and minimum+1 cut} for the set of anchor edges. 
\begin{lemma} \label{lem : unique anchor edge}
    Given any maximum $(s,t)$-flow in $G$, for every $(\lambda+1)$ $(s,t)$-cut $C$, there is exactly one anchor edge that contributes to $C$.
\end{lemma}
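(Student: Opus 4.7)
The plan is to derive this lemma as an essentially immediate corollary of Theorem \ref{thm : maxflow and minimum+1 cut}, which already provides a complete flow-based characterization of the edges in $E(C)$ for any $(\lambda+1)$ $(s,t)$-cut $C$. Concretely, that theorem guarantees a unique edge $e \in E(C)$ with $f(e)=0$, while every other edge $e' \in E(C)$ satisfies $f(e')=1$ and carries its unit of flow from $C$ to $\overline{C}$. I will use this dichotomy to pin down exactly the anchor edges contributing to $C$.

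For \emph{existence}, I would let $e$ be the unique zero-flow edge in $E(C)$ given by Theorem \ref{thm : maxflow and minimum+1 cut}. Since $G$ is undirected, every edge in $E(C)$ has one endpoint in $C$ and the other in $\overline{C}$, so $e$ contributes to $C$ in the sense of the definition preceding \Cref{thm: structure for min+1}. Coupled with $f(e)=0$ and the fact that $C$ itself is a $(\lambda+1)$ $(s,t)$-cut, Definition \ref{def : anchor edge} then declares $e$ to be an anchor edge. Hence at least one anchor edge contributes to $C$.

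For \emph{uniqueness}, I would suppose $e'$ is any anchor edge contributing to $C$. The contribution condition forces $e' \in E(C)$, and the defining condition of an anchor edge forces $f(e')=0$. But Theorem \ref{thm : maxflow and minimum+1 cut} permits only one edge in $E(C)$ with zero flow, namely $e$, so $e'=e$. Combining the two directions yields exactly one anchor edge contributing to $C$.

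There is no real technical obstacle here; the work has already been done in Theorem \ref{thm : maxflow and minimum+1 cut}. The only subtle point worth highlighting in the write-up is the dual role of the word ``contributes'' in Definition \ref{def : anchor edge}: a general anchor edge is required merely to contribute to \emph{some} $(\lambda+1)$ $(s,t)$-cut, whereas in the lemma we restrict attention to anchor edges contributing to the specific cut $C$ under consideration. Once this is spelled out, the uniqueness argument above is unambiguous.
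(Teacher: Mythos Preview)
Your proof is correct and matches the paper's approach exactly: the paper states that this lemma ``immediately follows from Definition \ref{def : anchor edge} and Theorem \ref{thm : maxflow and minimum+1 cut},'' and your write-up simply spells out that immediate deduction. Your observation that, in the undirected setting, every edge of $E(C)$ contributes to $C$ is the only point needed to tie the definitions together, and you handle it correctly.
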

%
By using Lemma \ref{lem : unique anchor edge}, it is a simple exercise to show that the set of anchor edges, denoted by ${\mathcal A}$, is unique for maximum $(s,t)$-flow $f$. This helps in establishing the following crucial property that holds even for any superset $E'$ of anchor edges carrying zero flow in $G$.

\begin{lemma}  \label{lem : min+1 as 1-transversal in G minus F}
   Let $E'\subseteq E$ be a set of edges in $G$ such that $\mathcal{A} \subseteq E' \subseteq \textsc{NoFlow}$.
   In $G\setminus E'$, the following properties hold. 
   \begin{enumerate}
       \item The capacity of $(s,t)$-mincut is $\lambda$ in $G\setminus E'$.
    \item Every $(\lambda+1)$ $(s,t)$-cut as well as $(s,t)$-mincut in $G$, is an $(s,t)$-mincut in $G\setminus E'$. 
   \end{enumerate}
\end{lemma}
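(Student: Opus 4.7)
The plan is to prove the two parts separately, exploiting the Maxflow-Mincut Theorem (Theorem \ref{thm : maxflow mincut theorem}) and the flow-based characterization of $(\lambda+1)$ $(s,t)$-cuts (Theorem \ref{thm : maxflow and minimum+1 cut}).

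For (1), I would observe that since $E' \subseteq \textsc{NoFlow}$, every edge in $E'$ carries zero flow in $f$. Hence $f$ restricted to $G \setminus E'$ is still a valid $(s,t)$-flow of value $\lambda$, so by weak duality the capacity of $(s,t)$-mincut in $G \setminus E'$ is at least $\lambda$. On the other hand, removing edges cannot increase the capacity of any $(s,t)$-cut, so the $(s,t)$-mincut in $G$, which has capacity $\lambda$, witnesses that the $(s,t)$-mincut capacity of $G \setminus E'$ is at most $\lambda$. Equality follows.

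For (2), I would argue in two cases. If $C$ is an $(s,t)$-mincut in $G$, then by Theorem \ref{thm : maxflow mincut theorem} every edge of $E(C)$ is saturated by $f$ in the outgoing direction (and in undirected multi-graphs every contributing edge carries flow). Thus $E(C) \cap \textsc{NoFlow} = \emptyset$, so in particular $E(C) \cap E' = \emptyset$, so $C$ retains all $\lambda$ contributing edges in $G \setminus E'$ and is an $(s,t)$-mincut there by (1). If $C$ is a $(\lambda+1)$ $(s,t)$-cut in $G$, then by Theorem \ref{thm : maxflow and minimum+1 cut} exactly one edge $e \in E(C)$ satisfies $f(e) = 0$, and every other edge in $E(C)$ carries one unit of flow in the direction $C$ to $\overline{C}$. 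By Definition \ref{def : anchor edge}, this unique zero-flow edge $e$ is an anchor edge, so $e \in \mathcal{A} \subseteq E'$; and since none of the remaining edges of $E(C)$ carry zero flow, none of them belong to $\textsc{NoFlow}$ and hence none belong to $E'$. Therefore exactly one edge of $E(C)$ is removed, leaving $\lambda$ contributing edges in $G \setminus E'$, and $C$ becomes an $(s,t)$-mincut there by (1).

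There is no substantial obstacle in this proof: the two input results (Theorem \ref{thm : maxflow mincut theorem} and Theorem \ref{thm : maxflow and minimum+1 cut}) deliver precisely the structural information needed, and Lemma \ref{lem : unique anchor edge} guarantees that the anchor edge contributing to $C$ is unique so that the count ``exactly one removed edge'' is tight. The only mild subtlety is that the argument for mincuts of $G$ uses the undirected convention that every edge of the edge-set is a contributing edge; this is why every mincut edge carries flow and thus avoids $\textsc{NoFlow}$, and hence avoids any set $E' \subseteq \textsc{NoFlow}$.
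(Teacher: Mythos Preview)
Your proof is correct and follows essentially the same approach as the paper's own proof: both use the Maxflow-Mincut Theorem to establish part (1), and for part (2) both invoke Theorem \ref{thm : maxflow and minimum+1 cut} together with the definition of anchor edges (equivalently Lemma \ref{lem : unique anchor edge}) to conclude that exactly one edge of each $(\lambda+1)$ $(s,t)$-cut lies in $E'$. The only cosmetic difference is that for the $(s,t)$-mincut case you explicitly argue $E(C)\cap\textsc{NoFlow}=\emptyset$, whereas the paper simply notes that removing edges cannot raise the capacity of $C$ above $\lambda$.
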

\begin{proof}
    Since $E'\subseteq \textsc{NoFlow}$, removal of all the edges from $E'$ does not decrease the value of maximum $(s,t)$-flow. Therefore, by Maxflow-Mincut Theorem, the capacity of $(s,t)$-mincut in $G\setminus E'$ is $\lambda$. 
    It follows that every $(s,t)$-mincut in $G$ remains an $(s,t)$-mincut in $G\setminus E'$.
    Since $\mathcal{A} \subseteq E'$, Lemma \ref{lem : unique anchor edge} implies that for every $(\lambda+1)$ $(s,t)$-cut $C$ in $G$,
    exactly one anchor edge is removed from the edge-set of $C$. So, the capacity of every $(\lambda+1)$ $(s,t)$-cut in $G$ is reduced by at least one in $G\setminus E'$. Moreover, it follows from Theorem \ref{thm : maxflow and minimum+1 cut} that there is exactly one edge in $E(C)$ carrying no flow. Since $E'\subseteq \textsc{NoFlow}$, the capacity of every $(\lambda+1)$ $(s,t)$-cut in $G$ is reduced by exactly $1$ in $G\setminus E'$. Therefore, every $(\lambda+1)$ $(s,t)$ in $G$ is an $(s,t)$-mincut in $G\setminus E'$. 
\end{proof}
Let $\mathcal{S}(H)$ be any compact structure for storing and characterizing all the $(s,t)$-mincuts in any graph $H$ using a property, say $\mathcal{P}$.
It follows from Lemma \ref{lem : min+1 as 1-transversal in G minus F} that every $(\lambda+1)$ $(s,t)$-cut in $G$ satisfies property $\mathcal{P}$ in $\mathcal{S}(G\setminus E')$.
However, by Lemma \ref{lem : min+1 as 1-transversal in G minus F}$(2)$, every $(s,t)$-mincut in $G$ also satisfies property $\mathcal{P}$. 
Moreover, there may exist many $(s,t)$-cuts other than the $(\lambda+1)$ $(s,t)$-cuts and $(s,t)$-mincuts in $G$ that have also become $(s,t)$-mincuts in $G\setminus E'$. This is because a $(\lambda+k)$ $(s,t)$-cut may contain exactly $k$ anchor edges (refer to Figure \ref{fig: counter_examples_anchor_edges}$(i)$ for $k=2$).
However, we show using Theorem \ref{thm : maxflow and minimum+1 cut} and Lemma \ref{lem : unique anchor edge} that $E'$ is sufficient to characterize $(\lambda+1)$ $(s,t)$-cuts using ${\mathcal S}(G\setminus E')$ as follows. 


\begin{figure}[ht]
 \centering  \includegraphics[width=\textwidth]{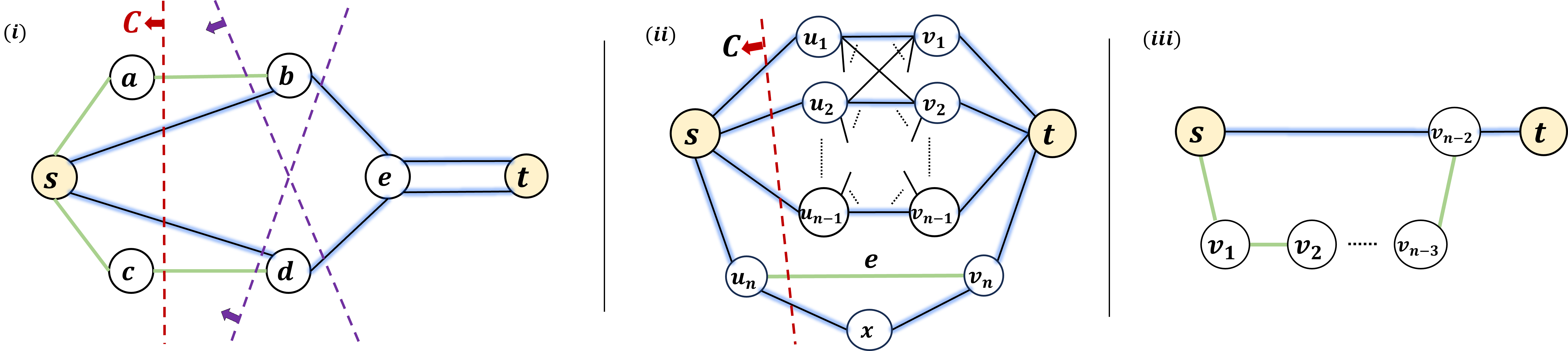} 
  \caption{
A blue (likewise green) edge represents that the edge is carrying nonzero flow (likewise an anchor edge).
  $(i)$ $C$ is a $(\lambda+2)$ $(s,t)$-cut and it contains two anchor edges. 
  ($ii$) Graph satisfying $|\textsc{NoFlow}|=\Omega(n^2)$ with exactly one anchor edge. ($iii$) Graph with exactly $(n-2)$ anchor edges.}
    \label{fig: counter_examples_anchor_edges}
\end{figure}

\begin{lemma}\label{lem :  characterization of min+1 cuts in D(F)}
    Let $E'\subseteq E$ be a set of edges in $G$ such that $\mathcal{A} \subseteq E' \subseteq \textsc{NoFlow}$.
    There is a structure $\mathcal{S}(G\setminus E')$ that stores and characterizes all $(\lambda+1)$ $(s,t)$-cuts and $(s,t)$-mincuts in $G$ using property ${\mathcal P}$ and set of edges $E'$ as follows. 
    \begin{enumerate}
        \item  An $(s,t)$-cut $C$ is an $(s,t)$-mincut in $G$ if and only if $C$ satisfies $\mathcal{P}$ in $\mathcal{S}(G\setminus {E'})$ and no edge in ${E'}$ contributes to $C$. 
        \item  An $(s,t)$-cut $C$ is a $(\lambda+1)$ $(s,t)$-cut in $G$ if and only if $C$ satisfies $\mathcal{P}$ in ${\mathcal S}(G\setminus {E'})$ and exactly one edge of ${E'}$ contributes to $C$.
    \end{enumerate}   
\end{lemma}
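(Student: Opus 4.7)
The plan is to prove each direction of both items (1) and (2) by separately controlling (a) whether $C$ has capacity $\lambda$ in $G\setminus E'$ (which, by Lemma \ref{lem : min+1 as 1-transversal in G minus F}(1), is equivalent to $C$ satisfying $\mathcal{P}$ in $\mathcal{S}(G\setminus E')$) and (b) how many edges of $E'$ contribute to $C$, i.e., how much capacity is lost when passing from $G$ to $G\setminus E'$. Since $E'\subseteq \textsc{NoFlow}$, the removal of $E'$ does not decrease the $(s,t)$-mincut capacity, so the identity $c(C,G)=c(C,G\setminus E')+|E(C)\cap E'|$ will do the bookkeeping throughout.

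For item (1), the forward direction follows because if $C$ is an $(s,t)$-mincut in $G$, the Maxflow--Mincut Theorem forces every edge in $E(C)$ to carry nonzero flow; hence $E(C)\cap \textsc{NoFlow}=\emptyset$, so no $E'$-edge contributes to $C$, and Lemma \ref{lem : min+1 as 1-transversal in G minus F}(2) gives that $C$ is still an $(s,t)$-mincut in $G\setminus E'$, i.e.\ it satisfies $\mathcal{P}$. The reverse direction is a direct application of the bookkeeping identity: if $C$ satisfies $\mathcal{P}$ then $c(C,G\setminus E')=\lambda$ by Lemma \ref{lem : min+1 as 1-transversal in G minus F}(1), and if no $E'$-edge contributes, then $c(C,G)=\lambda$, making $C$ an $(s,t)$-mincut.

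For item (2), the forward direction is the technically delicate part. Given a $(\lambda+1)$ $(s,t)$-cut $C$, Lemma \ref{lem : min+1 as 1-transversal in G minus F}(2) again yields that $C$ satisfies $\mathcal{P}$ in $\mathcal{S}(G\setminus E')$. For the count of contributing $E'$-edges, I would use a ``sandwich'' argument: Lemma \ref{lem : unique anchor edge} guarantees that exactly one anchor edge contributes to $C$, and since $\mathcal{A}\subseteq E'$ this gives at least one contributing $E'$-edge; on the other hand, Theorem \ref{thm : maxflow and minimum+1 cut} guarantees that at most one edge of $E(C)$ carries zero flow in $f$, and since $E'\subseteq \textsc{NoFlow}$ this bounds the number of contributing $E'$-edges by one. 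Combining the two bounds gives exactly one. For the reverse direction, the bookkeeping identity again does the job: $c(C,G\setminus E')=\lambda$ (from $\mathcal{P}$) plus exactly one contributing $E'$-edge yields $c(C,G)=\lambda+1$.

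The main obstacle is the forward direction of (2), and specifically the ``at most one'' half: the hypothesis only tells us that $\mathcal{A}\subseteq E'\subseteq\textsc{NoFlow}$, and a priori $E'$ could contain several zero-flow edges lying in $E(C)$. What saves us is the flow-based characterization in Theorem \ref{thm : maxflow and minimum+1 cut}, which pins down that $E(C)$ contains exactly one zero-flow edge in total, so any subset of $\textsc{NoFlow}$ can contribute at most once. Everything else reduces to bookkeeping using Lemma \ref{lem : min+1 as 1-transversal in G minus F} and the Maxflow--Mincut Theorem.
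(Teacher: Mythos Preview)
Your proposal is correct and follows essentially the same approach as the paper's proof: both use Lemma~\ref{lem : min+1 as 1-transversal in G minus F} to get that the relevant cuts satisfy $\mathcal{P}$ in $\mathcal{S}(G\setminus E')$, the Maxflow--Mincut Theorem to rule out \textsc{NoFlow} edges in an $(s,t)$-mincut, Theorem~\ref{thm : maxflow and minimum+1 cut} together with Lemma~\ref{lem : unique anchor edge} and the inclusions $\mathcal{A}\subseteq E'\subseteq\textsc{NoFlow}$ to pin down exactly one contributing $E'$-edge for a $(\lambda+1)$ cut, and the identity $c(C,G)=c(C,G\setminus E')+|E(C)\cap E'|$ for the converse directions. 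Your ``sandwich'' phrasing of the forward direction of (2) is just a slight reorganization of the paper's one-line deduction that the unique \textsc{NoFlow} edge in $E(C')$ is the anchor edge and hence lies in $E'$.
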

\begin{proof}
    Let $C$ be an $(s,t)$-mincut and $C'$ be a $(\lambda+1)$ $(s,t)$-cut in $G$. By Lemma \ref{lem : min+1 as 1-transversal in G minus F}, both $C$ and $C'$ are $(s,t)$-mincuts in graph $G\setminus E'$. Hence, both $C$ and $C'$, satisfy property $\mathcal{P}$ in $\mathcal{S}(G\setminus {E'})$.
    Since $G$ is undirected, by Theorem \ref{thm : maxflow mincut theorem} (Maxflow-Mincut Theorem), 
    there is no edge from \textsc{NoFlow}, and hence from $E'$, that contributes to $(s,t)$-mincut $C$ in $G$. 
    For $(\lambda+1)$ $(s,t)$-cut $C'$, it follows from Theorem \ref{thm : maxflow and minimum+1 cut} that exactly one edge from $\textsc{NoFlow}$ contributes to $C'$, which must be an anchor edge by Lemma \ref{lem : unique anchor edge}. Therefore, exactly one edge from $E'$ contributes to $C'$ since ${\mathcal A}\subseteq E'\subseteq \textsc{NoFlow}$.    

   Let us consider an $(s,t)$ cut $C$ that satisfies property $\mathcal{P}$ in ${\mathcal S}(G\setminus E')$. It follows from from the property of compact structure $\mathcal{S}$ that $C$ has capacity $\lambda$ in $G\setminus E'$. Therefore, if no edge from $E'$ contributes to $C$, then $C$ is an $(s,t)$-mincut in $G$. Similarly, if exactly one edge from $E'$ is contributing to $C$, then $C$ has capacity exactly $\lambda+1$ in $G$.
\end{proof}
The best-known structure for storing and characterizing all $(s,t)$-mincuts is the DAG ${\mathcal D}_{PQ}$ given by Picard and Queyranne \cite{DBLP:journals/mp/PicardQ80} (refer to Section \ref{sec : preliminaries}). For undirected multi-graph $G\setminus E'$, using the result of \cite{shortlengthversionofmengerstheorem}, it is easy to show that there is an integer-weighted graph $G'$ such that the space occupied by the structure ${\mathcal D}_{PQ}(G')$ is ${\mathcal O}(\min\{m,n\sqrt{\lambda}\})$ and it stores and characterizes all $(s,t)$-mincuts of $G\setminus E'$ (the proof is given in Appendix \ref{sec : size of Dpq} for completeness). We stress that if $G$ is a simple graph, then ${\mathcal D}_{PQ}(G\setminus E')$ is the same as ${\mathcal D}_{PQ}(G')$. For simplicity of exposition, without causing any ambiguity, we use ${\mathcal D}_{PQ}(G\setminus E')$ to denote ${\mathcal D}_{PQ}(G')$. So, we use ${\mathcal D}_{PQ}$ as ${\mathcal S}$ to obtain our structure ${\mathcal D}_{PQ}(G\setminus E')$. By construction, structure ${\mathcal D}_{PQ}(G\setminus E')$ also occupies ${\mathcal O}(\min\{m,n\sqrt{\lambda}\})$ space. However, storing set of edges $E'$ can require ${\mathcal O}(m)$ space since $E'\subseteq \textsc{NoFlow}$.


\subsection{Bounding the Cardinality of The Set of Anchor Edges}\label{sec : cover}
In this section, we provide a tight bound on the cardinality of the set of all anchor edges $\mathcal{A}$. 
We present an efficient algorithm that, exploiting the properties of anchor edges, computes a set of edges $\mathcal{F}$ such that the following two properties hold.
\begin{enumerate}
    \item The set of all anchor edges is a subset of $\mathcal{F}$.
    \item The number of edges belonging to $\mathcal{F}$ is at most $n-2$.
\end{enumerate}
By Definition \ref{def : anchor edge}, every anchor edge belongs to set $\textsc{NoFlow}$. 
However, there exist graphs $H$ such that, for any maximum $(s,t)$-flow in $H$, the cardinality of set $\textsc{NoFlow}$ can be $\Omega(n^2)$, yet the number of anchor edges is only $\mathcal{O}(1)$ (refer to Figure \ref{fig: counter_examples_anchor_edges}($ii$)).
So, \textsc{NoFlow} provides a \textit{loose} upper bound on the number of anchor edges. 
Naturally, the question arises whether it is possible to eliminate a large number of edges from \textsc{NoFlow} while still keeping the set of all anchor edges intact. We answer this question in the affirmative by exploiting the following lemma.
\begin{lemma} \label{lem : edge in a cycle has mincut lambda+2}
    Let $\mathbb{H}$ be a cycle formed using a subset of edges in \textsc{NoFlow}.
    Then, no edge in $\mathbb{H}$ can be an anchor edge.
\end{lemma}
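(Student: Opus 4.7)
The plan is to derive a contradiction using the flow-based characterization of $(\lambda+1)$ $(s,t)$-cuts from Theorem \ref{thm : maxflow and minimum+1 cut}. Suppose, for the sake of contradiction, that some edge $e=(u,v)$ lying on the cycle $\mathbb{H}$ is an anchor edge. By Definition \ref{def : anchor edge}, there exists a $(\lambda+1)$ $(s,t)$-cut $C$ such that $e$ contributes to $C$ and $f(e)=0$. Without loss of generality, assume $u\in C$ and $v\in \overline{C}$.

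Next, I invoke Theorem \ref{thm : maxflow and minimum+1 cut}, which states that $e$ is the \emph{unique} edge of $E(C)$ carrying zero flow in $f$; every other edge of $E(C)$ carries one unit of flow in the direction from $C$ to $\overline{C}$. Now I use the cycle structure: since $\mathbb{H}$ is a cycle and $e\in \mathbb{H}$, removing $e$ from $\mathbb{H}$ leaves a path $P$ from $u$ to $v$ whose every edge lies in $\textsc{NoFlow}\setminus \{e\}$.

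The main step is to observe that $P$ starts in $C$ and ends in $\overline{C}$, so $P$ must contain at least one edge $e'\in E(C)$ (a crossing edge). By construction of $P$, the edge $e'$ carries zero flow in $f$ and $e'\ne e$. This contradicts the uniqueness guaranteed by Theorem \ref{thm : maxflow and minimum+1 cut}, thereby finishing the proof.

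The argument is essentially one-step once Theorem \ref{thm : maxflow and minimum+1 cut} is in hand, so there is no real obstacle. The only mild subtlety I want to be careful about is that in an undirected multi-graph the notion of ``path crossing a cut'' must be interpreted correctly, so I would state explicitly that any walk joining a vertex of $C$ to a vertex of $\overline{C}$ must traverse an edge of $E(C)$, which is immediate from the definition of $E(C)$.
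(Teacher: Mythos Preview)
Your proof is correct and follows essentially the same approach as the paper's: both use Theorem~\ref{thm : maxflow and minimum+1 cut} together with the fact that a cut whose edge-set contains one edge of a cycle must contain a second edge of that cycle, yielding two zero-flow edges in $E(C)$ and hence a contradiction. The paper phrases this as ``$C$ must intersect cycle $\mathbb{H}$ at least twice,'' while you phrase it via the complementary path $P=\mathbb{H}\setminus\{e\}$ crossing $C$; these are the same observation.
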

\begin{proof}
     Consider any edge $e \in \mathbb{H}$ and $C$ be any $(s,t)$-cut in $G$ such that $e \in E(C)$. 
    Since $C$ is a cut, $C$ must intersect cycle $\mathbb{H}$ at least twice. Hence, $C$ contains at least two edges that carry no flow. Therefore, it follows from Theorem $\ref{thm : maxflow and minimum+1 cut}$ that $C$ cannot be a $(\lambda+1)$ $(s,t)$-cut. 
    Hence, by Definition \ref{def : anchor edge}, $e$ cannot be an anchor edge.
\end{proof}
We now use Lemma \ref{lem : edge in a cycle has mincut lambda+2} to construct a spanning forest $G_\mathcal{F}$ to provide an upper bound on the cardinality of set ${\mathcal A}$.

\paragraph*{Construction of Spanning Forest $G_{\mathcal F}=(V,{\mathcal F})$:} 
The vertex set of $G_{\mathcal F}$ is the same as $G$. Initially, there is no edge in $G_{\mathcal F}$. We construct graph $G_{\mathcal F}$ incrementally by executing the following step for each edge $e$ in \textsc{NoFlow}. 
If a cycle is formed in $G_\mathcal{F}\cup \{e\}$, then by Lemma \ref{lem : edge in a cycle has mincut lambda+2}, $e$ cannot be an anchor edge. Hence, we ignore edge $e$. Otherwise, if there is no cycle in $G_\mathcal{F}\cup \{e\}$, then insert edge $e$ to $G_{\mathcal F}$. 



It follows from the construction of $G_\mathcal{F}$
that $\mathcal{A} \subseteq \mathcal{F}$. Moreover, $|\mathcal{F}|$ is at most $n-1$ since $G_{\mathcal F}$ is a spanning forest. 
We now show a tight bound on $|\mathcal{A}|$, as well as $|\mathcal{F}|$, in the following lemma.

\begin{lemma} \label{lem : F has n-2 edges}
    Set ${\mathcal A}$, as well as set $\mathcal{F}$, contains at most $(n-2)$ edges.
\end{lemma}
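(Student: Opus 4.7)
The plan is to sharpen the trivial bound $|\mathcal{F}|\le n-1$ (which holds because $G_{\mathcal{F}}$ is a spanning forest by construction) to $|\mathcal{F}|\le n-2$. To do this, I will argue that $G_{\mathcal{F}}$ must have at least two connected components, specifically by showing that $s$ and $t$ must lie in different trees of $G_{\mathcal{F}}$. Once this is established, the bound $|\mathcal{F}| \le n - 2$ is immediate, and since $\mathcal{A} \subseteq \mathcal{F}$ (which we already argued when constructing $G_{\mathcal{F}}$ via \Cref{lem : edge in a cycle has mincut lambda+2}), we get $|\mathcal{A}| \le n - 2$ as well.

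First, I will observe that $G_{\mathcal{F}}$ is, by construction, a spanning forest of the subgraph $G_{NF} = (V, \textsc{NoFlow})$. The standard cycle rule used during the greedy insertion of edges guarantees that for any two vertices $u,v \in V$, they lie in the same tree of $G_{\mathcal{F}}$ if and only if there is a $(u,v)$-path in $G_{NF}$, i.e., a path in $G$ using only edges carrying zero flow in $f$. This equivalence is the only nontrivial fact about the construction that I will need.

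Next, I will show that $s$ and $t$ are not in the same tree of $G_{\mathcal{F}}$. Suppose for contradiction that there is an $(s,t)$-path $P$ in $G_{NF}$. Since $G$ is undirected and every edge $e\in P$ satisfies $f(e)=0$, each such edge has positive residual capacity $w(e)$ in the direction from $s$ to $t$ along $P$. Therefore, $P$ is an augmenting $(s,t)$-path in the residual graph $G^{f}$, contradicting the maximality of the $(s,t)$-flow $f$ (by the augmenting-path characterization in \cite{ford_fulkerson_1956}). Hence $s$ and $t$ lie in distinct components of $G_{\mathcal{F}}$, so $G_{\mathcal{F}}$ has at least two components and $|\mathcal{F}|\le n-2$.

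The only potential obstacle I anticipate is pedantic: one must make sure that the ``$(s,t)$-path using edges with zero flow gives an augmenting path'' step really uses the undirectedness of $G$ (in a directed graph, a NoFlow edge need not have residual capacity in the relevant direction). Since the lemma is stated for undirected multi-graphs, this is fine, but I will mention it explicitly. Finally, tightness of the bound $|\mathcal{A}|\le n-2$ follows from the family of examples depicted in Figure \ref{fig: counter_examples_anchor_edges}$(iii)$, so no separate lower-bound argument is needed.
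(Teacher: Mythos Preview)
Your proposal is correct and follows essentially the same approach as the paper: both arguments show that $G_{\mathcal{F}}$ cannot be a spanning tree (equivalently, that $s$ and $t$ lie in different components of the forest on \textsc{NoFlow} edges). The only cosmetic difference is that the paper derives the contradiction on the cut side (a spanning tree would cross some $(s,t)$-mincut, but by Maxflow--Mincut every edge of a mincut carries flow), whereas you derive it on the path side (an $(s,t)$-path of zero-flow edges would be an augmenting path); these are dual formulations of the same fact and equally short.
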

\begin{proof}
    Let us assume to the contrary that ${\mathcal F}$ contains exactly $n-1$ edges. 
    It follows that $G_{\mathcal F}$ is a spanning tree. Hence, ${\mathcal F}$ contains at least one edge from the edge-set of every $(s,t)$-cut in $G$. It implies that there exists an $(s,t)$-mincut $C$ in $G$ such that ${\mathcal F}$ contains at least one edge from the edge-set of $C$. 
    By Theorem \ref{thm : maxflow mincut theorem} (Maxflow-Mincut Theorem), for every edge $e \in E(C)$, $f(e)=1$. Thus, ${\mathcal F}$ contains an edge $e$ such that $f(e)=1$, which is a contradiction since $\mathcal{F} \subseteq$ \textsc{NoFlow}. Since $\mathcal{A} \subseteq\mathcal{F}$, it follows that the cardinality of $\mathcal{A}$ is at most $n-2$.
\end{proof}
We show that there also exist graphs where ${\mathcal A}$ contains exactly $n-2$ edges (refer to Figure \ref{fig: counter_examples_anchor_edges}$(iii)$). Therefore, the bound on the set ${\mathcal A}$ given in Lemma \ref{lem : F has n-2 edges} is tight. 

It is easy to show using Union-Find data structure \cite{unionfind1975} that, given any maximum $(s,t)$-flow, the time taken for computing set ${\mathcal F}$ (containing all anchor edges) is ${\mathcal O}(m\alpha(m,n))$, where $\alpha(m,n)$ denotes the inverse Ackermann function. Interestingly, we show that there is an algorithm that, given maximum $(s,t)$-flow $f$, can compute only set $\mathcal{A}$ in $\mathcal{O}(m)$ time (refer to Appendix \ref{sec : anchor edge computation}). This completes the proof of the following Theorem.

\begin{theorem} [Anchor Edges: Cardinality \& Computation] \label{thm : anchor edges}
    Let $G$ be any undirected multi-graph on $n$ vertices and $m$ edges. For any maximum $(s,t)$-flow $f$ in $G$, there is a set containing at most $n-2$ edges of $G$, called the anchor edges, such that for any minimum+1 $(s,t)$-cut $C$ in $G$, exactly one anchor edge contributes to $C$. 
    Moreover, given a maximum $(s,t)$-flow $f$, there is an algorithm that computes all anchor edges for $f$ in ${\mathcal O}(m)$ time. 
\end{theorem}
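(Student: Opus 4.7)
The cardinality claim is immediate from Section~\ref{sec : cover}: Lemma~\ref{lem : unique anchor edge} supplies the \emph{exactly one anchor edge per minimum+$1$ cut} part, and Lemma~\ref{lem : F has n-2 edges} supplies $|\mathcal{A}|\le n-2$, so the first half of the theorem only needs a short combination of these plus the witnessing example (already in Figure~\ref{fig: counter_examples_anchor_edges}) for tightness.

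For the computation, the spanning-forest procedure implicit in Lemmas~\ref{lem : edge in a cycle has mincut lambda+2} and~\ref{lem : F has n-2 edges}---sweep over $\textsc{NoFlow}$ and keep edges that do not close a cycle---produces a superset $\mathcal{F}\supseteq\mathcal{A}$ in $O(m\,\alpha(m,n))$ time via Union--Find, and it does not identify which members of $\mathcal{F}$ actually lie in $\mathcal{A}$. To shave the inverse-Ackermann factor and to return $\mathcal{A}$ exactly, my plan is to move to the residual graph $G^f$. First build $G^f$ and compute its SCCs in $O(m)$ time. A zero-flow edge $\{u,v\}$ injects both residual arcs $(u,v)$ and $(v,u)$ into $G^f$, so its endpoints lie in the same SCC; hence every candidate anchor edge is internal to a single SCC. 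By Theorem~\ref{thm : min+k in residual graph}, a zero-flow edge $e$ is an anchor edge if and only if one of its two residual orientations is the unique forward crossing of some $(s,t)$-cut of $G^f$ of capacity exactly $1$; inside the SCC this becomes a strong-bridge condition paired with an $s$-side/$t$-side reachability condition in $G^f$. I would therefore enumerate strong bridges of each SCC in linear time per SCC via dominator trees, and use one BFS from $s$ and one reverse-BFS from $t$ in $G^f$ to check the reachability side conditions in $O(1)$ amortised time per candidate edge. The per-SCC costs are disjoint and sum to $O(m)$.

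\textbf{Main obstacle.} The crux is the local characterisation: a strong bridge of an SCC need not be an anchor edge, because a minimum+$1$ cut also accrues capacity from the DAG arcs of $\mathcal{D}_{PQ}(G)$ incident to that SCC, and the partition induced by a strong bridge may have nonzero boundary slack. To handle this I would also build $\mathcal{D}_{PQ}(G)$ in $O(m)$ time (Section~\ref{sec : construction of Dpq}), and for each node $\mu$ precompute, using reachability information in $\mathcal{D}_{PQ}(G)$ from $\mathbb{S}$ and to $\mathbb{T}$, which subdivisions of $V(\mu)$ are boundary-slack-free. The strong-bridge enumeration inside $\mu$ is then restricted to bridges that respect this admissibility. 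Proving that the resulting local test is both necessary and sufficient for anchor membership, and scheduling the per-node bookkeeping so that the aggregate cost across all nodes remains $O(m)$, is where the real work of the proof will lie; the cardinality bound $|\mathcal{A}|\le n-2$ already in hand is useful here, since it lets us budget at most $O(n)$ \emph{confirmed} anchor outputs and thus only $O(m)$ total rejection work.
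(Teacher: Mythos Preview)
Your cardinality argument is correct and matches the paper exactly.

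For the $O(m)$ computation, your high-level architecture---partition candidate anchor edges by SCC of $G^f$ (equivalently by node of $\mathcal{D}_{PQ}(G)$) and run a dominator-type analysis inside each piece---is the same as the paper's. The gap is precisely where you say it is: you propose a per-node ``boundary-slack-free'' admissibility test to filter strong bridges, but you give neither a precise definition of this test, nor a proof that it characterises anchor edges, nor an argument that evaluating it across all nodes totals $O(m)$.

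The paper dissolves the slack issue rather than testing for it. For each node $\mu$ it forms $G_\mu^U$ by contracting, in $G$, all vertices in nodes preceding $\mu$ (in a fixed topological order of $\mathcal{D}_{PQ}(G)$) into $t'$ and all vertices in nodes succeeding $\mu$ into $s'$ (Appendix~\ref{sec : anchor edge computation}, Lemma~\ref{lem : at most 2 st mincuts and ancor edges preserved}). In $G_\mu^U$ the $(s,t)$-mincut value is still $\lambda$, there are at most two $(s,t)$-mincuts, and an edge of $G$ with both endpoints in $V(\mu)$ is an anchor edge of $G$ if and only if it is an anchor edge of $G_\mu^U$. The DAG arcs you worried about have been absorbed into $s'$ or $t'$, so there is no residual ``boundary slack'' to account for. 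The covering step (Appendix~\ref{sec : min+1 using covering}) then reduces to the one-$(s,t)$-mincut case, where Lemma~\ref{lem : anchor edge with internal vertex in dominator tree} gives the clean local test: subdivide every arc of the residual graph with a fresh marked vertex to get $\mathcal{H}$, build the dominator tree of $\mathcal{H}$ rooted at $s$, and read off anchor edges as those whose marked vertex is internal with the expected child. Summing over all $\mu$ costs $O(m)$ because each edge of $\mathcal{D}_{PQ}(G)$ appears in at most two graphs $G_\mu^U$, and covering only duplicates edges a constant number of times.

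Two smaller technical remarks. First, ``strong bridges of an SCC'' is not quite the right primitive: a strong bridge is a symmetric notion (its removal destroys strong connectivity), whereas anchor membership is asymmetric---it depends on which side of the induced cut contains $s$. The paper's edge-subdivision converts the required \emph{edge-dominance from $s$} into ordinary vertex-dominance, which carries the correct orientation. Second, your ``BFS from $s$ in $G^f$'' can be vacuous: whenever $\{s\}$ is an $(s,t)$-mincut, $s$ has outdegree~$0$ in $G^f$. The contraction to $G_\mu^U$ together with covering is exactly what produces a graph with a usable source for the dominator computation.
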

DAG $\mathcal{D}_{PQ}(G\setminus E')$ occupies $\mathcal{O}(\min\{m,n\sqrt{\lambda}\})$ space (established in Appendix \ref{sec : structure for min+1 section}). Therefore, Lemma \ref{lem :  characterization of min+1 cuts in D(F)} and Theorem \ref{thm : anchor edges} complete the proof of Theorem \ref{thm: structure for min+1}.

\section{Dual Edge Sensitivity Oracle for (s,t)-mincuts} \label{section: dual edge oracle}
In this section, for simple graphs, we design a dual edge sensitivity oracle for $(s,t)$-mincut (refer to Definition \ref{def : dual edge sensitivity}) that occupies subquadratic space while achieving a nontrivial query time. Henceforth, let us consider $G$ to be a simple graph.
As a warm-up, we first explain the folklore result that the residual graph $G^f$ acts as a simple dual edge sensitivity oracle that achieves ${\mathcal O}(m)$ query time. Note that $G^f$ occupies ${\mathcal O}(m)={\mathcal O}(n^2)$ space in the worst case.
In order to break this quadratic barrier for simple graphs, our main result is to show that, the query algorithm used for the residual graph $G^f$ can be applied to DAG ${\mathcal D}_{PQ}(G\setminus {\mathcal A})$ and set of anchor edges ${\mathcal A}$ from Theorem \ref{thm: structure for min+1} even if ${\mathcal D}_{PQ}(G\setminus {\mathcal A})\cup {\mathcal A}$ is just a quotient graph of $G^f$. For simplicity, we denote ${\mathcal D}_{PQ}(G\setminus {\mathcal A})$ by $\mathcal{D}(\mathcal{A})$.

We now state two properties of ${\mathcal D}({\mathcal A})$ that are used crucially in arriving at our results.
Observe that, by Lemma \ref{lem :mapping of nodes in Dpq}, if both endpoints of any edge $e$ are mapped to the same node in ${\mathcal D}_{PQ}(G)$, then the failure of edge $e$ does not reduce the capacity of $(s,t)$-mincut. This property is exploited crucially to show that DAG ${\mathcal D}_{PQ}(G)$ acts as a single edge sensitivity oracle for $(s,t)$-mincut \cite{DBLP:journals/mp/PicardQ80}. 
%
Interestingly, ${\mathcal D}({\mathcal A})$ extends the above property of ${\mathcal D}_{PQ}(G)$ to a pair of edges as follows. 
If all the endpoints of the pair of failed edges are mapped to the same node in ${\mathcal D}({\mathcal A})$, then the $(s,t)$-mincut capacity remains unchanged. This property is a consequence of the following lemma, which follows from Theorem \ref{thm: structure for min+1}.
 \begin{lemma} \label{lem : mapping in Df}
    Let $u$ and $v$ be any pair of vertices in $G$. If $u$ and $v$ are mapped to the same node in ${\mathcal D}({\mathcal A})$, then the $(s,t)$-cut of the least capacity that separates $u$ and $v$ has capacity at least $\lambda+2$.
\end{lemma}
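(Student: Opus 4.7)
The plan is to argue by contradiction: suppose there is an $(s,t)$-cut $C$ of capacity at most $\lambda+1$ that separates $u$ and $v$, and derive an impossibility using the characterization in \Cref{thm: structure for min+1}. The cut $C$ must then be either an $(s,t)$-mincut of capacity $\lambda$ or a $(\lambda+1)$ $(s,t)$-cut, so in both cases \Cref{thm: structure for min+1} applies and tells us that $C$ is a $1$-transversal $(s,t)$-cut in ${\mathcal D}={\mathcal D}({\mathcal A})$.

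The central observation I will use is structural: since ${\mathcal D}({\mathcal A})$ is a quotient graph of $G\setminus {\mathcal A}$ (in fact, of $G$ with respect to the same contraction), any $1$-transversal cut in ${\mathcal D}({\mathcal A})$ respects the node boundaries of ${\mathcal D}({\mathcal A})$. More concretely, such a cut corresponds to a prefix of a topological order of the nodes of ${\mathcal D}({\mathcal A})$, so if two vertices $u$ and $v$ are mapped to the same node $\mu$ of ${\mathcal D}({\mathcal A})$, then both lie entirely on the same side of every $1$-transversal cut. This already appears implicitly via \Cref{lem :mapping of nodes in Dpq} applied to the graph $G\setminus {\mathcal A}$ (whose $(s,t)$-mincuts are exactly the $1$-transversal cuts of ${\mathcal D}_{PQ}(G\setminus {\mathcal A})={\mathcal D}({\mathcal A})$).

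Combining these two ingredients gives the contradiction: the assumed cut $C$ separates $u$ from $v$, yet it is $1$-transversal in ${\mathcal D}({\mathcal A})$ which (because $u$ and $v$ share a node) forbids such a separation. Hence no cut of capacity $\le \lambda+1$ can separate $u$ from $v$, proving $c(C,G) \ge \lambda+2$.

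I do not expect any real obstacle here; the lemma is essentially a packaging of \Cref{thm: structure for min+1}. The only care needed is to verify cleanly that a $1$-transversal $(s,t)$-cut in the DAG ${\mathcal D}({\mathcal A})$ never subdivides a node, which follows directly from the definition of $1$-transversality together with the fact that a node of ${\mathcal D}({\mathcal A})$ is an SCC of a residual graph (so every two of its vertices lie on a common directed cycle, and therefore any $1$-transversal cut would have to use an edge of that cycle twice, contradicting $1$-transversality).
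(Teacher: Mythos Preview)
Your approach is correct and essentially identical to the paper's: both argue by contradiction, assume a separating $(s,t)$-cut of capacity at most $\lambda+1$, and invoke \Cref{thm: structure for min+1} to conclude that such a cut must be stored (as a $1$-transversal cut) in ${\mathcal D}({\mathcal A})$, contradicting that $u$ and $v$ share a node. The paper's proof is a two-line sketch that only names the $(\lambda+1)$ case explicitly; your version is more complete in also spelling out the $(s,t)$-mincut case and in noting that \Cref{lem :mapping of nodes in Dpq} applied to $G\setminus{\mathcal A}$ already gives the ``no subdivision'' conclusion directly. Your final paragraph about SCCs and cycles is unnecessary: a $1$-transversal cut in ${\mathcal D}$ is by definition a cut of the quotient DAG, i.e., a union of whole nodes, so it cannot subdivide a node; no further argument is needed.
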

\begin{proof}
    Let $u,v$ be any pair of vertices mapped to the same node in ${\mathcal D}({\mathcal A})$. Suppose there is a $(\lambda+1)$ $(s,t)$-cut $C$ satisfying $u\in C$ and $v\in \overline{C}$. It follows that ${\mathcal D}({\mathcal A})$ fails to store $C$. Therefore, existence of such a cut $C$ is not possible due to Theorem \ref{thm: structure for min+1}.
\end{proof} 

We now state the following relation between the edges of DAG ${\mathcal D}({\mathcal A})$ and maximum $(s,t)$-flow $f$ in $G$, which is immediate from anchor edge definition (Definition \ref{def : anchor edge}), Lemma \ref{lem : every edge has flow in Dpq}, and Theorem \ref{thm: structure for min+1}.
\begin{lemma} \label{lem : every edge has flow in DF}
    For any edge $e=(x,y)$ in ${\mathcal D}({\mathcal A})$, the corresponding undirected edge $e$ in $G$ satisfies $f(e)=1$ and $e$ carries flow in the direction $y$ to $x$. Moreover, for any edge $e\in {\mathcal A}$, $f(e)=0$.    
\end{lemma}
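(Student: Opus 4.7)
The lemma decomposes into two independent assertions, so the plan is to prove each in turn. The second assertion — that $f(e) = 0$ for every $e \in {\mathcal A}$ — is already baked into Definition \ref{def : anchor edge}, which demands that an anchor edge carry zero flow under $f$, so no argument beyond citing that definition is required.

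For the first assertion, the plan is to reduce it to Lemma \ref{lem : every edge has flow in Dpq} applied to the graph $G \setminus {\mathcal A}$. The key preliminary step is to observe that the restriction $f|_{G \setminus {\mathcal A}}$ of the given maximum $(s,t)$-flow $f$ to the edges of $G \setminus {\mathcal A}$ is itself a maximum $(s,t)$-flow of $G \setminus {\mathcal A}$. Indeed, by the second assertion every anchor edge carries zero flow under $f$, so deleting the anchor edges removes only edges with $f(e)=0$; the restricted assignment is therefore feasible in $G \setminus {\mathcal A}$ and still has value $\lambda$, which by Lemma \ref{lem : min+1 as 1-transversal in G minus F}(1) is exactly the maximum $(s,t)$-flow value of $G \setminus {\mathcal A}$.

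Once this preliminary is in place, the DAG ${\mathcal D}({\mathcal A}) = {\mathcal D}_{PQ}(G \setminus {\mathcal A})$ can legitimately be viewed as being constructed from the residual graph of $G \setminus {\mathcal A}$ under the specific maximum flow $f|_{G \setminus {\mathcal A}}$. Lemma \ref{lem : every edge has flow in Dpq} then applies verbatim: any edge $(x,y)$ of this DAG corresponds to an undirected edge $(x,y)$ of $G \setminus {\mathcal A}$ (and hence of $G$) that carries flow in the direction $y \to x$ under $f$. Since $G$ is simple and $f$ is integer-valued, the flow on this edge must equal $1$, which gives $f(e) = 1$ as required.

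I do not foresee any real obstacle here; the entire argument is essentially a matter of verifying that the construction of ${\mathcal D}({\mathcal A})$ is compatible with the restricted flow $f|_{G \setminus {\mathcal A}}$, and this compatibility is exactly what the preliminary step establishes. The only mildly delicate point is remembering that, for undirected graphs, the residual representation of an edge carrying no flow is a pair of antiparallel arcs forming a $2$-cycle — which is why such an edge cannot survive to become an edge of ${\mathcal D}_{PQ}$, and hence why every edge of ${\mathcal D}({\mathcal A})$ must in fact be flow-carrying in the direction claimed.
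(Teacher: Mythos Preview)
Your proposal is correct and matches the paper's own justification essentially one-to-one: the paper declares the lemma immediate from the anchor edge definition (Definition \ref{def : anchor edge}), Lemma \ref{lem : every edge has flow in Dpq}, and Theorem \ref{thm: structure for min+1}, which is precisely the route you take --- you just make explicit the bridging observation (that $f|_{G\setminus{\mathcal A}}$ is a maximum $(s,t)$-flow of $G\setminus{\mathcal A}$, via Lemma \ref{lem : min+1 as 1-transversal in G minus F}) that the paper leaves implicit.
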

We consider only the failure of edges, the insertion case is given in Appendix \ref{sec: handling dual edge insertion} for better readability. 
Suppose the two failed edges are $e_1=(x_1,y_1)$ and $e_2=(x_2,y_2)$. The failure of any edge carrying no flow does not affect the capacity of $(s,t)$-mincut as $f$ remains unchanged. Henceforth, without loss of generality, assume that $e_1$ always carries flow in the direction $x_1$ to $y_1$. The analysis is along a similar line if $e_1$ carries flow in the direction $y_1$ to $x_1$. 

\subsection{An ${\mathcal O}(m)$ Space Data Structure and ${\mathcal O}(m)$ Query Time} \label{sec: dual edge using residual graph}

The failed edge $e_1$ has been carrying flow of value $1$. So, we first reduce the value of $(s,t)$-flow in $G$ by $1$ as follows. It follows from the construction of $G^f$ that there is at least one $(t,s)$-path $P_1$ in $G^f$ satisfying the following. Path $P_1$ contains the residual edge $(y_1,x_1)$ for edge $e_1$, and for every edge $(v,u)$ in $P_1$, the corresponding edge $(u,v)$ in $G$ carries flow in the direction $u$ to $v$. 
Path $P_1$ can be obtained in ${\mathcal O}(m)$ time by using any traversal algorithm. 
In graph $G\setminus \{e_1\}$, we now obtain a new $(s,t)$-flow $f_1$ of value $\lambda-1$ from $f$. 
For this purpose, we define the following operation, which is also used later for our analysis of breaking the quadratic barrier for dual edge sensitivity oracle for $(s,t)$-mincut in Appendix \ref{sec : handling dual edge in subquadratic space}.\\


\noindent
\textsc{Update\_Path}$(H,P)$: Let $H$ be a directed graph and $P$ be any path in $H$. For every edge $e''=(p,q)$ in $P$, this function \textsc{Update\_Path} removes $e''$ from $H$ and adds an edge $(q,p)$ in $H$.\\

We first update $G^f$ using \textsc{Update\_Path}$(G^f,P_1)$; and then, remove from the resulting graph the pair of edges $(x_1,y_1)$ and $(y_1,x_1)$ corresponding to the failed edge $e_1$ in $G$.
Let $G^{f_1}$ be the obtained graph. Observe that $G^{f_1}$ is the residual graph of $G\setminus \{e_1\}$ for an $(s,t)$-flow $f_1$ of value $\lambda-1$. 
In $G\setminus \{e_1\}$, we want to determine whether $f_1$ is a maximum $(s,t)$-flow. 
To determine this, we use the following lemma, which was established by Ford and Fulkerson \cite{ford_fulkerson_1956}.
\begin{lemma}[\cite{ford_fulkerson_1956}] \label{lem : ford fulkerson augmenting paths}
    For any graph $\mathcal{G}$ with an $(s,t)$-flow $f'$, $f'$ is a maximum $(s,t)$-flow if and only if there is no $(s,t)$-path in the residual graph $\mathcal{G}^{f'}$.
\end{lemma}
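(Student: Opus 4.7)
The plan is to establish both directions by the classical augmenting-path argument. First I would handle the easy direction: assume $P$ is an $(s,t)$-path in $\mathcal{G}^{f'}$, and let $\delta>0$ be the minimum residual capacity along the edges of $P$. I would construct a new $(s,t)$-flow $f''$ from $f'$ by pushing $\delta$ additional units along $P$, that is, on each edge $(u,v)$ of $P$ either increasing the flow on the corresponding forward edge $(u,v)$ of $\mathcal{G}$ by $\delta$ or decreasing the flow on the corresponding backward edge $(v,u)$ of $\mathcal{G}$ by $\delta$. Verifying capacity constraints is immediate from the choice of $\delta$, and conservation at each internal vertex of $P$ follows because each such vertex has exactly one incoming and one outgoing edge of $P$, whose updates cancel. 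The value of $f''$ exceeds that of $f'$ by $\delta>0$, contradicting maximality.

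For the converse, I would prove the contrapositive in a cut-based form: if $\mathcal{G}^{f'}$ contains no $(s,t)$-path, then $f'$ is already maximum. Let $S$ be the set of vertices reachable from $s$ in $\mathcal{G}^{f'}$; by hypothesis $t\notin S$, so $C:=S$ is an $(s,t)$-cut in $\mathcal{G}$. For every edge $e=(u,v)\in E(\mathcal{G})$ with $u\in S, v\notin S$, the forward residual edge $(u,v)$ must be absent from $\mathcal{G}^{f'}$ (otherwise $v$ would be reachable from $s$), which by construction of the residual graph forces $f'(e)=w(e)$. Symmetrically, for every edge $e'=(v,u)\in E(\mathcal{G})$ with $u\in S, v\notin S$, the backward residual edge $(u,v)$ is absent, forcing $f'(e')=0$. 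Consequently $f'_{out}(C)=c(C)$ and $f'_{in}(C)=0$, and by conservation of flow (Lemma \ref{lem : flow conservation}) the value of $f'$ equals $c(C)$. Since weak duality (again immediate from Lemma \ref{lem : flow conservation}) bounds every $(s,t)$-flow value above by every $(s,t)$-cut capacity, $f'$ attains that upper bound and is therefore maximum.

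The main obstacle I expect is the bookkeeping around the residual graph convention for undirected multi-graphs introduced earlier in the preliminaries: for an undirected edge one must argue that the \emph{pair} of residual arcs $(u,v)$ and $(v,u)$ both encode the capacity constraints correctly, so that the reachability-based cut argument still yields $f'_{out}(C)=c(C)$ and $f'_{in}(C)=0$. Once this mapping between residual arcs and the $w(e)$ versus $0$ conditions on $E(C)$ is spelled out, the remainder is the one-line computation of weak duality from Lemma \ref{lem : flow conservation}, and no subtlety remains.
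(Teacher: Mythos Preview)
Your argument is the standard textbook proof of the augmenting-path characterization of maximum flow, and it is correct as written (including your remark about the undirected residual convention, which the preliminaries spell out). Note, however, that the paper does not actually prove this lemma: it is stated with a citation to Ford and Fulkerson \cite{ford_fulkerson_1956} and used as a black box, so there is no ``paper's own proof'' to compare against.
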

It follows from Lemma \ref{lem : ford fulkerson augmenting paths} that in graph $G^{f_1}$, we need to verify whether there is any path $P'$ from $s$ to $t$. Note that at most one $(s,t)$-path may exist; otherwise, it can be shown that the capacity of $(s,t)$-mincut in $G$ is strictly greater than $\lambda$. So, in $G\setminus \{e_1\}$, if $P'$ exists, then, by Lemma \ref{lem : ford fulkerson augmenting paths}, the capacity of $(s,t)$-mincut is $\lambda$; otherwise the capacity of $(s,t)$-mincut is $\lambda-1$. We update path $P'$ in $G^{f_1}$ to obtain a graph $G^{f_2}$ by following the construction of residual graph (refer to Section \ref{sec : preliminaries}); otherwise, $G^{f_1}$ is the same as $G^{f_2}$ if $P'$ does not exist. 
It follows that graph $G^{f_2}$ is the residual graph for a maximum $(s,t)$-flow $f_2$ (of value either $\lambda-1$ or $\lambda$) in graph $G\setminus \{e_1\}$.

We now want to verify whether the failure of $e_2$ reduces the capacity of $(s,t)$-mincut in $G\setminus \{e_1\}$. Interestingly, the following lemma ensures that 
DAG ${\mathcal D}_{PQ}(G\setminus \{e_1\})$ is sufficient for this purpose.
\begin{lemma} [\cite{DBLP:journals/mp/PicardQ80}, \cite{DBLP:conf/soda/BaswanaP22}, and Lemma 4.4 in \cite{baswana2023minimum+}] \label{lem : single edge failure}
     For any undirected multi-graph ${\mathcal G}$, upon failure of any edge $e$ in ${\mathcal G}$, the capacity of $(s,t)$-mincut reduces by $1$ if and only if both endpoints of $e$ are mapped to different nodes of ${\mathcal D}_{PQ}({\mathcal G})$. Moreover, if $(s,t)$-mincut reduces, then, for any topological ordering $\tau$ of the nodes of ${\mathcal D}_{PQ}({\mathcal G})$, the suffix of $\tau$ containing exactly one endpoint of $e$ defines an $(s,t)$-mincut after the failure of edge $e$.
\end{lemma}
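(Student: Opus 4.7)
The plan is to derive both directions of the biconditional and the ``moreover'' claim directly from the structural properties of ${\mathcal D}_{PQ}({\mathcal G})$, namely Theorem \ref{thm : dag for st mincut and characterization} and Lemma \ref{lem :mapping of nodes in Dpq}, combined with the standard observation that deleting a single edge in an undirected multi-graph can decrease the $(s,t)$-mincut capacity by at most $1$. I will handle each direction of the equivalence separately and then exhibit an explicit witness cut for the ``moreover'' part.

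For the backward direction, assume both endpoints $u,v$ of $e$ are mapped to distinct nodes of ${\mathcal D}_{PQ}({\mathcal G})$. By Lemma \ref{lem :mapping of nodes in Dpq}, there is an $(s,t)$-mincut $C$ of ${\mathcal G}$ separating $u$ and $v$, so $e\in E(C)$. Since ${\mathcal G}$ is undirected, $e$ contributes exactly $1$ to $c(C)$, so $c(C,{\mathcal G}\setminus\{e\}) = \lambda-1$, which upper bounds the new $(s,t)$-mincut capacity; together with the trivial lower bound, this capacity equals $\lambda-1$. For the forward direction, suppose that after failing $e$ the $(s,t)$-mincut capacity drops to $\lambda-1$, witnessed by an $(s,t)$-cut $C'$. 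Then $c(C',{\mathcal G}) \ge \lambda$ (since $C'$ is an $(s,t)$-cut in ${\mathcal G}$) and $c(C',{\mathcal G}) \le c(C',{\mathcal G}\setminus\{e\}) + 1 = \lambda$, so $C'$ is an $(s,t)$-mincut of ${\mathcal G}$ whose edge-set contains $e$. Hence the endpoints of $e$ are separated by some $(s,t)$-mincut, and Lemma \ref{lem :mapping of nodes in Dpq} forces $u$ and $v$ into different nodes of ${\mathcal D}_{PQ}({\mathcal G})$.

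For the ``moreover'' part, I will realize the required $(s,t)$-mincut as a $1$-transversal cut of the DAG. Given any topological ordering $\tau$ of the nodes of ${\mathcal D}_{PQ}({\mathcal G})$ (with $\mathbb{S}$ first and $\mathbb{T}$ last) and any suffix $S$ of $\tau$ containing exactly one endpoint of $e$, the set of vertices of ${\mathcal G}$ mapped to nodes in $S$ is an up-set of the DAG; its complement therefore defines a $1$-transversal $(s,t)$-cut $C$. By Theorem \ref{thm : dag for st mincut and characterization}, $C$ is an $(s,t)$-mincut of ${\mathcal G}$, and by the choice of $S$ the edge $e$ lies in $E(C)$, so $c(C,{\mathcal G}\setminus\{e\}) = \lambda-1$ certifies the new mincut capacity. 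The existence of such a suffix follows from $\mu_u \neq \mu_v$: letting $\mu_u$ precede $\mu_v$ in $\tau$ without loss of generality, the suffix beginning at $\mu_v$ contains $v$ but not $u$. The only real obstacle is bookkeeping around these DAG conventions (orientation of $\tau$, source/sink side of a suffix, and the fact that down-sets/up-sets of a DAG are precisely the $1$-transversal cuts); the rest is a direct application of the cited tools.
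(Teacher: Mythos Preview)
Your argument is correct in substance. Note, however, that the paper does not actually prove this lemma: it is stated with citations to \cite{DBLP:journals/mp/PicardQ80}, \cite{DBLP:conf/soda/BaswanaP22}, and \cite{baswana2023minimum+} and used as a black box, so there is no ``paper's own proof'' to compare against. Your derivation from Theorem~\ref{thm : dag for st mincut and characterization} and Lemma~\ref{lem :mapping of nodes in Dpq} is exactly the natural way to recover the result from first principles within the paper's framework.

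One small bookkeeping slip: in the paper's conventions (see the construction of $G_\mu$ in Section~\ref{sec : extension to egenral graphs} and Algorithm~\ref{alg : second mincut using global mincut}), the topological ordering $\tau$ of ${\mathcal D}_{PQ}$ begins with $\mathbb{T}$ and ends with $\mathbb{S}$, not the other way around; accordingly, a \emph{suffix} of $\tau$ is the $s$-side set and is itself the $(s,t)$-cut, without taking a complement. Your parenthetical ``with $\mathbb{S}$ first and $\mathbb{T}$ last'' reverses this, which is why you needed to pass to the complement. This does not affect the validity of the argument, but you should align the orientation with the paper's to match the lemma's phrasing (``the suffix \ldots\ defines an $(s,t)$-mincut'').
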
 
By using the residual graph $G^{f_2}$, we construct ${\mathcal D}_{PQ}(G\setminus \{e_1\})$ in ${\mathcal O}(m)$ time by following the construction given in Section \ref{sec : construction of Dpq}. It follows from Lemma \ref{lem : single edge failure} that we can determine whether $e_2$ contributes to an $(s,t)$-mincut in $G\setminus \{e_1\}$ in ${\mathcal O}(1)$ time. 
Suppose $e_2$ contributes to an $(s,t)$-mincut in $G\setminus \{e_1\}$. If the $(s,t)$-mincut capacity in $G\setminus \{e_1\}$ is $\lambda-1$ (likewise, $\lambda$), then after the failure of two edges $e_1,e_2$, the capacity of $(s,t)$-mincut in $G$ is $\lambda-2$ (likewise, $\lambda-1$). 
In a similar way, we can determine the capacity of $(s,t)$-mincut in $G$ after the failure of two edges $e_1,e_2$ if $e_2$ does not contribute to any $(s,t)$-mincut in $G\setminus \{e_1\}$.  Moreover, again by Lemma \ref{lem : single edge failure}, reporting an $(s,t)$-mincut $C$ in $G\setminus \{e_1\}$ after the failure of edge $e_2$ requires ${\mathcal O}(m)$ time. It is easy to observe that $C$ is also an $(s,t)$-mincut in $G$ after the failure of two edges $e_1,e_2$. 
It leads to the following lemma. 
\begin{lemma} \label{lem : dual edge sensitivity using residual graph}
    There is an ${\mathcal O}(m)$ space data structure that, after the failure of any given pair of query edges, can report an $(s,t)$-mincut and its capacity in ${\mathcal O}(m)$ time.
\end{lemma}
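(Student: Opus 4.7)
The plan is to use the residual graph $G^f$ itself as the data structure, which occupies ${\mathcal O}(m)$ space. The query algorithm handles the two failures sequentially: first $e_1=(x_1,y_1)$, then $e_2=(x_2,y_2)$ in $G\setminus\{e_1\}$. First, I would dispense with the trivial case: if an edge carries no flow in $f$, its removal preserves $f$, so the residual graph and the current $(s,t)$-mincut value are both unchanged. So assume without loss of generality that $e_1$ carries one unit of flow from $x_1$ to $y_1$.

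By the construction of the (undirected) residual graph recalled in Section \ref{sec : preliminaries}, there exists a $(t,s)$-path $P_1$ in $G^f$ passing through the residual edge $(y_1,x_1)$, with each edge of $P_1$ corresponding to a flow-carrying edge of $G$; such a path can be located by BFS/DFS in ${\mathcal O}(m)$ time. I would then invoke \textsc{Update\_Path}$(G^f,P_1)$ to cancel one unit of flow along $P_1$, and delete from the resulting graph the pair of edges associated with $e_1$. Call the resulting graph $G^{f_1}$; it is precisely the residual graph of $G\setminus\{e_1\}$ for an $(s,t)$-flow $f_1$ of value $\lambda-1$.

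Next I would decide, via Lemma \ref{lem : ford fulkerson augmenting paths}, whether $f_1$ is already maximum in $G\setminus\{e_1\}$ by searching in $G^{f_1}$ for an $(s,t)$-path $P'$. A small observation to nail down is that at most one such augmenting path can exist; indeed, two edge-disjoint augmenting paths in $G^{f_1}$ would combine with $f_1$ to yield an $(s,t)$-flow of value at least $\lambda+1$ in $G$, contradicting the maximality of $f$. If $P'$ exists, I apply \textsc{Update\_Path}$(G^{f_1},P')$ to obtain $G^{f_2}$, the residual graph for a maximum $(s,t)$-flow of value $\lambda$ in $G\setminus\{e_1\}$; otherwise set $G^{f_2}=G^{f_1}$ and the maximum flow has value $\lambda-1$. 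Using $G^{f_2}$, I construct ${\mathcal D}_{PQ}(G\setminus\{e_1\})$ in ${\mathcal O}(m)$ time following Section \ref{sec : construction of Dpq}, and compute a topological ordering of its nodes.

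The failure of $e_2$ is then handled by Lemma \ref{lem : single edge failure}: if both endpoints of $e_2$ lie in the same node of ${\mathcal D}_{PQ}(G\setminus\{e_1\})$, the $(s,t)$-mincut value stays at the current value (either $\lambda-1$ or $\lambda$), and I can output any prefix/suffix cut of the topological ordering as the witness $(s,t)$-mincut of $G\setminus\{e_1,e_2\}$; otherwise the capacity drops by exactly one, and the suffix of the topological ordering containing exactly one endpoint of $e_2$ is a witness $(s,t)$-mincut, which I emit together with its edge-set in ${\mathcal O}(m)$ time. Combining the two phases determines the capacity of the $(s,t)$-mincut of $G\setminus\{e_1,e_2\}$ and produces an explicit cut, all in ${\mathcal O}(m)$ time on top of ${\mathcal O}(m)$ preprocessed space. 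The only conceptual obstacle I anticipate is the uniqueness-of-augmenting-path argument and carefully tracking the correspondence between edges of $G$ and their (possibly paired) residual copies in $G^f$ under the undirected convention; everything else is a direct chaining of Ford--Fulkerson augmentation with the Picard--Queyranne single-edge sensitivity machinery.
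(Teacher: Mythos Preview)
Your proposal is correct and follows essentially the same approach as the paper: store $G^f$, cancel one unit of flow along a $(t,s)$-path through $(y_1,x_1)$, test for a single augmenting path to obtain a maximum $(s,t)$-flow in $G\setminus\{e_1\}$, build ${\mathcal D}_{PQ}(G\setminus\{e_1\})$, and then invoke Lemma~\ref{lem : single edge failure} for $e_2$. Your justification of the at-most-one-augmenting-path claim is slightly more explicit than the paper's, but the overall argument is the same.
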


\subsection{Breaking Quadratic Barrier} \label{sec : handling dual edge in subquadratic space}
 We now show that DAG ${\mathcal D}({\mathcal A})$ and set of edges ${\mathcal A}$ stated in Theorem \ref{thm: structure for min+1} act as a dual edge sensitivity oracle for $(s,t)$-mincut and occupies subquadratic space for simple graphs. 
It follows from the proof of Lemma \ref{lem : dual edge sensitivity using residual graph} that, given residual graph $G^f$, the main objective is to design graph ${\mathcal D}_{PQ}(G\setminus \{e_1\})$. 
In order to obtain $\mathcal{D}_{PQ}(G\setminus \{e_1\})$, the residual graph $G^f$ plays a crucial role in computing a maximum $(s,t)$-flow in graph $G\setminus \{e_1\}$.
However, the \textbf{main challenge} arises in 
computing a maximum $(s,t)$-flow in graph $G\setminus\{e_1\}$ using only DAG ${\mathcal D}({\mathcal A})$ and set of edges ${\mathcal A}$ from Theorem \ref{thm: structure for min+1}. 
This is because of the following reason. 
${\mathcal D}({\mathcal A})$ initially occupies ${\mathcal O}(\min\{m,n\sqrt{\lambda}\})$ space. But, it seems quite possible that after the removal of edge $e_1$, we need the reachability information among the vertices that are mapped to single nodes in ${\mathcal D}({\mathcal A})$. This would blow up the space, as well as time, to arrive at ${\mathcal D}_{PQ}(G\setminus \{e_1\})$; 
and hence, it defeats our objective of designing a subquadratic space dual edge sensitivity oracle. 
Interestingly, we show, using  Lemma \ref{lem : mapping in Df} and structural properties of ${\mathcal D}({\mathcal A})$, that such event never occurs; and ${\mathcal D}({\mathcal A})\cup {\mathcal A}$ is sufficient to design ${\mathcal D}_{PQ}(G\setminus \{e_1\})$ using ${\mathcal O}(\min\{m,n\sqrt{\lambda}\})$ space and ${\mathcal O}(\min\{m,n\sqrt{\lambda}\})$ time.

An edge $e$ is said to be mapped to a node $\mu$ in ${\mathcal D}({\mathcal A})$ if both endpoints of $e$ are mapped to $\mu$. It follows from Lemma \ref{lem : mapping in Df} and Lemma \ref{lem : every edge has flow in DF} that if both failed edges are mapped to nodes of ${\mathcal D}({\mathcal A})$ or belong to set ${\mathcal A}$, then the capacity of $(s,t)$-mincut remains unchanged.
Therefore, without loss of generality, we assume that endpoints of edge $e_1$ are mapped to different nodes of ${\mathcal D}({\mathcal A})$.

\subsubsection*{Construction of ${\mathcal D}_{PQ}(G\setminus \{e_1\})$ using Structure ${\mathcal D}({\mathcal A})$ and Edge Set ${\mathcal A}$}
Let us determine whether failure of $e_1$ reduces the capacity of $(s,t)$-mincut in $G$. 
It follows from Lemma \ref{lem : every edge has flow in DF} that $e_1$ must carry flow. So, there exists a $(t,s)$-path $P_r$ in the residual graph containing the edge $(y_1,x_1)$ such that for every edge in $P_r$, the corresponding edge in $G$ is carrying flow.
Interestingly, the following lemma 
ensures that there is also a path $P$ in ${\mathcal D}({\mathcal A})\cup {\mathcal A}$ such that $P$ is a quotient path of $P_r$.  

\begin{lemma} \label{lem : mapping of paths in Df}
    For any pair of vertices $u,v$, let $\mu$ and $\nu$ be the nodes to which $u$ and $v$ are mapped in $\mathcal{D}(\mathcal{A})$. There exists an $(u,v)$-path $Q_1$ in $G^f$ if and only if there exists a $(\mu,\nu)$-path $Q_2$ in ${\mathcal D}({\mathcal A})\cup {\mathcal A}$. Moreover, $Q_2$ is a quotient path of $Q_1$.
\end{lemma}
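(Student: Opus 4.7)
The plan is to establish the lemma by first making explicit the sense in which $\mathcal{D}(\mathcal{A})\cup\mathcal{A}$ is a quotient of $G^f$. By construction, $\mathcal{D}(\mathcal{A})=\mathcal{D}_{PQ}(G\setminus\mathcal{A})$ is obtained from the residual graph $(G\setminus\mathcal{A})^f$ by contracting each strongly connected component into a single node. Every non-anchor edge of $G$ produces the same residual edges in $(G\setminus\mathcal{A})^f$ and in $G^f$, so the only edges of $G^f$ not accounted for in this contraction are those coming from anchor edges. Each anchor edge $e=(x,y)$ has $f(e)=0$ (by Lemma~\ref{lem : every edge has flow in DF}) and therefore contributes the pair $(x,y),(y,x)$ to $G^f$; once we contract the SCCs of $(G\setminus\mathcal{A})^f$, these become precisely the edges of $\mathcal{A}$ that are attached to $\mathcal{D}(\mathcal{A})$. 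Thus the node-map $\pi:V\to V(\mathcal{D}(\mathcal{A}))$ sending each vertex to its containing SCC is a graph homomorphism from $G^f$ onto $\mathcal{D}(\mathcal{A})\cup\mathcal{A}$ whose image on every edge is either a single node (if the endpoints sit in the same SCC of $(G\setminus\mathcal{A})^f$) or an edge of $\mathcal{D}(\mathcal{A})\cup\mathcal{A}$.

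For the forward direction, given a $(u,v)$-path $Q_1=w_0w_1\cdots w_k$ in $G^f$, apply $\pi$ to obtain the sequence $\pi(w_0)\pi(w_1)\cdots\pi(w_k)$ in $\mathcal{D}(\mathcal{A})\cup\mathcal{A}$. By the previous paragraph, consecutive entries are either equal or joined by an edge, so this is a walk from $\mu$ to $\nu$. Contracting runs of repeated nodes yields a walk of distinct consecutive symbols, and then short-cutting any repeated nodes in the usual way produces a $(\mu,\nu)$-path $Q_2$. This $Q_2$ was obtained from $Q_1$ by contracting a set of edges of $Q_1$, which is exactly the definition of a quotient path, proving the ``moreover'' statement as well.

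For the converse, suppose $Q_2=\eta_0\eta_1\cdots\eta_k$ is a $(\mu,\nu)$-path in $\mathcal{D}(\mathcal{A})\cup\mathcal{A}$. For each edge $(\eta_i,\eta_{i+1})$ of $Q_2$, pick a representative edge $(a_i,b_i)$ in $G^f$ with $a_i$ in the SCC corresponding to $\eta_i$ and $b_i$ in the SCC corresponding to $\eta_{i+1}$; such a representative exists either because the edge was inherited from $(G\setminus\mathcal{A})^f$ or because it comes from an anchor edge in $\mathcal{A}$. Now I must connect these representatives internally: within each node $\eta_i$ (for $1\le i\le k-1$) I need a $(b_{i-1},a_i)$-path in $G^f$, and additionally a $(u,a_0)$-path inside $\mu$ and a $(b_{k-1},v)$-path inside $\nu$. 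Each such path exists because $\eta_i$ corresponds to an SCC of $(G\setminus\mathcal{A})^f$, so there is a directed path between any two of its vertices in $(G\setminus\mathcal{A})^f$, and hence in $G^f$, which only has more edges. Concatenating these internal paths with the representative edges produces a $(u,v)$-walk in $G^f$; extracting a simple path from it yields the desired $Q_1$.

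The main subtlety is making sure that every edge one might want to use in the lift actually exists in $G^f$ and not merely in $G$: this is why it is essential that anchor edges contribute both directed copies to $G^f$ (so either orientation across an edge of $\mathcal{A}$ is available), and why the within-node routing must be done in $(G\setminus\mathcal{A})^f$ rather than in $G^f\setminus\mathcal{A}$ where the SCC property is guaranteed. Once this bookkeeping is in place, the correspondence between paths in $G^f$ and quotient paths in $\mathcal{D}(\mathcal{A})\cup\mathcal{A}$ is the standard one for quotient graphs, and the lemma follows.
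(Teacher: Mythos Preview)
Your proof is correct and takes the same approach as the paper---the forward direction via the quotient-graph projection, and the converse by lifting through the SCCs that constitute the nodes of $\mathcal{D}(\mathcal{A})$---only spelled out in considerably more detail than the paper's two-sentence argument. One small imprecision worth fixing: after short-cutting repeated nodes in the forward direction, the resulting $Q_2$ need not be obtained from the \emph{original} $Q_1$ by edge contractions alone (short-cutting deletes a segment rather than contracting it), so it is not literally a quotient path of that particular $Q_1$; the ``moreover'' clause is instead secured by your converse construction, which for any given $Q_2$ produces a lift $Q_1$ of which $Q_2$ is genuinely a quotient path.
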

\begin{proof}
    The proof of the forward direction is immediate since ${\mathcal D}({\mathcal A})\cup {\mathcal A}$ is a quotient graph of residual graph $G^f$. Let us prove the converse part. Suppose there is an $(\mu,\nu)$-path in ${\mathcal D}({\mathcal A})\cup {\mathcal A}$. Since $G$ is undirected, by construction, every node of ${\mathcal D}({\mathcal A})$ corresponds to an SCC in $G^f$. Moreover, since ${\mathcal D}({\mathcal A})\cup {\mathcal A}$ is a quotient graph of residual graph $G^f$, there is a path in $G^f$ from every vertex that is mapped to $\mu$ to every vertex that is mapped to $\nu$. 
\end{proof}
Now, to obtain the $(s,t)$-mincut capacity in $G\setminus \{e_1\}$, we need to first reduce the value of $(s,t)$-flow in $G$ by $1$ using the path $P_r$. As discussed in Appendix \ref{sec: dual edge using residual graph}, this can be achieved by performing
\textsc{Update\_Path}$(G^f,P_r)$, and then, removing the pair of edges $(x_1,y_1)$ and $(y_1,x_1)$ from the resulting graph corresponding to the failed edge $e_1$ in $G$.
 Let $f_1$ be the obtained $(s,t)$-flow in graph $G\setminus \{e_1\}$ and let $G^{f_1}$ be the corresponding residual graph after this operation. 
It follows immediately from Lemma \ref{lem : mapping of paths in Df} that we can achieve the same objective by performing
\textsc{Update\_Path}$({\mathcal D}({\mathcal A})\cup {\mathcal A},P)$, and then, removing the pair of edges $(x_1,y_1)$ and $(y_1,x_1)$ from the resulting graph corresponding to the failed edge $e_1$ in $G$.
Let $D_1$ be the resulting graph from ${\mathcal D}({\mathcal A})\cup {\mathcal A}$. 
The following lemma immediately follows from the construction of $D_1$ and Lemma \ref{lem : mapping of paths in Df}; and it states the relationship between graph $D_1$ and $G^{f_1}$.

\begin{lemma} \label{lem : d1 graph property}
    Graph $D_1$ satisfies the following two properties. 
    \begin{enumerate}
        \item $D_1$ is a quotient graph of $G^{f_1}$, where $f_1$ is the $(s,t)$-flow in $G\setminus \{e_1\}$ after applying operation \textsc{Update\_Path}$(G^f,P_r)$ and removal of edges $(x_1,y_1)$, $(y_1,x_1)$ from the resulting graph.
        \item A pair of vertices $u,v$ are mapped to the same node in ${\mathcal D}({\mathcal A})$ if and only if $u,v$ are mapped to the same node in $D_1$.
    \end{enumerate}
\end{lemma}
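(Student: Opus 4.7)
The plan is to exhibit the \emph{same} partition of $V$ that witnesses $\mathcal{D}(\mathcal{A}) \cup \mathcal{A}$ as a quotient graph of $G^f$ as also witnessing $D_1$ as a quotient graph of $G^{f_1}$, by arguing that both operations used to build $D_1$ and $G^{f_1}$ from their predecessors commute with the quotient map. Concretely, I would introduce the vertex map $\pi: V \to V(\mathcal{D}(\mathcal{A}))$ sending each $v$ to the node of $\mathcal{D}(\mathcal{A})$ containing it. By the construction in Section \ref{sec : construction of Dpq} together with the definition of anchor edges, every edge $(u,v)$ of $G^f$ either lies inside a single node of $\pi$ (and is contracted in the quotient) or crosses node boundaries (and appears as $(\pi(u), \pi(v))$ in $\mathcal{D}(\mathcal{A}) \cup \mathcal{A}$). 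Crucially, by Lemma \ref{lem : mapping of paths in Df}, the path $P$ in $\mathcal{D}(\mathcal{A}) \cup \mathcal{A}$ is precisely the image of $P_r$ under $\pi$: each cross-node edge of $P_r$ matches, in order, a unique edge of $P$, while within-node edges of $P_r$ collapse in $P$.

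Given this matching, the analysis of the two operations becomes routine. Reversing a within-node edge of $P_r$ has no effect at the quotient level, while reversing a cross-node edge $(u,v)$ of $P_r$ in $G^f$ corresponds exactly to reversing its image $(\pi(u), \pi(v))$ in $P$ on the quotient side. Hence after \textsc{Update\_Path} is applied on both sides using $P_r$ and $P$ respectively, the map $\pi$ still witnesses the quotient relationship between the two resulting graphs. For the subsequent deletion, since we assumed $x_1$ and $y_1$ lie in distinct nodes of $\mathcal{D}(\mathcal{A})$, both $(x_1, y_1)$ and $(y_1, x_1)$ are cross-node edges, so their removal on both sides preserves the quotient. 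Combining the two observations yields part (1).

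Part (2) would then be immediate: neither \textsc{Update\_Path} (which only flips edge orientations) nor the deletion (which only removes specified edges) merges or splits any block of $\pi$, so the map $\pi$ itself is unchanged. Therefore $u,v$ lie in the same node of $\mathcal{D}(\mathcal{A})$ iff $\pi(u) = \pi(v)$ iff $u,v$ lie in the same node of $D_1$.

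The main subtlety I expect is bookkeeping for parallel cross-node edges: a single edge at the quotient level may correspond to several parallel edges of $G^f$, so I must insist that the quotient map track edge identities and not merely the existence of an edge between two nodes. With this convention, reversing or deleting a single edge at the $G^f$ level corresponds to reversing or deleting exactly one edge at the quotient level, and the commutativity argument above goes through cleanly.
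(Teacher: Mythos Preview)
Your proposal is correct and follows essentially the same approach as the paper. The paper's own justification is a single sentence—``immediately follows from the construction of $D_1$ and Lemma~\ref{lem : mapping of paths in Df}''—and what you have written is precisely the unpacking of that claim: the quotient map $\pi$ is unchanged by \textsc{Update\_Path} and by the cross-node edge deletion, so both properties follow. Your remark about tracking edge identities for parallel cross-node edges is a valid bookkeeping point that the paper leaves implicit.
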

We have now graph $G\setminus \{e_1\}$ and $(s,t)$-flow $f_1$ of value $\lambda-1$. The aim is now to determine using $D_1$ whether $(s,t)$-flow $f_1$ is a maximum $(s,t)$-flow in $G\setminus \{e_1\}$ or the value of $(s,t)$-flow can be increased by $1$. 
For this purpose, we verify whether an $(s,t)$-path $P'$ exists in $D_1$.  Lemma \ref{lem : d1 graph property}(2) ensures that the existence of path $P'$ can be determined in ${\mathcal O}(\min\{m,n\sqrt{\lambda}\})$ time. Suppose path $P'$ exists. By Lemma \ref{lem : d1 graph property}(1), the capacity of $(s,t)$-mincut is $\lambda$ in $G\setminus \{e_1\}$. 
It follows from Lemma \ref{lem : d1 graph property}(1), and the construction of $D_1$ and $G^{f_1}$, that the mapping stated in Lemma \ref{lem : mapping of paths in Df} remains intact between $D_1$ and $G^{f_1}$.
Hence, there is an $(s,t)$-path $P_r'$ in $G^{f_1}$ such that $P'$ is a quotient path of $P_r'$. Upon updating path $P_r'$ in $G^{f_1}$ by following the construction of residual graph 
(refer to Section \ref{sec : preliminaries}), let $f_2$ be the maximum $(s,t)$-flow of value $\lambda$ obtained in $G\setminus \{e_1\}$.
Let $G^{f_2}$ denote the corresponding residual graph. However, by Lemma \ref{lem : d1 graph property}, we update path $P'$ in $D_1$, instead of updating path $P_r'$ in $G^{f_1}$, by following the construction of residual graph. Let $D_2$ be the resulting graph.  
If $P'$ does not exist, then the capacity of $(s,t)$-mincut is $\lambda-1$ in $G\setminus \{e_1\}$, $D_1=D_2$, $f_1=f_2$, and $G^{f_1}=G^{f_2}$. Now, the following lemma states the properties of graph $D_2$. 
\begin{lemma} \label{lem : property of D2}
    Graph $D_2$ satisfies the following three properties. 
    \begin{enumerate}
        \item $D_2$ is a quotient graph of $G^{f_2}$, where $f_2$ is a maximum $(s,t)$-flow in $G\setminus \{e_1\}$.
        \item A pair of vertices $u,v$ are mapped to the same node in ${\mathcal D}({\mathcal A})$ if and only if $u,v$ are mapped to the same node in $D_2$.
         \item If a pair of vertices $u,v$ is mapped to the same node in $D_2$ then $u,v$ belong to the same SCC in $G^{f_2}$.
    \end{enumerate}
\end{lemma}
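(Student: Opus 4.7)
The plan is to establish the three properties in order, propagating the analogous properties of $D_1$ (given in Lemma \ref{lem : d1 graph property}) through the single path-update step that takes $D_1$ to $D_2$. I would case-split on whether an $(s,t)$-path $P'$ exists in $D_1$. If it does not, then $D_2 = D_1$ and $G^{f_2} = G^{f_1}$, so all three properties reduce to statements about $D_1$ and $G^{f_1}$ (with Property (3) still requiring the SCC argument outlined below). Otherwise \textsc{Update\_Path}$(D_1, P')$ has been applied, and a corresponding $(s,t)$-path $P_r'$ in $G^{f_1}$, whose existence I would extract by lifting $P'$ through the quotient in the style of Lemma \ref{lem : mapping of paths in Df}, has been used on the flow side to produce $G^{f_2}$. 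In either case, Property (2) is immediate: \textsc{Update\_Path} only reverses edge directions and never merges or splits nodes, so the vertex-to-node partition of $D_2$ coincides with that of $D_1$, which by Lemma \ref{lem : d1 graph property}(2) coincides with that of ${\mathcal D}({\mathcal A})$.

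For Property (1), the interesting case is when $P'$ exists. Here the lifting of $P'$ to $P_r'$ is the crux: because every edge of $P'$ is the quotient of a concrete edge of $G^{f_1}$, and because the segments internal to each visited node can be bridged using the strong connectivity inside $V(\mu)$ already guaranteed by $D_1$ being a quotient graph of $G^{f_1}$, the edges flipped by \textsc{Update\_Path}$(D_1, P')$ are in one-to-one quotient correspondence with the edges flipped by \textsc{Update\_Path}$(G^{f_1}, P_r')$. This preserves the quotient relationship, so $D_2$ is a quotient graph of $G^{f_2}$; and by Lemma \ref{lem : ford fulkerson augmenting paths}, $f_2$ is a maximum $(s,t)$-flow in $G \setminus \{e_1\}$ because the $(s,t)$-reachability in $G^{f_2}$ projects to $(s,t)$-reachability in $D_2$, which has been destroyed by the flip.

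The main obstacle is Property (3). I have to rule out the bad scenario flagged in the overview in Section \ref{sec : overview 3}: after the two successive path-updates taking $G^f$ to $G^{f_1}$ and then to $G^{f_2}$, strong connectivity among vertices that are grouped together in ${\mathcal D}({\mathcal A})$ could in principle break, so that some pair $u,v \in V(\mu)$ no longer lies in a common SCC of $G^{f_2}$. The lever I would use is Lemma \ref{lem : mapping in Df}: any $(s,t)$-cut separating such $u,v$ has capacity at least $\lambda + 2$. Combined with the fact that $P$ and $P'$ are simple paths in the DAG-like structure ${\mathcal D}({\mathcal A}) \cup {\mathcal A}$ (and in $D_1$), each enters and leaves any node $\mu$ at most once, so inside $V(\mu)$ the lifted paths $P_r$ and $P_r'$ contribute at most one reversed segment each. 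The $\lambda+2$ cut lower bound then leaves enough residual edges inside $V(\mu)$ to reroute a directed cycle through any prescribed pair of vertices of $V(\mu)$, yielding the required SCC containment in $G^{f_2}$. Making this rerouting argument rigorous, using the SCC structure of $(G \setminus {\mathcal A})^f$ that underlies ${\mathcal D}({\mathcal A})$ together with the fact that nodes in a DAG quotient are visited at most once by any simple path, is where the technical weight of the proof sits.
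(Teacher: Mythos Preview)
Your handling of Properties (1) and (2) matches the paper's: both follow from the construction of $D_2$ and Lemma \ref{lem : d1 graph property}, and your more explicit account of why \textsc{Update\_Path} preserves the quotient relationship is fine.

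For Property (3) you have identified the right lever, Lemma \ref{lem : mapping in Df}, but you apply it the hard way. You propose to trace the effect of the two path reversals on strong connectivity inside each $V(\mu)$, arguing that because $P$ and $P'$ visit each node at most once, at most two segments are reversed inside $V(\mu)$, and then that ``the $\lambda+2$ cut lower bound leaves enough residual edges inside $V(\mu)$ to reroute a directed cycle.'' That last step is not justified and is not easy to justify: the $\lambda+2$ bound is about capacities of global $(s,t)$-cuts in $G$, not about the local edge inventory inside $V(\mu)$ in the residual graph, and converting one into the other is precisely the missing argument. You yourself flag this as where ``the technical weight sits,'' but you do not discharge it.

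The paper avoids this path-tracking entirely with a three-line, flow-independent cut argument. By Property (2) and Lemma \ref{lem : mapping in Df}, any $(s,t)$-cut separating $u,v$ has capacity at least $\lambda+2$ in $G$, hence at least $\lambda+1$ in $G\setminus\{e_1\}$. Since the $(s,t)$-mincut capacity in $G\setminus\{e_1\}$ is at most $\lambda$, no $(s,t)$-mincut of $G\setminus\{e_1\}$ separates $u,v$. For undirected graphs this is exactly the criterion of Lemma \ref{lem :mapping of nodes in Dpq} for $u$ and $v$ to lie in the same SCC of the residual graph for \emph{any} maximum $(s,t)$-flow in $G\setminus\{e_1\}$, in particular $G^{f_2}$. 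Note that this argument never looks at which specific paths $P_r$, $P_r'$ were chosen; it uses only that $f_2$ is maximum. This is both shorter and strictly more robust than the rerouting approach you outline.
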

\begin{proof}
    The proof of properties (1) 
    and (2) follows directly from the construction of $D_2$ and Lemma \ref{lem : d1 graph property}. 
    We now establish the property (3) of $D_2$. It follows from Lemma \ref{lem : mapping in Df} that any $(s,t)$-cut that separates any pair of vertices mapped to the same node in ${\mathcal D}({\mathcal A})$ has capacity at least $\lambda+2$ in $G$. Removal of edge $e_1$ can reduce the capacity of any $(s,t)$-cut by at most $1$. Therefore, by property (2), the least capacity $(s,t)$-cut in $G\setminus \{e_1\}$ that separates any pair of vertices $u,v$ mapped to the same node in $D_2$ has capacity at least $\lambda+1$. Moreover, the capacity of $(s,t)$-mincut in $G\setminus \{e_1\}$ is at most $\lambda$. Since graph $G$ is undirected, by using the proof of Lemma \ref{lem :mapping of nodes in Dpq}, vertices $u,v$ must belong to the same SCC in $G^{f_2}$.
\end{proof}
We now obtain DAG ${\mathcal D}_{PQ}(G\setminus \{e_1\})$ from graph $D_2$. Let $\mathbb{D}$ be the graph obtained by contracting every SCC of $D_2$ into a single node. The following lemma establishes a relation between ${\mathcal D}_{PQ}(G\setminus \{e_1\})$ and $\mathbb{D}$.
\begin{lemma} \label{lem : mapping of node in Dpq and D}
     A pair of vertices $u,v$ are mapped to the same node in $\mathbb{D}$ if and only if $u,v$ are mapped to the same node in ${\mathcal D}_{PQ}(G\setminus \{e_1\})$.  
\end{lemma}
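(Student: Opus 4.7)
Since $G\setminus\{e_1\}$ is undirected, the construction of ${\mathcal D}_{PQ}$ described in Section \ref{sec : construction of Dpq} tells us that ${\mathcal D}_{PQ}(G\setminus\{e_1\})$ is precisely the graph obtained from $G^{f_2}$ by contracting every SCC into a single node. Consequently, by Lemma \ref{lem :mapping of nodes in Dpq} applied to $G\setminus\{e_1\}$, a pair $u,v$ is mapped to the same node of ${\mathcal D}_{PQ}(G\setminus\{e_1\})$ if and only if $u,v$ lie in the same SCC of $G^{f_2}$. The plan is therefore to reduce the claim to the equivalence: $u,v$ lie in the same SCC of $G^{f_2}$ if and only if the nodes of $D_2$ containing $u$ and $v$ lie in the same SCC of $D_2$ (which is exactly the condition for $u,v$ being mapped to the same node of $\mathbb{D}$).

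For the forward direction I would use Lemma \ref{lem : property of D2}(1): since $D_2$ is a quotient graph of $G^{f_2}$, any $(u,v)$-path and $(v,u)$-path in $G^{f_2}$ project down to a $(\mu(u),\mu(v))$-path and a $(\mu(v),\mu(u))$-path in $D_2$ (where $\mu(\cdot)$ denotes the node of $D_2$ containing a vertex), so $\mu(u)$ and $\mu(v)$ lie in the same SCC of $D_2$.

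For the converse direction, I would establish the following path-lifting statement, which is the analogue of Lemma \ref{lem : mapping of paths in Df} for the pair $(D_2, G^{f_2})$: for any two nodes $\mu_1,\mu_2$ of $D_2$ and any $x\in\mu_1$, $y\in\mu_2$, a $(\mu_1,\mu_2)$-path in $D_2$ lifts to an $(x,y)$-path in $G^{f_2}$. Here Lemma \ref{lem : property of D2}(3) is essential: it guarantees that the vertices inside a single node of $D_2$ are mutually reachable in $G^{f_2}$ (they lie in a common SCC). Combining this with the quotient property from Lemma \ref{lem : property of D2}(1), which promotes each edge of $D_2$ to an edge of $G^{f_2}$ between suitable representatives, a straightforward induction on the length of the $D_2$-path yields the lift. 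Applying this in both directions to a $(\mu(u),\mu(v))$-path and a $(\mu(v),\mu(u))$-path in $D_2$ produces an $(u,v)$-walk and a $(v,u)$-walk in $G^{f_2}$, so $u$ and $v$ lie in the same SCC of $G^{f_2}$, completing the equivalence.

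The main obstacle I anticipate is the converse direction: the quotient-graph property alone does not lift paths from $D_2$ to $G^{f_2}$, and the edges contributing to a single $D_2$-edge may have distinct endpoints. The crux is precisely that Lemma \ref{lem : property of D2}(3) rescues us by making every node of $D_2$ an internally strongly connected subset of $G^{f_2}$; this lets us stitch across the representative mismatches while traversing the lifted path. Once that lifting lemma is in hand, the claim follows in one line by chaining the equivalences.
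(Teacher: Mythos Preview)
Your proposal is correct and follows essentially the same approach as the paper: both directions hinge on Lemma~\ref{lem : property of D2}(1) to project paths from $G^{f_2}$ to $D_2$, and on Lemma~\ref{lem : property of D2}(3) together with the quotient property to lift paths from $D_2$ back to $G^{f_2}$. Your explicit path-lifting formulation is exactly what the paper compresses into the sentence ``a pair of vertices belonging to an SCC in $D_2$ also belongs to the same SCC in $G^{f_2}$.''
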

\begin{proof}
    Suppose there is a pair of vertices $u,v$ that are mapped to the same node in $\mathbb{D}$. It follows from the construction of $\mathbb{D}$ that $u,v$ are either mapped to the same node in $D_2$ or belong to an SCC in $D_2$. In the former case, by Lemma \ref{lem : property of D2}(3), $u,v$ belong to the same SCC in $G^{f_2}$. We now consider the latter case. It follows from Lemma \ref{lem : property of D2}(1) and (3) that a pair of vertices belonging to an SCC in $D_2$ also belongs to the same SCC in $G^{f_2}$. Hence, by construction of ${\mathcal D}_{PQ}(G\setminus \{e_1\})$, $u,v$ are mapped to the same node in ${\mathcal D}_{PQ}(G\setminus \{e_1\})$.  

    Lets us prove the converse part. Suppose $u$ and $v$ are mapped to the same node of ${\mathcal D}_{PQ}(G\setminus \{e_1\})$. Since $G$ is undirected, it implies from the construction of ${\mathcal D}_{PQ}(G\setminus \{e_1\})$ that $u$ and $v$ belong to the same SCC in the residual graph $G^{f_2}$ for maximum $(s,t)$-flow $f_2$ in $G\setminus \{e_1\}$. Therefore, by Lemma \ref{lem : property of D2}(1), $u$ and $v$ either belong to the same node in $D_2$ or in an SCC in $D_2$. Hence, by construction of ${\mathbb D}$, $u$ and $v$ are mapped to the same node in $\mathbb{D}$. 
\end{proof}
It follows from Lemma \ref{lem : property of D2}(1), Lemma \ref{lem : mapping of node in Dpq and D}, and the construction of ${\mathcal D}_{PQ}(G\setminus \{e_1\})$ that the set of edges of $\mathbb{D}$ is the same as the set of edges of ${\mathcal D}_{PQ}(G\setminus \{e_1\})$. Therefore, the following lemma is immediate.
\begin{lemma} \label{lem : D is same as Dpq}
    Graph $\mathbb{D}$ is the DAG ${\mathcal D}_{PQ}(G\setminus \{e_1\})$.
\end{lemma}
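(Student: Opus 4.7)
The plan is short because the two sentences immediately preceding the statement essentially do the work; I would formalize the equality by matching the node sets and the edge sets of $\mathbb{D}$ and $\mathcal{D}_{PQ}(G\setminus\{e_1\})$ separately.

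For the node sets, I would invoke Lemma \ref{lem : mapping of node in Dpq and D} directly: two vertices of $V$ lie in a common node of $\mathbb{D}$ if and only if they lie in a common node of $\mathcal{D}_{PQ}(G\setminus\{e_1\})$. Since both graphs arise from $V$ by identifying vertices, this bijection on equivalence classes uniquely determines the node set of each graph, so the node sets agree.

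For the edge sets, I would first recall that, because $G$ (and hence $G\setminus\{e_1\}$) is undirected, the construction in Section \ref{sec : construction of Dpq} shows that $\mathcal{D}_{PQ}(G\setminus\{e_1\})$ is obtained from the residual graph $G^{f_2}$ by contracting each SCC of $G^{f_2}$ into a single node. I would then argue that $\mathbb{D}$ is obtained from $G^{f_2}$ by exactly the same contraction, proceeding in two steps. Step one: by Lemma \ref{lem : property of D2}(1), $D_2$ is a quotient graph of $G^{f_2}$, so $\mathbb{D}$, which is obtained from $D_2$ by contracting its SCCs, is a further quotient of $G^{f_2}$. Step two: by Lemma \ref{lem : property of D2}(3), whenever a pair of vertices of $V$ is identified in $D_2$ they already lie in a common SCC of $G^{f_2}$, so no SCC of $G^{f_2}$ is split across multiple nodes of $D_2$; combined with the fact that contracting SCCs of $D_2$ identifies all SCC-equivalent vertices of $G^{f_2}$, this shows the composite quotient on $G^{f_2}$ is exactly the SCC-quotient. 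Therefore the edge multisets of $\mathbb{D}$ and $\mathcal{D}_{PQ}(G\setminus\{e_1\})$ coincide, and the two graphs are equal.

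The only mildly delicate point, which I expect to be the main step to write carefully, is verifying that ``contract SCCs of $D_2$'' applied to the quotient $D_2$ of $G^{f_2}$ yields the same partition on $V$ as ``contract SCCs of $G^{f_2}$'' applied directly. This is precisely where Lemma \ref{lem : property of D2}(3) is needed: without it, two vertices lying in a common SCC of $G^{f_2}$ might be separated across two different nodes of $D_2$ that themselves live in different SCCs of $D_2$, which would make $\mathbb{D}$ finer than $\mathcal{D}_{PQ}(G\setminus\{e_1\})$. Once this is ruled out, the equality of the two graphs is immediate.
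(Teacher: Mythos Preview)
Your proposal is correct and follows essentially the same approach as the paper: use Lemma~\ref{lem : mapping of node in Dpq and D} to identify the node sets, and use Lemma~\ref{lem : property of D2}(1) together with the construction of $\mathcal{D}_{PQ}$ (both graphs being quotients of $G^{f_2}$) to identify the edge sets. Your ``delicate point'' paragraph is slightly redundant, since it re-derives part of Lemma~\ref{lem : mapping of node in Dpq and D}; once that lemma gives you that the two partitions of $V$ agree, the equality of edge multisets follows immediately from both graphs being quotients of the same graph $G^{f_2}$ under the same partition, without needing to invoke Lemma~\ref{lem : property of D2}(3) again.
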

It follows from Lemma \ref{lem : property of D2}(2) that $D_2$ occupies ${\mathcal O}(\min\{{m,n\sqrt{\lambda}}\})$ space, and hence, the contraction of each SCC in $D_2$ takes ${\mathcal O}(\min\{{m,n\sqrt{\lambda}}\})$ time. Therefore by Lemma \ref{lem : D is same as Dpq}, given ${\mathcal D}({\mathcal A})$ and set of edges ${\mathcal A}$, there is an algorithm that, 
using ${\mathcal O}(\min\{{m,n\sqrt{\lambda}}\})$ space, can compute graph ${\mathcal D}_{PQ}(G\setminus \{e_1\})$ in ${\mathcal O}(\min\{{m,n\sqrt{\lambda}}\})$ time. Now, similar to the discussion in Appendix \ref{sec: dual edge using residual graph} and Lemma \ref{lem : single edge failure}, it is easy to observe that, given ${\mathcal D}_{PQ}(G\setminus \{e_1\})$, only ${\mathcal O}(\min\{{m,n\sqrt{\lambda}}\})$ time  is required to report an $(s,t)$-mincut $C$ and its capacity after the failure of edges $e_1,e_2$ in graph $G$. 
To report the set of edges contributing to $(s,t)$-mincut $C$, we report every edge of ${\mathcal D}_{PQ}(G\setminus \{e_1\})$ whose one endpoint is in $C$ and the other is in $\overline{C}$. This can also be accomplished in the same ${\mathcal O}(\min\{{m,n\sqrt{\lambda}}\})$ time.
This, along with Theorem \ref{thm : insertion for multi-graphs}, completes the proof of the following theorem. 
\begin{theorem} [Dual Edge Sensitivity Oracle for Simple Graphs] \label{thm : sensitivity oracle for multi-graphs}
    Let $G$ be an undirected simple graph on $n$ vertices and $m$ edges with designated source and sink vertices $s$ and $t$, respectively. Let $\lambda$ be the capacity of $(s,t)$-mincut in $G$. There exists a data structure occupying ${\mathcal O}(\min \{m,n\sqrt{\lambda}\})$ space that can report an $(s,t)$-mincut in ${\mathcal O}(\min \{m,n\sqrt{\lambda}\})$ time after the failure or insertion of any pair of given query edges in $G$.
\end{theorem}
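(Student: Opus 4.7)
The plan is to show that the structure $\mathcal{D}(\mathcal{A})$ together with the anchor-edge set $\mathcal{A}$ from Theorem \ref{thm: structure for min+1} itself serves as the dual edge sensitivity oracle. Its total space is $\mathcal{O}(\min\{m, n\sqrt{\lambda}\})$ because $|\mathcal{A}| \le n-2$ by Theorem \ref{thm : anchor edges} and the DAG component matches this bound. The theorem asserts both failure and insertion queries; for insertions I would invoke Theorem \ref{thm : insertion for multi-graphs} and focus the argument on the failure case.

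Let $e_1=(x_1,y_1)$ and $e_2=(x_2,y_2)$ be the failed edges. Reduction to the essential case begins by observing, via Lemma \ref{lem : mapping in Df}, that any cut separating a pair of vertices mapped to a common node of $\mathcal{D}(\mathcal{A})$ already has capacity at least $\lambda+2$, so deleting such an edge, or deleting an anchor edge (which carries no flow by Lemma \ref{lem : every edge has flow in DF}), cannot drop the mincut below $\lambda$. Hence I may assume without loss of generality that $e_1$ crosses two distinct nodes of $\mathcal{D}(\mathcal{A})$ and carries flow from $x_1$ to $y_1$ in $f$.

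The core step is to produce $\mathcal{D}_{PQ}(G\setminus\{e_1\})$ from $\mathcal{D}(\mathcal{A})\cup\mathcal{A}$ within the claimed time budget, via three phases. First, decrement the flow through $e_1$: find a $(\mathbb{T},\mathbb{S})$-path $P$ in $\mathcal{D}(\mathcal{A})\cup\mathcal{A}$ that passes through the reverse arc of $e_1$, whose existence is guaranteed by Lemma \ref{lem : mapping of paths in Df} since the corresponding $(t,s)$-path exists in $G^f$; apply \textsc{Update\_Path} along $P$ and delete both orientations of $e_1$, producing $D_1$, and use Lemma \ref{lem : d1 graph property} to certify that $D_1$ is a quotient graph of $G^{f_1}$ with the same node-level vertex partition as $\mathcal{D}(\mathcal{A})$. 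Second, search for an $(s,t)$-path $P'$ in $D_1$ and update along it if found, producing $D_2$ which, by Lemma \ref{lem : property of D2}, quotients the residual graph of a maximum flow $f_2$ in $G\setminus\{e_1\}$. Third, contract each SCC of $D_2$ to a single node to obtain $\mathbb{D}$, and apply Lemmas \ref{lem : mapping of node in Dpq and D} and \ref{lem : D is same as Dpq} to conclude $\mathbb{D}=\mathcal{D}_{PQ}(G\setminus\{e_1\})$. Finally, handling $e_2$ reduces to a single application of Lemma \ref{lem : single edge failure} on $\mathbb{D}$: check whether the endpoints of $e_2$ lie in distinct nodes, and if so, return the suffix of a topological ordering that separates them as the desired mincut, reading off its contributing edges in the same time bound.

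The main obstacle I expect is in Phase three: making sure that the SCC contraction of $D_2$ recovers $\mathcal{D}_{PQ}(G\setminus\{e_1\})$ exactly, without ever needing the fine-grained reachability information inside the nodes of $\mathcal{D}(\mathcal{A})$, which could blow up the space to $\Omega(n^2)$. This hinges on Lemma \ref{lem : property of D2}(3), which I would prove by combining Lemma \ref{lem : mapping in Df} (two vertices jointly mapped in $\mathcal{D}(\mathcal{A})$ are separated only by cuts of capacity at least $\lambda+2$ in $G$, hence at least $\lambda+1$ in $G\setminus\{e_1\}$) with the fact that, because $G$ is undirected, every node of $\mathcal{D}(\mathcal{A})$ already corresponds to an SCC of $G^f$. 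The space and time bounds then follow since $D_1$, $D_2$, and $\mathbb{D}$ all have size $\mathcal{O}(\min\{m,n\sqrt{\lambda}\})$, and path finding, path updates, and SCC decomposition each run in that time.
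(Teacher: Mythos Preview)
Your proposal is correct and follows essentially the same approach as the paper: use $\mathcal{D}(\mathcal{A})\cup\mathcal{A}$ as the oracle, reduce to the case where $e_1$ crosses distinct nodes via Lemmas \ref{lem : mapping in Df} and \ref{lem : every edge has flow in DF}, then carry out exactly the three-phase construction (path reversal to get $D_1$, augmenting-path search to get $D_2$, SCC contraction to get $\mathbb{D}=\mathcal{D}_{PQ}(G\setminus\{e_1\})$) using Lemmas \ref{lem : mapping of paths in Df}, \ref{lem : d1 graph property}, \ref{lem : property of D2}, \ref{lem : mapping of node in Dpq and D}, and \ref{lem : D is same as Dpq}, before finishing $e_2$ with Lemma \ref{lem : single edge failure} and handling insertions via Theorem \ref{thm : insertion for multi-graphs}. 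Your identification of Lemma \ref{lem : property of D2}(3) as the delicate point, and your sketch of its proof, also match the paper.
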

For simple graphs, the capacity of $(s,t)$-mincut is bounded by $n-1$. Therefore,  Theorem \ref{thm : sensitivity oracle for multi-graphs} leads to Theorem \ref{thm : sensitivity oracle for simple graphs}. 

We now analyze the time taken for computing our compact structure stated in Theorem \ref{thm: structure for min+1}. It follows from Theorem \ref{thm : anchor edges} that, given maximum $(s,t)$-flow $f$, we can compute set ${\mathcal A}$ in ${\mathcal O}(m)$ time. Moreover, given ${\mathcal A}$ and $f$, the construction of ${\mathcal D}({\mathcal A})$ requires ${\mathcal O}(m)$ time. This completes the proof of Theorem \ref{thm : preprocessing of data srtcuture D}.

\newpage
\section{Handling Dual Edge Insertions} \label{sec: handling dual edge insertion}
%
Suppose the two query edges to be inserted into simple graph $G$ are 
$e_1=(x_1,y_1)$ and $e_2=(x_2,y_2)$. 
For any directed graph $H_d$ and an undirected edge $e=(a,b)$, let $H_d\cup \{e\}$ denote the graph obtained from $H_d$ after insertion of pair of directed edges $(a,b)$ and $(b,a)$. We first state the following immediate corollary of Lemma \ref{lem : ford fulkerson augmenting paths}.
\begin{lemma} [\cite{ford_fulkerson_1956}] \label{lem : insertion in Gf}
    Upon insertion of an edge $e$ in $G$, the value of maximum $(s,t)$-flow can increase in $G\cup \{e\}$ if and only if there is an $(s,t)$-path in $(G^f \cup \{e\})$. 
\end{lemma}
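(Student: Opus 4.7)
The plan is to view this statement as a direct specialization of Lemma \ref{lem : ford fulkerson augmenting paths} applied to the augmented graph $G \cup \{e\}$ with the pre-existing maximum $(s,t)$-flow $f$. The key bookkeeping step is to verify the identity $(G \cup \{e\})^f = G^f \cup \{e\}$ so that the classical augmenting-path theorem can be invoked without modification.

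First I would observe that $f$, originally defined on the edges of $G$, extends to a feasible $(s,t)$-flow on $G \cup \{e\}$ by setting the flow on the newly inserted edge $e$ to zero. The value of this extended flow is still $\lambda$, so the maximum $(s,t)$-flow value in $G \cup \{e\}$ is at least $\lambda$. Consequently, the value of the maximum $(s,t)$-flow strictly increases (i.e., ``can increase'') in $G \cup \{e\}$ precisely when the extended $f$ is not a maximum $(s,t)$-flow in $G \cup \{e\}$.

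Next I would establish that the residual graph of $G \cup \{e\}$ with respect to the extended $f$ is exactly $G^f \cup \{e\}$. Since $G$ is undirected and the residual-graph construction for undirected multi-graphs (as recalled in Section \ref{sec : preliminaries}) places both directed copies $(x_1,y_1)$ and $(y_1,x_1)$ in the residual whenever the underlying undirected edge carries no flow, the newly inserted edge $e = (x_1,y_1)$ with $f(e) = 0$ contributes exactly the pair $\{(x_1,y_1),(y_1,x_1)\}$ to the residual, which matches the definition of $G^f \cup \{e\}$. No residual edge of $G^f$ is altered, because the flow assignment on every pre-existing edge of $G$ is unchanged.

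Finally, applying Lemma \ref{lem : ford fulkerson augmenting paths} with $\mathcal{G} = G \cup \{e\}$ and $f' = f$ gives that $f$ is maximum in $G \cup \{e\}$ iff there is no $(s,t)$-path in $G^f \cup \{e\}$. Combining this with the first observation yields the biconditional in the statement. I do not expect any real obstacle here; the only point to handle with care is that the equality $(G \cup \{e\})^f = G^f \cup \{e\}$ relies on the specific residual-graph convention used in this paper for undirected multi-graphs. Once that is noted, the lemma is immediate.
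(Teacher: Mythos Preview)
Your proposal is correct and matches the paper's treatment: the paper does not give an explicit proof but simply introduces this lemma as ``the following immediate corollary of Lemma~\ref{lem : ford fulkerson augmenting paths},'' which is exactly the reduction you carry out. Your care in verifying $(G\cup\{e\})^f = G^f\cup\{e\}$ via the paper's residual-graph convention for undirected graphs is the only bookkeeping needed, and you handle it correctly.
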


The following lemma shows that $\mathcal{D}_{PQ}(G)$ can be used to report an $(s,t)$-mincut after the insertion of a single edge in $\mathcal{O}(n)$ time.
\begin{lemma} [\cite{DBLP:journals/mp/PicardQ80} and Lemma 4.4 in \cite{baswana2023minimum+}] \label{lem : single edge insertion}
     For any undirected multi-graph $\mathcal{G}$, upon the insertion of any edge $e=(u,v)$, the capacity of $(s,t)$-mincut increases by $1$ if and only if $u$ is mapped to sink $\mathbb{S}$ and $v$ is mapped to source $\mathbb{T}$ or vice-versa in ${\mathcal D}_{PQ}(\mathcal{G})$. Moreover, 
     the set of vertices mapped to $\mathbb{S}$ defines an $(s,t)$-mincut in $\mathcal{G}\cup \{e\}$.
\end{lemma}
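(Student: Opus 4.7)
The plan is to reduce the statement to the characterization of $(s,t)$-mincuts of $\mathcal{G}$ as $1$-transversal cuts of $\mathcal{D}_{PQ}(\mathcal{G})$ provided by Theorem \ref{thm : dag for st mincut and characterization}, i.e., subsets of DAG nodes that are closed under out-edges, contain $\mathbb{S}$, and exclude $\mathbb{T}$. A single inserted edge contributes at most $1$ to any $(s,t)$-cut, so the new mincut capacity in $\mathcal{G}\cup\{e\}$ is either $\lambda$ or $\lambda+1$; it equals $\lambda+1$ precisely when every $(s,t)$-mincut of $\mathcal{G}$ is contributed to by $e$, i.e., when $u,v$ are separated by every $(s,t)$-mincut of $\mathcal{G}$.

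First I would establish the structural fact that in $\mathcal{D}_{PQ}(\mathcal{G})$ for an undirected multi-graph, $\mathbb{T}$ has no incoming edges and, symmetrically, $\mathbb{S}$ has no outgoing edges. The reason is that any residual in-edge to $t$ in $\mathcal{G}^f$ arises from a zero-flow edge $v$--$t$ of $\mathcal{G}$, which in the undirected residual graph spawns both $(v,t)$ and $(t,v)$; hence $v$ lies in the same SCC as $t$, and after the SCC contraction defining $\mathcal{D}_{PQ}$, no node other than $\mathbb{T}$ points into $\mathbb{T}$. Consequently, both $\{\mathbb{S}\}$ and the set of all DAG nodes except $\mathbb{T}$ are valid $1$-transversal $(s,t)$-cuts of $\mathcal{D}_{PQ}(\mathcal{G})$.

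Let $\mu,\nu$ denote the nodes of $\mathcal{D}_{PQ}(\mathcal{G})$ containing $u,v$. If $\mu=\nu$, Lemma \ref{lem :mapping of nodes in Dpq} gives that no $(s,t)$-mincut separates $u$ from $v$, and the mincut stays at $\lambda$. If $\{\mu,\nu\}=\{\mathbb{S},\mathbb{T}\}$, then every $1$-transversal cut places $\mathbb{S}$ on the source side and $\mathbb{T}$ on the sink side, so always separates $u,v$, and the mincut rises to $\lambda+1$. Otherwise, at least one of $\mu,\nu$ is neither $\mathbb{S}$ nor $\mathbb{T}$: if neither $\mu$ nor $\nu$ equals $\mathbb{T}$, the cut consisting of all DAG nodes except $\mathbb{T}$ places both endpoints on the source side, while if neither equals $\mathbb{S}$, the cut $\{\mathbb{S}\}$ places both on the sink side; at least one option applies and exhibits an $(s,t)$-mincut of $\mathcal{G}$ not crossed by $e$. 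For the \emph{moreover} claim, under the iff condition with $u\in V(\mathbb{S})$ and $v\in V(\mathbb{T})$ (without loss of generality), $V(\mathbb{S})$ is an $(s,t)$-mincut of $\mathcal{G}$, and inserting $e$ contributes one more crossing edge to this cut in $\mathcal{G}\cup\{e\}$, so its capacity becomes $\lambda+1$, matching the new mincut value. The main obstacle I anticipate is the careful justification of the structural fact about $\mathcal{D}_{PQ}(\mathcal{G})$, which hinges on the undirected residual-graph construction and the SCC contraction.
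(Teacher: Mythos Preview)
The paper does not supply its own proof of this lemma; it is quoted from \cite{DBLP:journals/mp/PicardQ80} and \cite{baswana2023minimum+}, so there is nothing in the paper to compare your argument against. Your overall strategy is the standard one and is correct: the new mincut is $\lambda+1$ exactly when every $(s,t)$-mincut of $\mathcal{G}$ separates $u$ from $v$, and the case analysis using the two extremal $1$-transversal cuts $\{\mathbb{S}\}$ and ``all nodes except $\mathbb{T}$'' is exactly right.

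The one genuine gap is in your justification of the structural fact that $\mathbb{T}$ has no incoming edges (and dually for $\mathbb{S}$). Two issues. First, your claim that ``any residual in-edge to $t$ arises from a zero-flow edge'' is not literally true: a maximum $(s,t)$-flow may push flow out of $t$ along a cycle through $t$, and an edge $\{v,t\}$ carrying flow in the direction $t\to v$ yields residual edges $(v,t)$ without a matching $(t,v)$. Second, and more importantly, you only argue about residual in-edges to the \emph{vertex} $t$, whereas what you need is that the \emph{node} $\mathbb{T}$ (the whole SCC of $t$) has no incoming edges in the DAG; an edge $(a,b)$ in $G^f$ with $b\in V(\mathbb{T})\setminus\{t\}$ and $a\notin V(\mathbb{T})$ is not covered by your reasoning. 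Both issues are repairable---one can argue via flow decomposition that any such edge forces $t$ to reach $a$ in $G^f$---but the clean route is to bypass the residual-edge analysis entirely: by Lemma~\ref{lem :mapping of nodes in Dpq}, $V\setminus V(\mathbb{T})=\bigcup_{C\text{ an }(s,t)\text{-mincut}} C$, which is itself an $(s,t)$-mincut by sub-modularity (Lemma~\ref{lem : submodularity}); hence by Lemma~\ref{lem : alternative maxflow mincut} it has capacity $0$ in $G^f$, so no edge of $G^f$ leaves it, i.e.\ $\mathbb{T}$ has no incoming edges in $\mathcal{D}_{PQ}(\mathcal{G})$. With this fix, your proof goes through.
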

Similar to handling the failure of edges in Appendix \ref{sec : handling dual edge in subquadratic space}, our main objective is to construct $\mathcal{D}_{PQ}(G\cup \{e_1\})$ by only using DAG $\mathcal{D}(\mathcal{A})$ and set of edges $\mathcal{A}$ from Theorem \ref{thm: structure for min+1}.
Using the fact $\mathcal{D}(\mathcal{A})\cup \mathcal{A}$ is a quotient graph of $G^f$, along a similar lines to the proof of Lemma \ref{lem : mapping of paths in Df}, we can establish the followin lemma.

\begin{lemma} \label{lem : path in Gf and Df for insertion}
    Upon the insertion of any edge $e$ in $G^f$, there exists an $(s,t)$-path $P_r$ in $G^f\cup \{e\}$ if and only if there exists an $(s,t)$-path $P$ in  $(\mathcal{D}(\mathcal{A})\cup \mathcal{A})\cup \{e\}$. Moreover, $P$ is a quotient path of $P_r$.
    \label{lem : equivalence between D(F) and Gf}
\end{lemma}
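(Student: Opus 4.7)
The plan is to prove both directions by exploiting the fact, already used in the forward direction of Lemma \ref{lem : mapping of paths in Df}, that $\mathcal{D}(\mathcal{A})\cup\mathcal{A}$ is a quotient graph of $G^f$: each node of $\mathcal{D}(\mathcal{A})=\mathcal{D}_{PQ}(G\setminus\mathcal{A})$ is (for undirected $G$) obtained by contracting an SCC of the residual graph of $G\setminus\mathcal{A}$, which sits inside $G^f$ since anchor edges carry zero flow in $f$. Let $\phi$ denote the quotient map sending each vertex of $G$ to its node in $\mathcal{D}(\mathcal{A})$. I would extend $\phi$ to a quotient map $G^f\cup\{e\}\to(\mathcal{D}(\mathcal{A})\cup\mathcal{A})\cup\{e\}$ by fixing the two directed copies of the new edge $e$ on both sides, and then handle the two directions by contraction and lifting, respectively.

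For the forward direction, given an $(s,t)$-path $P_r=v_0v_1\cdots v_\ell$ in $G^f\cup\{e\}$, I would apply $\phi$ to its vertices. This produces an $(\mathbb{S},\mathbb{T})$-walk in $(\mathcal{D}(\mathcal{A})\cup\mathcal{A})\cup\{e\}$. Consecutive vertices $v_i,v_{i+1}$ mapped to the same node contribute self-loops that I would discard. The remaining edges, taken in order, form an $(s,t)$-path $P$, and by construction $P$ is the quotient of $P_r$ under contraction of exactly those edges of $P_r$ whose endpoints collapse under $\phi$.

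For the reverse direction, given an $(s,t)$-path $P=\mu_0=\mathbb{S},\mu_1,\ldots,\mu_k=\mathbb{T}$ in $(\mathcal{D}(\mathcal{A})\cup\mathcal{A})\cup\{e\}$, I would lift $P$ to a walk $W$ in $G^f\cup\{e\}$ edge by edge. For each edge $(\mu_i,\mu_{i+1})$ of $P$ that lies in $\mathcal{D}(\mathcal{A})\cup\mathcal{A}$, pick any preimage edge $(u_i,v_{i+1})$ in $G^f$ with $u_i\in V(\mu_i)$ and $v_{i+1}\in V(\mu_{i+1})$; the new edge $e$ lifts to itself. The main obstacle is stitching these lifts together inside each node: after arriving at some vertex of $V(\mu_i)$, we must reach the tail of the next lifted edge via an internal path that stays in $V(\mu_i)$. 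This is exactly where the SCC characterization of the nodes of $\mathcal{D}_{PQ}$ for undirected graphs is essential -- $V(\mu_i)$ is an SCC of the residual graph of $G\setminus\mathcal{A}$ and hence reachable in both directions inside $G^f$, so the required internal path always exists. Taking $u_0=s$ and $v_k=t$ (since $s\in V(\mathbb{S})$ and $t\in V(\mathbb{T})$) and concatenating everything yields the walk $W$; by construction, applying $\phi$ to $W$ collapses each internal stitching path to a single node and recovers $P$. Finally, shortcutting $W$ to a simple $(s,t)$-path $P_r$ preserves this quotient relation, so $P$ is a quotient path of $P_r$, completing the proof.
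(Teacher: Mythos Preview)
Your proof is correct and follows essentially the same approach as the paper. The paper defers to Lemma~\ref{lem : mapping of paths in Df}, whose proof uses exactly the two ingredients you invoke: the quotient-graph relation $\mathcal{D}(\mathcal{A})\cup\mathcal{A}\to G^f$ for the forward direction, and the fact that each node of $\mathcal{D}(\mathcal{A})$ is strongly connected in $G^f$ (since $G$ is undirected) for lifting in the reverse direction; you simply spell out the stitching and shortcutting in more detail than the paper does.
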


We now use $(\mathcal{D}(\mathcal{A})\cup \mathcal{A}) \cup \{e_1\}$ to determine whether flow $f$ is also a maximum $(s,t)$-flow in $G \cup \{e_1\}$ or whether its value can increase by $1$.
We verify whether there exists an $(s,t)$-path $P$ in graph $(\mathcal{D}(\mathcal{A})\cup \mathcal{A}) \cup \{e_1\}$. Suppose path $P$ exists.
It follows from Lemma \ref{lem : path in Gf and Df for insertion} that there is an $(s,t)$-path $P_r$ in $G^f$ such that $P$ is a quotient path of $P_r$.
So, by Lemma \ref{lem : insertion in Gf}, the capacity of $(s,t)$-mincut is $\lambda+1$ in $G\cup \{e_1\}$.
Upon updating path $P_r$ in $G^{f}$ by following the construction of residual graph, let $f_2$ be the maximum $(s,t)$-flow of value $\lambda+1$ obtained in $G\cup\{ e_1\}$ and let $G^{f_2}$ be the corresponding residual graph.
Since $\mathcal{D}(\mathcal{A})\cup \mathcal{A}$ is a quotient graph of $G^f$, instead of updating $P_r$, we update path $P$ in $(\mathcal{D}(\mathcal{A})\cup \mathcal{A}) \cup \{e_1\}$ by following the construction of residual graph (refer to Section \ref{sec : preliminaries}) to obtain graph $D_2$. 
The above steps can be executed in ${\mathcal O}(\min\{m,n\sqrt{\lambda}\})$ time.
If there exists no $(s,t)$-path in $(\mathcal{D}(\mathcal{A})\cup \mathcal{A}) \cup \{e_1\}$, then $D_2$ is the same as $(\mathcal{D}(\mathcal{A})\cup \mathcal{A}) \cup \{e_1\}$, $G^{f_2}=G^f$, and the capacity of $(s,t)$-mincut is $\lambda$ in $G\cup \{e_1\}$. We now establish the following relation between graph $D_2$ and $G^{f_2}$.

\begin{lemma} \label{lem : property of D2 for insertion}
    Graph $D_2$ satisfies the following three properties. 
    \begin{enumerate}
        \item $D_2$ is a quotient graph of $G^{f_2}$, where $f_2$ is a maximum $(s,t)$-flow in $G\cup \{e_1\}$.
        \item A pair of vertices $u,v$ are mapped to the same node in ${\mathcal D}({\mathcal A})$ if and only if $u,v$ are mapped to the same node in $D_2$.
         \item If a pair of vertices $u,v$ is mapped to a node in $D_2$ then $u,v$ belong to the same SCC in $G^{f_2}$.
    \end{enumerate}
\end{lemma}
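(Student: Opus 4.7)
The plan is to mirror the argument used for the deletion analog (Lemma \ref{lem : property of D2}), adapting each step to the insertion setting. The two cheap properties, (1) and (2), should follow essentially by bookkeeping over the construction of $D_2$, while the main content lies in (3), where I would use Lemma \ref{lem : mapping in Df} together with the observation that edge insertion cannot decrease the capacity of any $(s,t)$-cut.

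For (1), I would argue by induction on the two operations used to build $D_2$ from $\mathcal{D}(\mathcal{A})\cup\mathcal{A}$. Since $\mathcal{D}(\mathcal{A})\cup\mathcal{A}$ is a quotient graph of $G^f$, inserting the undirected edge $e_1$ simultaneously in both graphs preserves the quotient relationship, so $(\mathcal{D}(\mathcal{A})\cup\mathcal{A})\cup\{e_1\}$ is a quotient graph of $G^f\cup\{e_1\}$. Then, by Lemma \ref{lem : path in Gf and Df for insertion}, the $(s,t)$-path $P$ used for the update in $D_2$ is a quotient path of the $(s,t)$-path $P_r$ used for the update in $G^f\cup\{e_1\}$. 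Applying \textsc{Update\_Path} to $P$ in the quotient and to $P_r$ in $G^f\cup\{e_1\}$ keeps the vertex partition unchanged and only flips/duplicates corresponding edges, so $D_2$ remains a quotient graph of $G^{f_2}$. Property (2) then follows immediately: neither edge insertion nor the path update alters the partition of $V$ into nodes, so the node partition of $D_2$ is exactly the node partition of $\mathcal{D}(\mathcal{A})$.

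For (3), suppose $u,v$ lie in the same node of $D_2$; by (2), they lie in the same node of $\mathcal{D}(\mathcal{A})$, so Lemma \ref{lem : mapping in Df} gives that every $(s,t)$-cut in $G$ separating $u$ and $v$ has capacity at least $\lambda+2$. Inserting the single edge $e_1$ increases the capacity of any cut by at most $1$, so every $(s,t)$-cut in $G\cup\{e_1\}$ separating $u$ and $v$ still has capacity at least $\lambda+2$. On the other hand, the capacity of an $(s,t)$-mincut in $G\cup\{e_1\}$ is at most $\lambda+1$ (either it equals $\lambda$ or, in the case path $P$ existed, it equals $\lambda+1$ by construction). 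Hence no $(s,t)$-mincut of $G\cup\{e_1\}$ separates $u$ from $v$, and invoking the same reasoning as in the proof of Lemma \ref{lem :mapping of nodes in Dpq} for the undirected graph $G\cup\{e_1\}$ with maximum flow $f_2$, we conclude that $u$ and $v$ lie in the same strongly connected component of $G^{f_2}$.

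The main obstacle I expect is verifying cleanly that \textsc{Update\_Path} applied along $P$ in $D_2$ genuinely commutes with \textsc{Update\_Path} along $P_r$ in $G^{f_2}$ at the quotient level, because $P_r$ may traverse several vertices mapped to a single node of $\mathcal{D}(\mathcal{A})$ while $P$ only records the transitions between nodes. One has to check that the edges internal to such a node, which are reversed in $G^{f_2}$ but not touched in $D_2$, do not disturb the quotient relationship; this should be fine because those internal edges come in antiparallel pairs inside an SCC of $G^f$ and therefore reversing some of them leaves the set of edges between node-representatives unchanged. Once this is nailed down, properties (1) and (2) are immediate and (3) is a short capacity argument.
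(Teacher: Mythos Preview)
Your proposal is correct and follows essentially the same approach as the paper: properties (1) and (2) are dispatched by construction, and (3) is the capacity argument via Lemma \ref{lem : mapping in Df} combined with the fact that insertion cannot decrease cut capacities, so $u,v$ are not separated by any $(s,t)$-mincut of $G\cup\{e_1\}$ and hence lie in the same SCC of $G^{f_2}$. One small wording issue: in (3) you write ``inserting $e_1$ increases the capacity of any cut by at most $1$'', but what you actually need (and use) is that insertion \emph{does not decrease} the capacity of any cut; the ``at most $1$'' bound is irrelevant for the lower bound $\lambda+2$ to persist, so just drop that clause.
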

\begin{proof}
    The proof of properties (1) and (2) follows directly from the construction of $D_2$. We now establish the property (3) of $D_2$. 
    Suppose $u,v$ are a pair of vertices that are mapped to the same node in $D_2$. It follows from Theorem \ref{thm: structure for min+1} that any $(s,t)$-cut that separates $u,v$ has capacity at least $\lambda+2$ in $G$. Insertion of edge $e_1$ can only increase the capacity of any $(s,t)$-cut. Therefore, by property (2), the least capacity $(s,t)$-cut in $G\cup \{e_1\}$ that separates vertices $u,v$ has capacity at least $\lambda+2$. Moreover, the capacity of $(s,t)$-mincut in $G\cup \{e_1\}$ is at most $\lambda+1$. Therefore, by Lemma \ref{lem :mapping of nodes in Dpq}, $u,v$ are mapped to the same node in $\mathcal{D}_{PQ}(G\cup \{e_1\})$. By construction of $\mathcal{D}_{PQ}(G\cup \{e_1\})$, $u,v$ belong to the same SCC in $G^{f_2}$.
 \end{proof} 
We now obtain DAG ${\mathcal D}_{PQ}(G\cup \{e_1\})$ from graph $D_2$. Let ${\mathcal I}$ be the graph obtained by contracting every SCC of $D_2$. The following relation between ${\mathcal D}_{PQ}(G\cup \{e_1\})$ and ${\mathcal I}$ can be established along similar lines to Lemma \ref{lem : mapping of node in Dpq and D} using Lemma \ref{lem : property of D2 for insertion}.

\begin{lemma} \label{lem : mapping of node in Dpq and I}
     A pair of vertices $u,v$ are mapped to the same node in ${\mathcal I}$ if and only if $u,v$ are mapped to the same node in ${\mathcal D}_{PQ}(G\cup \{e_1\})$.  
\end{lemma}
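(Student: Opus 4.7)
The plan is to mirror the proof of Lemma \ref{lem : mapping of node in Dpq and D} almost verbatim, simply swapping the role of $G \setminus \{e_1\}$ with $G \cup \{e_1\}$, the role of $\mathbb{D}$ with $\mathcal{I}$, and invoking Lemma \ref{lem : property of D2 for insertion} in place of Lemma \ref{lem : property of D2}. The key reason this works is that both Lemma \ref{lem : property of D2} and Lemma \ref{lem : property of D2 for insertion} provide exactly the same three structural properties about the underlying graph ($D_2$) relative to the residual graph of the appropriately modified graph.

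For the forward direction, I would assume that $u,v$ are mapped to the same node in $\mathcal{I}$. By construction of $\mathcal{I}$ as the contraction of each SCC of $D_2$, either $u,v$ lie in the same node of $D_2$ or they lie in different nodes that belong to a common SCC of $D_2$. In the first case, Lemma \ref{lem : property of D2 for insertion}(3) directly places $u,v$ in the same SCC of $G^{f_2}$. In the second case, Lemma \ref{lem : property of D2 for insertion}(1) tells me that $D_2$ is a quotient of $G^{f_2}$, so a cycle in $D_2$ passing through the nodes containing $u$ and $v$ lifts to a closed walk in $G^{f_2}$ visiting both; combined with part (3) applied within each node, this again places $u,v$ in the same SCC of $G^{f_2}$. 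Since $G$ is undirected, every node of $\mathcal{D}_{PQ}(G \cup \{e_1\})$ corresponds to an SCC of $G^{f_2}$, so $u,v$ are mapped to the same node of $\mathcal{D}_{PQ}(G \cup \{e_1\})$.

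For the converse direction, I would suppose $u,v$ are mapped to the same node of $\mathcal{D}_{PQ}(G \cup \{e_1\})$. Because $G$ is undirected, the construction of $\mathcal{D}_{PQ}$ implies that $u$ and $v$ belong to the same SCC of $G^{f_2}$. Lemma \ref{lem : property of D2 for insertion}(1), which says $D_2$ is a quotient graph of $G^{f_2}$, then forces $u,v$ either into a single node of $D_2$ or into nodes lying on a common directed cycle of $D_2$, i.e.\ a common SCC of $D_2$. In either case, the contraction defining $\mathcal{I}$ collapses them to the same node of $\mathcal{I}$.

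The only mildly nontrivial point, and where I would spend the most care, is the mixed case in the forward direction: ensuring that a cycle in $D_2$ traversing distinct nodes lifts to a closed walk in $G^{f_2}$ passing through both $u$ and $v$. This uses the fact that being a quotient graph means every arc of $D_2$ corresponds to at least one arc of $G^{f_2}$ between the preimages, combined with Lemma \ref{lem : property of D2 for insertion}(3), which asserts strong connectivity \emph{inside} each preimage. Concatenating these pieces produces the required closed walk; no new idea beyond what is already used in the proof of Lemma \ref{lem : mapping of node in Dpq and D} is needed, so the argument goes through with purely cosmetic changes.
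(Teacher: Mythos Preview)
Your proposal is correct and follows exactly the approach the paper intends: the paper explicitly states that this lemma ``can be established along similar lines to Lemma \ref{lem : mapping of node in Dpq and D} using Lemma \ref{lem : property of D2 for insertion},'' and your write-up carries out precisely that substitution. If anything, your version is more detailed than the paper's, since you spell out the lifting of the $D_2$ cycle to a closed walk in $G^{f_2}$ in the mixed case, whereas the paper's proof of Lemma \ref{lem : mapping of node in Dpq and D} handles this in one line by citing parts (1) and (3) together.
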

    
It follows from Lemma \ref{lem : property of D2 for insertion}(1) and Lemma \ref{lem : mapping of node in Dpq and I} that the set of edges in $\mathcal{I}$ is the same as the set of edges in ${\mathcal D}_{PQ}(G\cup \{e_1\})$. Therefore, the following lemma is immediate.

\begin{lemma} \label{lem : Dpq same as D2 for insertion}
    Graph ${\mathcal I}$ is the graph $\mathcal{D}_{PQ}(G\cup \{e_1\})$.
\end{lemma}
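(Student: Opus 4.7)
The plan is to establish the two graphs are equal by showing their node sets and edge sets coincide, using the two preceding lemmas essentially off the shelf. First I would observe that, for the undirected multi-graph $G \cup \{e_1\}$, the DAG $\mathcal{D}_{PQ}(G \cup \{e_1\})$ is by definition the quotient of the residual graph $G^{f_2}$ obtained by contracting each SCC into a single node (see the construction in Section \ref{sec : construction of Dpq}). On the other hand, $\mathcal{I}$ is defined by contracting each SCC of $D_2$. So I must compare these two contracted graphs.

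For the node set, I would invoke Lemma \ref{lem : mapping of node in Dpq and I}: it says exactly that, for every pair of vertices $u, v \in V$, $u$ and $v$ are mapped to a common node of $\mathcal{I}$ if and only if they are mapped to a common node of $\mathcal{D}_{PQ}(G \cup \{e_1\})$. Thus the two graphs induce the very same partition of $V$ into nodes, and there is a natural bijection between the node sets respecting the underlying vertex groupings.

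For the edge set, the key ingredient is Lemma \ref{lem : property of D2 for insertion}(1), which states that $D_2$ is a quotient graph of $G^{f_2}$. Since the composition of two quotients is again a quotient by the combined partition, $\mathcal{I}$ is itself the quotient of $G^{f_2}$ by the partition obtained by merging each SCC of $D_2$ back into the underlying vertices. By the node-set equality established above, this combined partition of $V$ is exactly the SCC partition of $G^{f_2}$. Hence the quotient that produces $\mathcal{I}$ from $G^{f_2}$ is exactly the quotient that produces $\mathcal{D}_{PQ}(G \cup \{e_1\})$, so the edges must match: an edge between two distinct nodes is present in $\mathcal{I}$ iff there is an underlying edge of $G^{f_2}$ between the two corresponding vertex groups iff the edge is present in $\mathcal{D}_{PQ}(G \cup \{e_1\})$.

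The only subtlety I anticipate -- and this is really the only thing to verify carefully rather than a genuine obstacle -- is ensuring that no edge information is lost or duplicated when one takes the quotient in two stages versus in one stage. This is handled by the observation that contracting a set $S$ of vertices into a single node keeps every edge not internal to $S$ and discards precisely the internal edges; doing this first for a coarser partition (yielding $D_2$) and then for the remaining SCCs of $D_2$ is indistinguishable from doing it in one shot for the combined partition, which by the node-equality above is the SCC partition of $G^{f_2}$. Therefore $\mathcal{I} = \mathcal{D}_{PQ}(G \cup \{e_1\})$ and the lemma follows.
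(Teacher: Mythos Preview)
Your proof is correct and follows essentially the same approach as the paper: both invoke Lemma~\ref{lem : mapping of node in Dpq and I} for the node partition and Lemma~\ref{lem : property of D2 for insertion}(1) for the edge correspondence. Your version is more explicit about the two-stage-versus-one-stage quotient argument, but this is exactly what the paper's one-line justification leaves implicit.
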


It follows from Lemma \ref{lem : Dpq same as D2 for insertion} that given ${\mathcal D}({\mathcal A})$ and set of edges ${\mathcal A}$, there is an algorithm that, 
using ${\mathcal O}(\min\{{m,n\sqrt{\lambda}}\})$ space, can compute graph ${\mathcal D}_{PQ}(G\cup \{e_1\})$ in ${\mathcal O}(\min\{{m,n\sqrt{\lambda}}\})$ time.
Hence, we can use Lemma \ref{lem : single edge insertion} to check whether the insertion of edge $e_2$ in $G \cup \{e_1\}$ increases the capacity of $(s,t)$-mincut.
Suppose $e_2$ satisfies the condition stated in Lemma \ref{lem : single edge insertion} for $\mathcal{D}_{PQ}(G\cup \{e_1\})$. If the $(s,t)$-mincut capacity in $G\cup \{e_1\}$ is $\lambda+1$ (likewise $\lambda$), then after the insertion of two edges $e_1,e_2$, the capacity of $(s,t)$-mincut in $G$ is $\lambda+2$ (likewise $\lambda+1$). 
If $e_2$ fails to satisfy the condition stated in Lemma \ref{lem : single edge insertion} for ${\mathcal D}_{PQ}(G\cup \{e_1\})$, the capacity of $(s,t)$-mincut in $G$ after the insertion of edges $e_1,e_2$ is the same as the capacity of $(s,t)$-mincut in $G\cup \{e_1\}$.
Moreover, the reported $(s,t)$-mincut using Lemma \ref{lem : single edge insertion} in $\mathcal{D}_{PQ}(G\cup \{e_1\})$ is also an $(s,t)$-mincut in $G$ after the insertion of two edges $e_1,e_2$.
Therefore, overall ${\mathcal O}(\min\{{m,n\sqrt{\lambda}}\})$ time  is required to report an $(s,t)$-mincut and its capacity after the insertion of edges $e_1,e_2$ in graph $G$. This leads to the following theorem. 
\begin{theorem} [Dual Edge Insertions for Simple Graphs] \label{thm : insertion for multi-graphs}
    Let $G$ be an simple graph on $n$ vertices and $m$ edges with designated source and sink vertices $s$ and $t$ respectively. There exists a data structure occupying ${\mathcal O}(\min \{m,n\sqrt{\lambda}\})$ space that can report an $(s,t)$-mincut in ${\mathcal O}(\min \{m,n\sqrt{\lambda}\})$ time after the insertion of any pair of given query edges in $G$.
\end{theorem}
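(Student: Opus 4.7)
The plan is to mirror the query algorithm for edge failures from Appendix \ref{sec : handling dual edge in subquadratic space}, handling the two insertions $e_1=(x_1,y_1)$ and $e_2=(x_2,y_2)$ sequentially on top of the compact structure $\mathcal{D}(\mathcal{A}) \cup \mathcal{A}$ from Theorem \ref{thm: structure for min+1}. Since this structure already occupies $O(\min\{m,n\sqrt{\lambda}\})$ space, the space bound is automatic; the technical content is in showing that this same structure supports an efficient insertion-query algorithm, and that after absorbing the first insertion the remaining object still encodes enough information about the updated residual graph to handle the second insertion.

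First, I would establish the insertion analogue of Lemma \ref{lem : mapping of paths in Df}: an $(s,t)$-path exists in $G^f \cup \{e_1\}$ if and only if an $(s,t)$-path exists in $(\mathcal{D}(\mathcal{A}) \cup \mathcal{A}) \cup \{e_1\}$, and the latter is a quotient path of the former. Combined with the Ford–Fulkerson characterization (Lemma \ref{lem : ford fulkerson augmenting paths}), this reduces the test ``does inserting $e_1$ raise $\lambda$ by one?'' to searching for an $(s,t)$-path in an augmented quotient graph of size $O(\min\{m,n\sqrt{\lambda}\})$. If such a path $P$ exists, the mincut becomes $\lambda+1$ and I simulate the flow augmentation by applying \textsc{Update\_Path} to $P$ in $(\mathcal{D}(\mathcal{A}) \cup \mathcal{A}) \cup \{e_1\}$, producing a graph $D_2$; otherwise $D_2$ is just $(\mathcal{D}(\mathcal{A}) \cup \mathcal{A}) \cup \{e_1\}$ and the mincut capacity is unchanged.

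The crucial step, and the one I expect to be the main obstacle, is to verify that $D_2$ satisfies the insertion-analogues of the three properties of Lemma \ref{lem : property of D2}: (1) $D_2$ is a quotient graph of the residual graph $G^{f_2}$ of the new maximum $(s,t)$-flow $f_2$ in $G \cup \{e_1\}$; (2) two vertices share a node of $D_2$ iff they share a node of $\mathcal{D}(\mathcal{A})$; and (3) any pair of vertices sharing a node of $D_2$ lies in a common SCC of $G^{f_2}$. Parts (1) and (2) fall out of the quotient-preserving construction together with the forward-direction mapping of paths. For (3) the leverage comes from Theorem \ref{thm: structure for min+1}: two vertices mapped to the same node of $\mathcal{D}(\mathcal{A})$ are separated only by $(s,t)$-cuts of capacity at least $\lambda+2$ in $G$, and since a single edge insertion can only raise cut capacities, no new cut of capacity $\le \lambda+1$ can appear to break up those SCC classes in $G^{f_2}$. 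This is the argument that lets me avoid expanding nodes of $\mathcal{D}(\mathcal{A})$ back into their original SCCs in $G^f$, and hence the argument that preserves the subquadratic space.

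Once (1)–(3) are in hand, contracting the SCCs of $D_2$ yields a DAG $\mathcal{I}$, and I would prove $\mathcal{I} = \mathcal{D}_{PQ}(G \cup \{e_1\})$ by mimicking Lemma \ref{lem : mapping of node in Dpq and D} in both directions. The second insertion $e_2$ is then handled directly by the single-insertion routine (Lemma \ref{lem : single edge insertion}): check whether the endpoints of $e_2$ are mapped to the source node $\mathbb{S}$ and the sink node $\mathbb{T}$ of $\mathcal{I}$, and if so read off the resulting $(s,t)$-mincut from the vertices mapped to $\mathbb{S}$, together with its contributing edges, by scanning the edges of $\mathcal{I}$ crossing the cut. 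Each step — residual-style path search in the augmented quotient, \textsc{Update\_Path}, SCC contraction, and the single-edge query — fits inside $O(\min\{m,n\sqrt{\lambda}\})$ time and $O(\min\{m,n\sqrt{\lambda}\})$ working space, delivering the claimed bounds.
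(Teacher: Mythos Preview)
Your proposal is correct and follows essentially the same approach as the paper: establish the insertion analogue of the path-mapping lemma, build $D_2$ via \textsc{Update\_Path} on the augmented quotient, prove the three properties of $D_2$ (using that insertion only raises cut capacities, so the $\lambda+2$ separation from Theorem \ref{thm: structure for min+1} survives), contract SCCs to obtain $\mathcal{I}=\mathcal{D}_{PQ}(G\cup\{e_1\})$, and then invoke the single-insertion lemma for $e_2$. Even your naming of the intermediate graphs $D_2$ and $\mathcal{I}$ and your identification of property (3) as the crux match the paper's treatment.
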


\section{Computing Minimum+1 (s,t)-cut in Directed Multi-graphs} \label{sec : minimum+1 algorithm in multigraphs}

For graphs with integer edge capacities, Gabow \cite{gabow1991matroid} designed an algorithm that takes $\tilde{{\mathcal O}}(m\lambda)$ time to compute a global mincut. By using this algorithm of \cite{gabow1991matroid} and our result in Theorem \ref{thm : equivalence between second st mincut and global}(1), 
we immediately arrive at the algorithm stated in the following theorem. 
\begin{theorem}  \label{thm : second mincut using all pair mincut}
    For any directed multi-graph on $n$ vertices and $m$ edges with designated source and sink vertices $s$ and $t$ respectively, there is an algorithm that, given any maximum $(s,t)$-flow, can compute a second $(s,t)$-mincut of capacity $\lambda+\kappa$, where $\kappa\ge 0$ is an integer, in $\tilde{{\mathcal O}}(m\kappa)$ time. 
\end{theorem}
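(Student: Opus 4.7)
The plan is to instantiate the reduction of Theorem~\ref{thm : equivalence between second st mincut and global}(1) via Algorithm~\ref{alg : second mincut using global mincut}, and then bound the cost of each global mincut subroutine using the matroid-based algorithm of Gabow~\cite{gabow1991matroid}. Gabow's algorithm computes a global mincut of a directed graph on $m'$ edges with mincut capacity $\lambda'$ in $\tilde{\mathcal{O}}(m'\lambda')$ time; crucially, it packs arborescences one round at a time, so after $T+1$ rounds it has either certified that the global mincut capacity strictly exceeds $T$ or produced a mincut of capacity at most $T$, and this truncated execution uses only $\tilde{\mathcal{O}}(m' T)$ time.

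Starting from the given maximum $(s,t)$-flow $f$, I would first construct $\mathcal{D}_{PQ}(G)$ together with a topological order $\tau$ in $\mathcal{O}(m)$ time (Theorem~\ref{thm : dag for st mincut and characterization}). The candidate second $(s,t)$-mincut coming from a least-capacity edge of $\mathcal{D}_{PQ}(G)$, i.e., a non $1$-transversal cut handled exactly as in~\cite{vazirani1992suboptimal}, is produced in $\mathcal{O}(m)$ time. For the remaining case, Lemma~\ref{lem : second st mincut separates one node} ensures that every such second $(s,t)$-mincut subdivides exactly one node $\mu$ of $\mathcal{D}_{PQ}(G)$, so I would build the auxiliary graph $G_\mu$ of Appendix~\ref{sec : extension to egenral graphs} for each $\mu$ and look for the cheapest global mincut across all these graphs. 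The SCCs underlying distinct nodes of $\mathcal{D}_{PQ}(G)$ are vertex disjoint in $G^f$, whence the total size $\sum_\mu m_\mu$ of the graphs $G_\mu$ is $\mathcal{O}(m)$; and by Lemmas~\ref{lem : second mincut = global mincut for 2 st mincuts}, \ref{lem : second mincut = global mincut for 1 st mincuts} and~\ref{lem : mapping of a second st mincut to a node mu}, the minimum global mincut capacity attained across the $G_\mu$'s equals exactly $\kappa$.

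Since $\kappa$ is not known in advance, I would apply a doubling search on a threshold $T$: for $T = 1, 2, 4, \ldots$, run Gabow's algorithm on each $G_\mu$ truncated after $T+1$ arborescence-packing rounds. A phase succeeds as soon as some $\mu$ returns a global mincut of capacity at most $T$; the first successful phase occurs at $T = \mathcal{O}(\kappa)$, so the total running time across all phases is
\begin{equation*}
\sum_{T \,=\, 1,2,4,\ldots,\mathcal{O}(\kappa)} \sum_{\mu} \tilde{\mathcal{O}}(m_\mu T) \;=\; \tilde{\mathcal{O}}(m\kappa),
\end{equation*}
using the geometric sum together with $\sum_\mu m_\mu = \mathcal{O}(m)$. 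Lifting the cheapest such global mincut back to a second $(s,t)$-mincut of $G$ through the prefix of $\tau$ defining the set $S$ (as in Line~\ref{line : computation of $S$} of Algorithm~\ref{alg : second mincut using global mincut}) takes additional $\mathcal{O}(m)$ time, and we return the cheaper between this and the non $1$-transversal candidate.

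The main technical point I expect to require care is the compatibility of Gabow's arborescence-packing procedure with the doubling trick: I need to argue cleanly that truncating after $T+1$ rounds both (a) costs $\tilde{\mathcal{O}}(m_\mu T)$ and (b) yields a certifying mincut whenever one of capacity at most $T$ exists. Handling the corner case $\kappa = 0$ as a base case also needs a line — but it already falls out of the non $1$-transversal branch together with $\mathcal{D}_{PQ}(G)$ having more than one $(s,t)$-mincut. Beyond these points, the argument is a direct composition of the equivalence in Theorem~\ref{thm : equivalence between second st mincut and global}(1) with a careful accounting against the disjoint-SCC structure underlying $\mathcal{D}_{PQ}(G)$.
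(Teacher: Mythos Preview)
Your proposal is correct and matches the paper's approach, which proves this theorem in a single sentence by combining Gabow's $\tilde{\mathcal{O}}(m\lambda)$ arborescence-packing algorithm with the reduction of Theorem~\ref{thm : equivalence between second st mincut and global}(1). The doubling/truncation step you spell out is precisely the detail needed to obtain $\tilde{\mathcal{O}}(m\kappa)$ rather than $\sum_\mu \tilde{\mathcal{O}}(m_\mu \lambda_\mu)$, and the paper leaves it implicit; one small phrasing slip to fix is that you want the global mincut of the SCC $H$ (or $H_s$) associated with each node $\mu$ as in Algorithm~\ref{alg : second mincut using global mincut}, not of $G_\mu$ itself, whose $(s,t)$-mincut has capacity zero.
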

Let us consider graph $G$ to be a directed multi-graph for this section. Suppose there is a $(\lambda+1)$ $(s,t)$-cut in $G$. It follows that the capacity of second $(s,t)$-mincut is $\lambda+1$ in $G$. 
By Theorem \ref{thm : second mincut using all pair mincut} with $\kappa=1$, given any maximum $(s,t)$-flow, 
we can compute a $(\lambda+1)$ $(s,t)$-cut in $G$ in $\tilde{{\mathcal O}}(m)$ time. 
We now present an algorithm that given any maximum $(s,t)$-flow in $G$, can compute a $(\lambda+1)$ $(s,t)$-cut, if it exists, in only ${\mathcal O}(m)$ time.
Our algorithm uses a different, as well as simpler, approach compared to the algorithm in Theorem \ref{thm : second mincut using all pair mincut} that uses the global mincut computation of \cite{gabow1991matroid} in directed graphs. Moreover, the insights established to arrive at our algorithm pave the way to compute all the anchor edges in undirected multi-graphs in ${\mathcal O}(m)$ time (refer to Section \ref{sec : anchor edge computation}). 

Along a similar line to the design of Algorithm \ref{alg : second mincut using global mincut} stated in Theorem \ref{thm : second minimum (s,t)-cut}, given a second $(s,t)$-mincut in a graph with at most two $(s,t)$-mincuts $\{s\}$ and $V\setminus \{t\}$, we can design an algorithm that can compute a second $(s,t)$-mincut in ${\mathcal O}(m)$ time for graph $G$ (refer to Appendix \ref{sec : extension to egenral graphs}).
So, for the rest of the section, we assume that graph $G$ has at most two $(s,t)$-mincuts $\{s\}$ and $V\setminus\{t\}$. The following concept of $(A,B)$-mincut for sets $A,B\subset V$ is defined for better exposition of our result.
 \begin{definition}[($A,B$)-mincut]
     For any pair of sets $\emptyset\subset A,B\subset V$ with $A\cap B=\emptyset$, a cut $C$ is said to be an $(A,B)$-cut if $A \subseteq  C$ and $B \subseteq \overline{C}$. An $(A,B)$-mincut is an $(A,B)$-cut of the least capacity.   
 \end{definition}

\subsection{Graphs with Exactly One (s,t)-mincut} \label{sec : min+1 in graphs with exactly 2 st mincuts}
Suppose $G$ has exactly one $(s,t)$-mincut $V\setminus\{t\}$. The algorithm is similar for graph $G$ with $(s,t)$-mincut only $\{s\}$. 
For computing a $(\lambda+1)$ $(s,t)$-cut, similar to the algorithm stated in Lemma \ref{lem : n maxflow}, we aim to efficiently find a vertex $x \in V\setminus \{s,t\}$, if exists, such that there is a $(\lambda+1)$ $(s,t)$-cut $C$ with $x\in \overline{C}$. We now establish a necessary and sufficient condition for identifying such a vertex $x$.

\begin{lemma} \label{lem : edge disjoint paths and min+1}
     For every vertex $x \in V\setminus \{s,t\}$, there is a $(\lambda+1)$ $(s,t)$-cut $C$ with $x \in \overline{C}$ in $G$ if and only if there is exactly one edge-disjoint path from $s$ to $x$ in $G^f$. 
\end{lemma}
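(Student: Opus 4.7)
The plan is to reduce the existence of a $(\lambda+1)$ $(s,t)$-cut separating $x$ from $s$ to the computation of the minimum $(\{s\},\{t,x\})$-cut in $G$, and then to relate this quantity to the edge-connectivity from $s$ to $x$ in the residual graph $G^f$ via one round of augmentation.

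First, I would observe that a $(\lambda+1)$ $(s,t)$-cut $C$ with $x \in \overline{C}$ is precisely an $(\{s\},\{t,x\})$-cut of capacity $\lambda+1$. Let $\lambda_x$ denote the capacity of a minimum $(\{s\},\{t,x\})$-cut in $G$. Since every $(\{s\},\{t,x\})$-cut is an $(s,t)$-cut, $\lambda_x \ge \lambda$. Furthermore, by hypothesis $V\setminus\{t\}$ is the \emph{only} $(s,t)$-mincut of $G$ and it satisfies $x \in V\setminus\{t\}$, so no $(\{s\},\{t,x\})$-cut achieves capacity $\lambda$; hence $\lambda_x \ge \lambda+1$. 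Therefore, a $(\lambda+1)$ $(s,t)$-cut with $x \in \overline{C}$ exists if and only if $\lambda_x = \lambda+1$.

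Next, I would compute $\lambda_x$ using standard flow techniques. Form the graph $G'$ by adding an edge $(x,t)$ of infinite capacity; since contracting $x$ with $t$ turns $(\{s\},\{t,x\})$-cuts into $(s,t)$-cuts of the same capacity, the capacity of an $(s,t)$-mincut in $G'$ equals $\lambda_x$ by Maxflow-Mincut Theorem (\Cref{thm : maxflow mincut theorem}). To evaluate this max flow starting from $f$, note that since $f$ is a maximum $(s,t)$-flow in $G$, there is no $(s,t)$-path in $G^f$, so every augmenting $(s,t)$-path in $(G')^f = G^f \cup \{(x,t)\}$ must use the infinite-capacity edge $(x,t)$. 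The maximum number of edge-disjoint such augmenting paths equals the maximum number of edge-disjoint $(s,x)$-paths in $G^f$; call this quantity $k$. Hence $\lambda_x = \lambda + k$. Combining with the previous paragraph, a $(\lambda+1)$ $(s,t)$-cut with $x \in \overline{C}$ exists if and only if $k = 1$.

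Finally, I need to confirm that $k \ge 1$ always, so that the condition ``$k=1$'' is equivalent to ``exactly one edge-disjoint $(s,x)$-path exists'' as phrased in the lemma. Let $R$ be the set of vertices reachable from $s$ in $G^f$. Since there is no $(s,t)$-path in $G^f$, $R$ defines an $(s,t)$-cut of capacity $\lambda$ (by the residual-graph characterization of $(s,t)$-mincuts). Uniqueness of the $(s,t)$-mincut forces $R = V \setminus \{t\}$, so every vertex $x \in V\setminus\{s,t\}$ is reachable from $s$ in $G^f$ and $k \ge 1$. This completes the equivalence.

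The only subtle point is the residual-graph argument: I need that an augmenting path in $G^f \cup \{(x,t)\}$ \emph{must} traverse the new infinite-capacity edge, which is immediate once we recall that $f$ already saturates every $(s,t)$-mincut, so no $(s,t)$-path survives in $G^f$. With this in place, the remainder reduces to Menger/Maxflow-Mincut bookkeeping and the uniqueness hypothesis on the $(s,t)$-mincut.
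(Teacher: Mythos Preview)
Your proof is correct and reaches the same conclusion as the paper---that the existence of a $(\lambda+1)$ $(s,t)$-cut with $x\in\overline{C}$ is equivalent to the maximum number of edge-disjoint $(s,x)$-paths in $G^f$ being exactly one---but the route is packaged differently. The paper works entirely in $G^f$: it invokes \Cref{thm : min+k in residual graph} to translate the $(\lambda+1)$ condition in $G$ into a capacity-$1$ condition on $(s,\{x,t\})$-cuts in $G^f$, then shows directly (using that $t$ has indegree $0$ in $G^f$) that the $(s,x)$-mincut and the $(s,\{x,t\})$-mincut in $G^f$ coincide, and finishes with Menger. You instead stay in $G$, define $\lambda_x$ as the $(\{s\},\{t,x\})$-mincut capacity, and compute it by adding an infinite-capacity edge $(x,t)$ and augmenting from $f$; this bypasses \Cref{thm : min+k in residual graph} in favor of the standard identity $\text{maxflow}(G') = \text{value}(f) + \text{maxflow}((G')^f)$. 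Both arguments hinge on the same two facts (indegree of $t$ is zero in $G^f$, and uniqueness of the $(s,t)$-mincut forces every $x\neq t$ to be reachable from $s$ in $G^f$), so the difference is one of presentation rather than substance. One minor phrasing issue: when you write ``the maximum number of edge-disjoint such augmenting paths,'' what you actually need (and use) is that the maximum $(s,t)$-flow in $(G')^f$ equals the maximum $(s,x)$-flow in $G^f$; the edge-disjoint-paths language is fine once you note $G^f$ has unit capacities, but the sequential nature of augmentation makes the phrase slightly loose.
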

\begin{proof}
    Let the capacity of $(s,x)$-mincut be $\lambda_1$ and $(s,\{x,t\})$-mincut be $\lambda_2$. It follows that $\lambda_1\le \lambda_2$. Let $C$ be an $(s,\{x,t\})$-mincut. Since $t$ has indegree $0$, $C\setminus \{t\}$ is an $(s,x)$-cut of capacity $\lambda_2$, and hence, $\lambda_2\le \lambda_1$. It follows that $\lambda_1=\lambda_2$. 

    If there is a $(\lambda+1)$ $(s,t)$-cut $C$ with $x \in \overline{C}$ in $G$, by Theorem \ref{thm : min+k in residual graph}, $\lambda_2\le 1$.    
    Since $V\setminus\{t\}$ is the only $(s,t)$-mincut in $G^f$, by Lemma \ref{lem : alternative maxflow mincut}, $\lambda_2=1$. This implies $\lambda_1=1$ and, by Menger's theorem \cite{menger1927}, the number of edge-disjoint paths from $s$ to $x$ is one.

    If there is exactly one edge-disjoint path from $s$ to $x$ in $G^f$, by Menger's Theorem\cite{menger1927}, $\lambda_1=1$. Hence we get $\lambda_2=1$, and  by Theorem \ref{thm : min+k in residual graph}, there exists a $(\lambda+1)$ $(s,t)$-cut $C$ with $x \in \overline{C}$ in $G$.
\end{proof}
It follows from Lemma \ref{lem : edge disjoint paths and min+1} that we only need to efficiently find a vertex $x$ such that there is exactly one  edge disjoint path from $s$ to $x$ in $G^f$.
We use the concept of \textit{dominance} with respect to vertex $s$ to achieve this goal as follows.
A vertex $v$ is called a \textit{dominator} of a vertex $u$ if every $(s,u)$-path contains vertex $v$.
A dominator tree of a graph 
is defined as follows. 

\begin{theorem}[Dominator Tree \cite{tarzandominator1979}]\label{thm : dominator}
     Let $\mathcal{G}$ be any directed multi-graph. Let $T_{\mathcal{G}}$ denote the dominator tree of $\mathcal{G}$ on vertex set $V$ with root vertex $s$. $T_{\mathcal{G}}$ is a directed tree rooted at $s$ such that the following property holds. For any pair of vertices $u,v \in V$, $v$ is an ancestor of $u$ in $T_\mathcal{G}$ if and only if $v$ is a dominator of $u$ in $\mathcal{G}$. 
\end{theorem}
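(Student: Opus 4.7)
The plan is to establish the theorem in three stages: first show that the dominators of any vertex form a chain under the dominance relation, then use this to define an immediate dominator for each non-root reachable vertex and build the tree, and finally verify the ancestor-dominator equivalence.

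The key structural lemma I would prove first is that for any vertex $u \ne s$ reachable from $s$, the set of dominators of $u$ is totally ordered under dominance. Suppose $v_1, v_2$ are distinct dominators of $u$ with $v_1, v_2 \ne u$. Take any simple $(s,u)$-path $P$ in $\mathcal{G}$; since each of $v_1, v_2$ must lie on $P$, we may assume $v_1$ precedes $v_2$ along $P$. I claim $v_1$ dominates $v_2$: otherwise there is an $(s,v_2)$-path $Q$ avoiding $v_1$, and concatenating $Q$ with the subpath of $P$ from $v_2$ to $u$ gives an $(s,u)$-walk containing no occurrence of $v_1$ (the suffix of $P$ from $v_2$ to $u$ omits $v_1$ because $P$ is simple and $v_1$ appears strictly before $v_2$). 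Any simple $(s,u)$-path extracted from this walk still avoids $v_1$, contradicting that $v_1$ dominates $u$.

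Given this chain property, the dominator set of each $u \ne s$ is a nonempty finite totally ordered set (it always contains $s$), so it has a unique maximum among dominators other than $u$ itself, which I call $\text{idom}(u)$. Creating an edge from $\text{idom}(u)$ to $u$ for each non-root reachable $u$ yields a directed graph $T_{\mathcal{G}}$; since $\text{idom}(u)$ strictly dominates $u$ (and hence lies on every $(s,u)$-path, in particular strictly earlier than $u$ on any simple such path), iteratively applying $\text{idom}$ strictly decreases position along any fixed simple $(s,u)$-path, so the construction terminates at $s$ and $T_{\mathcal{G}}$ is a tree rooted at $s$. For the equivalence, if $v$ is a proper ancestor of $u$ in $T_{\mathcal{G}}$, then following the chain $u, \text{idom}(u), \text{idom}(\text{idom}(u)), \dots, v$ and using transitivity of dominance yields that $v$ dominates $u$. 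Conversely, if $v$ dominates $u$, then $v$ belongs to the dominator chain of $u$, which coincides by construction with the sequence of proper ancestors of $u$ in $T_{\mathcal{G}}$.

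The main obstacle is the walk-to-path extraction step inside the chain lemma: I must invoke (explicitly) that converting a walk into a simple path only removes vertices internal to repeated segments and never introduces new ones, so avoidance of $v_1$ is preserved. Everything else reduces to transitivity of dominance and the finiteness of the vertex set, so once this step is clean, the remaining verifications are routine.
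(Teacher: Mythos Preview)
The paper does not provide its own proof of this theorem; it is stated as a classical result from the literature (attributed to Lengauer and Tarjan) and used as a black box. Your argument is a correct and standard proof of the existence and defining property of the dominator tree via the chain property of dominators and the immediate-dominator construction, so there is nothing in the paper to compare it against.
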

Observe that in graph $G^f$, there is exactly one edge-disjoint path from $s$ to $x$ if and only if there exists an edge $e'$ such that every $(s,x)$-path contains $e'$. However, for any vertex $u$, even if $v$ is a dominator of a vertex $u$, it is quite possible that there are more than one edge-disjoint $(s,u)$-paths in $G^f$. So, a dominator tree for $G$ does not immediately help in determining whether there is exactly one edge-disjoint $(s,u)$-path in $G^f$. We now construct a graph $\mathcal{H}$ from $G^f$ whose dominator tree helps in achieving our objective.

\paragraph*{Construction of $\mathcal{H}=(V',E')$:} 
For each edge $(u,v)$ in $G^f$, split edge $(u,v)$ into two edges $(u,w)$ and $(w,v)$ using a new vertex $w$. It follows that the number of vertices, as well as edges, in $\mathcal{H}$ is $\mathcal{O}(m)$.

To design our algorithm, we begin by classifying the vertices in $\mathcal{H}$ as follows. A vertex $v \in V'$ is said to be \textit{marked} if $v \notin V$, otherwise, it is said to be \textit{unmarked}. 
It follows from the construction of ${\mathcal H}$ that there exists a marked vertex $v$ in $\mathcal{H}$ corresponding to every edge $e'$ in $G^f$ and vice versa. Let $T_\mathcal{H}$ be the dominator tree of $\mathcal{H}$ with root vertex $s$. We now establish the following relation between $T_\mathcal{H}$ and graph $G^f$.

\begin{lemma}\label{lem : dominator in modified graph}
    Let $v$ be a marked vertex in $\mathcal{H}$ corresponding to an edge $e'$ in $G^f$. $v$ is an ancestor of an unmarked vertex $u$ in $T_\mathcal{H}$ if and only if every $(s,u)$-path contains $e'$ in $G^f$.
\end{lemma}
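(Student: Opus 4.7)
The plan is to establish a natural bijection between $(s,u)$-paths in $G^f$ and $(s,u)$-paths in $\mathcal{H}$ (for any unmarked vertex $u$), and then show that this bijection maps ``passes through edge $e'$'' to ``passes through marked vertex $v$.'' Once this correspondence is in place, the statement follows immediately from Theorem \ref{thm : dominator} applied to graph $\mathcal{H}$.

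First I would note the following structural property of $\mathcal{H}$: by construction, every marked vertex $w$ corresponding to an edge $(a,b)$ in $G^f$ has exactly one incoming edge $(a,w)$ and exactly one outgoing edge $(w,b)$. Consequently, any directed path in $\mathcal{H}$ that enters a marked vertex $w$ must immediately leave via the unique outgoing edge, and any path between two unmarked vertices must alternate unmarked $\to$ marked $\to$ unmarked. Given this, I would define the map $\Phi$ that sends an $(s,u)$-path $P$ in $G^f$ to the $(s,u)$-path in $\mathcal{H}$ obtained by replacing each edge $(a,b)$ of $P$ with the two-edge walk $(a,w_{ab}),(w_{ab},b)$, where $w_{ab}$ is the marked vertex corresponding to $(a,b)$. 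Conversely, contracting each marked vertex on a path in $\mathcal{H}$ from $s$ to $u$ yields a path in $G^f$; these operations are clearly inverses of each other.

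Next, I would verify the key preservation property: for any edge $e' = (a,b)$ of $G^f$ with corresponding marked vertex $v$, a path $P$ in $G^f$ contains $e'$ if and only if $\Phi(P)$ contains $v$. The forward direction is immediate from the definition of $\Phi$; the reverse direction uses the fact that $v$ has a unique incoming neighbor $a$ and unique outgoing neighbor $b$ in $\mathcal{H}$, so any path through $v$ must use both endpoint edges, which correspond to the single edge $e' = (a,b)$ in $G^f$.

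Combining these, every $(s,u)$-path in $G^f$ contains $e'$ if and only if every $(s,u)$-path in $\mathcal{H}$ contains the marked vertex $v$, which by definition means $v$ dominates $u$ in $\mathcal{H}$. By Theorem \ref{thm : dominator}, this is equivalent to $v$ being an ancestor of $u$ in $T_{\mathcal{H}}$. The main (minor) obstacle will be making the path correspondence rigorous while being careful about the edge cases $u = s$ and paths of length zero; both cases are handled trivially since an $(s,s)$-path is empty and contains no edges, matching the fact that $s$ is the root of $T_{\mathcal{H}}$ and has no proper ancestors.
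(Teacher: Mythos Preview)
Your proposal is correct and follows essentially the same approach as the paper: both arguments reduce the statement to Theorem~\ref{thm : dominator} via the natural correspondence between $(s,u)$-paths in $G^f$ and $(s,u)$-paths in $\mathcal{H}$. The paper's proof is terser, simply invoking ``by construction of $\mathcal{H}$'' for the path correspondence, whereas you spell out the bijection $\Phi$ and the in/out-degree structure of marked vertices explicitly; but the underlying argument is the same.
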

\begin{proof}
    Suppose marked vertex $v$ is an ancestor of an unmarked vertex $u$ in $T_\mathcal{H}$. It follows from Theorem \ref{thm : dominator} that each $(s,u)$-path in $\mathcal{H}$ contains $v$. It follows from the construction of $\mathcal{H}$ that each $(s,u)$-path contains the edge $e'$ (corresponding to $v$) in $G^f$. 
    
    Suppose $u$ is an unmarked vertex and every $(s,u)$-path contains edge $e'$ in $G^f$. 
    By construction of $\mathcal{H}$, every $(s,u)$-path contains marked vertex $v$ (corresponding to edge $e'$ in $G^f$) in $\mathcal{H}$.
    By Theorem \ref{thm : dominator}, $v$ is ancestor of $u$ in $T_\mathcal{H}$. 
\end{proof}    
We now use Lemma \ref{lem : dominator in modified graph} to establish a relation between the dominator tree $T_\mathcal{H}$ and $(\lambda+1)$ $(s,t)$-cuts in graph $G$.

\begin{lemma} \label{lem : dominator with min+1 st cut}
    Let $v$ be a marked vertex in $\mathcal{H}$ and let the edge in $G^f$ corresponding to $v$ be $e'=(w,u)$.
    $v$ is an internal vertex in $T_\mathcal{H}$ with child $u$ if and only if there exists a $(\lambda+1)$ $(s,t)$-cut $C$ in $G$ such that the only edge outgoing from $C$ in $G^f$ is edge $e'$.
\end{lemma}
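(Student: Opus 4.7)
The plan is to use Theorem \ref{thm : min+k in residual graph} to translate the condition ``there is a $(\lambda+1)$ $(s,t)$-cut $C$ in $G$ whose only outgoing edge in $G^f$ is $e'$'' into the condition ``$C$ is an $(s,t)$-cut of capacity $1$ in $G^f$ with $e'$ as its unique outgoing edge''. The existence of such a cut can then be characterized via reachability in $G^f \setminus \{e'\}$, which in turn is controlled by the dominator tree of $\mathcal{H}$ through Lemma \ref{lem : dominator in modified graph}.

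For the forward direction, suppose $u$ is a child of the marked vertex $v$ in $T_\mathcal{H}$. Then $v$ is an ancestor of $u$, so by Lemma \ref{lem : dominator in modified graph} every $(s,u)$-path in $G^f$ uses $e'$. I would define $C$ as the set of vertices reachable from $s$ in $G^f \setminus \{e'\}$, and then verify: $s \in C$ trivially; $u \notin C$ by the above property; $t \notin C$ because, by Lemma \ref{lem : alternative maxflow mincut}, $t$ is not reachable from $s$ in $G^f$; and $w \in C$ because $v$ belongs to $T_\mathcal{H}$, so it is reachable from $s$ in $\mathcal{H}$, and any simple $(s,v)$-path in $\mathcal{H}$ must end with the edge $(w,v)$. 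Its prefix gives an $(s,w)$-path in $G^f$ which cannot use $e' = (w,u)$ since no simple path ending at $w$ can traverse an edge that leaves $w$. Finally, the definition of $C$ immediately forces every $G^f$-edge leaving $C$ to be $e'$, so $c(C, G^f) = 1$, and Theorem \ref{thm : min+k in residual graph} yields $c(C, G) = \lambda + 1$.

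For the reverse direction, suppose such a cut $C$ exists. Since $e'$ is the only outgoing edge of $C$ in $G^f$ and $s \in C$, $u \in \overline{C}$, every simple $(s,u)$-path in $G^f$ must cross $C$ and therefore traverse $e'$. Lemma \ref{lem : dominator in modified graph} then gives that $v$ is an ancestor of $u$ in $T_\mathcal{H}$. The key additional observation is that $(v,u)$ is a direct edge of $\mathcal{H}$ by construction. I would now invoke the following standard fact from dominator-tree theory: if $v$ dominates $u$ and $(v,u)$ is an edge, then $v$ is the immediate dominator of $u$. A short self-contained argument suffices: any proper dominator $v'$ of $u$ distinct from $v$ must appear on the simple $(s,u)$-path $s \to \cdots \to v \to u$, and hence strictly before $v$; but as $v$ is itself a dominator of $u$, $v$ would in turn have to dominate $v'$ and thus appear on every $(s,v')$-path, contradicting the simplicity of the chosen path. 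Therefore $v$ is the parent of $u$ in $T_\mathcal{H}$, which in particular makes $v$ internal.

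The main obstacles I anticipate are the two delicate steps above: justifying $w \in C$ using simplicity of $(s,w)$-paths in $G^f$, and deducing from the presence of the edge $(v,u)$ that $v$ is the \emph{immediate} dominator rather than just some ancestor. Both arguments reduce cleanly to the combinatorics of simple paths, and once they are in place, the rest is bookkeeping with Theorem \ref{thm : min+k in residual graph} and Lemma \ref{lem : dominator in modified graph}.
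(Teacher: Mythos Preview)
Your proof is correct and follows essentially the same approach as the paper's: use Lemma~\ref{lem : dominator in modified graph} to relate dominator ancestry in $T_{\mathcal H}$ to the property that every $(s,u)$-path in $G^f$ traverses $e'$, and translate cut capacities via Theorem~\ref{thm : min+k in residual graph}. Your argument is in fact more explicit than the paper's in two places---the construction of $C$ as the reachability set in $G^f\setminus\{e'\}$ (the paper simply asserts that the least $\{s,w\}$--$\{u,t\}$ cut in $G^f$ has capacity one), and the justification that a dominator with a direct edge to $u$ must be the \emph{immediate} dominator (the paper writes only ``Hence, $v$ is the parent of $u$'')---so your version fills gaps the paper leaves implicit.
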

\begin{proof}
    Suppose $v$ is an internal marked vertex in $T_\mathcal{H}$ with child $u$.
    By construction of $\mathcal{H}$, $u$ is an unmarked vertex, since $e'=(w,u)$ in $G^f$.
    By Lemma \ref{lem : dominator in modified graph}, every $(s,u)$-path contains edge $(w,u)$ in $G^f$. 
    Therefore, it is easy to show that the least capacity $(s,t)$-cut in $G^f$ that separates $\{s,w\}$ from $\{u,t\}$ has capacity one, since $t$ has indegree $0$.
    By Theorem \ref{thm : min+k in residual graph}, $C$ is a $(\lambda+1)$ $(s,t)$-cut in $G$ with $u \in \overline{C}$ such that only $e'$ contributes to $C$ in $G^f$.
    
    Suppose there exists a $(\lambda+1)$ $(s,t)$-cut $C$ in $G$ such that the only edge outgoing from $C$ in $G^f$ is edge $e'$. 
    Since $u \in \overline{C}$, it follows that every $(s,u)$-path must contain edge $e'$ in $G^f$. By Lemma \ref{lem : dominator in modified graph}, $v$ is an ancestor of $u$ (unmarked vertex) in $T_\mathcal{H}$.     
    By construction of $\mathcal{H}$, $(v,u)$ is an edge in $\mathcal{H}$. Hence, $v$ is the parent of $u$ in $T_\mathcal{H}$. 
\end{proof}

Now, we state the algorithm for computing a $(\lambda+1)$ $(s,t)$-cut in $G$.
\paragraph*{Algorithm:}
The algorithm begins by computing $G^f$ and graph $\mathcal{H}$ using maximum $(s,t)$-flow $f$ in $G$.
Compute the dominator tree $T_\mathcal{H}$ from $\mathcal{H}$ using the ${\mathcal O}(m)$ time algorithm given in \cite{dominator/alstrup1999dominators}. 
To determine the existence of a $(\lambda+1)$ $(s,t)$-cut in $G$, find a marked internal vertex in $T_\mathcal{H}$.
If there does not exist any marked internal vertex in $T_\mathcal{H}$, by Lemma \ref{lem : dominator with min+1 st cut}, there exists no $(\lambda+1)$ $(s,t)$-cut in $G$ and the algorithm terminates.
Suppose there exists an internal marked vertex $v$ in $T_\mathcal{H}$ with child $u$. 
By Lemma \ref{lem : dominator with min+1 st cut}, there is a $(\lambda+1)$ $(s,t)$-cut $C$ in $G$ such that $u \in \overline{C}$. Obtain graph $G_u$ from graph $G^f$ by adding a pair of edges between $u$ and $t$. Compute a maximum $(s,t)$-flow in graph $G_u$ using the algorithm of Ford and Fulkerson \cite{ford_fulkerson_1956}. This completes the proof of the following lemma.
\begin{lemma}\label{lem : computing min+1 in a rgaph with one st mincut}
    Let $G$ be a graph with exactly one $(s,t)$-mincut $V\setminus \{t\}$ $($or $\{s\})$. There is an algorithm that, given a maximum $(s,t)$-flow, computes a $(\lambda+1)$ $(s,t)$-cut in $G$ in $\mathcal{O}(m)$ time.
\end{lemma}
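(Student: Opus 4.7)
The plan is to combine Lemma \ref{lem : edge disjoint paths and min+1} (the characterization of the ``guessed'' side of a $(\lambda+1)$ $(s,t)$-cut via a single edge-disjoint path in $G^f$) with Lemma \ref{lem : dominator with min+1 st cut} (which certifies existence of such a vertex via an internal marked node in a suitable dominator tree), and then to execute all four computations in linear time.

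First, I would compute the residual graph $G^f$ from the given maximum $(s,t)$-flow $f$ in $\mathcal{O}(m)$ time, and then construct the auxiliary subdivided graph $\mathcal{H}$ by replacing every edge $(u,v)$ of $G^f$ with a fresh \emph{marked} midpoint $w$ and the two edges $(u,w),(w,v)$. Since $G^f$ has $\mathcal{O}(m)$ edges, $\mathcal{H}$ has $\mathcal{O}(m)$ vertices and edges and is built in $\mathcal{O}(m)$ time. Using the linear-time dominator-tree algorithm of \cite{dominator/alstrup1999dominators}, I would construct $T_{\mathcal{H}}$ rooted at $s$ in $\mathcal{O}(m)$ time (cf.\ Theorem \ref{thm : dominator}).

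Second, I would scan $T_{\mathcal{H}}$ once, also in $\mathcal{O}(m)$ time, searching for an internal vertex $v$ that is marked. By Lemma \ref{lem : dominator with min+1 st cut}, if no such $v$ exists, then no $(\lambda+1)$ $(s,t)$-cut exists in $G$ and the algorithm halts. Otherwise, I pick a child $u$ of $v$ in $T_{\mathcal{H}}$; by Lemma \ref{lem : dominator with min+1 st cut}, some $(\lambda+1)$ $(s,t)$-cut $C$ of $G$ places $u$ on the sink side, and by Lemma \ref{lem : edge disjoint paths and min+1} there is exactly one edge-disjoint $(s,u)$-path in $G^f$.

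Finally, to materialize a concrete cut, form $G_u$ by adding to $G$ an infinite-capacity edge $(u,t)$. Any finite $(s,t)$-cut in $G_u$ must keep $u$ on the sink side, so its capacity equals the $(s,\{u,t\})$-mincut capacity in $G$, which by the above is exactly $\lambda+1$. Starting from the given flow $f$ of value $\lambda$, Ford--Fulkerson therefore needs only a single augmenting step: a BFS in $G^f$ from $s$ to $u$ finds the unique augmenting path in $\mathcal{O}(m)$ time, after which one more BFS from $s$ in the updated residual graph identifies the set of vertices reachable from $s$; its complement is a $(\lambda+1)$ $(s,t)$-cut in $G$. The main obstacle is making the cut-extraction step linear; this is handled by the observation that exactly one augmentation is required (because $\lambda$ increases by precisely $1$ in $G_u$) together with the standard reachability characterization of the mincut side in a residual graph at a maximum flow. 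All other steps are routine linear-time computations, and the total runtime is $\mathcal{O}(m)$.
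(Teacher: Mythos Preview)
Your proposal is correct and follows essentially the same approach as the paper: build $\mathcal{H}$ by subdividing every edge of $G^f$, compute the dominator tree $T_{\mathcal{H}}$ in linear time, detect a marked internal vertex (invoking Lemma \ref{lem : dominator with min+1 st cut} both for existence and to locate a vertex $u$ on the sink side), and then extract the cut via a single Ford--Fulkerson augmentation. The only cosmetic difference is that the paper carries out the last step in the residual graph $G^f$ (adding a pair of $u$--$t$ edges there, where the target mincut has capacity~$1$) whereas you add an infinite-capacity edge in $G$ and augment from $f$; both viewpoints yield exactly one augmentation and hence $\mathcal{O}(m)$ time.
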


\subsection{Graphs with Exactly two (s,t)-mincuts} \label{sec : min+1 using covering}
Lemma \ref{lem : edge disjoint paths and min+1} fails to hold when $G$ has exactly $2$ $(s,t)$-mincuts: $\{s\}$ and $V \setminus\{t\}$. This is because $s$ has outdegree $0$ in the residual graph $G^f$ and there is no path from $s$ to any vertex. 
Similar to the proof of Lemma \ref{lem : n maxflow}, we can address this problem by constructing the pair of graphs $G^I$ and $G^U$ (as constructed in the algorithm for second $(s,t)$-mincut stated in Lemma \ref{lem : n maxflow}) using the covering technique of \cite{baswana2023minimum+}.
It follows that Lemma \ref{lem : computing min+1 in a rgaph with one st mincut} is satisfied in both the graphs $G^I$ and $G^U$. This leads to the following lemma. 

\begin{lemma}\label{lem : computing min+1 in a rgaph with two st mincut}
    Let $G$ be any graph with exactly two $(s,t)$-mincuts $\{s\}$ and $V\setminus \{t\}$. There is an algorithm that, given a maximum $(s,t)$-flow, computes a $(\lambda+1)$ $(s,t)$-cut in $G$ in  $\mathcal{O}(m)$ time.
\end{lemma}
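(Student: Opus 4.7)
The plan is to reduce to Lemma \ref{lem : computing min+1 in a rgaph with one st mincut} via the covering technique of \cite{baswana2023minimum+}, essentially following the same reduction that was used to establish Lemma \ref{lem : n maxflow} for second $(s,t)$-mincut. Fix an arbitrary vertex $x \in V\setminus\{s,t\}$. Any $(\lambda+1)$ $(s,t)$-cut $C$ of $G$ must satisfy either $x \in C$ or $x \in \overline{C}$, so it suffices to construct one auxiliary graph for each case and invoke Lemma \ref{lem : computing min+1 in a rgaph with one st mincut} on it.

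For the case $x \in C$, I will construct $G^{I}$ from $G$ by adding an edge $(s,x)$ of infinite capacity; for the case $x \in \overline{C}$, I will construct $G^{U}$ from $G$ by adding an edge $(x,t)$ of infinite capacity. I expect $G^{I}$ to have exactly one $(s,t)$-mincut, namely $V\setminus\{t\}$, because the new edge crosses $\{s\}$ and makes its capacity infinite, while $V\setminus\{t\}$ retains capacity $\lambda$ (both endpoints of the new edge lie on its source side), and by hypothesis no other $(s,t)$-cut of $G$ had capacity $\lambda$. A symmetric argument shows $\{s\}$ is the unique $(s,t)$-mincut of $G^{U}$.

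The key verification is the correspondence between $(\lambda+1)$ $(s,t)$-cuts across the reduction. Any $(s,t)$-cut of finite capacity in $G^{I}$ cannot contain the edge $(s,x)$ in its edge-set, so it must satisfy $x \in C$, and then its capacity in $G^{I}$ equals its capacity in $G$. Consequently, the set of $(\lambda+1)$ $(s,t)$-cuts of $G^{I}$ is precisely the set of $(\lambda+1)$ $(s,t)$-cuts of $G$ with $x$ on the source side, and analogously for $G^{U}$ with $x$ on the sink side. Together, the two sets cover every $(\lambda+1)$ $(s,t)$-cut of $G$, so such a cut exists in $G$ if and only if one exists in at least one of $G^{I}, G^{U}$.

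Finally, the given maximum $(s,t)$-flow $f$ in $G$, extended by value $0$ on the single added edge, is an $(s,t)$-flow of value $\lambda$ in each of $G^{I}$ and $G^{U}$, and since the mincut value in each is $\lambda$, it is maximum there. As each augmented graph has exactly one $(s,t)$-mincut, Lemma \ref{lem : computing min+1 in a rgaph with one st mincut} applies and produces a $(\lambda+1)$ $(s,t)$-cut in each (if one exists) in $\mathcal{O}(m)$ time. Returning any such cut yields a $(\lambda+1)$ $(s,t)$-cut of $G$ in $\mathcal{O}(m)$ total time. I do not foresee any serious obstacle; the only subtle point is the uniqueness of the mincut in each augmented graph, which is where the hypothesis that $\{s\}$ and $V\setminus\{t\}$ are the only two $(s,t)$-mincuts of $G$ is used.
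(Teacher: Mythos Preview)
Your proposal is correct and follows essentially the same approach as the paper: construct $G^I$ and $G^U$ via the covering technique of \cite{baswana2023minimum+} (exactly as in Lemma~\ref{lem : n maxflow}), observe that each has a unique $(s,t)$-mincut, and invoke Lemma~\ref{lem : computing min+1 in a rgaph with one st mincut} on each. You actually spell out more details than the paper does (the cut correspondence and the extension of $f$ to the augmented graphs), which the paper leaves implicit by pointing back to Lemma~\ref{lem : n maxflow}.
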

Using Lemma \ref{lem : computing min+1 in a rgaph with one st mincut} and Lemma \ref{lem : computing min+1 in a rgaph with two st mincut}, we can compute a $(\lambda+1)$ $(s,t)$-cut in general graphs as shown in Appendix \ref{sec : extension to egenral graphs} with the following slight modification. Instead of working with each SCC $H$, here we compute $(\lambda+1)$ $(s,t)$-cuts in the graph $G_\mu$ constructed for the node $\mu$ of ${\mathcal D}_{PQ}(G)$ corresponding to $H$ (refer to Appendix \ref{sec : extension to egenral graphs}).
Observe that each edge $(\mu_1,\mu_2)$ in $\mathcal{D}_{PQ}(G)$ appears in exactly two graphs -- $G_{\mu_1}$ and $G_{\mu_2}$. Moreover, due to the use of Covering Technique \cite{baswana2023minimum+}, each edge and vertex in $G_\mu$ is counted at most twice. Hence, the total number of vertices and edges across all graphs $G_\mu$ are $\mathcal{O}(n)$ and $\mathcal{O}(m)$ respectively. This completes the proof of Theorem \ref{thm : minimum+1 (s,t)-cut}.

\section{Computation of All Anchor Edges in Undirected Multi-Graphs} \label{sec : anchor edge computation}
In this section, we design an efficient algorithm to compute the set of all anchor edges $\mathcal{A}$ in an undirected multi-graph for any fixed maximum $(s,t)$-flow $f$.
It follows from  Lemma \ref{lem : every edge has flow in Dpq} and Definition \ref{def : anchor edge} that no anchor edge appears in $\mathcal{D}_{PQ}(G)$ since $f(e)=0$. By Lemma \ref{lem :mapping of nodes in Dpq} and the construction of $\mathcal{D}_{PQ}(G)$, the following lemma is immediate.
\begin{lemma}\label{lem : anchor edges and residual graph}
    If $(u,v) \in \mathcal{A}$ then $u$ and $v$ are mapped to the same SCC in $G^f.$
\end{lemma}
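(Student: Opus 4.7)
The plan is to peel back both sides of the implication using only definitions. On the hypothesis side, the relevant consequence of $(u,v)\in\mathcal{A}$ (Definition \ref{def : anchor edge}) is the flow-condition $f(u,v)=0$; the second clause, that $(u,v)$ contributes to some $(\lambda+1)$ $(s,t)$-cut, will play no role. On the conclusion side, ``same SCC of $G^f$'' amounts to the existence of directed paths $u\leadsto v$ and $v\leadsto u$ in $G^f$, so all we need is to convert the zero-flow fact into mutual reachability.

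The key step comes directly from the definition of residual graph for undirected multi-graphs given in Section \ref{sec : preliminaries}: for an undirected edge $e=(u,v)$ with $f(e)=0$, the residual graph $G^f$ contains both directed edges $(u,v)$ and $(v,u)$. These two edges alone form a directed cycle $u\to v\to u$ in $G^f$, so $u$ and $v$ trivially lie in a common SCC of $G^f$. Combined with the (undirected) construction of $\mathcal{D}_{PQ}(G)$ from Section \ref{sec : construction of Dpq}, which contracts every SCC of $G^f$ into a single node, this is precisely the claim.

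There is no real obstacle: the lemma is a one-line consequence of the definitions. What is worth flagging, however, is that the proof uses only the \emph{zero-flow} clause in the definition of an anchor edge, and hence the same statement holds for the (potentially much larger) set \textsc{NoFlow} of all flow-zero edges. This is exactly the property the algorithm exploits: to enumerate the candidates for $\mathcal{A}$, it suffices to look at flow-zero edges whose endpoints already lie in the same node of $\mathcal{D}_{PQ}(G)$, so the search naturally localizes inside each SCC of $G^f$ rather than ranging over the whole graph.
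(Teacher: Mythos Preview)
Your argument is correct and, if anything, more direct than the paper's. You observe that $f(u,v)=0$ together with the paper's definition of the residual graph for undirected multi-graphs immediately yields both directed edges $(u,v)$ and $(v,u)$ in $G^f$, hence a $2$-cycle, hence a common SCC. The paper instead routes the same fact through Lemma~\ref{lem : every edge has flow in Dpq} (every edge of $\mathcal{D}_{PQ}(G)$ carries full flow, so a zero-flow edge cannot appear between distinct nodes of $\mathcal{D}_{PQ}(G)$) and then invokes the construction of $\mathcal{D}_{PQ}(G)$ as the SCC-quotient of $G^f$. Both arguments rest on the same elementary observation; yours simply skips the $\mathcal{D}_{PQ}$ layer, and your remark that only the zero-flow clause is used (so the conclusion extends to all of \textsc{NoFlow}) is a valid and useful strengthening.
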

It follows from Lemma \ref{lem : anchor edges and residual graph} that set of edges $\mathcal{A}$ can be partitioned based on the nodes to which they are mapped. Hence, we now provide the construction of a graph $G_\mu^U$ to compute all the edges belonging to $\mathcal{A}$ whose endpoints are mapped to node $\mu$ in $\mathcal{D}_{PQ}(G)$. The construction of $G_\mu^U$ is the same as $G_\mu$ but the contraction of vertices is done in $G$ instead of $G^f$ by exploiting the topological ordering of DAG $\mathcal{D}_{PQ}(G)$. 

\paragraph*{Construction of $G_\mu$:}Let $\tau$ be a topological ordering of nodes in $\mathcal{D}_{PQ}(G)$ that begins with ${\mathbb T}$ and ends with ${\mathbb S}$. Graph $G_{\mu}^U$ is obtained by modifying graph $G$ as follows. The set of vertices mapped to the nodes that precede $\mu$ in $\tau$ is contracted into a sink vertex $t'$. Similarly, the set of vertices mapped to the nodes that succeed $\mu$ in $\tau$ is contracted into source vertex $s'$. If $\mu=\mathbb{S}$ (likewise $\mathbb{T}$),
then we map exactly $s$ to $s'$ (likewise $t$ to $t'$).

Without causing ambiguity, we denote an $(s',t')$-cut in $G_\mu^U$ by an $(s,t)$-cut.
By Lemma \ref{lem : anchor edges and residual graph} and the construction of $G_\mu^U$, the following lemma is immediate.
\begin{lemma}\label{lem : at most 2 st mincuts and ancor edges preserved}
    In graph $G_\mu^U$, the following assertions hold.
    \begin{enumerate}
        \item The capacity of $(s,t)$-mincut is $\lambda$.
        \item There are at most two $(s,t)$-mincuts -- $\{s\}$ and $V\setminus\{t\}$.
        \item For any edge $(u,v) \in \mathcal{A}$, $u$ and $v$ are mapped to node $\mu$ if and only if $(u,v)$ is an anchor edge in $G_\mu^U$.
    \end{enumerate} 
\end{lemma}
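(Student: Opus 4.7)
The plan is to exploit the fact that $G_\mu^U$ is obtained from $G$ by contracting two vertex sets, call them $S$ (contracted into $s'$) and $T$ (contracted into $t'$), that correspond respectively to the suffix and prefix of the topological ordering $\tau$ around node $\mu$. Since $\tau$ begins at $\mathbb{T}$ and ends at $\mathbb{S}$, both $S$ and $S \cup V(\mu)$ arise as prefixes of the reverse topological order of $\mathcal{D}_{PQ}(G)$, so they define $1$-transversal $(s,t)$-cuts of $\mathcal{D}_{PQ}(G)$; by Theorem \ref{thm : dag for st mincut and characterization} they are $(s,t)$-mincuts of $G$. These two mincuts pull through the contraction to the $(s,t)$-cuts $\{s'\}$ and $V(G_\mu^U) \setminus \{t'\}$ of $G_\mu^U$, each of capacity $\lambda$; since every $(s,t)$-cut of $G_\mu^U$ pulls back to an $(s,t)$-cut of $G$ of the same capacity (hence $\geq \lambda$), this proves (1). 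For (2), any $(s,t)$-mincut of $G_\mu^U$ pulls back to an $(s,t)$-mincut of $G$ that does not subdivide $S$, $T$, and—by Lemma \ref{lem :mapping of nodes in Dpq}—cannot subdivide $V(\mu)$ either, so only the two candidates above remain.

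The backward direction of (3) is immediate from Lemma \ref{lem : anchor edges and residual graph}: if $(u,v) \in \mathcal{A}$ and $u,v$ are mapped to a common node $\nu \neq \mu$, then both endpoints get contracted into the same super-vertex of $G_\mu^U$, so the edge becomes a self-loop and cannot contribute to any $(s,t)$-cut of $G_\mu^U$, let alone be an anchor edge.

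For the forward direction of (3), assume $(u,v) \in \mathcal{A}$ with $u, v \in V(\mu)$. First I would note that the natural restriction of $f$ to $G_\mu^U$ is a maximum $(s,t)$-flow of $G_\mu^U$ in which $(u,v)$ still carries zero flow; by part (1) this flow has value $\lambda$. The crux is to produce a $(\lambda+1)$ $(s,t)$-cut of $G_\mu^U$ whose edge-set contains $(u,v)$. Starting with any $(\lambda+1)$ $(s,t)$-cut $C$ of $G$ whose edge-set contains $(u,v)$, I would define the \emph{aligned} cut $C' := (C \cup S) \setminus T$. Since $u, v \in V(\mu)$ lie outside $S \cup T$, the edge $(u,v)$ still contributes to $C'$, and $C'$ respects the contractions. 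Applying sub-modularity (Lemma \ref{lem : submodularity}) twice—first to $C$ and the mincut $S$, and then to $C \cup S$ and the mincut $V \setminus T$—yields $c(C') \in \{\lambda, \lambda+1\}$. The possibility $c(C') = \lambda$ is excluded because $(u,v)$ contributes to $C'$ yet carries zero flow, which by Maxflow-Mincut Theorem (Theorem \ref{thm : maxflow mincut theorem}) forbids $C'$ from being a mincut. Hence $c(C') = \lambda + 1$, and applying Theorem \ref{thm : maxflow and minimum+1 cut} inside $G_\mu^U$ certifies that $(u,v)$ is an anchor edge of $G_\mu^U$. The main obstacle is precisely this alignment step; ruling out the degenerate $c(C') = \lambda$ case is where the zero-flow property of anchor edges is indispensable.
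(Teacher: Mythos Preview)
Your proof is correct. The paper does not give a proof at all here---it simply declares the lemma ``immediate'' from Lemma~\ref{lem : anchor edges and residual graph} and the construction of $G_\mu^U$---so what you have written is the natural filling-in of details; in particular, your double-submodularity alignment step (turning $C$ into $C' = (C \cup S) \setminus T$ and ruling out $c(C') = \lambda$ via the zero-flow property) is exactly the technique the paper deploys in the proof of Lemma~\ref{lem : mapping of a second st mincut to a node mu}, so your approach is fully in line with the paper's methodology. One cosmetic remark: the final appeal to Theorem~\ref{thm : maxflow and minimum+1 cut} is unnecessary, since once you have a $(\lambda+1)$ cut in $G_\mu^U$ to which the zero-flow edge $(u,v)$ contributes, Definition~\ref{def : anchor edge} applies directly.
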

It is evident from Lemma \ref{lem : at most 2 st mincuts and ancor edges preserved} that we need to identify anchor edges in graphs with at most two $(s,t)$-mincuts $\{s\}$ and $V\setminus \{t\}$. Moreover, by the use of covering technique \cite{baswana2023minimum+}, we only need to identify anchor edges in graphs with exactly one $(s,t)$-mincut $\{s\}$ or $V\setminus \{t\}$. 
Consider $G$ to be an undirected multi-graph with exactly one $(s,t)$-mincut $V\setminus \{t\}$. The algorithm is similar when the only $(s,t)$-mincut is $\{s\}$.
Consider the graph $\mathcal{H}$ constructed from $G^f$ in Appendix \ref{sec : min+1 in graphs with exactly 2 st mincuts}.
We now establish a relation between the anchor edges and the internal marked vertices in the dominator tree $T_\mathcal{H}$. 
\begin{lemma} \label{lem : anchor edge with internal vertex in dominator tree}
    Let $T_\mathcal{H}$ be the dominator tree for graph $\mathcal{H}$. Suppose $v$ is a marked vertex in $\mathcal{H}$ and the edge corresponding to $v$ in $G^f$ is $e'=(w,u)$. 
    $v$ is an internal vertex with a child $u$ in $T_\mathcal{H}$ if and only if the undirected edge $(w,u)$ corresponding to $e'$ is an anchor edge in $G$ and there is a $(\lambda+1)$ $(s,t)$-cut $C$ with $w \in C, u\in \overline{C}$.
\end{lemma}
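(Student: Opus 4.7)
The plan is to leverage Lemma \ref{lem : dominator with min+1 st cut} as the main workhorse, since it already establishes an equivalence between internal marked vertices in $T_\mathcal{H}$ and $(\lambda+1)$ $(s,t)$-cuts whose unique outgoing edge in $G^f$ is $e'=(w,u)$. So the additional content of this lemma is only to identify the corresponding undirected edge $(w,u)$ in $G$ as an anchor edge. The bridge between the two is Theorem \ref{thm : maxflow and minimum+1 cut} (the Maxflow Min+1-cut Theorem) combined with the construction of the residual graph for undirected multi-graphs.

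First I would record the following structural observation that follows directly from the construction of $G^f$ for undirected graphs and Theorem \ref{thm : maxflow and minimum+1 cut}. Let $C$ be any $(\lambda+1)$ $(s,t)$-cut. By Theorem \ref{thm : maxflow and minimum+1 cut}, there is exactly one edge $e^\star \in E(C)$ with $f(e^\star)=0$, while every other edge in $E(C)$ carries flow in the direction $C\to\overline{C}$. Translating to $G^f$ using the undirected residual rule, a flow-carrying edge $(w',u') \in E(C)$ with $w' \in C, u' \in \overline{C}$ generates only backward edges $(u',w')$ in $G^f$, i.e.\ incoming edges of $C$ in $G^f$. On the other hand, the unique zero-flow edge $e^\star=(w,u)$ with $w\in C, u\in\overline{C}$ generates both $(w,u)$ and $(u,w)$ in $G^f$. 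Hence the \emph{unique} outgoing edge of $C$ in $G^f$ is the directed edge $(w,u)$ arising from the zero-flow undirected edge $e^\star$ of $E(C)$.

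For the forward direction, assume $v$ is an internal vertex in $T_\mathcal{H}$ with child $u$. By Lemma \ref{lem : dominator with min+1 st cut}, there exists a $(\lambda+1)$ $(s,t)$-cut $C$ in $G$ such that the only edge outgoing from $C$ in $G^f$ is $e'=(w,u)$. By the observation above, this unique outgoing edge is produced by the unique zero-flow undirected edge of $E(C)$, which must therefore be $(w,u)$. So $f((w,u))=0$ and $(w,u)$ contributes to the $(\lambda+1)$ $(s,t)$-cut $C$; by Definition \ref{def : anchor edge}, $(w,u)$ is an anchor edge. Also $w\in C$ and $u\in\overline{C}$, giving the second required conclusion.

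For the converse, assume $(w,u)$ is an anchor edge and $C$ is a $(\lambda+1)$ $(s,t)$-cut with $w\in C$, $u\in\overline{C}$ to which $(w,u)$ contributes. Since $f((w,u))=0$, the residual graph construction gives edges $(w,u)$ and $(u,w)$ in $G^f$; in particular the marked vertex $v$ associated with the directed edge $(w,u)$ exists in $\mathcal{H}$. Applying the structural observation to $C$, the unique zero-flow edge of $E(C)$ must be $(w,u)$ itself (by Theorem \ref{thm : maxflow and minimum+1 cut} there is only one, and $(w,u)$ qualifies), and therefore the unique outgoing edge of $C$ in $G^f$ is exactly $e'=(w,u)$. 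Invoking Lemma \ref{lem : dominator with min+1 st cut} in its backward direction, $v$ is an internal vertex of $T_\mathcal{H}$ with child $u$. I expect the only delicate point to be carefully tracking the direction conventions of the residual graph for undirected multi-graphs, in order to ensure that the zero-flow/flow-carrying edge dichotomy really does identify $(w,u)$ as the unique outgoing edge of $C$ in $G^f$; beyond this bookkeeping the argument is immediate given the results already proved.
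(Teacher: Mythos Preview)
Your proposal is correct and follows essentially the same approach as the paper: both directions are reduced to Lemma~\ref{lem : dominator with min+1 st cut}, with Theorem~\ref{thm : maxflow and minimum+1 cut} and the undirected residual-graph construction used to identify the unique outgoing edge of $C$ in $G^f$ with the unique zero-flow edge of $E(C)$, which is then the anchor edge by Definition~\ref{def : anchor edge}. Your explicit ``structural observation'' unpacks exactly what the paper compresses into the phrase ``by construction of $G^f$ and Theorem~\ref{thm : maxflow and minimum+1 cut}''; the paper additionally cites Lemma~\ref{lem : unique anchor edge} in the forward direction, but your direct appeal to Definition~\ref{def : anchor edge} is equivalent.
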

\begin{proof}
    Suppose $v$ is an internal vertex in $T_\mathcal{H}$ with child $u$. It follows from Lemma \ref{lem : dominator with min+1 st cut} that there exists a $(\lambda+1)$ $(s,t)$-cut $C$ in $G$ such that the edge $e'=(w,u)$ contributes to $C$ in $G^f$. Hence, $w \in C,u \in \overline{C}$.
     It follows from Theorem \ref{thm : maxflow and minimum+1 cut} and the construction of $G^f$ that the undirected edge $(w,u)$ in $G$ (corresponding to $e'$) carries zero flow.
    It follows from Theorem \ref{thm : maxflow and minimum+1 cut} and Lemma \ref{lem : unique anchor edge} that the edge carrying zero flow in the edge-set of $C$ is an anchor edge.
    Hence, edge $(w,u)$ is an anchor edge in $G$. 

    Suppose the edge $(w,u)$ is an anchor edge in $G$ and there is a $(\lambda+1)$ $(s,t)$-cut $C$ in $G$ with $w \in C, u\in \overline{C}$.
    By construction of $G^f$ and Theorem \ref{thm : maxflow and minimum+1 cut}, exactly $e'=(w,u)$ contributes to $C$ in $G^f$. 
    By Lemma \ref{lem : dominator with min+1 st cut}, marked vertex $v$ corresponding to edge $e'$ in $G^f$ is an internal vertex in $T_\mathcal{H}$ with child $u$. 
\end{proof}
Lemma \ref{lem : anchor edge with internal vertex in dominator tree} implies that it is sufficient to compute all the internal marked vertices in $T_\mathcal{H}$ corresponding to graph $G_\mu^U$ for each node $\mu$ in $\mathcal{D}_{PQ}(G)$. Similar to the analysis of the algorithm for computing a $(\lambda+1)$ $(s,t)$-cut (refer to Appendix \ref{sec : min+1 using covering}), it can be established that we can compute set ${\mathcal A}$ in ${\mathcal O}(m)$ time. 

\section{Space Occupied by $\mathcal{D}(G\setminus \mathcal{A})$ in Undirected Multi-graphs}\label{sec : size of Dpq}
In this section, we establish an upper bound on the space required by $\mathcal{D}_{PQ}$ for undirected multi-graphs using only existing results. 
We first state the following property for $\mathcal{D}_{PQ}$ which holds for undirected graphs.
\begin{lemma}[\cite{DBLP:journals/mp/PicardQ80,ford_fulkerson_1956}] \label{lem : every edge has flow in Dpq}
    Let $f$ be any maximum $(s,t)$-flow in any undirected graph ${G}$.
    For any edge $e=(x,y)$ in ${\mathcal D}_{PQ}({G})$, the corresponding undirected edge $e$ in ${ G}$ satisfies
    $f(e)=w(e)$ and $e$ carries flow in the direction $y$ to $x$. 
\end{lemma}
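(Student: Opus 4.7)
The plan is to prove Lemma~\ref{lem : every edge has flow in Dpq} by a short case analysis that combines two ingredients already supplied in the paper: the construction of $\mathcal{D}_{PQ}(G)$ as the quotient graph of $G^f$ obtained by contracting every strongly connected component (SCC) into a single node, and the definition of the residual graph for an undirected edge recalled in Section~\ref{sec : preliminaries}.

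I would begin with the following structural observation. If $(x,y)$ is a directed edge of $\mathcal{D}_{PQ}(G)$, then the directed residual edge $(x,y)$ lies in $G^f$ and, moreover, the reverse residual edge $(y,x)$ cannot lie in $G^f$. The reason is that if both $(x,y)$ and $(y,x)$ belonged to $G^f$, then $x$ and $y$ would sit in a common SCC, and would therefore be identified with a single node of $\mathcal{D}_{PQ}(G)$, contradicting the fact that $(x,y)$ is an edge (not a self-loop) of the DAG.

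Next, I would case-split on the behaviour of the undirected edge $e=(x,y)$ with respect to the flow $f$. If $f(e)=0$, the undirected residual rule forces both $(x,y)$ and $(y,x)$ into $G^f$, contradicting the observation above. If $e$ carries flow in the direction $x$ to $y$, then the residual rule puts only edges of the form $(y,x)$ into $G^f$, so the forward edge $(x,y)$ is missing from $G^f$, again a contradiction. If $e$ carries flow in the direction $y$ to $x$ but is not saturated (so $f(e)<w(e)$), then residual capacity remains in both directions and hence $(y,x)$ also lies in $G^f$, once more a contradiction. The only remaining scenario is that $e$ is saturated and its flow is directed from $y$ to $x$, i.e.\ $f(e)=w(e)$ with direction $y\to x$, which is exactly the claim.

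I do not anticipate any real obstacle; the proof is an unpacking of the two definitions. The one subtlety is to handle the undirected residual convention carefully, in particular to distinguish ``flow in direction $x\to y$'' from ``flow in direction $y\to x$'', so that the asymmetry between $(x,y)$ and $(y,x)$ in $G^f$ lines up with the asymmetry of the directed edge of $\mathcal{D}_{PQ}(G)$ (which reverses the true flow direction in $G$).
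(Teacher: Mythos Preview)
The paper does not supply its own proof of this lemma; it is quoted as a known fact with citations to \cite{DBLP:journals/mp/PicardQ80,ford_fulkerson_1956}. Your argument is the natural derivation from the two definitions and is essentially correct.

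There is one small slip in your second case. When $e$ carries flow in the direction $x\to y$, you conclude that the forward residual edge $(x,y)$ is absent from $G^f$ and derive the contradiction from that. But this absence is only guaranteed when $e$ is saturated; if $0<f(e)<w(e)$ with flow going $x\to y$, the forward edge $(x,y)$ still has positive residual capacity and is present in $G^f$. The correct contradiction in this case is that the reverse residual edge $(y,x)$ is \emph{always} present whenever positive flow is sent $x\to y$ (that flow can be cancelled), and this already violates your structural observation that $(y,x)\notin G^f$. With this one-line adjustment your case analysis goes through cleanly; alternatively, you could split the $x\to y$ case into saturated and unsaturated sub-cases exactly as you did for the $y\to x$ direction.
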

An $(s,t)$-flow is said to be acyclic if there is no directed cycle in which every edge carries flow in the direction of the cycle.
For any integral acyclic maximum $(s,t)$-flow in an undirected integer-weighted graph,
the number of edges carrying flow is bounded by the following lemma.
\begin{lemma}[\cite{shortlengthversionofmengerstheorem} and Lemma 12 in \cite{henzingericalp2023}] \label{lem : cardinality of set flow}
    Let ${G}$ be any undirected integer-weighted graph with no parallel edges and $(s,t)$-mincut capacity $\lambda$. For any integral acyclic maximum $(s,t)$-flow in ${G}$, the number of edges carrying flow in ${G}$ is $\mathcal{O}(\min \{m,n \sqrt{\lambda}\})$.
\end{lemma}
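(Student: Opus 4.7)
The plan is to combine the classical flow-decomposition theorem with the short-length version of Menger's theorem of \cite{shortlengthversionofmengerstheorem}. First, since $f$ is integral and acyclic, I would decompose $f$ into $\lambda$ unit $(s,t)$-paths $P_1,\ldots,P_\lambda$ in the standard way; by acyclicity these are simple paths, and for every edge $e$ one has $f(e)=|\{i:e\in P_i\}|$. Consequently the support $E_f$ equals $\bigcup_i E(P_i)$, giving $|E_f|\le \sum_{i=1}^\lambda |P_i|$, and the trivial bound $|E_f|\le m$ is of course always available.

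Next I would invoke the short Menger theorem, which guarantees that in any undirected simple graph of $s$-$t$ edge-connectivity at least $\lambda$ there exist $\lambda$ edge-disjoint $(s,t)$-paths of total length $O(n\sqrt{\lambda})$. Routing one unit of flow along each such path yields an integral acyclic maximum $(s,t)$-flow whose support has at most $O(n\sqrt{\lambda})$ edges, and combined with the trivial $m$ bound this gives the claim.

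The main obstacle is the universal quantifier in the statement: the Menger-based construction above produces \emph{some} acyclic integral max flow with a short support, whereas the lemma asserts the bound for every such flow. For the application in the paper this is harmless, since $\mathcal{D}_{PQ}(G')$ is built from a single chosen max flow and we may select the short one provided by short Menger. To make the bound hold literally for every acyclic integral max flow, I would instead perform BFS from $s$ inside the support DAG $G_f$ and let $\ell_i=|L_i|$ denote the layer sizes. For every cut $L_{\leq i}/L_{>i}$ the edges of $G_f$ that cross it must all go from $L_i$ to $L_{i+1}$ (shorter edges would contradict the BFS distances), and their count $a_i$ must satisfy both $a_i\le\lambda$ (flow conservation together with the fact that each support edge carries at least one unit) and $a_i\le \ell_i\ell_{i+1}$ (no parallel edges).

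Writing $a_i=\sqrt{a_i}\cdot\sqrt{a_i}\le\sqrt{\lambda\cdot\ell_i\ell_{i+1}}\le\sqrt{\lambda}\cdot\tfrac{\ell_i+\ell_{i+1}}{2}$ by AM-GM and summing gives
\[
\sum_{i} a_i\;\le\;\sqrt{\lambda}\cdot\sum_{i}\frac{\ell_i+\ell_{i+1}}{2}\;\le\;\sqrt{\lambda}\cdot\sum_{i}\ell_i\;\le\;n\sqrt{\lambda},
\]
controlling the between-layer contribution to $|E_f|$. A symmetric degree-counting argument, exploiting the fact that each vertex carries flow at most $\lambda$, will then bound the remaining within-layer edges of $G_f$ by the same order of magnitude, and the two estimates combine to $|E_f|=O(\min\{m,n\sqrt{\lambda}\})$.
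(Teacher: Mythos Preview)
The paper does not prove this lemma; it is quoted verbatim from \cite{shortlengthversionofmengerstheorem} and Lemma~12 of \cite{henzingericalp2023} and used as a black box. Your first argument via the short-length Menger theorem is correct and, as you rightly observe, already suffices for the paper's needs: by Lemma~\ref{lem : every edge has flow in Dpq} every edge of $\mathcal{D}_{PQ}(G')$ lies in the support of \emph{every} maximum $(s,t)$-flow, so exhibiting one acyclic integral max flow with support $O(n\sqrt{\lambda})$ bounds $|\mathcal{D}_{PQ}(G')|$.

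Your BFS argument for the universal statement, however, has a real gap. In the directed support DAG, BFS from $s$ does \emph{not} preclude backward edges. Take the acyclic unit flow on the edges $s\to a,\ a\to b,\ b\to c,\ s\to c,\ c\to t,\ c\to d,\ d\to t$ (value $\lambda=2$): then $L_0=\{s\}$, $L_1=\{a,c\}$, $L_2=\{b,d,t\}$, and $b\to c$ is a backward edge crossing the cut $L_{\le 1}$. Here $f_{\mathrm{in}}(L_{\le 1})=1$, hence $f_{\mathrm{out}}(L_{\le 1})=\lambda+1=3$, and the number $a_1$ of forward edges is $3>\lambda$. So the step ``$a_i\le\lambda$ by flow conservation'' fails; conservation only yields $a_i\le\lambda+f_{\mathrm{in}}(L_{\le i})$. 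Moreover, backward and within-layer support edges are not counted by $\sum_i a_i$ at all, and the ``symmetric degree-counting'' you sketch, using only that each vertex carries flow at most $\lambda$, gives at best $|E_f|\le n\lambda$, a factor $\sqrt{\lambda}$ too weak. The layered argument as written therefore does not establish the universal bound.
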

It follows from Lemma \ref{lem : every edge has flow in Dpq} that for every edge $e$ of ${\mathcal D}_{PQ}({\mathcal G})$, the corresponding undirected edge $e$ carries flow in any maximum $(s,t)$-flow in ${\mathcal G}$. As a result, they also carry flow in any integral acyclic maximum $(s,t)$-flow. Therefore, the following lemma is immediate from Lemma \ref{lem : cardinality of set flow}. 
\begin{lemma}\label{lem : size of Dpq in integer weighted}
    For any undirected integer-weighted graph ${G}$ with $(s,t)$-mincut capacity $\lambda$, the space occupied by DAG $\mathcal{D}_{PQ}({G})$ is $\mathcal{O}(\min\{m,n\sqrt{\lambda}\})$. 
\end{lemma}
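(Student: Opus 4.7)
The plan is to bound the space of $\mathcal{D}_{PQ}(G)$ by counting its vertices and edges separately. The vertex count is trivial: since $\mathcal{D}_{PQ}(G)$ is a quotient graph of $G$, it contains at most $n$ nodes. So the entire task reduces to bounding the number of edges of $\mathcal{D}_{PQ}(G)$ by $\mathcal{O}(\min\{m, n\sqrt{\lambda}\})$.

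The main idea is to exhibit an injection from the edge set of $\mathcal{D}_{PQ}(G)$ into the set of edges of $G$ that carry flow under a suitably chosen maximum $(s,t)$-flow. First, I would fix an integral acyclic maximum $(s,t)$-flow $f$ in $G$. Integrality is standard for integer-weighted graphs by Ford--Fulkerson, and acyclicity can be arranged by iteratively canceling flow along any directed cycle all of whose edges carry positive flow. Using this particular $f$, construct $\mathcal{D}_{PQ}(G)$ by the procedure recalled in Section~\ref{sec : construction of Dpq}. Then Lemma~\ref{lem : every edge has flow in Dpq} applies: for any edge $(x,y)$ in $\mathcal{D}_{PQ}(G)$, the underlying undirected edge $e$ in $G$ satisfies $f(e)=w(e)$ and carries flow from $y$ to $x$. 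This yields a well-defined map from the edges of $\mathcal{D}_{PQ}(G)$ to the edges of $G$ carrying flow in $f$, and this map is injective because the flow direction of each undirected edge of $G$ is uniquely determined.

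Next, I would bound the size of the image. Lemma~\ref{lem : cardinality of set flow} gives exactly such a bound, namely $\mathcal{O}(\min\{m, n\sqrt{\lambda}\})$, for any integral acyclic maximum $(s,t)$-flow, provided the host graph has integer weights and no parallel edges. If $G$ already has no parallel edges, we are done. Otherwise, I would first collapse each bundle of parallel undirected edges between a common pair of vertices into a single edge whose integer weight equals the sum of their weights, obtaining a simple integer-weighted graph $G'$ on the same vertex set with the same $(s,t)$-mincut capacity $\lambda$ and with $m' \leq m$ edges. Any integral acyclic maximum $(s,t)$-flow in $G'$ lifts to one in $G$ in which at most one representative edge per bundle carries flow, so the number of edges carrying flow in $G$ is bounded by the corresponding count in $G'$, to which Lemma~\ref{lem : cardinality of set flow} applies directly.

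The expected main obstacle is purely bookkeeping rather than conceptual: ensuring that the chosen $f$ is simultaneously integral and acyclic, and handling the parallel-edge reduction cleanly so that Lemma~\ref{lem : cardinality of set flow} can be invoked in the form required. Once these are in place, combining the injection with Lemma~\ref{lem : cardinality of set flow} gives the edge bound, and adding the $\mathcal{O}(n)$ vertex bound yields the total $\mathcal{O}(\min\{m, n\sqrt{\lambda}\})$ space as claimed.
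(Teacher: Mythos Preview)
Your proposal is correct and follows essentially the same approach as the paper: map each edge of $\mathcal{D}_{PQ}(G)$ to a flow-carrying edge of $G$ via Lemma~\ref{lem : every edge has flow in Dpq}, then invoke Lemma~\ref{lem : cardinality of set flow} on an integral acyclic maximum $(s,t)$-flow to bound the count. The paper's own argument is the same two-line combination of these lemmas; your additional discussion of the parallel-edge reduction is handled by the paper separately in Lemma~\ref{lem : size of Dpq}, so it is not strictly needed here but does no harm.
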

We can easily transform any undirected multi-graph $G$ to an undirected integer-weighted graph $G'$ with no parallel edges as follows. For any pair $u,v$, if there are $q$ edges, $q >0$, between $u$ and $v$, then replace all these edges with a single edge $(u,v)$ of capacity $q$. 
Observe that for each cut $C$, $C$ has capacity $\lambda'$ in $G$ if and only if $C$ has capacity $\lambda'$ in $G'$. Hence, for any undirected multigraph $G$, we can use $\mathcal{D}_{PQ}(G')$ to store and characterize all $(s,t)$-mincuts in $G$. This, along with Lemma \ref{lem : size of Dpq in integer weighted}, gives us the following result.

\begin{lemma}\label{lem : size of Dpq}
    For any undirected multi-graph ${G}$ with $(s,t)$-mincut capacity $\lambda$, there is an undirected integer-weighted graph ${ G'}$ with no parallel edges such that 
    \begin{enumerate}
        \item the capacity of each cut in $G$ remains the same in $G'$ and 

        \item DAG $\mathcal{D}_{PQ}(G')$ occupies $\mathcal{O}(\min\{m,n\sqrt{\lambda}\})$ space.
    \end{enumerate}
\end{lemma}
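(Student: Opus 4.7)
The plan is to construct the weighted graph $G'$ explicitly, verify that cut capacities are preserved under the construction, and then invoke \Cref{lem : size of Dpq in integer weighted} as a black box. This is essentially the reduction already sketched in the paragraph immediately preceding the lemma.

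First, I would define $G' = (V, E')$ on the same vertex set as $G$ as follows: for every unordered pair $\{u,v\}$ of vertices that carry $q_{uv} > 0$ parallel edges in $G$, replace all of these copies by a single edge $(u,v)$ of integer capacity $q_{uv}$. By construction, $G'$ is undirected, integer-weighted, and contains no parallel edges; writing $m' = |E'|$, we clearly have $m' \le m$.

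For property (1), pick any cut $C \subseteq V$. Its capacity in $G$ is the total number of parallel edges with one endpoint on each side of $C$, while its capacity in $G'$ is $\sum_{\{u,v\} \in E(C)} q_{uv}$; these two quantities are equal by the definition of the $q_{uv}$. In particular, the $(s,t)$-mincut capacity in $G'$ is $\lambda$, and the set of $(s,t)$-mincuts of $G'$ coincides with that of $G$. For property (2), since $G'$ is an undirected integer-weighted graph with no parallel edges and $(s,t)$-mincut capacity $\lambda$, \Cref{lem : size of Dpq in integer weighted} gives that $\mathcal{D}_{PQ}(G')$ occupies $\mathcal{O}(\min\{m', n\sqrt{\lambda}\})$ space, and $m' \le m$ upgrades this to $\mathcal{O}(\min\{m, n\sqrt{\lambda}\})$.

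The whole argument is routine; there is no real obstacle beyond observing that cut capacities are invariant under bundling parallel edges into a single weighted edge, so that the weighted version of $G$ falls exactly into the setting already handled by the integer-weighted bound of \Cref{lem : size of Dpq in integer weighted}.
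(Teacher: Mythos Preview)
Your proposal is correct and follows exactly the same approach as the paper: bundle the $q_{uv}$ parallel copies of each edge into a single weighted edge to form $G'$, observe that every cut has the same capacity in $G$ and $G'$, and then invoke \Cref{lem : size of Dpq in integer weighted}. The only addition over the paper's write-up is your explicit remark that $m' \le m$, which is harmless.
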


\end{document}